\DeclareFontFamily{U}{shuffle}{}
\DeclareFontShape{U}{shuffle}{m}{n}{ <-8>shuffle7 <8->shuffle10}{}
\newtcbox{\mycode}[1][]{%
nobeforeafter, math upper, tcbox raise base,
enhanced, colframe=black,
colback=gray!5, boxrule=0.5pt,
#1}
\newcommand{\qbfsat}{\mathsf{QBF}}
\newcommand{\succsat}{\mathsf{SuccinctSAT}}
\newcommand{\pspace}{$\mathsf{PSPACE}$}
\newcommand{\nexp}{$\mathsf{NEXPTIME}$}
\newcommand{\expspace}{$\mathsf{EXPSpace}$}
\newcommand{\np}{$\mathsf{NP}$}
\newcommand{\ptime}{$\mathsf{PTIME}$}
\definecolor{auburn}{rgb}{0.43, 0.21, 0.1}
\definecolor{cadmiumgreen}{rgb}{0.0, 0.42, 0.24}
\newcommand{\abstcolor}{blue}
\newcommand{\leadercolor}{cadmiumgreen}
\newcommand{\bsp}{\mathsf{BSP}}
\newcommand{\bigO}[1]{\mathcal{O}(#1)}
\newcommand{\sizeOf}[1]{|#1|}
\newcommand{\cachepred}{\mathsf{CachePred}}
\newcommand{\freePred}{\mathsf{avail}}
\newcommand{\thisthreadt}{\thisthread\mathsf{T}}
\newcommand{\thatthreadt}{\thatthread\mathsf{T}}
\newcommand{\abstrun}{\rho^\abstscriptpshort}
\newcommand{\abstscript}{{\thisthreadshort\thatthreadshort}}
\newcommand{\abstscriptpshort}{{\thisthreadshort\thatthreadshort}}
\newcommand{\loadsym}{\mathsf{ld}}
\newcommand{\storesym}{\mathsf{st}}
\newcommand{\setvars}{\mathcal{X}}
\newcommand{\setvals}{\mathcal{D}}
\newcommand{\view}{\mathsf{view}}
\newcommand{\tmorph}{\mu}
\newcommand{\tmorphOf}[1]{\tmorph(#1)}
\newcommand{\quantity}{\mathcal{Q}}
\newcommand{\xvar}{\mathsf{x}}
\newcommand{\yvar}{\mathsf{y}}
\newcommand{\tstamps}[1]{\mathsf{TS}(#1)}
\newcommand{\last}{\mathsf{last}}
\newcommand{\lastOf}[1]{\last(#1)}
\newcommand{\firstOf}[1]{\mathsf{first}(#1)}
\newcommand{\abstfuncp}{\alpha_{\abstscriptpshort}}
\newcommand{\abstfuncpOf}[1]{\abstfuncp(#1)}
\newcommand{\tplus}[1]{#1^{\textbf{+}}}
\newcommand{\raisets}[1]{\mathsf{raise}(#1)}
\newcommand{\leader}{{\color{\leadercolor}\mathsf{ldr}}}
\newcommand{\essmessOf}[1]{\mathsf{edepend}(#1)}
\newcommand{\ethreadOf}[1]{\mathsf{ethread}(#1)}
\newcommand{\genproc}{\mathsf{genthread}}
\newcommand{\canread}{\mathsf{selfRead}}
\newcommand{\gentmorph}{\mathcal{M}}
\newcommand{\copythread}[1]{\mathsf{copy}(#1)}
\newcommand{\thisTS}[2]{{\color{blue} \text{ts}}^{#1}_#2}
\newcommand{\dep}{\mathsf{depend}}
\newcommand{\depgraph}{\mathit{G}}
\newcommand{\heightv}{\mathsf{height}}
\newcommand{\sinkv}{\mathsf{sink}}
\newcommand{\sourcev}{\mathsf{source}}
\newcommand{\pspacelibrary}{\com}
\newcommand{\funcAG}{\com_{\mathrm{AG}}}
\newcommand{\funcSATC}{\com_{\mathrm{SATC}}}
\newcommand{\funcFEC}[1]{\com_{\mathrm{FE[#1]}}}
\newcommand{\funcGC}{\com_{\mathrm{assert}}}
\newcommand{\bvars}{\mathsf{BVars}}
\newcommand{\va}{\texttt{var1}}
\newcommand{\vb}{\texttt{var2}}
\newcommand{\vc}{\texttt{var3}}
\newcommand{\siga}{\texttt{sig1}}
\newcommand{\sigb}{\texttt{sig2}}
\newcommand{\sigc}{\texttt{sig3}}
\newcommand{\red}[1]{{\color{red} #1}}
\newcommand{\blu}[1]{{\color{blue} #1}}
\newcommand{\pnode}{\textcolor{black}{\mathcal{A}}}
\newcommand{\ynode}{\textcolor{black}{\mathcal{B}}}
\newcommand{\issat}{\mathsf{IsSAT}}
\newcommand{\comlead}{\com_\leader}
\newcommand{\comcont}{\com_\thatthread}
\newcommand{\funcCLENC}{\com_{\mathsf{CL-ENC}}}
\newcommand{\funcSAT}{\com_{\mathsf{SAT}}}
\newcommand{\funcFC}[1]{\com_{\mathsf{Forall[#1]}}}
\newcommand{\valt}{\avalue_{\texttt{t}}}
\newcommand{\valf}{\avalue_{\texttt{f}}}
\newcommand{\nat}{\mathbb{N}}
\newcommand{\semOf}[1]{[\![#1]\!]}
\newcommand{\areg}{\mathsf{r}}
\newcommand{\vecreg}{\overline{\areg}}
\newcommand{\setreg}{\mathsf{Reg}}
\newcommand{\domain}{\mathsf{Dom}}
\newcommand{\setcom}{\mathsf{Com}}
\newcommand{\transform}{\mathsf{tf}}
\newcommand{\transformOf}[1]{\transform(#1)}
\newcommand{\setconfigs}{\mathsf{CF}}
\newcommand{\setlocalconfigsmap}{\mathsf{LCFMap}}
\newcommand{\setlocalconfigs}{\mathsf{LCF}}
\newcommand{\setlabels}{\mathsf{LAB}}
\newcommand{\settime}{\textcolor{purple}{\mathsf{Time}}}
\newcommand{\setregvals}{\mathsf{RVal}}
\newcommand{\aview}{\mathsf{vw}}
\newcommand{\atimestamp}{\mathsf{ts}}
\newcommand{\settid}{\mathsf{TID}}
\newcommand{\tidOf}[1]{\settid(#1)}
\newcommand{\setviews}{\mathsf{View}}
\newcommand{\amem}{\mathsf{m}}
\newcommand{\setmem}{\mathsf{Mem}}
\newcommand{\setevents}{\mathsf{Msgs}}
\newcommand{\eventsOf}[1]{\setevents(#1)}
\newcommand{\anadr}{\mathsf{x}}
\newcommand{\avar}{\mathsf{x}}
\newcommand{\avaluation}{\mathsf{rv}}
\newcommand{\avalue}{\mathsf{d}}
\newcommand{\anevent}{\mathsf{msg}}
\newcommand{\cas}{\mathsf{cas}}
\newcommand{\casOf}[1]{\cas(#1)}
\newcommand{\anabstevent}{\anevent^{{\abstscript}}}
\newcommand{\anexp}{\mathsf{e}}
\newcommand{\anexpOf}[1]{\anexp(#1)}
\newcommand{\com}{\mathsf{c}}
\newcommand{\aval}{\mathsf{val}}
\newcommand{\acf}{\mathsf{cf}}
\newcommand{\anlcf}{\mathsf{lcf}}
\newcommand{\anlcfmap}{\mathsf{lcfm}}
\newcommand{\choice}{\oplus}
\newcommand{\myskip}{\mathsf{skip}}
\newcommand{\assign}[2]{#1 \coloneqq #2}
\newcommand{\load}[2]{\assign{#1}{#2}}
\newcommand{\store}[2]{\assign{#1}{#2}}
\newcommand{\assume}[1]{\mathsf{assume}\; {#1}}
\newcommand{\assert}[1]{\mathsf{assert}\; {#1}}
\newcommand{\bnf}{\;\mid\;}
\newcommand{\join}{\sqcup}
\newcommand{\athread}{\mathsf{t}}
\newcommand{\init}{\mathsf{init}}
\newcommand{\arun}{\rho}
\newcommand{\noconflict}{\;\#\;}
\newcommand{\superpos}{\triangleright}
\newcommand{\rulecaslocal}{\textsc{(CAS-local)}}
\newcommand{\rulecaslocalabstthat}{\textsc{(CAS-local$^{\thatthread}$)}}
\newcommand{\ruleunlabelled}{\textsc{(Unlabelled)}}
\newcommand{\rulecasglobal}{\textsc{(CAS-global)}}
\newcommand{\ruleloadlocal}{\textsc{(LD-local)}}
\newcommand{\rulestorelocal}{\textsc{(ST-local)}}
\newcommand{\ruleloadglobal}{\textsc{(LD-global)}}
\newcommand{\rulestoreglobal}{\textsc{(ST-global)}}
\newcommand{\ruleloadlocalabstthat}{\textsc{(LD-local$^{\thatthread}$)}}
\newcommand{\rulestorelocalabstthat}{\textsc{(ST-local$^{\thatthread}$)}}
\newcommand{\abstequalp}{=_{\abstscriptpshort}}
\newcommand{\abstmem}{\amem^{\abstscript}}
\newcommand{\abstview}{\aview^{\abstscript}}
\newcommand{\abstpview}{\aview^{\thisthreadshort\thatthreadshort}}
\newcommand{\abstlcfmap}{\anlcfmap^{\abstscript}}
\newcommand{\setpreds}{\mathsf{Preds}}
\newcommand{\setdatadomain}{\mathsf{Data}}
\newcommand{\setrules}{\mathsf{Rules}}
\newcommand{\env}{\dlogmem}
\newcommand{\datalogprog}{\mathsf{Prog}}
\newcommand{\gdatalogprog}{\mathsf{Prog}}
\newcommand{\agatom}{\mathsf{g}}
\newcommand{\thisthread}{{\color{\abstcolor}\mathsf{dis}}}
\newcommand{\thatthread}{{\color{red}\mathsf{env}}}
\newcommand{\thisthreadshort}{{\color{\abstcolor}\mathsf{d}}}
\newcommand{\thatthreadshort}{{\color{red}\mathsf{e}}}
\newcommand{\projectto}[2]{#1\!\downarrow_{#2}}
\newcommand{\unloopy}{\mathsf{acyc}}
\newcommand{\uncassy}{\mathsf{nocas}}
\newcommand{\thatthreadprog}{\com_\thatthread}
\newcommand{\thisthreadprog}[1]{\com_{\thisthread}^{#1}}
\newcommand{\thisthreadprogsimple}{\com_\thisthread}
\newcommand{\thatthreadtype}{\mathsf{type}_\thatthread}
\newcommand{\thisthreadtype}{\mathsf{type}}
\newcommandx{\adwait}[2][1=]{\todo[inline,linecolor=red,backgroundcolor=purple!5,bordercolor=red,#1]{{\color{red}Adwait:} #2}}
\newcommandx{\krishna}[2][1=]{\todo[linecolor=green!50!blue,backgroundcolor=green!5,bordercolor=green!50!blue,#1]{{\color{green!50!blue}Krishna:} #2}}
\newcommandx{\roland}[2][1=]{\todo[linecolor=violet,backgroundcolor=violet!5,bordercolor=violet,#1]{{\color{violet} Roland:} #2}}
\newcommand{\pA}[1]{\pnode_{\textcolor{black}{#1}}}
\newcommand{\yA}[1]{\ynode_{\textcolor{black}{#1}}}
\newcommand\xxrightarrow[1]{\raisebox{-.85pt}{\ensuremath{\smash{\mathrel{%
  \setbox2=\hbox{\stackon{\scriptstyle#1}{\scriptstyle#1}}%
  \stackon[-4.5pt]{%
    \xrightarrow{\makebox[\dimexpr\wd2\relax]{}}%
  }{%
   \scriptstyle#1\,%
  }%
}}}}}
\newcommand{\set}[1]{\{#1\}}
\newcommand{\localtrans}[1]{\xrightharpoondown{#1}}
\newcommand{\messPred}{\mathsf{emp}}
\newcommand{\statePred}{\mathsf{etp}}
\newcommand{\dmessPred}{\mathsf{dmp}}
\newcommand{\dstatePred}{\mathsf{dtp}}
\newcommand{\curr}{\mathsf{lc}}
\newcommand{\varset}{\mathsf{Var}}
\newcommand{\gvar}{\texttt{g}}
\newcommand{\dvar}{\texttt{d}}
\title{Safety Verification of Parameterized Systems under Release-Acquire}
\author{Adwait Godbole \footnote{This work was done when Adwait Godbole was a final year undergraduate student at  IIT Bombay}}{University of California at Berkeley} {adwait@berkeley.edu}{}{}
\author{Shankara Narayanan Krishna}{IIT Bombay, India} {krishnas@cse.iitb.ac.in}{}{}
\author{Roland Meyer}{Institute of Theoretical Computer Science, Braunschweig, Germany} {roland.meyer@tu-bs.de}{}{}
\authorrunning{Adwait Godbole, S. Krishna, Roland Meyer}
\keywords{release acquire, parameterized systems}
\begin{document}

\maketitle 
\begin{abstract}
We study the safety verification problem for parameterized systems under the release-acquire (RA) semantics. 
It has been shown that the problem is intractable for systems with unlimited access 
to atomic compare-and-swap (CAS) instructions.
We show that, from a verification perspective where approximate results
help, this is overly pessimistic.
We study parameterized systems consisting of an unbounded number of
environment threads executing identical but CAS-free programs and a fixed number of distinguished threads that are unrestricted.

Our first contribution is a new semantics that considerably simplifies
RA but is still equivalent for the above systems as far as safety verification is concerned. 
We apply this (general) result to two subclasses of our model. 
We show that safety verification is only \pspace-complete for the bounded
model checking problem where the distinguished threads are loop-free.
Interestingly, we can still afford the unbounded environment.
We show that the complexity jumps to \nexp-complete for thread-modular verification where an unrestricted distinguished `ego' thread
interacts with an environment of CAS-free threads plus loop-free distinguished threads
(as in the earlier setting).
Besides the usefulness for verification, the results are strong in that
they delineate the tractability border for an established semantics. 
\end{abstract}

\section{Introduction}
Release-acquire (RA) is a popular fragment of C++11~\cite{Batty:2011:MCC:1925844.1926394} (in which reads are annotated by acquire and writes by release) that strikes a good balance between programmability and performance and has received considerable attention (see e.g., \cite{DBLP:conf/ecoop/KaiserDDLV17,DBLP:journals/pacmpl/Kokologiannakis18,DBLP:conf/esop/RaadLV18,DBLP:conf/popl/LahavGV16,Alglave:10.1145/2627752,OG15,RSL13,GPS14,GPS++18}).
The model is not limited to concurrent programs, though. 
RA has tight links~\cite{RA16} with causal consistency (CC)~\cite{Causal95}, a prominent consistency guarantee in distributed databases~\cite{conf/sosp/LloydFKA11}. 
Common to RA implementations and distributed databases is that they tend to offer functionality to multi-threaded client programs, be it means of synchronization or access to shared data. 

We are interested in verifying such implementations on top of RA. 
For verification, we can abstract the client program to invocations of the offered functionality~\cite{DBLP:journals/sigact/BloemJKKRVW16}. 
The result is a so-called instance of the implementation in which concurrent threads execute the code of interest. 
There is a subtlety.
As the RA implementation should be correct for every client, we cannot fix the instance to be verified.  
We have to prove correctness irrespective of the number of threads executing the code.
This is the classical formulation of a parameterized system as it has been studied over the last 35 years~\cite{DBLP:journals/sigact/BloemJKKRVW16}. 

We are interested in the decidability and complexity of safety verification for parameterized programs under RA.
The goal is to identify expressive classes of programs for which the problem is tractable.
There are good arguments in favor of this agenda.
From a pragmatic point of view, even if the implementation at hand does not fall into one of the classes identified, we may hope for a reasonably precise encoding.
From a conceptual point of view, tractability of verification is linked to programmability, and understanding the complexity may lead to suggestions for better consistency notions~\cite{LahavBoker20} or programming guidelines, e.g. in the form of type systems~\cite{SMRTypes20}.  
Safety verification is a good fit for linearizability~\cite{HW90}, the de-facto standard correctness conditions for concurrency libraries, and has to be settled before going to more complicated notions.

To explain the challenges of parameterized verification under RA, it will be helpful to have an understanding of how to program under RA. 
The slogan of RA is \emph{never read ``overwritten'' values}~\cite{RA16}.  
Assume we have shared variables $\gvar$ and $\dvar$, initially $0$, and a thread first stores $1$ to $\dvar$ and then $1$ to $\gvar$. 
Assume a second thread reads the $1$ from~$\gvar$.  
Under RA, that thread can no longer claim $\dvar=0$. 
Formulated axiomatically~\cite{HerdingCats}, the reads-from, modification order, program order, and from-read should be acyclic~\cite{RA16}. 
While less concise, there are operational formulations of RA that make explicit, information about the computation which will be useful for our development~\cite{RAOperationally16,RAOperationallyVafeiadis17,DBLP:conf/ecoop/KaiserDDLV17}. 
The mechanism is as follows. 
Program and modification order are encoded as natural numbers, called \emph{timestamps}. 
Each thread stores locally a \emph{view} object, a map from shared variables to timestamps. 
This map reflects the thread's progress in terms of seeing (or as above hearing from) stores to a shared variable. 
The communication is organized in a way that achieves the desired acyclicity. 
Store instructions generate \emph{messages} that decorate the variable-value pair by a view. 
This view is the one held by the thread except that the timestamp of the variable being written is raised to a strictly higher value. 
The shared memory is implemented as a pool to which the generated messages are added and in which they remain forever. 
When loading a message from the pool, the timestamp of the variable given by the message must be at least the timestamp in the thread. 
The views are then joined so that the receiver cannot load values older than what the sender has seen. 

The timestamps render the RA semantics infinite-state, which makes algorithmic verification difficult. 
Indeed, the problem of solving safety verification under RA in a complete way has been studied very recently in the non-parameterized setting and proven to be undecidable even for programs with finite control flow and finite data domains~\cite{AAAK19}. 
With this insight, \cite{AAAK19} proposes to give up completeness and show how to encode an under-approximation of the safety verification problem into sequential consistency~\cite{Lamport79}.  
Lahav and Boker~\cite{LahavBoker20} drew a different conclusion.   
They proposed strong release-acquire~(SRA) as a new consistency guarantee under which safety verification is decidable for general non-parameterized programs. 
Unfortunately, the lower bound is again non-primitive recursive. 
Also the related problem of checking CC itself for a given implementation has been studied. 
It is undecidable in general, but EXPSPACE-complete under the assumption of data independence~\cite{CausalConsistency17}. 

To sum up, despite recent efforts~\cite{CausalConsistency17,AAAK19,LahavBoker20} we are missing an expressive class of programs for which the safety verification problem under RA is tractable. 
The parameterized verification problem has not been studied.

\smallskip 
\medskip \noindent{\textbf{Problem Statement}.}  
The parameterized systems of interest have the form $\thatthread \parallel \thisthread_1 \parallel \dots \parallel \thisthread_n$. 
We have a fixed number of \textit{distinguished} threads
collectively referred to as $\thisthread$ and executing programs $\thisthreadprog{1}, \cdots \thisthreadprog{n}$,  respectively. 
Moreover, we have an \textit{environment} consisting of arbitrarily many threads executing the same program $\thatthreadprog$. 
We obtain an \emph{instance} of the system by also fixing the number of number of environment threads. 
The safety verification problem is as follows:
\smallskip
\addtolength\leftmargini{-0.15in}
\begin{quote}
\emph{Safety Verification for Parameterized Systems}:\\
Given a parameterized system $\thatthread \parallel \thisthread_1 \parallel \dots \parallel \thisthread_n$, is there an instance of the system and a computation in that instance that reaches an assertion violation?

\end{quote}
\smallskip

The complexity of the problem depends on the system class under consideration. 
We denote system classes by signatures of the form $\thatthread(\thatthreadtype) \parallel \thisthread_1(\thisthreadtype_1) \parallel \dots \parallel \thisthread_n(\thisthreadtype_n)$, where the types constrain the programs executed by the threads. 
The parameters are the structure of the control flow, which may be loop-free, denoted by $\unloopy$, and the instruction set, which may forbid the atomic compare-and-swap (CAS) command, denoted by $\uncassy$. 
We drop the type if no restriction applies. 
If a thread is not present, we do not mention it in the signature. 
With this, $\thisthread_1(\unloopy) \parallel \thisthread_2(\uncassy)\parallel \thisthread_3$ is a non-parameterized system (without $\thatthread$ threads) having 
three $\thisthread$ threads executing: a loop-free $\thisthreadprog{1}$, $\thisthreadprog{2}$ which does not have CAS instructions, and $\thisthreadprog{3}$ which is free of restrictions, respectively. 

\smallskip
\noindent{\textbf{Justifying the Parameters.}} 
In \cite{AAAK19}, the safety verification problem under RA has been shown to be undecidable for non-parameterized  ($\thatthread$-free) systems from $\thisthread_1(\uncassy)\parallel \thisthread_2(\uncassy) \parallel \thisthread_3 \parallel \thisthread_4$ and non-primitive-recursive for systems from $\thisthread_1(\uncassy)\parallel \thisthread_2(\uncassy)$. 
There are several conclusions to draw from this.  

With distinguished threads, we cannot hope to arrive at a tractable verification problem. 
We take the bounded model checking~\cite{BMC01} approach and consider loop-free code. Acyclic programs, however, are not very expressive. 
Fortunately, RA implementations tend to be parameterized, and, as we will see, this frees us from the acyclicity restriction. 
The fact that parameterization simplifies verification has been observed in various works~\cite{DBLP:conf/lics/Kahlon08,Hague11,DBLP:conf/concur/TorreMW15,EsparzaGantyMajumdarJACM16,tso-param2019} that we discuss below.  

Restricting the use of CAS requires an explanation. 
The class $\thatthread$ of unconstrained environment threads enables what we call 
\textit{leader isolation}: an $\thatthread$ thread can distinguish itself from the others by acquiring a CAS-based lock. 
Even just $t$ CAS operations allows for the isolation of $t$ distinguished threads,
which takes us back to the results of \cite{AAAK19} for $t = 2$ resp. $t=4$.
Acyclicity will not help in this case, in section \ref{app:undec} we show that safety verification for $\thatthread(\unloopy)$ is undecidable.

\smallskip

\medskip\noindent{\bf{Contributions}}.
We state our main results and present the technical details in the later parts. 

\medskip \noindent{\textbf{A Simplified Semantics}}. 
We consider parameterized systems of the form $\thatthread(\uncassy)\parallel\thisthread_1 \parallel \dots \parallel \thisthread_n$.
Our first contribution is a simplified semantics (Section \ref{Section:Simplification}) that is equivalent with the standard RA semantics as far as safety verification is concerned.  
The simplified semantics uses the notion of \textit{timestamp abstraction}, which allows us to be imprecise about the exact timestamps of the $\thatthread$ threads. 
Note that we do not make any assumptions on the form of the distinguished threads but support cyclic control flow and CAS. 
So the result in particular applies to the intractable classes from~\cite{AAAK19}, even when extended with a parameterized environment. 
Supporting CAS in the distinguished threads is important. 
Without it, there is no way to capture the optimistic synchronization strategies used in performance critical programming~\cite{Herlihy91}. 

\medskip

\noindent We continue to apply the simplified semantics to prove tight complexity bounds for the safety verification problem in two particular cases of $\thisthread$ programs. 

\medskip \noindent{\textbf{Loop-Free Setting.}}
In Section \ref{sec:consec-lfree}, we show a \pspace{}-upper bound 
for the safety verification problem of parameterized programs from $\thatthread(\uncassy)\parallel \thisthread_1(\unloopy)\parallel \cdots\parallel\thisthread_n(\unloopy)$. 
The class reflects the bounded model checking problem~\cite{BMC01}, which unrolls a given program into a loop-free under-approximation. 
Interestingly, we can sequeeze into \pspace{} the unbounded environment of cyclic threads. 
Our decision procedure is not only optimal complexity-wise, it also has the potential of being practical (we do not have experiments).
We show how to encode the safety verification problem into the query evaluation problem for linear Datalog, the format supported by Horn-clause solvers \cite{bjorner2013solving,bjorner2015horn}, a state-of-the-art backend in verification. 

\medskip \noindent{\textbf{Leader Setting.}}
We continue to show an \nexp-upper bound 
for $\thatthread(\uncassy)\parallel\thisthread_1(\unloopy)\parallel\cdots\parallel\thisthread_{n}(\unloopy)\parallel\leader$
in Section \ref{sec:nexp-c}. 
These systems add an unconstrained distinguished thread, called the leader (denoted $\leader$), to the system from Section \ref{sec:consec-lfree}. 
The class is in the spirit of thread-modular verification techniques~\cite{OwickiG76,FQ02}, where the safety of a single `ego' thread is verified when interacting with an environment.

\medskip \noindent We note that these results delineate the border of tractability: adding another $\thisthread$ thread results in a non-primitive-recursive lower bound \cite{AAAK19}, and adding CAS operations to $\thatthread$ results in undecidability (section \ref{app:undec}).

\medskip \noindent{\textbf{Lower Bounds}}.
Our last contributions are matching lower bounds for the two classes. 
Interestingly, they hold even in the absence of CAS. We show that the safety verification problem is \pspace-hard already for $\thatthread(\uncassy, \unloopy)$, while it is \nexp-hard for  $\thatthread(\uncassy, \unloopy) \parallel \leader(\uncassy)$. 

\medskip
\noindent{\bf{Related Work}}.
There is a vast body of work on algorithmic verification under consistency models.
Since our interest is in decidability and complexity, we focus on complete methods. 
We have already discussed the related work on RA and CC. \\[0.2cm]
\noindent \emph{Other Consistency models.}
Atig et al. have shown that safety verification is decidable for assembly programs running on TSO, the consistency model of x86 architectures~\cite{AtigTSO10}.
The result has been generalized to consistency models with non-speculative writes~\cite{ABBM12} and very recently to models with persistence~\cite{PersistentTSO21}.
It has also been generalized to parameterized programs executed by an unbounded number of threads~\cite{LoadBufferTSO18}. 
Behind the decision procedures are (often drastic) reformulations of the semantics combined with well-structuredness arguments~\cite{DBLP:conf/lics/AbdullaJ93}. 
A notable exception is~\cite{tso-param2019}, showing that safety verification under TSO can be solved in \pspace{} for cas-free parameterized programs, called $\thatthread(\uncassy)$ here.
On the widely-used Power architecture safety is undecidable~\cite{PowerUndec20}.  

The decidability and complexity of verification problems has been studied also for distributed databases and data structures. 
Enea et al. considered the problem of checking eventual consistency (EC)~\cite{EC09} of replicated databases and developed a surprising link to vector addition systems~\cite{EventualConsistency14} that yields decidability and complexity results for the safety and liveness aspects of EC. 
For concurrent data structures, the default correctness criterion is linearizability wrt. a specification~\cite{HW90}. 
While checking linearizability is \expspace-complete in general~\cite{Linearizability00,LinearizabilityLower15}, important data structures (for which the specification is then fixed) admit \pspace-algorithms~\cite{LinReach15}.\\[0.2cm]
\noindent \emph{Parameterized Systems with Asynchronous Communication}. 
We exploit a pleasant interplay between the asynchronous communication in RA and the parameterization of our systems in the number of threads.
Kahlon~\cite{DBLP:conf/lics/Kahlon08} was the first to observe that parameterization simplifies verification in the case of concurrent pushdowns.
Hague \cite{Hague11} showed that safety verification remains decidable when adding a distinguished leader thread. 
Esparza, Ganty, Majumdar studied the complexity of what is now called leader-contributor systems~\cite{EsparzaGantyMajumdarJACM16}.
It is surprisingly low, \np-complete for systems of finite-state components and \pspace-completeness for systems of pushdowns.  
At the heart of their technique is the so-called copycat-lemma. 
The work has been generalized \cite{DBLP:conf/concur/TorreMW15} to all classes of models that are closed under regular intersection and have a computable downward-closure.
It has also been generalized to liveness verification~\cite{DBLP:conf/cav/Durand-Gasselin15,DBLP:conf/cav/FortinMW17}. 
Finally, the study has been generalized to parameterized complexity, for safety~\cite{DBLP:journals/jar/ChiniMS20} and liveness~\cite{DBLP:conf/fsttcs/Chini0S19}. 
Our work is related in that the distinguished threads behave like a leader. 
Moreover, our simplified semantics relies on an infinite-supply property the proof of which gives a copycat variant for RA. 
Our Datalog encoding is reminiscent of the notion of Strahler number~\cite{flajolet1979number}. 

Leader-contributor systems are closely related to broadcast networks~\cite{EsparzaFinkelMayrBroadcast99,DBLP:conf/concur/SinghRS09}.
Also there, safety verification has been found to be suprisingly cheap, namely \ptime-complete~\cite{DBLP:conf/forte/DelzannoSZ12}. 
For liveness verification, there was a gap between \expspace{} and \ptime{} that was settled recently with a non-trivial polynomial-time algorithm~\cite{BroadcastLiveness19}. 
What is new in broadcast networks and neither occurs in leader-contributor systems nor in our setting is the problem of reconfiguration~\cite{ReconfigurationSagnier12,ConstrainedReconfiguration18,ReconfigurationBertrand2019}.

\section{The Release-Acquire Semantics}\label{sec:ra}
A parameterized system consists of an unknown and potentially large number of threads, all running the same program. 
Threads compute locally over a set of registers and interact with each other by writing to and reading from a shared memory. The interaction with the shared memory is under the Release Acquire (RA) semantics \cite{RA16,RAOperationally16,RAOperationallyVafeiadis17}.   

\setlength{\belowcaptionskip}{-10pt}

\tikzset{background rectangle/.style={fill=none
}}
\begin{figure*}[h]
\centering
\small
\resizebox{\textwidth}{!}{
\begin{tikzpicture}[codeblock/.style={line width=0.5pt, inner xsep=0pt, inner ysep=5pt}  , show background rectangle]
\node[codeblock] (init) at (current bounding box.north west) {
$
\def\arraystretch{1}
\begin{array}{c}
\rowcolor{orange!5}
\begin{array}{cccc}
\inferrule{(\com_1, \avaluation, \aview)\localtrans{\anevent}(\com_1', \avaluation', \aview')
  }{
  (\com_1;\com_2, \avaluation, \aview)\localtrans{\anevent} (\com_1';\com_2, \avaluation', \aview')
  } 
  &
\inferrule{i=1,2}{
  (\com_1\choice \com_2, \avaluation, \aview)\localtrans{} (\com_i, \avaluation, \aview)
  } 
  &
\inferrule{~}{
  (\myskip;\com, \avaluation, \aview)\localtrans{} (\com, \avaluation, \aview)
  } 
  &
\inferrule{~}{
  (\assert{\texttt{false}}, \avaluation, \aview)\localtrans{} \bot
  } 
\end{array}
  \\[0.5cm]
\rowcolor{orange!5}
\begin{array}{ccc}
\inferrule{~}{
  (\com^*, \avaluation, \aview)\localtrans{} (\myskip\choice \com;\com^*, \avaluation, \aview)}
&
\inferrule{
  \semOf{\anexp}(\avaluation(\vecreg)) =  \avalue\quad \avaluation'=\avaluation[\areg\mapsto \avalue]
  }{
  (\assign{\areg}{\anexpOf{\vecreg}}, \avaluation, \aview)\localtrans{} (\myskip, \avaluation', \aview)
  }
  &
\inferrule{
  \semOf{\anexp}(\avaluation(\vecreg)) \neq 0
  }{
  (\assume{\anexpOf{\vecreg}}, \avaluation, \aview)\localtrans{} (\myskip, \avaluation, \aview)
  }
\end{array}
  \\[0.5cm]
\rowcolor{blue!5}
\begin{array}{cc}
\rulestorelocal\quad\inferrule{
\avaluation(\areg)=\avalue\quad \aview<_{\xvar}\aview'
  }{
  (\store{\xvar}{\areg}
, \avaluation, \aview)\localtrans{\storesym, (\xvar, \avalue, \aview')} (\myskip, \avaluation, \aview')
  }
  \hspace{0.5cm} &
  \hspace{0.5cm}
\ruleloadlocal\quad\inferrule{
\aview(\xvar)\leq\aview'(\xvar)\quad \avaluation'=\avaluation[\areg\mapsto \avalue]}{
  (\load{\areg}{\xvar}
, \avaluation, \aview)\localtrans{\loadsym, (\xvar, \avalue, \aview')} (\myskip, \avaluation', \aview\join\aview')
  } 
\end{array}
  \\[0.5cm] 
  \rowcolor{blue!5}
  \small{\rulecaslocal
  \quad\inferrule{\avaluation(\areg_1) = \avalue_1\quad \avaluation(\areg_2)=\avalue_2\quad  \aview(\avar)\leq \aview'(\avar)=\atimestamp\\
\widetilde\aview = \aview'[\avar\mapsto\atimestamp+1]\quad\aview'' = \aview \join \widetilde \aview}{
  (\casOf{\avar, \areg_1, \areg_2}
, \avaluation, \aview)\localtrans{\loadsym, (\avar, \avalue_1, \aview')} \localtrans{\storesym, (\avar, \avalue_2, \aview'')}(\myskip, \avaluation, \aview'')}}
      \\[0.5cm] 
   \addlinespace[0.1cm] \hline \addlinespace[0.1cm]
  \rowcolor{green!5}
  \begin{array}{cc}
\ruleloadglobal~~\inferrule{\anlcfmap(\athread) = \anlcf \quad
\anlcf\localtrans{\loadsym, \anevent}\anlcf'\quad \anevent\in \amem
  }{
  (\amem, \anlcfmap) \xrightarrow{(\athread, \anevent)} (\amem, \anlcfmap[\athread\mapsto\anlcf'])
  } 
  \hspace{0.5cm}&
  \hspace{0.5cm}
  \rulestoreglobal~~\inferrule{
\anlcfmap(\athread) = \anlcf \quad \anlcf\localtrans{\storesym, \anevent}\anlcf'\quad \anevent \noconflict \amem
  }{
  (\amem, \anlcfmap)\xrightarrow{(\athread, \anevent)} (\amem\cup\set{\anevent}, \anlcfmap[\athread\mapsto\anlcf'])
  }
\end{array}
  \\
  \rowcolor{green!5}
  \begin{array}{cc}
  \rulecasglobal~\inferrule{\anlcfmap(\athread) = \anlcf \quad
\anlcf\localtrans{\loadsym, \anevent_l}\localtrans{\storesym, \anevent_s}\anlcf'\quad \anevent_l \in \amem \quad  \anevent_s \noconflict \amem
  }{
  (\amem, \anlcfmap) \xrightarrow{(\athread, \anevent)} (\amem\cup\set{\anevent_s}, \anlcfmap[\athread\mapsto\anlcf'])
  }
  \hspace{0.5cm}&
  \hspace{0.5cm}
  \ruleunlabelled~\inferrule{\anlcfmap(\athread) = \anlcf \quad
\anlcf\localtrans{}\anlcf'}{
  (\amem, \anlcfmap)\xrightarrow{\athread} (\amem, \anlcfmap[\athread\mapsto\anlcf'])
  } 
\end{array}
\end{array}$
};
\end{tikzpicture}
}
\vspace*{-5mm}
\caption{Local transition relation: silent (thread-local) transitions (pink), shared memory transitions (blue). Global transition relation (below in green)}
\label{app:TransitionRelation}
\end{figure*}

\subsection{Program Syntax}
We model the individual threads in our system as (non-deterministic) sequential programs.
Assume a standard while-language $\setcom$ defined by: 

\begin{tcolorbox}[ams align*,colback=yellow!10!white,colframe=white]
\com\;::=&\quad 
\myskip\bnf
\assume{\anexpOf{\vecreg}}\bnf 
\assert{\texttt{false}} \bnf
\assign{\areg}{\anexpOf{\vecreg}} \bnf 
\com;\com\bnf \com \choice \com\bnf 
\com^*\bnf \\
&\quad
\load{\areg}{\xvar} \bnf
\store{\xvar}{\areg} \bnf 
\casOf{\avar, \areg_1, \areg_2}
\end{tcolorbox}

The programs compute on (thread-local) registers $\areg$ from the finite set $\setreg$ using assume, assert, assignments, sequential composition, non-deterministic choice, and iteration. 
Conditionals $\mathsf{if}$ and iteratives $\mathsf{while}$ can be derived from these operators, and we use them where convenient. 
The shared memory variables $\xvar$ are accessed only by means of load, store and compare-and-swap (CAS) operations as $\load{\areg}{\xvar}$,  
$\store{\xvar}{\areg}$ and $\casOf{\xvar, \areg_1, \areg_2}$, respectively. These instructions are also referred to as \emph{events}. 
We have a finite set $\varset$ of shared variables, and work with the data domain  $\domain = \nat$.
We do not insist on a shape of expressions $\anexp$ but require an interpretation $\semOf{\anexp}:\domain^{n}\rightarrow\domain$ that respects the arity $n$ of the expression.

\subsection{Release-Acquire (RA) Semantics}
We give the semantics of parameterized systems under release-acquire consistency. 
We opted for an operational~\cite{RAOperationally16,RAOperationallyVafeiadis17} over an axiomatic~\cite{RA16} definition, and follow~\cite{AAAK19}. 
What makes the operational definition attractive is that it comes with a notion of configuration or state of the system that we use to reason about computations. 
We first define thread-local configurations, then add the shared memory, and give the global transition relation.

\noindent \textbf{Local Configurations.} The RA semantics enforces a total order on all stores to the same variable that have been performed in the computation. 
We model these total orders by $\settime =  \nat$ and refer to elements of $\settime$ as timestamps.  
Using the total orders, each thread keeps track of its progress in the computation. 
It maintains a \emph{view} from $\setviews = \varset \rightarrow \settime$, 
a function, that for a shared variable $\xvar$, returns the timestamp of the most recent event the thread has observed on $\xvar$. 
Besides, the thread keeps track of the command to be executed next (which can be represented as program counter) and the register valuation from $\setregvals  = \setreg\rightarrow \domain$.
The set of \emph{thread-local configurations} is thus
\begin{align*}
\setlocalconfigs\; =\; \setcom\times \setregvals \times \setviews. 
\end{align*}

\noindent \textbf{Unbounded Threads.} The number of threads executing in the system is not known a priori. 
As long as we restrict ourselves to safety properties, there are two ways of modeling this. 
One way is to define \emph{instance programs} for a given number of threads, and then requiring correctness of all instances, as has been done in~\cite{ParamBook}.  
The alternative is to consider an infinite number of threads right away. 
We take the latter approach and define $\settid = \nat$ to be the set of thread identifiers. 
The thread-local configuration map then assigns a local configuration to each thread:
\begin{align*}
\setlocalconfigsmap \; =\;  \settid\rightarrow \setlocalconfigs.
 \end{align*}

\noindent \textbf{Views.} The views maintained by the threads are used for synchronization. 
They determine where in the (appropriate) total order a thread can place a store and from which stores it can load a value.   
To achieve this, the shared memory consists of \emph{messages}, which are variable value pairs enriched by a view, with the form $(\xvar, \avalue, \aview)$: 
 \begin{align*}
\setevents\; =\; \varset \times \domain \times \setviews.
 \end{align*}

\noindent \textbf{Shared Memory.} A \emph{memory state} is a set of such messages, and we use $\setmem =  2^\setevents$ for the set of all memory states. 
With this, the set of all \emph{configurations} of parametrized systems under release-acquire is 
\begin{align*} 
\setconfigs = \setmem\times \setlocalconfigsmap. 
 \end{align*}

\noindent \textbf{Transitions.} To define the transition relation among configurations, we first give a \emph{thread-local transition relation} among thread-local configurations $\localtrans{}\ \subseteq \setlocalconfigs\times \setlabels\times \setlocalconfigs$ in Figure~\ref{app:TransitionRelation}. 
Thread-local transitions may be labeled or unlabeled, indicated by $\setlabels =\set{\varepsilon}\cup (\set{\loadsym, \storesym, \cas}\times \setevents)$. 
The unlabeled transitions capture the control flow within a thread and properly handle assignments and assumes. 
They are standard. 
The message-labeled transitions capture the interaction of the thread with the shared memory. We elaborate on the load, store, and CAS transitions by which a thread with local view $\aview$, interacts with the shared memory. 

\noindent{\bf{Load}}. A load transition $\load{\areg}{\xvar}$ picks a message $(\xvar, \avalue, \aview')$ from the shared memory where $\avalue$ is the value stored in the message and updates its register $\areg$ with value $\avalue$.  
The message should not be outdated, which means the timestamp of~$\xvar$ in the message, $\aview'(\xvar)$, should be at least the thread's current timestamp for $\xvar$, $\aview(\xvar)$.  
The timestamps of other variables do not influence the feasibility of the load transition. 
They are taken into account, however, when the load is performed. 
The thread's local view is updated by joining the thread's current view $\aview$ and $\aview'$ by taking the maximum timestamp per address; $(\aview\join\aview') = \lambda\anadr. \max(\aview(\anadr), \aview'(\anadr))$.

\noindent{\bf{Store}}. When a thread executes a store $\store{\xvar}{\areg}$ it adds a message 
$(\xvar, \avalue, \aview')$ to the memory, where 
 $\avalue$ is the value held by the register $\areg$.
The new thread-local view (and the message view), $\aview'$, is obtained from the current $\aview$ by increasing the time-stamp of $\xvar$.
We use  $\aview<_{x}\aview'$ to mean $\aview(\xvar)<\aview'(\xvar)$ and $\aview(\yvar)=\aview'(\yvar)$ for all variables $\yvar \neq\xvar$. 

\noindent{\bf{CAS}}. A CAS transition is a load and store instruction executed atomically. $\casOf{\xvar, \areg_1, \areg_2}$ has the intuitive meaning $\mathsf{atomic\{}\load{\areg}{\xvar};\;\assume{\areg = \areg_1};\;\store{\xvar}{\areg_2}\mathsf{\}}$.
The instruction checks whether the shared variable $\xvar$ holds the value of $\areg_1$ and, in case it does, sets it to the value of $\areg_2$. The check and the assignment happen atomically. Under RA, this means the timestamp $\atimestamp$ of the load instruction 
and the timestamp $\atimestamp'$ of the store instruction involved in the CAS should be adjacent, $\atimestamp'=\atimestamp+1$.

The transition relation among configurations 
$\xrightarrow{}\ \subseteq \setconfigs\times \settid\times (\setevents\cup\set{\varepsilon})\times \setconfigs$ is defined in Figure~\ref{app:TransitionRelation}. 
It is labeled by a thread identifier and possibly message (if the transition interacts with the shared memory). 
The relation expects a thread $\athread$ which performs the transition. 
In the case of local computations, there are no more requirements and the transition propagates to the configuration. 
In the case of loads, we require the memory to hold the message to be loaded.
In the case of stores, the message to be stored should not conflict with the memory.
In the case of CAS, we require both of the above, and that the two messages should have consecutive timestamps.
We defer the definition of non-conflicting messages for the moment until we can give it in broader perspective.

Fix a parametrized system of interest $\com$. 
The initial thread-local configuration is $\anlcf_{\init}=(\com, \avaluation_0, \aview_0)$, where the register valuation assigns $\avaluation_0(\areg)=0$ to all registers and the view has $\aview_0(\xvar)=0$ for all $\xvar \in \varset$.  
The \emph{initial configuration} of the parametrized system is $\acf_0=(\setmem_{\init}, \anlcfmap_{\init})$ with an 
initial memory $\setmem_{\init}$ consisting of messages where all shared variables store the  
value $\avalue_\init \in \domain$, along with the initial view which assigns time stamp 0 to all shared variables, 
and $\anlcfmap_{\init}(\athread)=\anlcf_{\init}$ for all threads. 
A \emph{computation} (or a run) is a finite sequence of consecutive transitions
\begin{align*}
\arun\; =\; \acf_0\xrightarrow{(\athread_1, \anevent_1)}\acf_1\xrightarrow{(\athread_2, \anevent_2)}\ldots \xrightarrow{(\athread_n, \anevent_n)}\acf_{n}.
\end{align*}

The computation is initialized if $\acf_0=\acf_{\init}$. 
We use $\tstamps{\arun}$ for the set of all non-zero timestamps that occur in all configurations across all variables. 
We use $\tidOf{\arun}$ to refer to the set of thread identifiers labeling the transitions. 
For a set $\settid'\subseteq \settid$ of thread identifiers, we use $\projectto{\arun}{\settid'}$ to project the computation to transitions from the given threads.  
With $\firstOf{\arun}=\acf_0$, $\lastOf{\arun}=\acf_n$ we access the first/last configurations in the computation. 

\begin{example}
Consider the program given in Figure \ref{Figure:dekker} which implements a simplified version of the Dekker's mutual exclusion protocol for two threads.  
There are two shared variables $\xvar$ and $\yvar$. Both $\xvar, \yvar$ are initialized to 0, and 
at instructions $\lambda_0, \lambda'_0$ the registers $\areg_1, \areg'_1$ are initialized to 1.
The first thread $\athread_1$ signals that it wants to enter the critical section by 
writing the value 1 to $\xvar$. It then checks if thread $\athread_2$ has asked to  enter the critical section 
by reading the value of $\yvar$ and storing it into the register $\areg_1$.  The thread $\athread_1$ is allowed to  enter the critical section only if the value stored in the register $\areg_1$ is 0.
The second thread $\athread_2$ behaves in a symmetric manner. 	
\vspace{-0.5cm}
\tikzset{background rectangle/.style={fill=none}}
\begin{figure}[h]
\centering
\small
\begin{subfigure}{\textwidth}
   \centering
\begin{tikzpicture}[codeblock/.style={line width=0.5pt, inner xsep=0pt, inner ysep=5pt}  , show background rectangle]
\node[codeblock] (init) at (current bounding box.north west) {
$
\def\arraystretch{1}
\begin{array}{c}
\text{Variables } \xvar \text{ and } \yvar \text{ have been initialized to 0} \\
\begin{array}{l|l}
   \hline
   \text{Thread } \athread_1 & \text{Thread } \athread_2 \\ \hline 
   \lambda_0:~ \assign{\areg_1}{1} & \lambda'_0:~ \assign{\areg'_1}{1} \\
   \lambda_1:~ \store{\xvar}{\areg_1} & \lambda'_1:~ \store{\yvar}{\areg'_1}  \\
   \lambda_2:~ \load{\areg_1}{\yvar} & \lambda'_2:~ \load{\areg'_1}{\xvar} \\
   \lambda_3:~ \texttt{if}(\areg_1 == 0):  & \lambda'_3:~ \texttt{if}(\areg'_1 == 0): \\
   \quad\qquad \texttt{criticalsection} & \quad\qquad \texttt{criticalsection} \\ \hline 
\end{array}
\end{array}$
};
\end{tikzpicture}
\end{subfigure}
\begin{subfigure}{\textwidth}
   \centering
   \includegraphics[scale=0.4]{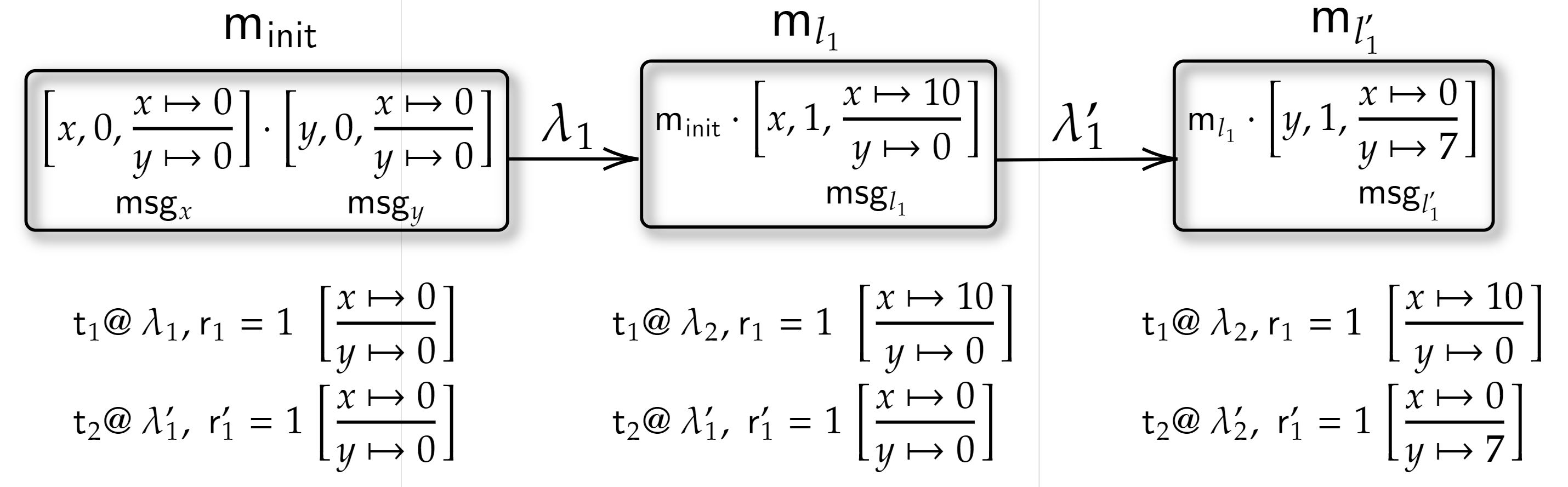}
\end{subfigure}
\caption{On the top, is a simplified version of Dekker's mutual exclusion protocol. 
Below, is a partial execution sequence under RA. The rectangles show the contents (messages) of the shared memory. Messages have three components - (1) the variable, (2) value of the message and (3) the message view - a map from $\{\xvar,\yvar\}$ to the set of timestamps $\settime$ ($\mathbb{N}$). The lines below show the thread-local state - instruction-pointer, register valuation and thread-local view}.
\label{Figure:dekker}
\end{figure}

\newcommand{\pc}{\mathsf{pc}}

Under Sequential Consistency (SC) \cite{Lamport79}, which is a stronger notion of consistency, the mutual exclusion property (i.e., at most one thread is in the critical section at any time) is preserved. However, this is not the case under the RA memory model. To see why, consider the execution sequence presented in Fig \ref{Figure:dekker}. 
At each instant, the figure shows where the instruction-counter (i.e., the label of the next instruction to get executed) resides in each of the threads, along with 
	 the values of the registers. The black arrows with instruction labels $\lambda_1, \lambda'_1$ show 
	 the evolution of the run on executing  the instruction labeled $\lambda_1, \lambda'_1$ respectively.  
	 Let $\amem_{\lambda}$ represent the memory obtained after executing the instruction labeled $\lambda$, and let 
	 $\anevent_{\lambda}$ be the unique new message (if any) that is part of $\amem_{\lambda}$ after the execution 
	 of the instruction labeled $\lambda$. The initial memory is $\amem_{\init}$ where $\xvar, \yvar$ have values and timestamps 0; 
	 $\anevent_{\xvar}, \anevent_{\yvar}$ represent the messages in $\amem_{\init}$ corresponding to $\xvar, \yvar$.
  The execution of the instruction labeled by $\lambda_1$ results 
 in  the addition of a new message $\anevent_{\lambda_1}$ to the memory whose timestamp (10) is higher 
  than 0 (which is  the current timestamp of the variable $\xvar$ for $\athread_1$). The view of $\athread_1$ is then updated to $\xvar \mapsto 10, \yvar \mapsto 0$.
     Likewise, the execution of the instruction labeled by $\lambda'_1$ 
  results in the addition of  a new message $\anevent_{\lambda'_1}$ to the memory 
  with a higher time stamp (7). This will result in the  update of  the view of $\athread_2$ to $\xvar \mapsto 0, \yvar \mapsto 7$, wrt. the variable $\yvar$. 
  The read instruction labeled by $\lambda_2$ is then allowed to use the message $\anevent_{\yvar}$ to fetch the value of $\yvar$,
  since the view of $\athread_1$ wrt. $\yvar$ is 0. Likewise, in the case of $\athread_2$ concerning the execution of the instruction labeled by $\lambda'_2$, 
  the message $\anevent_{\xvar}$ is used since the view of $\athread_2$ wrt. $x$ is 0. 
    After these steps, both threads enter their respective critical section. 
\end{example}

\subsubsection{Conflict} 
\label{subsec:conflict}

We need a notion of conflict not only for messages, but also for memories, configurations, and computations. 
Two messages are \textit{non-conflicting}, denoted by $(\xvar_1, \avalue_1, \aview_1)\noconflict (\xvar_2, \avalue_2, \aview_2)$, if either their variables  are different, $\xvar_1\neq \xvar_2$, the timestamps are different, $\aview_1(\xvar_1) \neq \aview_2(\xvar_2)$, or the timestamps are zero, $\aview_1(\xvar_1){=}0{=} \aview_2(\xvar_2)$. 
Observe that initial messages do not conflict with any other message. 

Two memory states are non-conflicting, $\amem_1\noconflict\amem_2$, if for all $\anevent_1\in \amem_1$ and all $\anevent_2\in \amem_2$ we have $\anevent_1\noconflict\anevent_2$. 
Two configurations are non-conflicting, $\acf_1\noconflict\acf_2$, if their memory states are non-conflicting. 
Two computations are non-conflicting, denoted $\arun\noconflict\arun'$, if they use different threads and non-conflicting messages, $\tidOf{\arun}\cap\tidOf{\arun'}=\emptyset$ and $\lastOf{\arun}\noconflict \lastOf{\arun'}$. 
\section{Undecidability of $\thatthread(\unloopy)$}
\label{app:undec}

In this section, we establish the undecidabililty of the class $\thatthread(\unloopy)$, that is the class with loop-free $\thatthread$ threads (which can execute arbitrarily-many CAS operations) and without any $\thisthread$ threads. This result essentially shows that even with the loop-free assumption, allowing $\thatthread$ threads to perform CAS operations is in itself intractable from a safety verification viewpoint. 
Hence, the $\uncassy$ restriction that we impose on $\thatthread$ threads is a justified means of achieving tractabililty.

In fact, we will show a stronger result. We will show that we can transform a non-paramterized system consisting of $k$ distinguished  threads
having  full instruction set and loops under RA (the class $\thisthread_1 \parallel\cdots \parallel \thisthread_k$) to a parameterized system 
 corresponding to the class $\thatthread(\unloopy)$ such that, control-state reachability is preserved. With this equivalence, the claim would follow by the undecidability result of \cite{AAAK19}.

\subsection{Constructing the equivalent loop-free program}
\newcommand{\badlabel}{\lambda_\mathsf{fail}}
\newcommand{\nextstate}{\mathsf{next}}
\newcommand{\instr}{\iota}
\newcommand{\assumeinst}{\mathsf{assume}}
\newcommand{\assertinst}{\mathsf{assume}}
\newcommand{\loadinst}{\mathsf{assume}}
\newcommand{\storeinst}{\mathsf{assume}}
\newcommand{\casinst}{\mathsf{assume}}
\newcommand{\assigninst}{\mathsf{assign}}
\newcommand{\startstate}{\mathsf{start}}
\newcommand{\finalstate}{\mathsf{end}}
\newcommand{\threemapping}{\mbox{\Lightning}}
\newcommand{\regvar}{\mathfrak{r}}

To show this result, we transform an input of $n$ programs $\{\com_1, \com_2, \cdots, \com_n\}$ and a failure state label $\badlabel$ in some $\com_i$ (with possibly full instruction set and loops) and transform them into a single program $\com_\thatthread$ and failure state $\badlabel'$, the control flow for which is loop-free but uses the full instruction set including CAS operations. We claim that the state label $\badlabel$ is reachable in $\thisthread_1 \parallel\cdots \parallel \thisthread_n$ with $\thisthread_i$ executing $\com_i$ if and only if the state label $\badlabel'$ is reachable in the system $\thatthread(\unloopy)$ with the environment threads executing $\com_\thatthread$. Let the variable set, data domain and register set of the original system be $\varset$, $\domain$ and $\setreg = \{\areg_1, \cdots, \areg_k\}$ as usual. We assume that the memory is initialized to 0 on all variables.

\textbf{Converting a single program $\com_i$ to $\com_i'$}
We show how we can convert one thread program $\com_i$ into a loop-free program $\com_i'$ and then show how we can combine all the programs together into a single loop-free program $\com_\thatthread$. Consider for an $i$, the program $\com_i$. For the purposes of this construction, we will assume that the program $\com_i$ has been specified as a transition system rather than in the while-language syntax. It is clear that both representations are equivalent and can be interconverted with only polynomial overhead. Hence we assume that $\com_i = (Q, \Delta, \instr)$ where $Q$ is the set of control states, $\Delta$ is the transition relation and $\instr$ maps each transition to its corresponding instruction from $\{\myskip, \assume{\anexpOf{\vecreg}}, \assign{\areg}{\anexpOf{\vecreg}}, \load{\areg}{\xvar}, \store{\xvar}{\areg}, \casOf{\avar, \areg_1, \areg_2}\}$. We transform $\com_i$ to a loop-free program as follows. Let $Q = \{q_0, \cdots, q_n\}$ and with $q_0$ as the initial state.

In this conversion, we add extra variables and values such that $\varset' = \varset \uplus \{t_i, \regvar_i^1, \cdots, \regvar_i^{|\setreg|}\}$ and $\domain' = \domain \cup \{0, \cdots, |Q|-1, \Lambda^\bot\}$ where $\uplus$ denotes disjoint union. 
Now we specify the new transition system $\com_i'$ which needs to be loop-free. For each transition $\partial = (q_{a}, q_{b}) \in \Delta$, with source and end states $q_{a}$ and $q_{b}$ respectively and instruction $\instr(\partial)$, we transform it into the following three transition sequence (a sequence of green transitions starting with a CAS, followed by $k$ load operations; the transition corresponding to $\instr(\partial)$, followed by 
the sequence of pink transitions consisting of $k$ store operations ending with a CAS) 
denoted as $\threemapping(\partial)$.
\begin{align*}
\threemapping(\partial) ~=&~ \textcolor{green!50!blue}{q_\startstate \xrightarrow{\casOf{t_i, a, \Lambda^\bot}} q_\partial^0 \xrightarrow{\load{\areg_1}{\regvar_i^1}} q_\partial^1 \cdots q_\partial^{k-1} \xrightarrow{\load{\areg_k}{\regvar_i^k}}} q_\partial^k \xrightarrow{\instr(\partial)} q_\partial^{k+1} \cdots \quad \text{(contd. below)}\\
&~ \cdots q_\partial^{k+1}  \textcolor{magenta}{\xrightarrow{\store{\regvar_i^1}{\areg_1}} q_\partial^{k+2} \cdots q_\partial^{2k-1} \xrightarrow{\store{\regvar_i^k}{\areg_k}} q_\partial^{2k}  \xrightarrow{\casOf{t_i, \Lambda^\bot, b}} q_\finalstate}
\end{align*}

We construct $\threemapping(\partial)$ for each transition $\partial$ in $\Delta$ to get the complete transition system. The initial/final (collectively called terminal) nodes of this transition system are $q_\startstate, q_\finalstate$ (which are common to all $\threemapping(\partial)$). The internal $q_\partial^{\_}$ states are all distinct across the $\threemapping(\partial)$ for different $\partial$. The transition system that we obtain has size 
$\mathcal{O}
(|\setreg||\Delta|)$ (each original transition from $q_a$ to $q_b$ is transformed into a sequence of $2|\setreg|+3$ transitions  between $q_\startstate$ and $q_\finalstate$). It is clearly loop-free. See Figure \ref{fig:app-a} showing an example if we start with $\thisthread_i\parallel\thisthread_j$.

\textbf{Combining individual $\com_i'$}
We construct programs $\com_i'$ as described above for each thread $\thisthread_i$. Now we combine these individual programs into a single program $\com_\thatthread$.
We ensure that the newly added shared variables ($t_i$, $\regvar_i^{\_}$ for $\thisthread_i$) are also disjoint across threads. Hence the variable set is now $\varset' = \varset \uplus \{t_1, \cdots, t_n\} \uplus \{\regvar_i^j\}_{i \in [n], j\in [|\setreg|]}$ (where $\varset$ was the original variable set $\{\com_1, \com_2, \cdots, \com_n\}$ were operating with). Finally, the combined data domain is simply a union of individual data domains (which possibly overlap). We combine the individual programs as
\begin{equation*}
	\com_\thatthread = \com_1' \oplus \com_2' \oplus \cdots \oplus \com_n'
\end{equation*}
where $\oplus$ denotes non-deterministic choice. It is clear that $\com_\thatthread$ is loop-free. Additionally, $|\com_\thatthread|$, and the new $|\domain'|$ and $|\varset'|$ is polynomial in $\sum_i |\com_i|$ and previous $|\varset|$ and $|\domain|$.


	
\begin{figure}[h]
\centering
\newtcbox{\colorboxouline}[1][]{boxsep=0.3pt,left=0.1pt,right=0.1pt,top=1pt,bottom=1pt,colframe=magenta,colback=white,boxrule=0pt,toprule=1pt,bottomrule=1pt,sharp corners,#1}

\tikzstyle{state}=[fill={rgb,255: red,178; green,214; blue,243}, draw=black, shape=circle]
\tikzstyle{init}=[fill={rgb,255: red,215; green,255; blue,210}, draw=black, shape=circle]
\tikzstyle{start}=[fill={rgb,255: red,255; green,224; blue,254}, draw=black, shape=rectangle]

\tikzstyle{transition}=[fill=none, ->,very thick, -stealth]
\tikzstyle{dotstrans}=[->, dotted,thick, -stealth]
\scalebox{0.8}{
\newcommand{\labelsize}{\footnotesize}
\begin{tikzpicture}
	\begin{pgfonlayer}{nodelayer}
		\node [style=none] (0) at (-7, -2.75) {};
		\node [style=none] (1) at (-7, 0.5) {};
		\node [style=init] (2) at (-6.25, 0.5) {$q_0$};
		\node [style=state] (3) at (-6.25, -1) {$q_2$};
		\node [style=init] (4) at (-6.25, -2.75) {$q_0$};
		\node [style=state] (5) at (-6.25, -4.25) {$q_1$};
		\node [style=state] (6) at (-4.75, -3.5) {$q_2$};
		\node [style=state] (8) at (-4.75, -0.25) {$q_1$};
		\node [style=none] (10) at (-5.25, 0.25) {$\partial_1$};
		\node [style=none] (11) at (-5.25, -0.75) {$\partial_2$};
		\node [style=none] (12) at (-4.5, -4.5) {$\partial_5$};
		\node [style=none] (13) at (-6.5, -3.5) {$\partial_3$};
		\node [style=none] (14) at (-5.5, -3.5) {$\partial_4$};
		\node [style=none] (15) at (-5.5, -1.75) {Program $\com_i$};
		\node [style=none] (16) at (-5.5, -5.25) {Program $\com_j$};
		\node [style=start] (17) at (-0.25, -5.25) {$q_\finalstate$};
		\node [style=start] (18) at (6, -5.25) {$q_\finalstate$};
		\node [style=state] (19) at (1.5, 0) {};
		\node [style=state] (20) at (-2.25, 0) {};
		\node [style=state] (21) at (1.5, -4.25) {};
		\node [style=state] (22) at (-2.25, -4.25) {};
		\node [style=state] (23) at (7.25, -0.5) {};
		\node [style=state] (24) at (9.5, 0) {};
		\node [style=state] (25) at (9.5, -2.75) {};
		\node [style=init] (26) at (-0.25, 1) {$q_\startstate$};
		\node [style=state] (35) at (1.5, -2.75) {};
		\node [style=state] (36) at (-2.25, -2.75) {};
		\node [style=state] (37) at (-2.25, -1.5) {};
		\node [style=state] (38) at (1.5, -1.5) {};
		\node [style=state] (40) at (9.5, -1.5) {};
		\node [style=state] (41) at (7.25, -1.5) {};
		\node [style=state] (42) at (4, 0) {};
		\node [style=state] (43) at (4, -1.5) {};
		\node [style=state] (44) at (4, -2.75) {};
		\node [style=state] (45) at (7.25, -2.75) {};
		\node [style=state] (46) at (9.5, -4.25) {};
		\node [style=state] (47) at (7.25, -3.75) {};
		\node [style=state] (48) at (4, -4.25) {};
		\node [style=init] (49) at (6, 1) {$q_\startstate$};
		\node [style=none] (50) at (6.75, -6) {Transformed program $\com'_j$};
		\node [style=none] (51) at (-0.25, -6) {Transformed program $\com'_i$};
		\node [style=none] (52) at (-1.75, -2) {$\instr(\partial_1)$};
		\node [style=none] (53) at (1, -2) {$\instr(\partial_2)$};
		\node [style=none] (54) at (4.5, -2) {$\instr(\partial_3)$};
		\node [style=none] (55) at (7.75, -2) {$\instr(\partial_4)$};
		\node [style=none] (56) at (9, -2) {$\instr(\partial_5)$};
		\node [style=none] (57) at (-0.25, -0.75) {register loads};
		\node [style=none] (58) at (5.5, -1) {register loads};
		\node [style=none] (59) at (-0.25, -3.5) {register stores};
		\node [style=none] (60) at (5.5, -3.25) {register stores};
		\node [style=none] (61) at (-1.75, 0.75) {\labelsize$\casOf{t_i,0, \Lambda^\bot}$};
		\node [style=none] (62) at (1.25, 0.75) {\labelsize$\casOf{t_i,1, \Lambda^\bot}$};
		\node [style=none] (63) at (1.25, -5) {\labelsize$\casOf{t_i,\Lambda^\bot,2}$};
		\node [style=none] (64) at (-1.75, -5) {\labelsize$\casOf{t_i,\Lambda^\bot, 1}$};
		\node [style=none] (65) at (4.5, -5) {\labelsize$\casOf{t_j,\Lambda^\bot, 1}$};
		\node [style=none] (66) at (8.25, -5) {\labelsize$\casOf{t_j,\Lambda^\bot, 1}$};
		\node [style=none] (67) at (4.5, 0.75) {\labelsize$\casOf{t_j,0, \Lambda^\bot}$};
		\node [style=none] (68) at (8.25, 0.75) {\labelsize$\casOf{t_j,2, \Lambda^\bot}$};
		\node [style=none] (69) at (6, 0) {\labelsize$\casOf{t_j,1, \Lambda^\bot}$};
		\node [style=none] (70) at (6, -4.25) {\labelsize$\casOf{t_j,\Lambda^\bot,2}$};
	\end{pgfonlayer}
	\begin{pgfonlayer}{edgelayer}
		\draw [style=transition] (2) to (8);
		\draw [style=transition] (8) to (3);
		\draw [style=transition] (4) to (5);
		\draw [style=transition] (5) to (6);
		\draw [style=transition, bend left=75, looseness=1.50] (6) to (5);
		\draw [style=transition] (1.center) to (2);
		\draw [style=transition] (0.center) to (4);
		\draw [style=transition] (26) to (19);
		\draw [style=transition] (26) to (20);
		\draw [style=transition] (21) to (17);
		\draw [style=transition] (22) to (17);
		\draw [style=transition] (37) to (36);
		\draw [style=transition] (38) to (35);
		\draw [style=transition] (40) to (25);
		\draw [style=transition] (41) to (45);
		\draw [style=transition] (43) to (44);
		\draw [style=transition] (49) to (24);
		\draw [style=transition] (49) to (23);
		\draw [style=transition] (49) to (42);
		\draw [style=transition] (46) to (18);
		\draw [style=transition] (47) to (18);
		\draw [style=transition] (48) to (18);
		\draw [style=dotstrans] (24) to (40);
		\draw [style=dotstrans] (25) to (46);
		\draw [style=dotstrans] (23) to (41);
		\draw [style=dotstrans, in=90, out=-90] (42) to (43);
		\draw [style=dotstrans] (44) to (48);
		\draw [style=dotstrans] (45) to (47);
		\draw [style=dotstrans] (19) to (38);
		\draw [style=dotstrans] (35) to (21);
		\draw [style=dotstrans] (20) to (37);
		\draw [style=dotstrans] (36) to (22);
	\end{pgfonlayer}
\end{tikzpicture}
}
\caption{Examples of two $\thisthread$ threads $\thisthread_i$ and $\thisthread_j$ executing programs $\com_i, \com_j$ and the corresponding transformed programs $\com'_i, \com'_j$. The program $c_i$ has 2 transitions while $c_j$ has 3 transitions. Note how the read-write value for CAS operations in the transformed program match with the transitions in the original program.}	
\label{fig:app-a}
\end{figure}

\subsection{Proof of Equivalence}
We now prove that the system $\thatthread(\unloopy)$ with the $\thatthread$ threads executing the program $\com_\thatthread$ as defined above respect  the original system. We defer the notion of `maintaining reachability' for a bit later. We first observe the program $\com_\thatthread$ and make some observations.

\subsubsection{Simulation of individual threads}
\textbf{Locking/unlocking of $\com_i'$.}
For any single transformed program $\com_i'$, we note that at any given point, only one thread can be in any internal (not initial/final) state of $\com_i'$. To see this, note the two atomic CAS operations flanking each 3-path in $\com_i'$. All these CAS operations are on the same variable $t_i$ and moreover there are no other operations on $t_i$. Hence at any given point in time, there is only one message on $t_i$ (the most recent write) that is available for a CAS operation. The value of this message dictates whether the operation will succeed. When it succeeds, the most recent write value changes to the value written by the CAS.

Now note that $t_i$ is initialized with value 0; hence initially one thread, say $\athread$ can take perform a CAS and change the recent value to $\Lambda^\bot$.  Now, there is no transition from $q_\startstate$ that performs a CAS with a read of $\Lambda^\bot$. Hence all other threads are kept waiting until the recent value on $t_i$ changes from $\Lambda^\bot$. This is possible only when the initial thread $\athread$, executes the final transition and reaches $q_\finalstate$, maintaining the claim. Hence these CAS operations perform the role of a mutual-exclusion lock. But then they perform another function too.

\textbf{State transference.} We now know that for each $i$, only one thread may execute $\com_i'$ at any given time. However the locking/unlocking operations using CAS also enable threads to transfer their state to their successors. There are three components to the state, which we handle in turn:
\begin{itemize}
	\item Control-state: Note that the recent value on variable $t_i$ is $v \neq \Lambda^\bot$ only if the previous thread terminated after simulating some transition ending at $q_v$. Additionally, a locking CAS operation for $\threemapping(\partial)$ reads value $v$ only if $\partial$ is a transition from $q_v$ to some other state. Hence, it is guaranteed that the successive thread will execute some transition that emerges from a state where the previous thread left off. Note how this is true for the first thread as well since the initial value on all variables is 0 and the initial state of the transition system is $q_0$.
	\item View: The second component that we consider is the view. This also is transferred from a thread to its successor through the CAS operation. In particular, when a thread $\athread$ executes the final CAS operation to reach $q_\finalstate$, it generates a message on $t_i$ which is read by its successor. This read implies that the successor will take the join on its own (initial) view with that of the message and hence essentially accumulate the exact view that the previous thread left with. So, the view is transferred as well.
	\item Register valuations: The previous thread $\athread$ stores its register valuations in the shared variables $\regvar_i^j$ in the final sequence of store operations before terminating. These are then accessed by the successor thread through the initial sequence of load operations.
\end{itemize}
In this way we see that not only is exclusion ensured, but the thread states are transferred from one thread to the next. Together, these sequences  of threads simulate the entire run of the original $\thisthread_i$ in fragments. The above holds for all $i \in [n]$. Hence at any given point, there are at most $n$ threads simulating the original ones.

\subsubsection{The Complete Simulation}
Now we formalize the notion of equivalence in reachability. We say that an original state with the threads $\{\thisthread_1, \cdots, \thisthread_n\}$ is equivalent to a new state when we have the following.
\begin{itemize}
	\item if the control states of threads $\{\thisthread_1, \cdots, \thisthread_n\}$ are $(q_{i_1}, q_{i_2}, \cdots, q_{i_n})$ respectively then in the new system with $\thatthread$, the recent value of shared variable $t_j$ is $i_j$,
	\item  the register valuation of each original $\thisthread_i$ is reflected ($\areg_j$ = most recent write to $\regvar_i^j$) in the most recent writes to the variables $\regvar_i^{\_}$ for each thread $i$, 
	\item  the view of $\thisthread_i$ is the view stored in the most recent message to $t_i$ (again projected on the original variable set) and 
	\item  the global memory (projected) is identical across the original and $\thatthread$ states. 
\end{itemize}
We claim the following: a state in the original system can be reached if and only some equivalent state in the new system can be reached. 
We can prove this by induction. The base case is that all threads are in their initial states, registers and views with the memory only with initial messages (0 on each variable). This trivially satisfies the requirement, both in the forward and reverse directions.

Now for the inductive case ($\Rightarrow$). Assume that it was true at some instant. Let some $\thisthread_i$ execute an instruction for the transition $\partial$. In the new system, we can simulate this as a new $\thatthread$ thread $\athread$ taking the path corresponding to $\threemapping(\partial)$ in $\com_i'$. We note by the observations above that the invariants for the thread-local state (control-state, register valuations and view) is maintained. Additionally, if $\thisthread_i$ wrote a message to the memory, then so can $\athread$. In particular, since the view of $\athread$ is obtained from the CAS read, it matches that of $\thisthread_i$. Hence the message added by $\athread$ can have the same timestamps as $\thisthread_i$.

Inductive case ($\Leftarrow$). The same argument works in the reverse direction. Assume that a pair of equivalent states have been reached. Now, consider a  $\thatthread$ thread path $\threemapping(\partial)$ in $\com_i'$ where $\partial = (q_a, q_b)$. Then, by the induction hypothesis this means that $\thisthread_i$ is in state $q_a$ in the original run. Given the equality of thread and memory state initially, it too can take the transition $(q_a, q_b)$. Once again, the invariant follows from the  earlier observations.

This gives a sketch of the proof. In particular, note that even though we give an equivalence between the control state in the original system and a variable value in the new system, this can be easily converted to an equivalence between control states themselves. This means that the reachability problem for $\thisthread_1 \parallel \cdots \parallel \thisthread_n$ can can converted to a reachability problem for $\thatthread(\unloopy)$. This prompts us to restrict $\thatthread$ threads to a reduced (cas-free) instruction set and motivates the idea of modelling CAS instructions in a run via computations of the $\thisthread$ threads.

\section{A Simplified Semantics}
\label{Section:Simplification}
In this section, we propose a simplified semantics for the class of systems given by 
$\thatthread(\uncassy) \parallel \thisthread_1 \parallel \dots \parallel \thisthread_n$. The core of this result 
relies on the \emph{Infinite Supply Lemma} which shows that 
if some $\thatthread$ thread could generate a message $(\anadr, \aval, \aview)$, then a clone of that thread could generate the message $(\anadr, \aval, \aview')$ with $\aview' = \aview[\anadr\mapsto t]$ for some $t > \aview(\anadr)$.

There are two assumptions that the infinite supply lemma and hence our semantic simplification result rely on:
\begin{itemize}
\item arbitrarily many $\thatthread$ threads executing identical programs. 
\item the $\thatthread$ threads do not have atomic instructions (CAS).
\label{enum:conditions}
\end{itemize}
The first assumption allows us to have clone $\thatthread$ threads that duplicate the computation and hence the messages generated in it. The second assumption is required for the duplicated computation to remain valid under RA.

While performing the duplication, one must keep in mind
the dependency between stores and loads across threads. The fact that $\thisthread$ threads 
are not replicatable (their messages cannot be duplicated) adds to the challenge. To ensure that the clone threads
can follow in the footsteps of the original computation we require that $\thisthread$ messages can be read by the $\thatthread$ clones whenever they can be read by the original $\thatthread$ threads.
This necessitates that we respect relative order among timestamps between $\thatthread$ and $\thisthread$ threads.

We develop some intermediate concepts that help us in developing a valid duplicate run. 
In order to accommodate the clone threads, we must make space (create unused timestamps) along $\settime$ for clones to write messages. We do this via \textit{timestamp liftings}. Having done this, we need to define how we can combine the original computation with that of the clones. 
We develop the concept of \textit{superposition} of computations to do this. 
Finally, the infinite supply (of messages) lemma shows how, using the earlier two concepts, we can generate copies of messages, with higher timestamps.

This `duplication-at-will' of $\thatthread$ messages means that we need not store the entire set of $\thatthread$ messages produced. Those with the smallest timestamps act as good representatives of the set. Additionally when any thread reads from an $\thatthread$ message, we need not be bothered about timestamp  comparisons since we could always generate a copy of that message with as high timestamp as required. It is this observation that gives us the timestamp abstraction and with it the simplified semantics.

\subsection{Infinite Supply}

We now make these arguments precise. Our strategy is to split up the timestamps (hence the computation) and separate the part originating from the $\thisthread$ threads from the $\thatthread$ part (which can be duplicated at will). We  write $\projectto{\arun}{\thatthread}$ and $\projectto{\arun}{\thisthread}$ to denote the projections of $\rho$ to $\thatthread$ and $\thisthread$ respectively.

\subsubsection{Timestamp Lifting}
\label{subsubsec:timestamp}
In our development we will make use of \emph{timestamp transformations} 
$\transform:\settime\rightarrow \settime$.
 We extend these to views $\aview$ with \textit{per variable} timestamp transformations $\transform = \{\transform^\xvar\}_{\xvar \in \varset}$, where $\transform^\xvar$ only transforms the timestamps for the variable $\xvar$. 
 The transformed view $\transformOf{\aview}: \varset \rightarrow\settime$ is defined by $(\transformOf{\aview})(\xvar) = \transform^\xvar(\aview(\xvar))$ for every variable $\xvar$. 

As an example consider shared variables $\xvar, \yvar$ and views $\aview_1, \aview_2$
such that $\aview_1=[\xvar \mapsto 2, \yvar \mapsto 5]$ and $\aview_2=[\xvar \mapsto 10, \yvar \mapsto 0]$. 
Using the timestamp transformation $\transform = \{\transform^\xvar, \transform^\yvar\}$
where $\transform^\xvar(0)=\transform^\yvar(0)=0$,
$\transform^\xvar(t) = t+2$ and  
$\transform^\yvar(t) = t+7$ for $t > 0$, we obtain 
$\transformOf{\aview_1} = [\xvar\mapsto 4, \yvar\mapsto 12]$ and 
$\transformOf{\aview_2} = [\xvar\mapsto 12,\yvar \mapsto 0]$. 
We also apply the timestamp transformation to messages, memories, configurations, and computations by transforming all view components.

\noindent {\textbf{RA-valid timestamp lifting}}. An \emph{RA-valid timestamp lifting} for a run $\rho$ is a (per variable) timestamp transformation $\gentmorph = \{\tmorph^\xvar\}_{\xvar \in \varset}$ satisfying two properties for each $\xvar \in \varset$: (1) it is strictly increasing, $\tmorph^\xvar(0)=0$; for all $t_1,t_2\in \nat$ with $t_1<t_2$ we have $\tmorph^\xvar(t_1)<\tmorph^\xvar(t_2)$ and (2) if there is a CAS operation on $\xvar$ with (load, store) timestamps as $(t, t+1)$ then $\tmorph^\xvar(t+1) = \tmorph^\xvar(t) + 1$, i.e. consecution of CAS-timestamps is maintained. Note that $\tmorph(\acf_{\init}) = \acf_{\init}$.
In the example above, $\mathsf{tf}$ is a RA-valid timestamp lifting. 
 
 Lemma \ref{lem:gen-time-lift} says that the run $\gentmorph(\arun)$ obtained by modifying the timestamps of a valid run $\arun$ with an RA-valid timestamp lifting $\gentmorph$ is also a valid under the RA semantics. 

\begin{lemma}[Timestamp Lifting Lemma]
\label{lem:gen-time-lift}
Let $\gentmorph = \{\tmorph^\xvar\}_{\xvar\in\varset}$ be an RA-valid timestamp lifting. If $\arun$ is a computation under RA, then so is $\gentmorph(\arun)$. 
Hence if a configuration $\acf$ is reachable under RA then so is $\gentmorph(\acf)$.
\end{lemma}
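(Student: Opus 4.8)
The plan is to argue by induction on the length of the computation $\arun = \acf_0 \xrightarrow{(\athread_1,\anevent_1)} \cdots \xrightarrow{(\athread_n,\anevent_n)} \acf_n$, showing that if $\arun$ is a valid RA computation then $\gentmorph(\arun)$ — obtained by applying $\gentmorph$ to every view component of every message in every memory state and every thread-local view — is again a valid RA computation with the same sequence of labels (up to the view-transformation applied to the messages). The base case is immediate since $\gentmorph(\acf_{\init}) = \acf_{\init}$: the lifting fixes $0$ per variable, and all initial messages carry timestamp $0$. For the inductive step I would take a valid transition $\acf_{i-1} \xrightarrow{(\athread,\anevent)} \acf_i$ and check, rule by rule from Figure~\ref{app:TransitionRelation}, that $\gentmorph(\acf_{i-1}) \xrightarrow{(\athread,\gentmorph(\anevent))} \gentmorph(\acf_i)$ is also valid, where $\gentmorph(\anevent)$ denotes the message with its view transformed.

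The heart of the verification is that each of the side conditions in the rules is preserved under a per-variable strictly increasing map fixing $0$. Concretely: (i) for \ruleloadglobal, the premise $\anevent \in \amem$ transports directly since $\gentmorph$ is applied uniformly, and the local-load condition $\aview(\xvar) \le \aview'(\xvar)$ is preserved because $\tmorph^\xvar$ is monotone; the join update $\aview \join \aview'$ commutes with $\gentmorph$ since $\tmorph^\xvar$ monotone implies $\tmorph^\xvar(\max(a,b)) = \max(\tmorph^\xvar(a),\tmorph^\xvar(b))$. (ii) For \rulestoreglobal, the condition $\aview <_\xvar \aview'$ (strict increase on $\xvar$, equality elsewhere) is preserved by \emph{strict} monotonicity and the per-variable structure, and the non-conflict condition $\anevent \noconflict \amem$ is preserved: distinct variables stay distinct, distinct nonzero timestamps $\aview_1(\xvar) \ne \aview_2(\xvar)$ map to distinct values by injectivity, and zero timestamps map to zero. (iii) For \rulecasglobal one additionally needs that adjacency $\atimestamp' = \atimestamp + 1$ of the load/store timestamps is preserved — and this is exactly what clause (2) in the definition of RA-valid timestamp lifting guarantees: $\tmorph^\xvar(t+1) = \tmorph^\xvar(t)+1$ for every CAS-timestamp pair $(t,t+1)$ occurring in $\arun$. (iv) The unlabeled and purely thread-local rules (assign, assume, choice, skip, assert) do not touch views, so they transport trivially. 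Once all cases are dispatched, a straightforward composition of the per-step claims along the run yields that $\gentmorph(\arun)$ is a valid computation, and in particular $\gentmorph(\acf_n) = \gentmorph(\lastOf{\arun})$ is reachable whenever $\acf_n$ is; applying this with $\arun$ an initialized computation reaching $\acf$ gives the final sentence.

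The main obstacle — and really the only non-routine point — is the CAS case, i.e.\ making sure the adjacency constraint $\atimestamp' = \atimestamp+1$ survives the lifting. This is precisely why the definition of RA-valid timestamp lifting is relative to the run $\arun$ (it must know which timestamp pairs are used by CAS operations) rather than being an arbitrary strictly increasing map; I would be careful to invoke the run-dependence of $\gentmorph$ here and note that for CAS-free runs clause (2) is vacuous. A secondary bookkeeping point is to confirm that $\gentmorph$ applied to a configuration is well-defined, i.e.\ that transforming each message's view independently still yields a legitimate memory state with no spurious merging of distinct messages — which again follows from injectivity of each $\tmorph^\xvar$ on the timestamps actually occurring.
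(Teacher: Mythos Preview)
Your proposal is correct and follows essentially the same approach as the paper: an induction on the length of the run with a case analysis over the transition rules, relying on per-variable strict monotonicity (with $0$ fixed) to preserve the load/store/non-conflict conditions and on clause~(2) of the RA-valid lifting to preserve CAS adjacency. The paper's proof is just a terser sketch of the same argument, listing these three preservation facts as bullet points and deferring the formal induction.
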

\begin{proof}
This result follows since timestamp lifting is just a relabelling of timestamps for each shared variable. The lemma relies on the following facts/observations: 
\begin{itemize}
\item There are no timestamp comparisons across variables, $\aview(\anadr)$ is never compared with $\aview(\anadr')$ for $\anadr \neq \anadr'$.
\item The relative order between timestamps on the same variable is preserved due to the strictly increasing property. Additionally, $\tmorphOf{0} = 0$, maintaining the timestamps of the $\init$ messages.
\item The load, store timestamps of $\rulecaslocal$ operations still remain consecutive.
\end{itemize}
In particular the lemma can be formally proven by induction on the length of the run. The base case is trivial and the inductive case follows by showing that each instruction - read, write, CAS - that can be executed in $\arun$  can be executed in the lifted run, $\gentmorph(\arun)$.
\end{proof}

\noindent The  duplication of messages by the clone $\thatthread$ threads requires us to copy computations and then merge them such that the RA semantics are not violated. This requries (1) timestamps of merging computations to not conflict and (2) the reads-from dependencies between threads are respected. With this in mind, we introduce the idea of superposition.

\subsubsection{Superposition}
\label{subsubsec:superpos}
We define the \emph{superposition} $\arun\superpos\arun'$ of two computations $\arun, \arun'$ as the computation that first executes $\arun$ and then $\arun'$. 
This requires us to combine the memory in $\lastOf{\arun}$ with that of every configuration in $\arun'$. 
Moreover, the threads transitioning in $\arun, \arun'$ must be disjoint.  
Given these considerations, the operation requires the computations to be non-conflicting, $\arun\noconflict\arun'$ (see Section \ref{subsec:conflict}), and is defined as follows:
\begin{align*}
\arun\superpos\arun'\;  = \; \arun;(\lastOf{\arun}+\arun'). 
\end{align*}
The addition of a configuration $\acf$ to a computation $\arun\; =\; \acf_0\xxrightarrow{(\athread_1, \anevent_1)}\ldots \xxrightarrow{(\athread_n, \anevent_n)}\acf_{n}$ yields the 
new computation 
\begin{align*}
\acf+\arun\; =\;  (\acf+\acf_0)\xrightarrow{(\athread_1, \anevent_1)}\ldots \xrightarrow{(\athread_n, \anevent_n)}(\acf+\acf_{n}). 
\end{align*}
Addition of configurations $\acf_1 =(\amem_1, \anlcfmap_1)$ and $\acf_2=(\amem_2, \anlcfmap_2)$ is the configuration $\acf_1+\acf_2 = (\amem_1\cup \amem_2, \anlcfmap)$, 
where $\anlcfmap(\athread)=\anlcfmap_1(\athread)$ if $\anlcfmap_1(\athread)\neq \anlcf_{\init}$ and $\anlcfmap(\athread)=\anlcfmap_2(\athread)$ otherwise. 

When $\arun\noconflict\arun'$ holds, we have: (1) for any thread $\athread$, if it has transitioned in $\rho$, then 
it cannot in $\rho'$; likewise, if it has not transitioned in $\rho$, then it can in $\rho'$.

(2) $\lastOf{\arun}\noconflict\lastOf{\arun'}$, and since the memory in earlier configurations of $\arun$ is a subset of that in $\lastOf{\arun}$, the memory unions performed above involve nonconflicting memories.
An initial configuration is neutral for addition, in particular $\lastOf{\arun'}+\firstOf{\arun}=\lastOf{\arun'}$. 
The operation of concatenation $\arun_1;\arun_2$ expects two computations $\arun_1$ and $\arun_2$ that satisfy $\lastOf{\arun_1}=\firstOf{\arun_2}$ and returns the sequence consisting of the transitions in $\arun_1$ followed by the transitions in $\arun_2$. 
This need not be a valid computation under RA, but under the following conditions it is.

Let $\setevents(\rho)$ be the memory in $\lastOf{\rho}$. Likewise, let $\setevents(\projectto\arun\thisthread) \subseteq \setevents(\rho)$ be the subset of memory in $\lastOf{\rho}$, which have been added by $\thisthread$ threads during $\rho$. 

\begin{lemma}[Superposition]
Consider valid computations $\arun,\arun'$ of a parametrized system under RA such that $\projectto\arun\thatthread \# \projectto{\arun'}\thatthread$ and that $\setevents(\projectto\arun\thisthread) = \setevents(\projectto{\arun'}\thisthread)$.
Then the superposition $\arun\superpos\projectto{\arun'}\thatthread$ is a valid computation under RA.
\label{lem:superpos}
\end{lemma}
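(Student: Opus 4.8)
The plan is to proceed by induction on the length of the suffix computation $\projectto{\arun'}{\thatthread}$, showing that every transition of the $\thatthread$-projection can be replayed on top of $\lastOf{\arun}$. Recall $\arun\superpos\projectto{\arun'}{\thatthread} = \arun;(\lastOf{\arun}+\projectto{\arun'}{\thatthread})$, so the prefix $\arun$ is already a valid computation and we only need to check that the appended part stays valid once its configurations have the memory of $\lastOf{\arun}$ unioned in. The first bookkeeping step is to record what $\arun\noconflict\projectto{\arun'}{\thatthread}$ buys us (this follows from $\projectto{\arun}{\thatthread}\#\projectto{\arun'}{\thatthread}$ together with the $\thisthread$-memory agreement hypothesis): the threads active in $\projectto{\arun'}{\thatthread}$ are disjoint from those in $\arun$, and all the memory-unions in the definition of $+$ merge pairwise non-conflicting memories, so $\amem_i\cup\setevents(\arun)$ is always a legal memory state (no two messages clash on a variable/timestamp pair). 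This makes the configurations in the appended segment well-formed; it remains to see the transitions are enabled.

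**Next I would do the transition-by-transition check.** Unlabelled and register transitions are thread-local and carry over verbatim. For a store transition in $\projectto{\arun'}{\thatthread}$ adding a message $\anevent$, the side condition is $\anevent\noconflict\amem$; in the original $\arun'$ this held against $\setevents(\projectto{\arun'}{})$, and after superposition the memory is enlarged by $\setevents(\arun)\setminus\setevents(\projectto{\arun'}{\thisthread})$, i.e. by $\thatthread$-messages of $\arun$ — but $\anevent$ is an $\thatthread$-message of $\arun'$ and we assumed $\projectto{\arun}{\thatthread}\#\projectto{\arun'}{\thatthread}$, so $\anevent$ does not conflict with those either; hence the store is still enabled. (This is where the CAS-freeness of $\thatthread$ is implicitly used: a store only needs a non-conflict check, not the adjacent-timestamp check of $\rulecasglobal$, so there are no $\thatthread$-CAS transitions to worry about in the appended segment.) For a load transition in $\projectto{\arun'}{\thatthread}$ reading some message $\anevent'$, the side condition is $\anevent'\in\amem$ plus the view inequality on the loading thread; the view of the loading thread is exactly what it was in $\arun'$ (adding $\setevents(\arun)$ to the memory does not touch $\anlcfmap$), so the view inequality is unchanged, and we only need $\anevent'$ to still be present. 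If $\anevent'$ was an $\thatthread$-message produced earlier within $\projectto{\arun'}{\thatthread}$, it is there by induction; if $\anevent'$ was an $\thisthread$-message, then $\anevent'\in\setevents(\projectto{\arun'}{\thisthread}) = \setevents(\projectto{\arun}{\thisthread})\subseteq\setevents(\arun)\subseteq\amem$, so it is present in the superposed memory — and here it is crucial that the hypothesis is an equality $\setevents(\projectto{\arun}{\thisthread}) = \setevents(\projectto{\arun'}{\thisthread})$, not just an inclusion, so that $\thisthread$-reads of $\arun'$ find their witnesses among the $\thisthread$-messages of $\arun$.

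**The main obstacle**, and the step I would spend the most care on, is exactly this cross-thread reads-from argument for loads: one must argue that projecting $\arun'$ to $\thatthread$ does not orphan any load — i.e. every message an $\thatthread$ thread reads in $\arun'$ is either itself an $\thatthread$-message (hence retained in the projection, by induction) or an $\thisthread$-message (hence supplied by $\arun$ via the equality hypothesis). One should note that when we drop the $\thisthread$-transitions from $\arun'$, the remaining $\thatthread$-transitions still occur in the same relative order and with the same local views, because $\thatthread$ views are only ever raised by loads, and every load's source message is still available; there is a mild subtlety that the $\thisthread$-messages of $\arun'$ need not be added in $\projectto{\arun'}{\thatthread}$ in the same temporal position, but since memory is monotone (messages are never removed) and $\setevents(\projectto{\arun}{\thisthread})$ is already fully present in $\lastOf{\arun}$ from the start of the appended segment, availability is never an issue. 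Finally I would invoke Lemma~\ref{lem:gen-time-lift}-style reasoning only if a timestamp clash between $\arun$ and $\projectto{\arun'}{\thatthread}$ needed to be repaired — but under the stated hypothesis $\arun\noconflict\projectto{\arun'}{\thatthread}$ the timestamps are already compatible, so no lifting is needed here and the induction closes directly.
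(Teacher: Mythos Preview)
Your proposal is correct and follows essentially the same approach as the paper: both argue by induction on the length of the appended computation, doing a case analysis over silent, load, and store transitions, and both hinge on the observation that every $\thatthread$-load in $\projectto{\arun'}{\thatthread}$ either targets another $\thatthread$-message (preserved by induction) or a $\thisthread$-message (supplied by $\lastOf{\arun}$ via the equality hypothesis $\setevents(\projectto{\arun}{\thisthread}) = \setevents(\projectto{\arun'}{\thisthread})$). Your write-up is in fact more explicit than the paper's on several points the paper leaves implicit --- notably the derivation of $\arun\noconflict\projectto{\arun'}{\thatthread}$ from the stated hypotheses, and the remark that CAS-freeness of $\thatthread$ is what spares you the adjacent-timestamp side condition in the appended segment.
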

\begin{proof}
Since there are arbitrarily many $\thatthread$ threads, we distinguish apart the $\thatthread$ threads in $\arun'$ from the $\thatthread$ threads in $\arun$. By doing so we ensure that the threads operating (changing state) in $\arun$ and $\projectto{\arun'}\thatthread$ are disjoint. 

Now consider the global state obtained after executing $\arun$ (which is a valid run under RA). By hypothesis, the memory state contains messages from $\projectto\arun\thisthread$, which are identical to those in $\projectto{\rho'}\thisthread$. After execution of $\arun$ is complete, we claim that we can execute $\projectto{\rho'}\thatthread$ one step at a time. 
\begin{itemize}
	\item Whenever a 
 $\thisthread$ thread loads from a message generated by a $\thatthread$  thread in  $\rho$, the same can happen in $\arun\superpos\projectto{\arun'}\thatthread$. Likewise, the relative time stamps between the $\thisthread$ threads and the $\thatthread$ threads in $\rho'$
 are the same; so $\projectto{\arun'}\thatthread$ can be executed after $\rho$.
\item 
 Likewise, reads made by some $\thatthread$ thread on $\rho, \rho'$ either from another $\thatthread$ thread 
or a $\thisthread$ thread also continues exactly in the same way in $\arun\superpos\projectto{\arun'}\thatthread$, since 
the messages added by $\thisthread$ threads are exactly same in $\rho, \rho'$, and 
the $\thatthread$ threads are disjoint.
\item The above two points show that we have exactly the same  reads-from dependencies ($\thisthread \leftrightarrow \thatthread$ in $\rho$, 
 $\thisthread \leftrightarrow \thatthread$ in $\rho'$) in $\arun\superpos\projectto{\arun'}\thatthread$.
The reason is that $\thatthread$ threads are disjoint and the 
messages added by $\thisthread$ threads are the same in $\rho, \rho'$. 
Finally, all writes made by the respective $\thatthread$ threads of $\rho, \rho'$  can be done 
in $\arun\superpos\projectto{\arun'}\thatthread$; likewise, all writes made by the $\thisthread$ threads in $\rho$ can also 
be made in $\arun\superpos\projectto{\rho'}\thatthread$. The reason is that 
 $\projectto{\arun}\thatthread \# \projectto{\rho'}\thatthread$, and trivially, we have $\projectto{\rho'}\thisthread \# \projectto{\arun'}\thatthread$. 
 This ensures no conflict of write-timestamps.
\end{itemize}
  Formally this can be proven by induction on the length of $\arun'$.
\end{proof}

Now we develop the infinite supply lemma. Recall that our goal is to generate arbitrarily many copies of $\thatthread$ messages 
with the same variable and value but higher timestamp. 
Let us fix one such message, $\anevent = (\anadr, \avalue, \aview)$, for our discussion here and see how we can replicate it. 
Towards this end, consider a computation $\rho$ in which it is generated.
We `spread-apart' the timestamps of $\setevents(\arun)$, using timestamp liftings so that we create `holes' (unused timestamps) along $\settime$. Then we generate copies of $\thatthread$ threads, denoted as $\copythread{\thatthread}$ (possible since $\thatthread$ threads can be replicated). 

The holes accomodate for the timestamps of $\copythread{\thatthread}$ and the (higher) timestamp of the copy of $\anevent$.
Throughout this, we preserve the order of timestamps of $\thatthread,\copythread{\thatthread}$ threads relative to those of $\thisthread$ threads. This ensures that reads-from dependencies are maintained - $\copythread\thatthread$ can read a $\thisthread$ message whenever $\thatthread$ can do so.

We define the computation $\widetilde\arun$ as a copy of $\projectto{\arun}{\thatthread}$ executed by $\copythread\thatthread$ threads.
The write timestamps used by $\copythread\thatthread$ threads are the unoccupied timestamps generated by the timestamp lifting operation $\gentmorph(\arun)$. We show an example of this via a graphic. Let ${\color{red!90} \mathsf{e}} {{\color{black!90}\mathsf{T^i}}}$ and ${\color{blue!90} \mathsf{d}} {{\color{black!90}\mathsf{T^i}}}$ respectively denote the timestamps 
chosen by $\thatthread$ and $\thisthread$ along $\rho$ (first row).
\newcommand{\thisthreadtsh}{\thisthreadshort\mathsf{T}}
\newcommand{\thatthreadtsh}{\thatthreadshort\mathsf{T}}
\newcommand{\thisthreadtsilsh}[1]{{\color{blue!20} \mathsf{d}} {{\color{black!20}\mathsf{T^#1}}}}
\newcommand{\thatthreadtsilash}[1]{{\color{red!20} \mathsf{e}} {{\color{black!20}\mathsf{T^#1_{a}}}}}
\newcommand{\thatthreadtsilbsh}[1]{{\color{red!20} \mathsf{e}} {{\color{black!20}\mathsf{T^#1_{a}}}}}
\newcommand{\aye}{\mathsf{a}}
\newcommand{\bee}{\mathsf{b}}

\vspace{-1em}
\begin{align*}
\arun\!:~& {\color{gray} \init} ~~\thisthreadtsh^0 ~~ \thatthreadtsh^0 ~~ \thisthreadtsh^1 ~~ \thatthreadtsh^1 ~~ \thatthreadtsh^2~~ \\
\gentmorph(\arun)\!:~&  {\color{gray} \init} ~~\thisthreadtsh^0~~ \thatthreadtsilbsh{0}~~ \thatthreadtsh^0_{\aye}~~ \thisthreadtsh^1~~ \thatthreadtsilbsh{1}~~ \thatthreadtsh^1_{\aye}~~ \thatthreadtsilbsh{1}~~ \thatthreadtsh^2_{\aye} \\
\widetilde{\arun}\!:~& {\color{gray} \init} ~~\thisthreadtsilsh{0} ~~ \thatthreadtsh^0_{\bee}~~\thatthreadtsilash{0} ~~\thisthreadtsilsh{1}~~ \thatthreadtsh^1_{\bee}~~ \thatthreadtsilash{1}~~ \thatthreadtsh^2_{\bee} ~~ \thatthreadtsilash{2}
\end{align*}
The second row shows lifted timestamps (with subscript $\mathsf{a}$) of $\gentmorph(\arun)$ and the holes (faded). The third row shows holes being used by $\copythread\thatthread$ for $\widetilde\arun$ (these have subscript $\mathsf{b}$).
The construction guarantees $\gentmorph(\arun) \noconflict \widetilde\arun$ and superposition 
$\gentmorph(\arun) \superpos \widetilde\arun$ is allowed. 
In this computation, $\widetilde\arun$ 
generates a copy of $\anevent$, $\anevent'=(\anadr, \aval, \aview')$
with higher $\aview'(\xvar)$.
Additionally, since $\thatthreadtsh^i_{\aye}, \thatthreadtsh^i_{\bee}$ have the same position relative to all $\thisthreadtsh^j$ timestamps, so will $\aview(\yvar), \aview'(\yvar)$ for $\yvar \neq \anadr$.
\newcommand{\ptimestamps}{\mathbb{P}}

Now we state the Infinite Supply Lemma. 
As helper notation, for a run $\arun$ and each variable $\anadr$, we denote the timestamps of stores of $\thisthread$ threads on $\anadr$ as $\thisTS{\anadr}{0} < \thisTS{\anadr}{1} < \cdots $. 
\begin{lemma}[Infinite Supply]
Let $\arun$ be a valid run under the RA semantics, in which the message $(\anadr, \avalue, \aview)$ has been generated by an $\thatthread$ thread. Then for each timestamp ${\color{violet} t^*}\in\mathbb{N}$, there exist two  timestamp lifting functions $\gentmorph_1$, $\gentmorph_2$ and a run $\arun_1$ such that $\gentmorph_1(\arun) \superpos \gentmorph_2(\projectto{\arun}{\thatthread}) \superpos \arun_1$ is a valid run. This run contains a message $(\anadr, \avalue, \aview')$ satisfying (${\color{blue} \text{ts}}$ comes from $\arun$)
\begin{enumerate}
	\item $\forall i$ (${\color{violet} t^*} \leq \thisTS{\anadr}{i} \land \aview(\anadr) \leq \thisTS{\anadr}{i}) \implies \aview'(\anadr) \leq \tmorph_1^{\anadr}(\thisTS{\anadr}{i})$ 
	\item $\aview'(\anadr) \geq \tmorph_2^{\anadr}({\color{violet} t^*})$ 
	\item $\forall \anadr' \neq \anadr$, $\forall i$, $\aview(\anadr') \leq \thisTS{\anadr'}{i} \implies \aview'(\anadr') \leq \tmorph_1^{\anadr'}({\thisTS{\anadr'}{i}})$
\end{enumerate} 
\label{lem:inf-sup}
\end{lemma}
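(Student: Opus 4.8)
The plan is to keep the $\thisthread$-part of $\arun$ essentially unchanged and to graft onto it one or two \emph{clones} of the replicable $\thatthread$-part $\projectto{\arun}{\thatthread}$ whose sole job is to re-emit the message, which I write $\anevent=(\anadr,\avalue,\aview)$, with a larger timestamp on $\anadr$. The two tools are the Timestamp Lifting Lemma~\ref{lem:gen-time-lift} (to create room and relabel) and the Superposition Lemma~\ref{lem:superpos} (to append a cloned $\thatthread$-computation after a valid run). Throughout, let $\thisTS{\anadr}{j}$ denote the least $\thisthread$-store timestamp on $\anadr$ in $\arun$ with $\thisTS{\anadr}{j}\ge\max({\color{violet} t^*},\aview(\anadr))$, if such a store exists.

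First I would fix an RA-valid timestamp lifting $\gentmorph_1$ that spreads $\tstamps{\arun}$ out on every variable so that between any two timestamps used on a variable there is at least one unused timestamp, and moreover reserves a block of unused timestamps on $\anadr$ lying strictly above $\tmorph_1^{\anadr}(\aview(\anadr))$ and strictly below $\tmorph_1^{\anadr}(\thisTS{\anadr}{j})$ (an unbounded block if no such $\thisthread$-store exists). By Lemma~\ref{lem:gen-time-lift}, $\gentmorph_1(\arun)$ is a valid run with the same control flow and reads-from relation as $\arun$, and its $\thisthread$- and $\thatthread$-messages all carry timestamps in the range of $\gentmorph_1$. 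I then pick $\gentmorph_2$ that (a) agrees with $\gentmorph_1$ on every timestamp that a $\thisthread$-thread writes or takes into its view along $\arun$, so that the two liftings induce identical $\thisthread$-messages; (b) maps the remaining, $\thisthread$-invisible $\thatthread$-store timestamps of $\arun$ into the holes created above, preserving the relative order of timestamps on each variable and the consecution of CAS-pairs (all $\thisthread$-generated, hence already pinned down by (a), since $\thatthread$ is CAS-free); and (c) satisfies $\tmorph_2^{\anadr}({\color{violet} t^*})\le\tmorph_1^{\anadr}(\thisTS{\anadr}{j})$, with $\tmorph_2^{\anadr}(\aview(\anadr))$ placed above $\tmorph_2^{\anadr}({\color{violet} t^*})$ in the reserved block when $\aview(\anadr)<{\color{violet} t^*}$ --- possible because $\gentmorph_2$ may be stretched arbitrarily below $\aview(\anadr)$. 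One checks that $\gentmorph_2$ is itself an RA-valid lifting.

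Relabelling with fresh $\copythread{\thatthread}$-threads, the $\thatthread$-transitions of $\gentmorph_2(\arun)$ are exactly $\gentmorph_2(\projectto{\arun}{\thatthread})$. By (a), $\setevents(\projectto{\gentmorph_1(\arun)}{\thisthread})=\setevents(\projectto{\gentmorph_2(\arun)}{\thisthread})$, and by (b) together with the thread renaming, $\projectto{\gentmorph_1(\arun)}{\thatthread}\noconflict\projectto{\gentmorph_2(\arun)}{\thatthread}$; hence Lemma~\ref{lem:superpos} makes $\gentmorph_1(\arun)\superpos\gentmorph_2(\projectto{\arun}{\thatthread})$ a valid run, and the clone re-emits $(\anadr,\avalue,\aview'')$ with $\aview''(\anadr)=\tmorph_2^{\anadr}(\aview(\anadr))$ and, for $\anadr'\ne\anadr$, with $\aview''(\anadr')$ the $\gentmorph_1$- or $\gentmorph_2$-image of $\aview(\anadr')$ according to who wrote the store the clone read on $\anadr'$. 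If $\aview(\anadr)\ge{\color{violet} t^*}$ we are done with $\arun_1$ empty and $\aview'=\aview''$. Otherwise I let $\arun_1$ be one further clone of $\projectto{\arun}{\thatthread}$ (fresh $\thatthread$-threads, a third order-compatible lifting) whose $\anadr$-store re-creating $\anevent$ is placed in the reserved block above $\tmorph_2^{\anadr}({\color{violet} t^*})$ and below $\tmorph_1^{\anadr}(\thisTS{\anadr}{j})$; this is legal because every $\thisthread$-message and every $\thatthread$-message this clone must read still sits at its $\gentmorph_1$-position, and the $\anadr$-view the thread held just before its $\anadr$-store was $<\aview(\anadr)<\thisTS{\anadr}{j}$ in $\arun$, hence stays below $\tmorph_1^{\anadr}(\thisTS{\anadr}{j})$. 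In both cases property~2 holds by the placement of the final $\anadr$-store, and properties~1 and~3 hold because all liftings are order-preserving on each variable, so the regenerated view component on any $\anadr'$ is a lift of $\aview(\anadr')$ and is therefore dominated by $\tmorph_1^{\anadr'}$ of any $\thisthread$-store timestamp on $\anadr'$ that already dominated $\aview(\anadr')$ in $\arun$.

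The main obstacle is engineering $\gentmorph_1,\gentmorph_2$ and the reserved holes so that the two hypotheses of Lemma~\ref{lem:superpos} hold simultaneously: the liftings must coincide on everything a $\thisthread$-thread observes (so the $\thisthread$-message pool is literally the same in both runs) while sending the $\thatthread$-store timestamps to ranges disjoint from the originals (so the two $\thatthread$-message sets do not conflict), all the while respecting CAS-consecution and the relative order of every timestamp that crosses a $\thisthread\!\leftrightarrow\!\thatthread$ reads-from edge. The delicate point is exactly such an edge: when a $\thisthread$-thread reads an $\thatthread$-message in $\arun$, freeing that message's timestamp forces one to let the $\thisthread$-thread read a higher clone copy instead, and making this substitution systematic rather than circular --- together with the otherwise routine induction on run length that discharges validity via Lemmas~\ref{lem:gen-time-lift} and~\ref{lem:superpos} --- is where the real work lies.
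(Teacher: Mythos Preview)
Your high-level plan matches the paper's: spread timestamps via lifting to create holes, then append clones of $\projectto{\arun}{\thatthread}$ that re-emit the message higher. The gap is exactly where you say it is, and you have not closed it. Condition~(a) cannot coexist with the non-conflict requirement. Whenever a $\thisthread$-thread reads an $\thatthread$-message in $\arun$, that $\thatthread$ write timestamp enters the $\thisthread$-thread's view and hence the view of every $\thisthread$-message it subsequently stores. For $\setevents(\projectto{\gentmorph_1(\arun)}{\thisthread}) = \setevents(\projectto{\gentmorph_2(\arun)}{\thisthread})$ you therefore need $\gentmorph_1$ and $\gentmorph_2$ to agree on that $\thatthread$ timestamp; but then the clone's store there coincides with the original's, contradicting $\projectto{\gentmorph_1(\arun)}{\thatthread}\,\#\,\projectto{\gentmorph_2(\arun)}{\thatthread}$. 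Your escape route---let the $\thisthread$-thread read a clone copy instead---inverts the order of the superposition (the clone only comes after $\gentmorph_1(\arun)$) and is precisely the circularity you fear. As a separate smaller slip, condition~(c) as written is incompatible with $\gentmorph_2$ being strictly increasing: you cannot place $\tmorph_2^{\anadr}(\aview(\anadr))$ above $\tmorph_2^{\anadr}(t^*)$ when $\aview(\anadr)<t^*$.

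The paper does not try to satisfy the Superposition Lemma's equality-of-$\thisthread$-messages hypothesis at all. It fixes a concrete triplication: each $\thatthread$ timestamp $\thatthreadt^j$ becomes three adjacent slots ordered $\thatthreadt^j_b<\thatthreadt^j_c<\thatthreadt^j_a$, with $\gentmorph_1(\thatthreadt^j)=\thatthreadt^j_a$, $\gentmorph_2(\thatthreadt^j)=\thatthreadt^j_b$, both coinciding on $\thisthread$ timestamps. Validity of $\gentmorph_1(\arun)\superpos\gentmorph_2(\projectto{\arun}{\thatthread})$ is then argued \emph{directly} by induction on run length, maintaining the invariant that each $b$-clone's view component equals the $a$-original's when that component is a $\thisthread$ timestamp and equals it minus~$2$ when it is an $\thatthread$ timestamp. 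When a $b$-clone must load a $\thisthread$-message it loads the actual one in memory (carrying a $\gentmorph_1$-view, not a hypothetical $\gentmorph_2$-lifted one); the invariant together with the ordering $b<c<a$ ensures the load is enabled and the clone's view stays dominated by the original's. The $c$-slots are the reserved block for $\arun_1$, a third replay of a prefix whose final store freely picks the high timestamp needed for property~(2). What makes the argument go through is replacing your literal $\thisthread$-message-equality hypothesis by this per-thread view-domination invariant.
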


\begin{proof}
Without loss of generality, we assume that in the run $\arun$, the timestamps on each variable are consecutive. If that is not the case, we can always use a  timestamp lowering operation that `fills in the gaps' between non-consecutive timestamps, while maintaining consecution of the load,store timestamps of $\rulecaslocal$ operations.

We will give a constructive proof. We specify $\gentmorph_1(\arun) \superpos \gentmorph(\projectto{\arun}{\thatthread}) \superpos \arun_1$ by defining $\gentmorph_1$, $\gentmorph_2$ and $\arun_1$, and showing that the resulting run is valid under RA. Then we show how a copy of the message $(\anadr, \avalue, \aview)$ can be obtained as claimed. 

First, we describe how to copy runs. 
\begin{enumerate}
\item 	
{\textbf{Copying a run}}. 
For a variable $\anadr$, we define the lifting functions as follows. With the consecutiveness assumption, the messages on $\anadr$ have consecutive timestamps and are generated by either $\thisthread$ or $\thatthread$, which we will denote below as $\thisthreadt$ and $\thatthreadt$ respectively. For developing intution quickly, consider the following sequence of consecutive timestamps on some variable.

\begin{equation*}
  {\color{gray} \init} ~~\thisthreadt^0 ~~\thisthreadt^1~~ \thatthreadt^0~~ \thisthreadt^2~~ \thatthreadt^1~~ \thatthreadt^2~~ \thisthreadt^3
\end{equation*}

Intuitively, the new interleaved run is obtained by triplicating each $\thatthreadt$ timestamp into three adjacent timestamps, $\thatthreadt_a$, $\thatthreadt_b$ and $\thatthreadt_c$. The $\thatthreadt_a$ timestamps belongs to the lifted run $\gentmorph_1(\arun)$, the $\thatthreadt_b$ timestamp belongs to $\gentmorph_2(\arun)$ and the $\thatthreadt_c$ timestamp belongs to $\arun_1$. The $a,b,c$ copies are ordered as $b < c < a$ giving us the following timestamp sequence from the one above.

\begin{equation*}
  {\color{gray} \init} ~~\thisthreadt^0 ~~\thisthreadt^1~~ \thatthreadt^0_b~~ \thatthreadt^0_c~~ \thatthreadt^0_a~~ \thisthreadt^2~~ \thatthreadt^1_b~~ \thatthreadt^1_c~~ \thatthreadt^1_a~~ \thatthreadt^2_b~~ \thatthreadt^2_c~~ \thatthreadt^2_a ~~\thisthreadt^3
\end{equation*}

\newcommand{\twints}[1]{\mathsf{twin}(#1)}
We can formalize $\gentmorph_1$ and $\gentmorph_2$ by counting the number of $\thisthreadt$ timestamps smaller than the $\thatthreadt^i$ timestamp, but for ease of presentation, we will keep this implicit. The total shift  can be done for instance, using the function 
which maps a time stamp $p \in \nat$ corresponding to a $\thatthread$ thread to the (number of $\thisthread$ threads appearing before $p$) + 3(number of $\thatthread$ threads appearing before $p$)+3, while 
for a $\thisthread$ time stamp $p \in \mathbb{N}$, one can map it to 
 (number of $\thisthread$ threads appearing before $p$) + 3(number of 
 $\thatthread$ threads appearing before $p$)+1. 
 So for instance, the $\thatthreadt^0$ at timestamp 3 moves to the $\thatthreadt^0_a$ time stamp 5, while the $\thisthreadt^3$ moves to 
 the $\thisthreadt^3$ time stamp 3+ 3(3)+1=13.

\smallskip 

$\gentmorph_1$ maps the timestamp $\thatthreadt^i$ to the timestamp $\thatthreadt^i_a$ timestamp. Similarly, $\gentmorph_2$ maps the $\thatthreadt^i$ timestamp to the timestamp $\thatthreadt^i_b = \thatthreadt^i_a-2$. Finally, we have $\thatthreadt^i_c = \thatthreadt^i_a-1$. We call these adjacent timestamps `triplets'. Additionally, $\gentmorph_1$ and $\gentmorph_2$ map the $\thisthreadt^i$ timestamp to the corresponding $\thisthreadt^i$ timestamp in the expanded run.

We first note that $\gentmorph_1$ satisfies the premise of the timestamp lifting lemma, that for $\rulecaslocal$ operations, consecutive load,store timestamps for CAS remain consecutive.
This follows since only $\thisthread$ can perform $\rulecaslocal$ and under $\gentmorph_1$, consecution is maintained both for $(\thisthreadt^{i-1}, \thisthreadt^{i})$ timestamps as well as $(\thatthreadt_a, \thisthreadt^{i})$ as depicted in the following timestamp sequence. Thus, by the timestamp lifting lemma, $\gentmorph_1(\arun)$ is a valid run under RA.

\newcommand{\thatthreadtsila}[1]{{\color{red!20} \mathsf{env}} {{\color{black!20}\mathsf{T^#1_a}}}}
\newcommand{\thatthreadtsilc}[1]{{\color{red!20} \mathsf{env}} {{\color{black!20}\mathsf{T^#1_c}}}}
\newcommand{\thatthreadtsilb}[1]{{\color{red!20} \mathsf{env}} {{\color{black!20}\mathsf{T^#1_b}}}}

\begin{equation*}
  {\color{gray} \init} ~~\thisthreadt^0 ~~\thisthreadt^1~~ \thatthreadtsilb{0}~~\thatthreadtsilc{0}~~ \thatthreadt^0_a~~ \thisthreadt^2~~ \thatthreadtsilb{1}~~ \thatthreadtsilc{1}~~ \thatthreadt^2_a~~ \thatthreadtsilb{2}~~ \thatthreadtsilc{2}~~ \thatthreadt^3_a~~ \thisthreadt^3
\end{equation*}

\newcommand{\copythreadone}[1]{\mathsf{copyB}(#1)}

\item {\textbf{The first superposition gives a valid run}}. We now claim that the run $\gentmorph_1(\arun) \superpos \gentmorph_2(\projectto{\arun}{\thatthread})$ is a valid run under RA. For a $\thatthread$ thread $\athread$ in $\arun$ we denote by $\copythreadone{\athread}$ as its `copy' ($\mathsf{TID}$ distinguished apart) in $\gentmorph_2(\projectto{\arun}{\thatthread})$ ($\mathsf{B}$ since it occupies the $b$ timestamps). We will show that $\copythreadone{\athread}$ copies the transitions that $\athread$ took. We use the following invariant relating the view $\aview_1$ of thread $\athread$ in $\arun$ with the view $\aview_2$ of $\copythreadone{\athread}$ for showing this. Let  $\mathsf{TS}(t)$ denote the set of timestamps used by thread $t$ in a run $\rho$. 

\begin{quote}
	For every shared variable $\anadr$, if $\aview_1(\anadr) \in \mathsf{TS}(\thatthread)$, then $\aview_2(\anadr) = \aview_1(\anadr) - 2$, else if $\aview_1(\anadr) \in \mathsf{TS}(\thisthread)$ then $\aview_2(\anadr) = \aview_1(\anadr)$
\end{quote}

 Now we can reason by induction on the length of the run that whenever $\athread$ takes a transition in $\arun$, $\copythreadone{\athread}$ can replicate it but with the view as given by the above invariant. More precisely, whenever a $\thatthread$ thread $\athread$ makes a store with timestamp $\thatthreadt$, $\copythreadone{\thatthread}$ will make a store with timestamp $\thatthreadt-2$. Similarly, when a $\thatthread$ thread  $\athread$ makes a load,

\begin{enumerate}
  \item if the load is from a message by a $\thisthread$ thread, $\copythreadone{\thatthread}$ also loads from the same message 
  \item if the load is from a message by some $\thatthread$ thread $\athread'$ in $\arun$, $\copythreadone{\thatthread}$ loads from $\copythreadone{\athread'}$
\end{enumerate}

It is easy to check that the view invariant is maintained through this simulation. Crucially we have $\thatthreadt^i_a < \thisthreadt^j \iff \thatthreadt^i_b < \thisthreadt^j$. Thus $\athread$ and $\copythreadone{\athread}$ can always read the same set of $\thisthread$ messages. Thus we have that $\gentmorph_1(\arun) \superpos \gentmorph_2(\projectto{\arun}{\thatthread})$ is valid under RA. Now we focus on message generation by $\arun_1$. Intuitively, $\arun_1$ will also be a copy of $\arun$ but will occupy the $\thatthreadt_c$ timestamps.

\item 
{\textbf{Generating a copy of the message}}.  Now we will describe how we can use $\arun_1$ to generate the message $(\anadr, \avalue, \aview')$. Let $\arun_{p}$ be the prefix of $\arun$ just (one transition) before the message $(\anadr, \avalue, \aview)$ is generated. We generate the run $\arun'$ by copying the $\projectto{\arun_{p}}{\thatthread}$ using the $c$-timestamps. The run obtained is $\gentmorph_1(\arun) \superpos \gentmorph_2(\projectto{\arun}{\thatthread}) \superpos \arun'$. Using similar reasoning as earlier, we can show that this is a valid run under the RA semantics.  

\newcommand{\copythreadtwo}[1]{\mathsf{copyC}(#1)}

Now let $\thatthread$ thread $\athread$ be the thread which generates the message $(\anadr, \avalue, \aview)$ in $\arun$. Then there is a copy of $\athread$, thread $\copythreadtwo{\athread}$ in $\arun'$, that is now in a control state enabling it to generate a message on variable $\anadr$  with value $\avalue$ (since the transitions have been replicated exactly across $\thatthread$ and $\copythreadtwo{\thatthread}$). Now we need to reason about the view of the message generated by $\copythreadtwo{\athread}$. If the view of $\athread$ is $\aview_a$ and that of $\copythreadtwo{\athread}$ is $\aview_c$ we have the following, which again follows from the invariant mentioned above. 
\begin{quote}
	For each variable $\anadr'$, if $\aview_a(\anadr') \in \mathsf{TS}(\thatthread)$, then $\aview_c(\anadr') = \aview_a(\anadr') - 1$, else if $\aview_a(\anadr') \in \mathsf{TS}(\thisthread)$ then $\aview_c(\anadr') = \aview_a(\anadr')$
\end{quote}

Observe how this satisfies the condition (3) in the lemma immediately since in both cases above, we have $\aview_c(\anadr') \leq \aview_a(\anadr')$. Now the thread $\copythreadtwo{\athread}$ will choose the timestamp for variable $\anadr$, $\aview'(\anadr)$. Assume $t^* \in \nat$ has been given. We have two cases, (i) $t^* \leq \aview(\anadr)$ and (ii) $t^* > \aview(\anadr)$.
\begin{enumerate}
	\item[(i)] In this case there is nothing to prove as the original message is lifted to $(\anadr, \avalue, \aview')$ where $\aview'(\anadr) = \tmorph_1^\anadr(t^*)$ which satisfies both conditions (1) and (2). 
	\item[(ii)] We choose $\aview'(\anadr) =  \tmorph_2^\anadr(t^*) + 1$, which satisfies (2). Note that this timestamp is higher than $\aview_c(\anadr)$ since $\tmorph_2^\anadr(t^*) \geq \tmorph_1^\anadr(\aview(\anadr)) = \aview_a(\anadr) > \aview_c(\anadr)$. We have $\aview'(\anadr) =  \tmorph_2^\anadr(t^*) + 1 = \tmorph_1^\anadr(t^*) - 1$ due to the construction of $\gentmorph_1, \gentmorph_2, \arun_1$. Additionally, $\thisTS{\anadr}{i} \geq t^* \implies \tmorph_1^\anadr(\thisTS{\anadr}{i}) \geq \tmorph_1^\anadr(t^*) > \aview'(\anadr)$ satisfiying condition (1). In this case $\arun_1$ is defined as $\arun'$ extended by the store transition generating the message. Note that since it is a $c$-type message, the timestamp is available.
\end{enumerate}
\end{enumerate}

Thus in both cases, we have a message $(\anadr, \avalue, \aview')$ with $\aview'$ satisfying the required conditions. This proves the theorem.
\end{proof}

To sum up, we interpret the infinite supply as this: $\gentmorph_1(\arun)$ is the lifted run with holes. $\gentmorph_2(\projectto{\arun}{\thatthread})$ is the $\copythread{\thatthread}$ run and $\rho_1$ is obtained by running another copy that generates the new message. We note that run triplication is not strictly necessary for message duplication, but makes the proof easier. We note, points (1) and (3) above refer to relative ordering between $\thatthread$ and $\thisthread$ timestamps and (2) refers to the new message with an arbitrarily high timestamp.

\subsection{Abstracting the Timestamps}
\noindent We introduce the \emph{timestamp abstraction}, which is a building block for the simplified semantics. 
Let us call a message $\anevent$ an $\thatthread$ ($\thisthread$) message if it is generated by a $\thatthread$ ($\thisthread$) thread. 
With the intuition that $\thatthread$ messages can be replicated with arbitrarily high timestamps, while $\thisthread$ or initial messages cannot be, we  distinguish the write timestamps of the two types of messages.

\smallskip
\noindent{\textbf{Timestamp Abstraction.}}
If a $\thatthread$ thread has read a message $(\xvar, d, \aview)$ from a $\thisthread$ thread with a timestamp $\atimestamp=\aview(\xvar)$ and 
has generated a message $\anevent$ on $\xvar$, then copies of $\anevent$  are available with arbitrarily high timestamps at least as high as $\atimestamp$. To capture this in our abstraction, 
 we assign the $\thatthread$ message $\anevent$, a timestamp $\tplus{\atimestamp}$ that is by definition, larger than~$\atimestamp$. 

We define the set of timestamps in the simplified semantics as 
$\nat \uplus \tplus{\nat}$, where $\tplus{\nat}$ contains for each  $\atimestamp \in \nat$, a timestamp $\tplus{\atimestamp}$. 
The timestamps are equipped with the order $\preceq$ in which $\tplus{\atimestamp}$ is greater than $\atimestamp$ and smaller than $\atimestamp+1$:\\
\begin{equation*}
0 \prec \tplus{0} \prec 1 \prec \tplus{1}\prec \ldots
\end{equation*}

Timestamps of form $\atimestamp \in \mathbb{N}$ are used for the stores of $\thisthread$ threads while those of form $\tplus{\atimestamp}$ are used for stores of $\thatthread$ threads. We allow multiple stores with the same timestamp of form $\tplus{\atimestamp}$, while allowing at most one store for timestamps of form $\atimestamp$. This abstracts timestamps of multiple $\thatthread$ messages between two $\thisthread$ messages by a single $\tplus{\atimestamp}$ timestamp. Initial messages have timestamp 0 as usual.

We utilize this timestamp abstraction by defining a simplified semantics; note that this simplification is not per se a simpler formulation but rather is simple in the sense that it will pave the way for efficient verification procedures (as detailed in Section \ref{sec:consec-lfree}, Section \ref{sec:nexp-c}). 
 We then show that a run $\rho$ in the classical RA semantics has an equivalent run in the simplified semantics where the timestamps are transformed according to some timestamp transformation $\gentmorph$ as defined above. We show that reachability across the two semantics is preserved since both order and consecution between timestamps is maintained.

\noindent\textbf{RA semantics, simplified.} As in the classical RA semantics,  
the transition rules of the simplified semantics will require us to increase timestamps (upon writing messages). We define the  function $\raisets{-}$ on $\nat \uplus \tplus{\nat}$ by
\begin{align*}
    \raisets{\atimestamp} = \raisets{\tplus{\atimestamp}} = \tplus{\atimestamp}, \atimestamp \in \nat. 
\end{align*}
The definition of the simplified semantics replaces the domain $\settime$ by $\ptimestamps = \nat \uplus \tplus{\nat}$. 
We use the term \emph{abstract} to refer to the resulting views, messages, memory, local configurations, and configurations and use a superscript $\abstscriptpshort$ (shortened $\thisthread/\thatthread$) to indicate that an element is abstract. 
So an abstract view is a function, $\aview^{\abstscriptpshort}$ that maps shared variables to $\ptimestamps$. 
We now specify the transitions in the abstract semantics. Owing to their different nature (one is replicatable, the other is not) the $\thisthread$ and $\thatthread$ threads will have different transition rules in the simplified semantics.

For storing a value, the $\thatthread$ threads use a rule \rulestorelocalabstthat\ that coincides with rule \rulestorelocal\ from the RA semantics (Figure \ref{app:TransitionRelation}) except that it replaces relation $<_{\anadr}$ by $<^{\thatthread}_{\anadr}$ defined as follows:

\vspace{-1em}
\begin{align*}
\aview_1 <^{\thatthread}_{\anadr} \aview_2\quad\text{iff}~
\begin{cases}
	\raisets{\aview_1(\anadr)} \preceq \aview_2(\anadr) \in \mathbb{N}^+  & \\ 
	\aview_1(\yvar) = \aview_2(\yvar) \qquad \text{ for } \yvar\neq \anadr.
\end{cases}
\end{align*}
Additionally, for stores of $\thatthread$ threads, we no longer require the timestamp of the message to be unused. So we disregard the $\mathsf{msg} 
 \# \mathsf{m}$ check in the global $\rulestoreglobal$ rule (note crucially this is for $\thatthread$ only). The $\thisthread$ threads use \rulestorelocal\ from the RA semantics without modifications, and hence choose a timestamp in $\nat$, not a raised value. 

For load instructions, we distinguish between messages generated by $\thisthread$ and $\thatthread$ threads. 
This is a natural consequence of the different nature of timestamps, $\atimestamp$ for $\thisthread$ and $\tplus{\atimestamp}$ for $\thatthread$ messages. 
For loading a $\thisthread$ message, we use rule $\ruleloadlocal$ (Figure \ref{app:TransitionRelation}) from the RA semantics without changes. 

For loading from $\thatthread$ threads, we introduce a new rule $\ruleloadlocalabstthat$. 
It is defined by replacing $\join$ in $\ruleloadlocal$ by $\join^{\thatthread}_x$.
We drop the check on the order of timestamps (overwrite it by true); a $\thatthread$ message may always be read, independent of the reading thread's view. 
The join is dependent on the variable being read from, $\anadr$. 
To define $\aview_1\join^{\thatthread}_{\anadr}\aview_2$, let $\aview_1$ be the view of the thread thread loading the message and $\aview_2$ be the view in the message.

\vspace{-1em}
\begin{align*}
\aview_1\join^{\thatthread}_{\anadr}\aview_2 = (\aview_1[\anadr \mapsto \raisets{\aview_1(\xvar)}]) \join \aview_2 
\end{align*}

Thus, if $\aview_1(\xvar)=4$ and $\aview_2(\xvar)=2^+$,
then $(\aview_1\join_{\anadr}^{\thatthread}\aview_2)(\xvar)=4^+$.
The update to $\raisets{\aview_1(\xvar)}$  ensures that if it the timestamp on $\xvar$ was $\atimestamp$, it is at least $\tplus{\atimestamp}$, and hence it cannot read a ($\thisthread$) message with timestamp $\atimestamp$ again. 
We note that the above join operation is not commutative. 

Now we consider the atomic operation - $\rulecaslocal$ - which can only be performed by $\thisthread$. We have two cases depending on whether $\rulecaslocal$ loads from a $\thisthread$ or $\thatthread$ message. If it is the latter, then the transition is identical to ($\ruleloadlocalabstthat$; \rulestorelocal) with the additional condition that the load and store timestamps must be $\tplus{\atimestamp}$ and $\atimestamp+1$ for some $\atimestamp$. 

If it is the former (load from $\thisthread$) then the load and store timestamps must be $\atimestamp$ and $\atimestamp+1$. Consequently, there cannot be any messages with timestamp $\tplus{\atimestamp}$. Conversely, if there is (atleast one) message with timestamp $\tplus{\atimestamp}$, then the $\rulecaslocal$ operation with load and store timestamps $\atimestamp$ and $\atimestamp + 1$ is forbidden. We keep track of such `blocked' intervals $(\atimestamp, \atimestamp+1)$ by adding a set $\mathsf{B}$ to the global state in the simplified semantics. 
The  global and local transition relations of the full simplified semantics are in 
Figure \ref{Figure:AtomicTransitionRelation-app1}, \ref{Figure:AtomicLocalTransitionRelation-app}.

We formally proof equivalence w.r.t. reachability of the simplified semantics with the original RA semantics. But before that, we give an example to illustrate timestamp abstraction in the simplified semantics.

\begin{figure}[t]
\centering
\small
\begin{subfigure}{\textwidth}
\hspace{-1.3cm}
   \centering
\begin{tikzpicture}[codeblock/.style={line width=0.5pt, inner xsep=0pt, inner ysep=5pt}  , show background rectangle]
\node[codeblock] (init) at (current bounding box.north west) {
$
\def\arraystretch{1}
\begin{array}{c}
\text{Variables } \xvar \text{ and } \yvar \text{ have been initialized to 0} \\
\begin{array}{l|l}
   \hline
    \mathsf{producer} &  \mathsf{consumer} \\ \hline 
   \lambda_1:~ \load{\areg_1}{\yvar} & \lambda'_1:~ \store{\yvar}{1} \\
   \lambda_2:~ \texttt{if}(\areg_1 == 1):  & \lambda'_2:~ \texttt{for i in {1..z}}: \\
   \lambda_3:~ \qquad \store{\xvar}{1} \choice \quad\quad\cdots \choice \store{\xvar}{l} & 
   \lambda'_3:~ \quad\quad\load{\areg'_1}{\xvar} \\ 
    & \lambda'_4:~ \quad\quad\assume{\areg'_1 = \texttt{(i \%\% l)+1}} \\ \hline 
\end{array}
\end{array}$
};
\end{tikzpicture}
\end{subfigure}
\begin{subfigure}{\textwidth}
   \centering

   \includegraphics[scale=0.3]{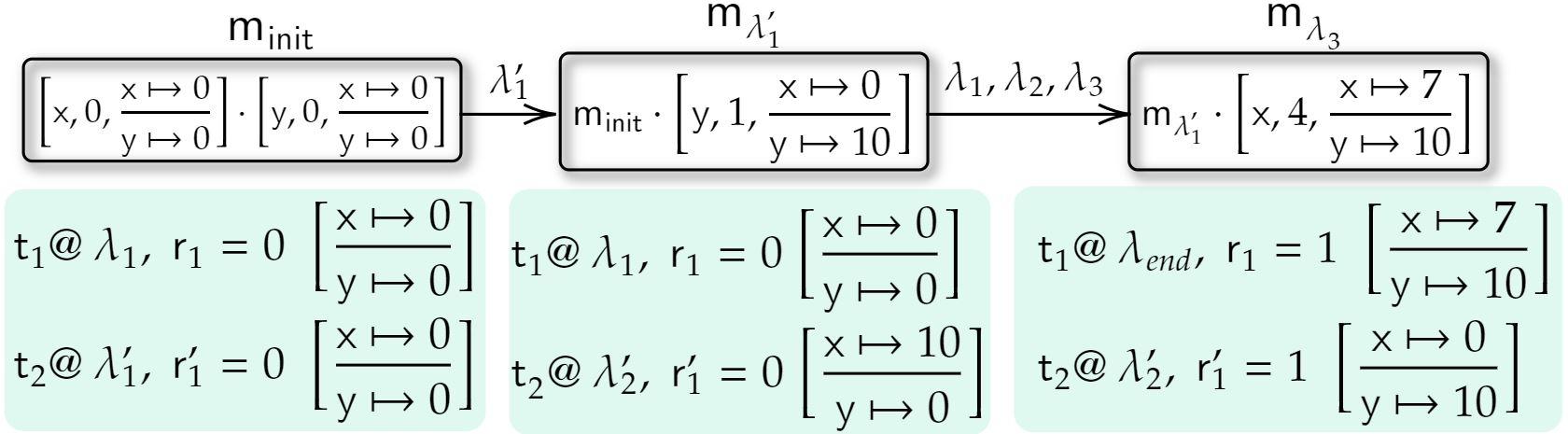}
\end{subfigure}
\caption{Above is a (non-parameterized) producer-consumer program, and below is a sample execution snippet with  threads $t_1$ and $t_2$ playing the roles of producer and consumer respectively.}
\label{Figure:example}
\end{figure}

\newcommand{\prodr}{{\color{red} \mathsf{producer}}}
\newcommand{\consr}{{\color{blue} \mathsf{consumer}}}

\begin{figure}[h]
\centering
\includegraphics[scale=0.15]{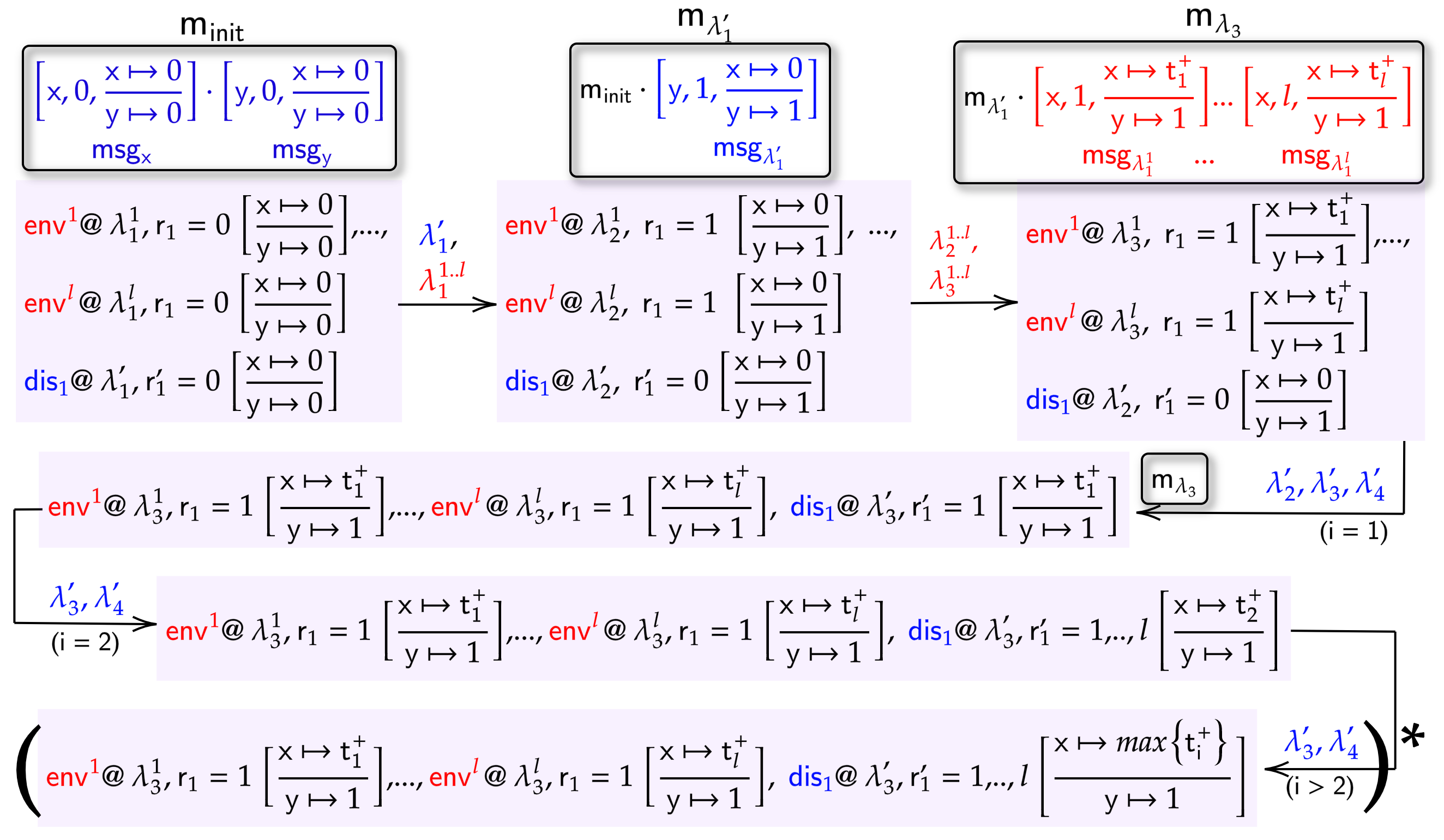}
\caption{Execution under the simplified semantics. $\prodr$  transitions and messages in red, $\consr$ transitions and messages in blue. 
The execution begins with the $\consr$ thread generating a message on $\yvar$ with value 1 and timestamp 1 leading to the memory $\amem_{\lambda'_1}$. 
The $\prodr$ threads executing $\lambda_1^{1\dots l}$ read from this message and reach states $\lambda_2^{1\dots l}$. They generate messages on $\xvar$ with values $\{1,\dots,l\}$ shown in memory $\amem_{\lambda_3}$. 
These are then read by the $\consr$ as it loops around  $\lambda'_3$, $\lambda'_4$ for different iterates $\texttt{i}$, ($\texttt{i=1, i=2, i>2}$) as shown along the transition edge.}
\label{Figure:simplified}
\end{figure}

\noindent\textbf{Simplified Semantics, on an Example}. 
In Figure \ref{Figure:simplified} we give an example of a computation under the simplified semantics by parameterizing the program from Figure \ref{Figure:example}. The parameterized program has a single distinguished $\thisthread_1$ thread 
which runs the program $\consr$,    
and arbitrarily many $\thatthread$ threads which run $\prodr$. 
We consider a computation in which $\thisthread_1$, and $l$ (out of the unboundedly many) $\thatthread$ threads participate.  
We label different instances of the $\thatthread$ threads (and their instruction labels) by superscripts from $\{1,\dots,l\}$ (eg.,
 $\lambda_1^{1}, \dots \lambda_1^{l}$ for $\lambda_1$).

The $\consr$ generates write timestamps of the form $\atimestamp$ (1 in example), while $\prodr$ threads have write timestamps of the form $\atimestamp_1^+, \dots,\atimestamp_l^+$. 
While the timestamp 1 is now occupied, there can be several writes with timestamps of the form $\atimestamp^+$, 
(in particular some $\atimestamp_i^+$ may be equal). 
Additionally, when reading from  these $\prodr$ generated messages, $\consr$ does not perform any timestamp checks, 
rather simply updates its view by taking joins. 
Hence we point out that the load with value 2 during the second loop iteration ($\texttt{i=2}$), is feasible even if $\atimestamp_2^+ < \atimestamp_1^+$; unlike the classical RA semantics.  
In this example, the $\thisthread$ thread,  
after looping around $l$ times and reading from  $\thatthread$ messages, 
has the view on $\xvar$ as $\max\{\atimestamp_i^+\}$. 
Due to the lack of timestamp comparisons,  
$\consr$ can perform the loop arbitrarily many times ($\mathsf{z} > l$ times), 
moreover, the number of $\thatthread$ threads needed is  independent of $\texttt{z}$. 
We see that this relieves the burden of timestamp comparisons, for $\thatthread$ messages. 


\tikzset{background rectangle/.style={fill=none
}}
\begin{figure}[h]
\centering
\small
\begin{tikzpicture}[codeblock/.style={line width=0.5pt, inner xsep=0pt, inner ysep=5pt}  , show background rectangle]
\node[codeblock] (init) at (current bounding box.north west) {
$
\def\arraystretch{3}
\begin{array}{c}
\rowcolor{black!15}
\begin{array}{c}
\ruleloadglobal\quad\inferrule{\anlcfmap(\athread) = \anlcf \quad
\anlcf\localtrans{\mathsf{ld},\anevent}\anlcf'\quad \anevent\in \amem
  }{
  (\amem, \anlcfmap,\mathsf{B})\xrightarrow{(\athread, \anevent)} (\amem, \anlcfmap[\athread\mapsto\anlcf'],\mathsf{B})
  } \\
  \ruleunlabelled\quad\inferrule{\anlcfmap(\athread) = \anlcf \quad
\anlcf\localtrans{}\anlcf'}{
  (\amem, \anlcfmap,\mathsf{B})\xrightarrow{\athread} (\amem, \anlcfmap[\athread\mapsto\anlcf'],\mathsf{B})
  } 
\end{array}
  \\
  \rowcolor{black!5}
  \rulestoreglobal^{\thisthread}\quad\inferrule{\athread\in\thisthread\quad\anlcfmap(\athread) = \anlcf \quad
  \anlcf\localtrans{\mathsf{st},\anevent,\thisthread}\anlcf'\quad}{
  (\amem, \anlcfmap, \mathsf{B})\xrightarrow{(\athread, \anevent)} (\amem\cup\set{\anevent}, \anlcfmap[\athread\mapsto\anlcf'],\mathsf{B})
  } \\
  \rowcolor{black!15}
  \rulestoreglobal^{\thatthread} \vspace{-1em}\\
  \rowcolor{black!15}
   \inferrule{\athread \in \thatthread\quad \anlcfmap(\athread) = \anlcf \quad
  \anlcf\localtrans{\mathsf{st},\anevent,\thatthread}\anlcf'\quad \anevent = (\anadr, \avalue, \abstpview) \quad \abstpview(\anadr) = \atimestamp^+ \quad \atimestamp \not\in \mathsf{B}}{
  (\amem, \anlcfmap,\mathsf{B})\xrightarrow{(\athread, \anevent)} (\amem\cup\set{\anevent}, \anlcfmap[\athread\mapsto\anlcf'],\mathsf{B}\cup \{\atimestamp^+\})} \\
  \rowcolor{black!5}
  \rulecasglobal \vspace{-0.5em}\\
  \rowcolor{black!5}
   \inferrule{\anlcfmap(\athread) = \anlcf \quad
  \anlcf\localtrans{\mathsf{cas},\anevent_r,\anevent_w}\anlcf' \quad \anevent_r \in \amem \quad \anevent_w = (\xvar, \avalue, \abstpview)  \quad \abstpview(\anadr) = \atimestamp + 1 \\\\ 
  \text{ if } \anevent_r(\anadr) \in \mathbb{N} \text{ then } \tplus{\atimestamp} \not\in \mathsf{B}}{
  (\amem, \anlcfmap,\mathsf{B})\xrightarrow{(\athread, \anevent_w)} (\amem\cup\set{\anevent_w}, \anlcfmap[\athread\mapsto\anlcf'],\mathsf{B}\cup \{\atimestamp\})}
\end{array}$
};
\end{tikzpicture}
\caption{Simplified semantics. Global transition relation. $\mathsf{B}$ is a set of blocked time stamps. For an $\thatthread$ thread
making a store operation, the time stamp  $\atimestamp^+ \in \mathbb{N}^+$ 
can be chosen only when $\atimestamp$ has not been blocked ( $\atimestamp \notin \mathsf{B}$). $\atimestamp^+$ is added to $\mathsf{B}$ whenever a 
$\thatthread$ thread makes a store operation adding a message $(\xvar, d, \atimestamp^+)$. 
Likewise, when a $\thisthread$ makes a CAS operation on loading from a 
 message $(\xvar,d,\aview)$ with $\aview(\xvar)=\atimestamp \in \mathbb{N}$, then it must be checked that $\atimestamp^+ \notin \mathsf{B}$, ensuring that there are no time stamps between $\atimestamp$ and $\atimestamp+1$. $\atimestamp \in \mathbb{N}$ is added to $\mathsf{B}$ when a $\thisthread$ 
thread makes a CAS, loading from a message $(\xvar,d,\aview)$ with $\aview(\xvar)=\atimestamp \in \mathbb{N}$.}
\label{Figure:AtomicTransitionRelation-app1}
\end{figure}

\tikzset{background rectangle/.style={fill=none
}}
\begin{figure}[h]
\centering
\small
\begin{tikzpicture}[codeblock/.style={line width=0.5pt, inner xsep=0pt, inner ysep=5pt}  , show background rectangle]
\node[codeblock] (init) at (current bounding box.north west) {
$
\def\arraystretch{2}
\begin{array}{cc}
\rowcolor{black!15}
\hspace{1cm}
\begin{subarray}{c}\rulestorelocalabstthat \\
    \text{store operation for }\thatthread
\end{subarray}
&
\quad\inferrule{
 \avaluation(\areg)=\avalue\quad 
\abstpview_1<^{\thatthread}_{\anadr}\abstpview_2
  }{
  (\store{\xvar}{\areg}
, \avaluation, \abstpview_1)\localtrans{\mathsf{st},(\anadr, \avalue, \abstpview_2), \thatthread} (\myskip, \avaluation, \abstpview_2)
  }
  \\[0.5cm]
  \rowcolor{black!5}
\begin{subarray}{c}\rulestorelocal \\
    \text{store operation for }\thisthread
\end{subarray}
&
\quad\inferrule{
 \avaluation(\areg)=\avalue\quad 
\abstpview_1<_{\anadr}\abstpview_2 \quad \abstpview_2(\anadr) \in \mathbb{N}
  }{
  (\store{\xvar}{\areg}
, \avaluation, \abstpview_1)\localtrans{\mathsf{st},(\anadr, \avalue, \abstpview_2), \thisthread} (\myskip, \avaluation, \abstpview_2)
  } \\[0.5cm]
\rowcolor{black!15}
\begin{subarray}{c}
    \ruleloadlocalabstthat \\
    \text{load from }\thatthread \text{ messages}
\end{subarray}
&
\quad\inferrule{
\avaluation'=\avaluation[\areg\mapsto \avalue] \quad \abstpview_2(\anadr) \in \mathbb{N}^+}{
  (\load{\areg}{x}
, \avaluation, \abstpview_1)\localtrans{\mathsf{ld},(\anadr, \avalue, \abstpview_2)} (\myskip, \avaluation', \abstpview_1\join^{\thatthread}_{\anadr}\abstpview_2)} \\[0.5cm]
  \rowcolor{black!5}
\begin{subarray}{c}
    \ruleloadlocal \\
    \text{load from }\thisthread \text{ messages}
\end{subarray}
&
\quad\inferrule{
\avaluation'=\avaluation[\areg\mapsto \avalue] \quad  \abstpview_1(\anadr) \preceq \abstpview_2(\anadr) \in \mathbb{N}}{
  (\load{\areg}{\xvar}
, \avaluation, \abstpview_1)\localtrans{\mathsf{ld},(\anadr, \avalue, \abstpview_2)} (\myskip, \avaluation', \abstpview_1\join\abstpview_2)} \\[0.5cm]
  \rowcolor{black!15}
\begin{subarray}{c}
    \rulecaslocalabstthat \\
    \text{cas with load from }\thatthread \text{ messages}
\end{subarray}
&
\quad\inferrule{\avaluation(\areg_1) = \avalue_1 \quad \avaluation(\areg_2) = \avalue_2\quad \abstpview_1(\xvar) = \atimestamp^+ \\\\
{\abstpview}' = \abstpview\join^{\thatthread}_{\anadr}\abstpview_1 \quad {\abstpview}'(\anadr) = \atimestamp_1^+ \quad \abstpview_2 = {\abstpview}'[\xvar \mapsto \atimestamp_1 + 1]} {
  (\casOf{\avar, \areg_1, \areg_2}
, \avaluation, \abstpview)\localtrans{\mathsf{cas},(\xvar, \avalue_1, \abstpview_1),(\xvar, \avalue_2, \abstpview_2)}
(\myskip, \avaluation, \abstpview_2)} \\[0.5cm]
  \rowcolor{black!5}
\begin{subarray}{c}
    \rulecaslocal \\
    \text{cas with load from }\thisthread \text{ messages}
\end{subarray}
&
\quad\inferrule{\avaluation(\areg_1) = \avalue_1 \quad \avaluation(\areg_2) = \avalue_2\quad \abstpview_1(\xvar) = \atimestamp \geq \abstpview(\xvar) \\\\
{\abstpview}' = \abstpview \join \abstpview_1 \quad \abstpview_2 = {\abstpview}'[\xvar \mapsto \atimestamp + 1]}{
  (\casOf{\avar, \areg_1, \areg_2}
, \avaluation, \abstpview)\localtrans{\mathsf{cas},(\xvar, \avalue_1, \abstpview_1),(\xvar, \avalue_2, \abstpview_2)} (\myskip, \avaluation, \abstpview_2)}
\end{array}$
};
\end{tikzpicture}
\caption{Simplified semantics. Thread-local transition relation. Margin annotations provide description. The store rules refer to the thread type ($\thisthread$/$\thatthread$) \textit{executing} the instruction; the load rules refer to the thread type \textit{which generated} the message that is being loaded (similarly for the load part of CAS operations which can only be executed by $\thisthread$ threads).
In Rule~\rulestorelocalabstthat, we use $\abstpview_1<_{\anadr}^{\thatthread}\abstpview_2$ to mean 
$\raisets{\abstpview_1(\anadr)}\preceq \abstpview_2(\anadr) \in \mathbb{N}^+$ and $\abstpview_1(\yvar)=\abstpview_2(\yvar)$ for all variables $\yvar\neq\anadr$. 
In Rule~\ruleloadlocalabstthat, $\abstpview_1 \sqcup_{\anadr}^{\thatthread}\abstpview_2$ is defined as ${\abstpview_1}[\xvar \mapsto \raisets{\abstpview_1(\xvar)}] \join \abstpview_2$.
The join $\join$ always means an element wise max over the relevant domain.
}
\label{Figure:AtomicLocalTransitionRelation-app}
\end{figure}

The simplified semantics exactly captures reachability of the original semantics. Define $\abstfuncp$ to be a function which drops all views from messages and local configurations, and define  $\abstequalp$ as  equality of the local configurations modulo views.

\begin{theorem}[Soundness and Completeness]\label{Theorem:Semantics}
If a configuration $\acf$ is reachable under RA, then there is an abstract configuration $\acf^{\abstscriptpshort}$ reachable in the simplified semantics so that $\acf^{\abstscriptpshort} \abstequalp \abstfuncpOf{\acf}$. 
Conversely, if a configuration $\acf^{\abstscriptpshort}$ is reachable in the simplified semantics,  then there is a configuration~$\acf$ reachable under RA such that $\abstfuncpOf{\acf}\abstequalp\acf^{\abstscriptpshort}$.
\end{theorem}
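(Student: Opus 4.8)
The plan is to prove the two implications separately, each by induction on the length of the run, relying on the lemmas above. The two directions are genuinely asymmetric: on $\thatthread$ messages the simplified semantics is strictly more permissive than RA — $\ruleloadlocalabstthat$ drops the timestamp check and $\rulestorelocalabstthat$ lets arbitrarily many $\thatthread$-stores share a $\tplus{}$-timestamp — so the forward direction (RA $\to$ simplified) is essentially a \emph{timestamp compression}, whereas the backward direction (simplified $\to$ RA) is where the \emph{Infinite Supply Lemma} does the real work.

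\emph{Forward direction.} Given an initialized RA computation $\rho$ ending in $\acf$, I would first note, by a routine induction on $\rho$, that every view component held by a thread is either $0$ or the timestamp of some store on that variable in $\rho$. For each variable $\anadr$ enumerate the $\thisthread$-stores on $\anadr$ in timestamp order; send the $i$-th of them (and the $\init$-message as the $0$-th) to the natural timestamp $i$, and send every $\thatthread$-store on $\anadr$ that falls strictly between the $i$-th and $(i+1)$-th $\thisthread$-store to $\tplus{i}$. Call the resulting per-variable map $\tmorph$; it is $\preceq$-monotone and, since CAS load/store pairs in RA are adjacent integers and only $\thisthread$ executes CAS, it carries such pairs to consecutive naturals. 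Transport $\rho$ along $\tmorph$, changing only the view components of messages and local configurations (so $\acf_{\init}$ is fixed), and build $\mathsf{B}$ in parallel: add $\tplus{i}$ whenever an $\thatthread$-store maps to $\tplus{i}$, add $i$ whenever a $\thisthread$ CAS reads the $i$-th $\thisthread$-store. One checks by induction that each rule of Figures~\ref{Figure:AtomicTransitionRelation-app1} and~\ref{Figure:AtomicLocalTransitionRelation-app} is enabled on the transported configuration whenever its RA counterpart fired, keeping the invariant ``transported view $=\tmorph$ of the RA view''. The cases needing attention are: an $\thatthread$-store, where $\abstpview_1<^{\thatthread}_{\anadr}\abstpview_2$ follows from ``new RA timestamp $>$ old view'' together with $\raisets{\tplus{i}}=\tplus{i}$; a load from an $\thatthread$-message, where $\join^{\thatthread}_{\anadr}$ agrees with $\tmorph$ of the RA join \emph{precisely because} in RA the loaded message's timestamp is $\ge$ the reader's view on $\anadr$, so the ``stale'' branch of $\join^{\thatthread}_{\anadr}$ is never taken; and the $\mathsf{B}$-checks, which are consistent because an $\thatthread$-store maps to $\tplus{i}$ iff some $\thatthread$-store lies strictly between the $i$-th and $(i+1)$-th $\thisthread$-store on $\anadr$, while $i$ enters $\mathsf{B}$ only when a CAS forces those two to be RA-adjacent, and the two situations cannot co-occur. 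Since $\tmorph$ touches only views, $\tmorph(\acf)\abstequalp\abstfuncpOf{\acf}$.

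\emph{Backward direction.} Given a simplified computation $\sigma$ ending in $\acf^{\abstscriptpshort}$, I would build an RA computation $\rho$ step by step, maintaining (i) an order isomorphism, per variable, between the RA timestamps of the $\thisthread$-stores so far and the naturals used by $\sigma$, and (ii) the invariant that every $\thatthread$-message of $\sigma$ is \emph{witnessed} in $\rho$: if it carries $\tplus{i}$ and has already been loaded by some thread whose RA view on that variable lies in the image of the $i$-th gap, then $\rho$ contains a copy with RA timestamp in that gap above that view; extra higher copies are harmless. Local steps and $\thisthread$-loads from $\thisthread$-messages transfer directly. For a $\thisthread$-store, an $\thatthread$-store, or the store part of a CAS, first apply a timestamp lifting (Lemma~\ref{lem:gen-time-lift}) to open a fresh integer slot at the required position, then store there; the $\mathsf{B}$-check that $\sigma$ passed for a CAS reading a $\thisthread$-message is exactly what guarantees no $\thatthread$-store sits in the relevant gap, so the CAS store can be placed RA-adjacent to its read. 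The crucial case is a load (or CAS-load) from an $\thatthread$-message $\mathsf{m}$ whose $\tplus{}$-slot sits below the reader's current RA view on $\anadr$ (a stale load, which $\sigma$ permits): invoke the \emph{Infinite Supply Lemma} on $\rho$ and the $\thatthread$-thread that produced the RA origin of $\mathsf{m}$, with $t^{*}$ chosen from the reader's RA view and the relevant $\thisthread$-store indices so that conditions (1)--(3) yield a fresh copy $\mathsf{m}'$ whose timestamp on $\anadr$ is above the reader's view yet order-consistent on every variable with the $\thisthread$-stores; then have the reader load $\mathsf{m}'$. Each use of Lemma~\ref{lem:inf-sup} relabels the existing $\rho$ via the lifting $\gentmorph_1$, but since liftings preserve $\preceq$-order and CAS-consecution (Lemma~\ref{lem:gen-time-lift}), invariants (i)--(ii) survive, so the induction closes. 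Erasing views gives $\abstfuncpOf{\acf}\abstequalp\acf^{\abstscriptpshort}$, and a thread reaching $\bot$ in one run reaches it in the other since control states are untouched.

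\emph{Main obstacle.} The hard part is the backward direction, and within it the bookkeeping that must simultaneously accommodate the repeated applications of Infinite Supply (each globally relabels timestamps), the requirement that $\thisthread$ CAS load/store pairs land on RA-adjacent integers (which the set $\mathsf{B}$ exists to certify), and the constraint that no cloned $\thatthread$-message ``leapfrogs'' a $\thisthread$-store it must stay below. Conditions (1)--(3) of Lemma~\ref{lem:inf-sup} together with the order- and consecution-preservation of Lemma~\ref{lem:gen-time-lift} are precisely tailored to make this consistent; the real effort is stating the invariant tightly enough that every case goes through, not any new combinatorial insight.
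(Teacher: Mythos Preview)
Your proposal is correct and follows essentially the same approach as the paper: the forward direction is the per-variable timestamp compression sending the $i$-th $\thisthread$-store to $i$ and the intervening $\thatthread$-stores to $\tplus{i}$, checked rule by rule; the backward direction is an induction on the simplified run that invokes the Infinite Supply Lemma whenever a thread loads (or CAS-loads) from an $\thatthread$-message whose slot lies below the reader's current RA view, and otherwise uses timestamp lifting to make room for fresh stores. Your invariant bookkeeping in the backward direction is spelled out more carefully than the paper's own proof (which handles the repeated relabellings from Lemma~\ref{lem:inf-sup} rather informally), but the argument is the same.
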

\begin{proof}
	At the outset, we note that the only component of the configuration that differs between the classical and simplified semantics is that of the timestamps and hence the view map, $\aview$ in concrete and $\abstpview$ in the abstract configuration. We now give a relation between these timestamps. With this relation being clear, the formal equivalence between the semantics can be shown by considering a case analysis of the transitions that the threads can take. Once again for quick intuition we take the example of timestamps on a single shared variable $\anadr$ as follows.

\begin{center}
$\begin{array}{cccccccc}
		\init & \thisthreadt^0 & \thisthreadt^1 & \thatthreadt^0 & \thisthreadt^2 & \thatthreadt^1 & \thatthreadt^2 & \thisthreadt^3 \\
		0 & 1 & 2 & 3 & 4 & 5 & 6 & 7
\end{array}$
\end{center}

	This sequence of timestamps corresponds to the following sequence in the abstract semantics.

\begin{center}
$\begin{array}{rcccccccc}
		 & \init & \thisthreadt^0 & \thisthreadt^1 & \thatthreadt^0 & \thisthreadt^2 & \thatthreadt^1 & \thatthreadt^2 & \thisthreadt^3 \\
	\text{concrete} & 0 & 1 & 2 & 3 & 4 & 5 & 6 & 7 \\
	\text{abstract} & 0 & 1 & 2 & \colorbox{blue!5}{$\quad 2^+\quad$} & 3 & \multicolumn{2}{c}{\colorbox{blue!5}{$\qquad 3^+\qquad$}} & 4 
\end{array}$
\end{center}

In this fashion the $\atimestamp^+$ are abstracted $\thatthread$ timestamps between any two $\thisthread$ timestamps. We define the abstraction (similarly concretization) function as the function that transforms all timestamps in the run as shown above. With the above timestamp abstraction/concretization in mind we show that abstract and concrete configurations are equivalent in terms of reachability.

We prove this by induction on the length of a run. We show that a concrete (similarly abstract) configuration is reachable if and only if it has some abstraction (similarly concretization) that is reachable. 

 \noindent{\textbf{Base Case}}. In the base case equivalence is maintained as the initial concrete configuration is equivalent to its simplified configuration where all timestamps are 0. Recall that all timestamp transformations maintain 0 as a fixpoint. Hence the initial thread-local states and memory are equivalent for the concrete and abstract semantics.

\noindent\textbf{Inductive Case - Concrete to Abstract}
For the inductive case, assume that we have the result after $n \in \nat$ steps in a computation. Now we induct by considering cases over 
types of the $n+1$ th instruction in the computation. 

\begin{itemize}
	\item \textbf{Silent}: Silent (thread-local) instructions are handled trivially. They only change the thread local state identically for the concrete and abstract configurations.
	\item \textbf{Load}: A load transition can be either from a $\thisthread$ or a $\thatthread$. For both the cases, we note that the timestamp abstraction maintains (including equality) the relative order of timestamps. Hence whenever a concrete message is readable, so is the corresponding abstract message.  
	\item \textbf{Store}: This follows since the corresponding thread in the abstract configuration can simulate the store using the corresponding  timestamp ($\atimestamp^+ \in \nat^+$ in case of a $\thatthread$ store, and $\atimestamp \in \nat$ in case of a $\thisthread$ store). Note once again that, the abstraction preserves order on the timestamps and consequently, a store is allowed in the abstract semantics if it was allowed in the concrete computation. 
	\item \textbf{CAS}: In this case, we note that the set $\mathsf{B}$ keeps track of which timestamps are allowed for CAS operations. If the CAS operation read from a $\thatthread$ message, the semantics follows from  $\ruleloadlocalabstthat\rulestorelocal$. However, if the CAS load is performed using the store of a $\thisthread$, then it implies that there are no $\thatthread$ timestamps between the load,store timestamps ($\atimestamp, \atimestamp+1$) of the CAS (similar to $\thisthreadt^0$ and $\thisthreadt^1$ in the figure above). Consequently, we see that the set $\mathsf{B}$ in the abstract semantics does not contain the timestamp $\atimestamp^+$ ($\atimestamp^+$ is added to $\mathsf{B}$ the moment a $\thatthread$ makes a store with the timestamp $\atimestamp^+$ to disallow a CAS with load, store timestamps $\atimestamp$ and $\atimestamp+1$). Thus the equivalent CAS operation is also allowed under the abstract semantics.
\end{itemize}

\noindent\textbf{Inductive Case - Concrete to Abstract}
\begin{itemize}
	\item \textbf{Silent}: Silent instructions are handled trivially they only change the thread local state identically for the concrete and abstract configurations.
	\item \textbf{Load}: We consider two cases depending on whether the load happens from a $\thisthread$ thread or a $\thatthread$ thread. 
	\begin{itemize}
	\item In the case where we load from a $\thisthread$ message, the semantics are equivalent between the abstract and concrete transitions, since we compare the timestamps $\atimestamp^+ \prec \atimestamp +1$ and $\atimestamp \prec \atimestamp+1$ (see the rule $\ruleloadlocal$ in Figure \ref{Figure:AtomicLocalTransitionRelation-app}). 	
 	Given that the concretization function (like the abstraction) maintains relative order between $\thisthread$ and $\thatthread$ timestamps, the load is also feasible in the concrete semantics.
	\item In the second case, the load is from a $\thatthread$.  
	By inductive hypothesis, we have the concrete computation till the load transition. In particular, the message $(\xvar,d,\aview)$ we wish to load has already been generated in the concrete computation. To this concrete computation $\rho$ obtained by inductive hypothesis, we invoke the infinite supply lemma with $t^*$ as the reading thread's local view on $\xvar$ to generate the computation $\gentmorph_1(\arun) \superpos \gentmorph_2(\projectto{\arun}{\thatthread}) \superpos \arun_1$ with the fresh message $(\anadr, \avalue, \aview')$. By point (2) in the lemma the message is loadable, $\aview'(\anadr) \geq \tmorph_2^\anadr(t^*)$. Note how we apply the timestamp lifting function to $t^*$ since the reading thread's new concrete timestamp has changed. Additionally by points (1) and (3) the relative order of timestamps in $\aview'$ in variables other than $\anadr$ remain the same w.r.t the $\thisthread$ thread  messages. This implies that after reading the message, the view of the reading thread will only increase on $\xvar$. Hence for all other variables it will remain the same thus maintaining equivalence between the timestamps in the concrete and abstract run.
\end{itemize}
	\item \textbf{Store}: The store transition for $\thisthread$ is identical to its concrete counterpart. For a $\thatthread$ thread, we note that we generate copies of the abstract $\atimestamp^+$ timestamp to get a sequence of concrete timestamps. Here we can generate an arbitrary number of copies and hence the thread, will always find a vacant timestamp for its store.
	\item \textbf{CAS}: When a $\thisthread$ thread  makes a CAS, it can either read from a $\thatthread$ or from the store of a $\thisthread$ thread.   In the latter case let the timestamps of the load,store in the CAS be $\atimestamp$,$\atimestamp+1$. Then in the abstract semantics we require that $\atimestamp^+ \not\in\mathsf{B}$. This implies that in the concrete semantics too, there are no $\thatthread$ timestamps between the load,store timestamps and hence CAS is possible in the concrete semantics too. In the former case we again use the infinite supply lemma as we did in the case of loads, to generate a loadable $\thatthread$ message. 
\end{itemize}

\end{proof}

\section{Safety Verification with Loop-Free Threads}
\label{sec:consec-lfree}
This section discusses the safety verification problem for the class $\thatthread(\uncassy)\parallel\thisthread_1(\unloopy) \parallel \dots \parallel\thisthread_n(\unloopy)$ 
consisting of a set of $n$ distinguished $\thisthread$ threads executing a loop-free program in the presence of an unbounded number of $\thatthread$ threads.
We show that the safety verification problem for this class of systems can be decided in \pspace{} by leveraging the simplified semantics from Section \ref{Section:Simplification}. We will assume that the domain $\domain$ is finite.
Parallely, we demonstrate the ability to improve automatic verification techniques by showing how to encode the safety verification problem (of whether all assertions hold) into 
Datalog programs.
The encoding is interesting for two reasons: (1) it yields a complexity upper bound that, given~\cite{AAAK19}, came as a surprise; (2) it provides practical verification opportunities, considering that Datalog-based Horn-clause solvers are state-of-the-art in program verification \cite{bjorner2013solving,bjorner2015horn}.

\begin{theorem}\label{Theorem:Horn}
The safety verification problem for 
$\thatthread(\uncassy)$ $\parallel  \thisthread_1(\unloopy)\parallel \dots \parallel\thisthread_n(\unloopy)$, $n \in \mathbb{N}$ 
is non-deterministic polynomial-time relative to the query evaluation problem in linear Datalog \emph{(}$\mathsf{NP}^{\mathsf{PSPACE}}$\emph{)}, and hence is in \pspace. \label{thm:pspace}
\end{theorem}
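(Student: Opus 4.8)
The plan is to combine the simplified semantics of Theorem~\ref{Theorem:Semantics} with a nondeterministic reduction to linear Datalog query evaluation. By Theorem~\ref{Theorem:Semantics} it suffices to decide whether a configuration exhibiting an assertion violation is reachable in the \emph{simplified} semantics. Since every $\thisthread_i$ is loop-free, the combined $\thisthread$-computation has length polynomial in the input, hence issues only polynomially many store and CAS events; after a timestamp-lowering normalisation (as in the proof of Lemma~\ref{lem:inf-sup}) we may assume that the $\mathbb{N}$-part of the abstract timestamp domain actually used is $\{0,1,\dots,k\}$ with $k$ polynomial. Consequently the relevant abstract timestamps $\{0,0^{+},1,1^{+},\dots,k,k^{+}\}$, the blocked set $\mathsf{B}$, and each $\thisthread$ local configuration (program counter, register valuation over $\setreg$, abstract view over $\varset$) admit a polynomial-size description.

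First I would take the $\mathsf{NP}$ step: guess a \emph{$\thisthread$-skeleton}, namely the interleaved sequence of $\thisthread$ transitions across the $n$ threads together with, for every store and CAS, its abstract timestamp and written value, for every slot $(\atimestamp,\atimestamp{+}1)$ whether it is owned by a $\thisthread$-CAS or populated by the environment, and for every load either a pointer to an earlier $\thisthread$-store in the skeleton or --- when the load reads an $\thatthread$ message --- the full message $(\xvar,\avalue,\abstpview)$ it reads. This certificate is polynomial. In polynomial time one then checks all constraints that mention only $\thisthread$ events and the slot assignment: that each thread follows its program and updates registers correctly, that abstract views evolve according to the $\thisthread$-rules of Figures~\ref{Figure:AtomicTransitionRelation-app1} and~\ref{Figure:AtomicLocalTransitionRelation-app}, that CAS load/store timestamps are consecutive and consistent with $\mathsf{B}$, that $\thisthread$-stores are pairwise non-conflicting, and that some $\thisthread$ thread reaches the assertion-violation state.

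Next comes the oracle step. For each $\thatthread$ message guessed to be read (or to populate a slot) one must certify that the environment can produce it using only the $\thisthread$ messages that causally precede it in the skeleton and respecting the prevailing $\mathsf{B}$. Call $S$ that polynomial set of $\thisthread$ messages. Because there are arbitrarily many identical, CAS-free $\thatthread$ threads, the set of messages producible on top of $S$ is the least fixed point of ``one more $\thatthread$ thread joins, reads from what is already available, and adds its stores'', and by the Infinite Supply Lemma (Lemma~\ref{lem:inf-sup}) this set is closed under raising the written variable's timestamp. I would encode this fixed point as a polynomial-size \emph{linear} Datalog program: an IDB predicate $\mathsf{Reach}_{\thatthread}$ carrying one $\thatthread$ thread's control state, register valuation and abstract view (each a bounded-domain tuple component), seeded by $S$ supplied as EDB facts; silent transitions and store transitions become rules with a single IDB atom in the body; loading a seed message is a rule with one IDB atom and one EDB atom; and a store derives an IDB ``producible'' fact used as the query target. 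Since linear Datalog query evaluation is in \pspace{} (combined complexity) and the guess is polynomial, the whole procedure is in $\mathsf{NP}^{\mathsf{PSPACE}}=\mathsf{PSPACE}$. Soundness (a reachable assertion violation yields a good guess: read off the skeleton and note that each $\thatthread$ message it reads was produced by the environment in the same run) and completeness (assemble an actual simplified run from the skeleton and the environment's producing runs, merging the many $\thatthread$ threads by repeated use of the Superposition Lemma, Lemma~\ref{lem:superpos}) are then checked.

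The hard part is making the Datalog program \emph{linear}. The rule for a $\thatthread$ thread loading an $\thatthread$ message naively needs $\mathsf{Reach}_{\thatthread}$ and a ``producible message'' atom in its body --- two IDB atoms. Removing the second recursive atom is the real work, and it is what the dependency-graph reading behind the Strahler-number analogy is for: using the Infinite Supply Lemma one shows that all views the environment can reach can already be reached along a \emph{single} linear chain of $\thatthread$ threads, each thread's store being visible to its successor, so that reading an $\thatthread$ message folds into the recursion on $\mathsf{Reach}_{\thatthread}$ instead of spawning an independent derivation. Establishing that this linearisation is sound --- in particular that inflating a thread's view never destroys a later, still-needed read of a $\thisthread$ message and never violates the $\mathsf{B}$-constraints --- is the technical core; the remaining parts (the polynomial bounds, the $\thisthread$-skeleton checks, and $\mathsf{NP}^{\mathsf{PSPACE}}=\mathsf{PSPACE}$) are routine.
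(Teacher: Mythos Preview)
Your high-level decomposition matches the paper: work in the simplified semantics, bound the abstract timestamp domain to $\{0,0^+,\dots,T,T^+\}$ with $T$ polynomial, nondeterministically guess the entire $\thisthread$ computation (what the paper calls the procedure $\nondetalgo$), and offload the $\thatthread$ part to a Datalog query. The difference, and the gap, is in how linearity is obtained.

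Your claimed linearisation --- that ``all views the environment can reach can already be reached along a \emph{single linear chain} of $\thatthread$ threads, each thread's store being visible to its successor'' --- does not hold. An $\thatthread$ thread may have to load \emph{several} distinct $\thatthread$ messages (on different variables, or the same variable with different values) before it can emit the message you need; folding one of those dependencies into the recursion on $\mathsf{Reach}_{\thatthread}$ still leaves the others as independent IDB atoms in the rule body. The Infinite Supply Lemma lets you replicate a given $\thatthread$ message at higher timestamps, but it does not let you collapse two incomparable message dependencies into one. A simple instance: a thread that must read an $\thatthread$ message $(\xvar,1,\cdot)$ and an $\thatthread$ message $(\yvar,1,\cdot)$ before writing $(\mathsf{z},1,\cdot)$ cannot be derived by any single-predecessor chain rule.

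The paper does not attempt a direct linear encoding. Instead it introduces an intermediate \emph{$\dlogmem$ Datalog} (ordinary Datalog equipped with a bounded working set of ground atoms), and proves two things: (i) $\dlogmem$ Datalog with cache bound $k$ reduces to linear Datalog by packing the cache into a single $k$-ary predicate (Lemma~\ref{Proposition:LinearDatalog}); and (ii) for the programs $\nondetalgo$ produces, a cache of size $O(\quantity_0^2)$ with $\quantity_0=|\domain||\varset|+|\com_\thisthread|$ suffices (Lemma~\ref{Proposition:Bound}). Step (ii) is the real work you are missing: it is a \emph{dependency-graph compactness} argument showing that any run can be replaced by one whose dependency graph has fan-in and height at most $\quantity_0$ (Lemma~\ref{lem:polyb}), so a DFS-style inference that keeps at most fan-in$\times$height atoms live succeeds. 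This is the Strahler-number flavoured step; it yields a polynomial cache, not a single chain. Your concern that ``inflating a thread's view might destroy a later needed $\thisthread$-read'' is exactly what the fan-in and height reductions must and do address, by always replacing a read with one of \emph{smaller} view.
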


\newcommand{\nondetalgo}{\mathcal{A}\mathsf{lgo}}
\newcommand{\datalogprob}{\mathsf{P}}

We note that the theorem mentions \textit{non-deterministic} polynomial time relative to the linear Datalog oracle. We provide a non-deterministic poly-time procedure $\nondetalgo$, that, given a verification instance converts it to a Datalog problem $\datalogprob$ s.t. (1) for a `yes' verification instance, atleast one execution of $\nondetalgo$ results in $\datalogprob$ having successful query evaluation and (2) for a `no' verification instance, no execution of $\nondetalgo$ leads to the resulting $\datalogprob$ to have successful query evaluation. 

\newcommand{\dlogmem}{\mathsf{Cache}}
Linear Datalog is a syntactically restricted variant of Datalog for which query evaluation is easy to solve (\pspace) at the cost of being inconvenient as an encoding target. Given that we show a \pspace~upper bound on the parameterized safety verification for the class $\thatthread(\uncassy)||\thisthread_1(\unloopy)|| \dots ||\thisthread_n(\unloopy)$, 
in principle, we could have directly encoded the parameterized safety verification problem instance as a linear Datalog program.
For 
convenience of encoding, we do not directly reduce safety verification into query evaluation in linear Datalog, 
but use an intermediate notion of $\dlogmem$ Datalog. 
To make the ideas behind our reduction clear, we proceed in three steps.
\begin{enumerate}
	\item  We introduce $\dlogmem$ Datalog, which is 
 Datalog with an additional parameter, called the $\dlogmem$, that turns out decisive in controlling complexity of encodings 
in the following sense : every $\dlogmem$ Datalog program can be turned into a linear Datalog program at a cost that is linear in the size of the program plus that of the $\dlogmem$ (Lemma~\ref{Proposition:LinearDatalog}),
\item We then show that $\nondetalgo$ generates $\dlogmem$ Datalog problems that satisfy the description from the previous paragraph (Lemma~\ref{Lemma:EncodingCorrect}), and  
\item  We then argue that for all $\dlogmem$ Datalog instances generated by $\nondetalgo$,
a $\dlogmem$ of polynomial size is sufficient for query evaluation
(Lemma~\ref{Proposition:Bound}).
\end{enumerate}
This shows Theorem~\ref{Theorem:Horn}.

\smallskip
\textbf{Linearizing Datalog} A Datalog program $\datalogprog$ \cite{ceri1990syntax} consists of a predicate set $\setpreds$, a data domain $\setdatadomain$, and a set $\setrules$ of rules (also called clauses). 
Each predicate comes with a fixed arity $> 0$. A predicate $P$ of arity $j$ is a mapping from $\setdatadomain^j$ to $\{true, false\}$. 
An \emph{atom} consists of a predicate $P(t_1, \dots, t_j)$ and a list $t_1, \dots, t_j$ of arguments, where each $t_i$ is a term. A term is 
either a variable or a constant; a term is a ground term if it is a constant, and an atom is a ground atom if all its terms 
 are constants. A positive literal is a positive atom $P(t_1, \dots, t_j)$ and a negative literal is a negative atom $\neg P(t_1, \dots, t_j)$, and a ground literal is a ground atom. 
 A rule has the form
\begin{equation*}
	\mathsf{head} ~\mathop{:-}~ \mathsf{body}_1,\ldots, \mathsf{body}_t 
\end{equation*}
where $\mathsf{head}$ and $\mathsf{body}_i$ are \textit{positive} literals.
A rule with one literal in the body is a \textit{linear} rule, one without a body is called a \emph{fact}. A  linear 
Datalog program is one where all rules are linear or are facts. 
An instantiation of a rule is the result of replacing each occurrence of a variable in the rule by a constant. 
For all instantiations of the rule, if all ground atoms constituting the body are true then the ground atom in the head can be inferred to be true. All instantiations of facts are trivially true.  
We write $\datalogprog\vdash \agatom$ to denote that the ground atom $\agatom$ can be inferred from program $\datalogprog$. 

\smallskip

{\textbf{Query Evaluation Problem.}} The \emph{query evaluation problem} for Datalog is, given a \emph{query instance} $(\datalogprog, \agatom)$ consisting of a Datalog 
program  $\datalogprog$ and a ground atom $\agatom$,  
to determine whether $\datalogprog\vdash \agatom$.
When studying the \emph{combined complexity}, both $\datalogprog$ and $\agatom$ are given as input~\cite{vardi1982complexity}. 
It is known \cite{gottlob2003complexity} that combined complexity of query evaluation for linear Datalog 
is 
in \pspace{}, while  allowing non linear rules raises the complexity to \nexp{} (\cite{vardi1982complexity} and \cite{immerman2012descriptive}). 
Motivated by verification, there has been interest in linearizing Datalog \cite{kafle2016solving}. 
\smallskip

{\textbf{Adding $\dlogmem$ to Datalog: $\dlogmem$ Datalog}}. 
We introduce to Datalog the concept of a $\dlogmem$.
A $\dlogmem$
is a set of ground atoms that is used to control the inference process.   
The resulting program is called a $\dlogmem$ Datalog program.  
In the presence of a $\dlogmem$, the semantics of Datalog is adapted by the following two rules.

\noindent\textit{Add}: For an instantiated rule, the ground atom in the head can be inferred and added to $\dlogmem$ only when all the ground atoms in the body are in $\env$.

\noindent\textit{Drop}: Atoms in $\env$ can be dropped non-deterministically. 

The standard semantics of Datalog can be recovered by monotonically adding all inferred atoms (starting with facts) to the $\dlogmem$ and never dropping anything. 
To show the upper bound, we use a notion of inference that takes into account the size of the $\dlogmem$ and minimizes it.  
For a $\dlogmem$ Datalog program $\datalogprog$ and $k\in \mathbb{N}$, we write  $\datalogprog \vdash_{k} \agatom$ to mean that ground atom $\agatom$ can be inferred from $\datalogprog$ with a computation in which $\sizeOf{\env} \leq k$, the number of atoms in $\dlogmem$ is always at most $k$. The $\dlogmem$ size measures the complexity of linearizing  $\dlogmem$ Datalog 
as follows. 

\begin{lemma}
Given a $\dlogmem$ Datalog program $\datalogprog$, a ground atom  $\agatom$, and a bound $k$, in time quadratic in $\sizeOf{\datalogprog}+\sizeOf{g}+k$ we can construct a linear Datalog program $\datalogprog'$ so that $\datalogprog \vdash_k \agatom$ iff $\datalogprog'\vdash \agatom$.
\label{Proposition:LinearDatalog}
\end{lemma}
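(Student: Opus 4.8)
The plan is to build $\datalogprog'$ as a compact encoding of the implicit transition graph of $\vdash_k$ computations of $\datalogprog$. Recall that such a computation is a sequence of $\dlogmem$ states, each a set of at most $k$ ground atoms of $\datalogprog$, with consecutive states differing by a single \emph{Add} or a single \emph{Drop}. The central move is to represent an entire state as one ground atom of a fresh predicate $\curr$: reserve $k$ ``slots'', where each slot occupies one position holding a dedicated constant that names a predicate of $\datalogprog$, followed by (maximal arity of $\datalogprog$)-many positions for its arguments; a distinguished constant $\bot$ serves both for empty slots and for argument padding. Then a $\dlogmem$ state of size at most $k$ is a single $\curr$-atom, the empty state is the fact $\curr(\bot,\dots,\bot)$, which we add to $\datalogprog'$, and \emph{Drop} is modelled by the linear rules $\curr(\bar c[\tau\mapsto\bot]) \mathop{:-} \curr(\bar c)$, one for each slot $\tau$.

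The only real difficulty is \emph{Add}: firing a rule $H \mathop{:-} B_1,\dots,B_t$ of $\datalogprog$ requires checking, under a common instantiation, that all of $B_1,\dots,B_t$ lie in the current state, yet a linear rule may carry only one body literal. I would resolve this by threading the state through a chain of fresh intermediate predicates $\mathsf{chk}_{r,0},\dots,\mathsf{chk}_{r,t}$, one family per rule $r$, whose arguments hold a copy of the cache $\bar c$ together with the variable bindings accumulated so far (their arity grows with the index, each still positive). Concretely: $\mathsf{chk}_{r,0}(\bar c) \mathop{:-} \curr(\bar c)$ starts the check; for each index $i$ and each slot choice $\sigma\in\{1,\dots,k\}$, a rule $\mathsf{chk}_{r,i+1}(\bar c,\dots)\mathop{:-}\mathsf{chk}_{r,i}(\bar c',\dots)$ in which the $\sigma$-th slot of $\bar c'$ is unified with the padded pattern of $B_{i+1}$ — binding its as-yet-unbound variables, which is permitted by the Datalog safety condition since those variables occur in the body through $\bar c'$ — and $\bar c=\bar c'$; and finally, for each slot $\tau$, a rule $\curr(\bar c[\tau\mapsto H]) \mathop{:-} \mathsf{chk}_{r,t}(\bar c,\dots)$ whose $\tau$-th slot of $\bar c$ is forced to be $\bot$. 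Facts of $\datalogprog$ are the special case $t=0$ and are handled uniformly. To read off the query, write $\agatom = Q(\bar e)$ and, for each slot $\tau$, add the rule $\agatom \mathop{:-} \curr(\bar c)$ whose $\tau$-th slot of $\bar c$ is the ground pattern of $\agatom$; here $Q$ and all predicates of $\datalogprog$ are part of the signature of $\datalogprog'$ but derived only by these last rules.

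For correctness I would use that in a \emph{linear} program every derivation of an atom is a single chain back to a fact. Hence any derivation of $\agatom$ in $\datalogprog'$ reads left to right as: the empty cache, then a run of steps each of which is either a \emph{Drop} rule or a complete block $\mathsf{chk}_{r,0} \to \cdots \to \mathsf{chk}_{r,t} \to \curr$ — and such a block necessarily executes atomically, since the only rule applicable to $\mathsf{chk}_{r,i}$ is the one producing $\mathsf{chk}_{r,i+1}$ — and finally a goal rule that fires because $\agatom$ is present. The slot choices inside a block exhibit an instantiation $\theta$ of $r$ with every $\theta(B_i)$ in the cache and $\theta(H)$ placed in a slot that was $\bot$, i.e. a legal \emph{Add}; since the cache-tuple has $k$ slots, every state has at most $k$ atoms. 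This yields a $\vdash_k$ derivation of $\agatom$. Conversely, given a $\vdash_k$ derivation, first delete the redundant \emph{Add} steps whose head is already present (they leave the state unchanged), so that every remaining \emph{Add} strictly enlarges a state of size $<k$ and thus targets an empty slot; then fix a tuple encoding of each state and replay the sequence using exactly the rules above. The main obstacle is precisely the bookkeeping inside the $\mathsf{chk}$-chain — keeping the safety condition satisfied while carrying both cache and bindings, and arguing blocks cannot interleave — while the rest is routine. For the bound, $\datalogprog'$ has $O(\sizeOf{\datalogprog})$ predicates, $O(\sizeOf{\datalogprog}^{2}\cdot k)$ rules, and each rule has size $O(k\cdot\sizeOf{\datalogprog})$, so it is produced in time polynomial in $\sizeOf{\datalogprog}+\sizeOf{\agatom}+k$, which with the obvious sharing of cache-variables across a block is within the claimed quadratic bound.
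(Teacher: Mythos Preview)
Your proposal is correct and shares the paper's core idea: encode a $\dlogmem$ of size at most $k$ as a single atom of a fresh predicate with $k$ slots, so that a $\vdash_k$ computation becomes a chain of linear rule applications and the goal is read off by projecting a slot. Where you diverge from the paper is in how a multi-atom body is linearized. The paper adds permutation rules $\cachepred(\ldots,t_j,\ldots,t_i,\ldots)\mathop{:-}\cachepred(\ldots,t_i,\ldots,t_j,\ldots)$ (quadratically many) that let any needed body atoms be rotated into the first $p$ slots; each original rule is then translated into a \emph{single} linear rule that matches the first $p$ slots against $\mathsf{body}_1,\ldots,\mathsf{body}_p$ and overwrites one slot with $\mathsf{head}$, so \emph{Drop} is implicit in the overwrite. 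You instead thread the cache through a chain of fresh intermediate predicates $\mathsf{chk}_{r,0},\ldots,\mathsf{chk}_{r,t}$, checking one body literal per step and carrying the accumulated variable bindings explicitly; you also spell out the slot encoding (predicate-name constant plus padded argument block) that the paper leaves informal by treating atoms as single ``terms''. The paper's route is leaner (no per-body-literal predicates), while yours is more rigorous about the actual Datalog signature and the safety condition, and it sidesteps the need to argue that permutation rules suffice. Both constructions are polynomial; neither proof is especially careful about hitting the stated quadratic bound precisely, but that does not affect the downstream \pspace{} claim.
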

\begin{proof}
To go from $\dlogmem$ Datalog to linear Datalog, 
the idea is to simulate the  $\dlogmem$ using a new predicate $\env\mathsf{Pred}$ of arity $k$ in the constructed linear Datalog program $\gdatalogprog'$. 
We know that a $\env$ of size $k$ suffices in the $\env$ Datalog program, so 
any  rule  $\mathsf{head} ~\mathop{:-}~ \mathsf{body}_1,\ldots, \mathsf{body}_p$ 
 in the $\env$ Datalog is s.t. $p<k$. 
\begin{enumerate}
	\item 
\textbf{Simulating Cache} Intuitively, the predicate $\env\mathsf{Pred}(t_1, t_2,\cdots, t_k)$ represents that the terms $t_i$ are members of the $\env$. We can simulate the set $\env$ by reshuffling  terms using rules that swap the $i^{th}$ and $j^{th}$ elements with rules of the form,
\begin{equation*}
    \cachepred(t_1, \cdots, t_j, \cdots, t_i, \cdots, t_k) ~\mathop{:-}~ \cachepred(t_1, \cdots, t_i, \cdots, t_j, \cdots, t_k) 
\end{equation*}
There are quadratically many such rules. 
\item \textbf{Rules} Consider a rule $R$ with a body of size $p$ in $\dlogmem$ Datalog as follows.
\begin{equation*}
    \mathsf{head} ~\mathop{:-}~ \mathsf{body}_1,\ldots, \mathsf{body}_p 
\end{equation*}
We convert this into a rule which matches the first $p$ terms of $\env\mathsf{Pred}$ with the elements of the body. If there is such a matching, the term $\mathsf{head}$ can be inferred and added into $\env$. This is simulated by replacing some term amongst $t_i$ with the term in the $\mathsf{head}$ while keeping other terms the same.
\begin{equation*}
    \cachepred(t_1, \cdots, t_i = \mathsf{head}, \cdots, t_k) \mathop{:-} \cachepred(t_1 = \mathsf{body}_1, \cdots, t_p=\mathsf{body}_p, t_{p+1}, \cdots, t_k) 
\end{equation*}
There are $k$ choices for the term to be replaced. Thus we have $k$ new rules per rule in the original program.
\item \textbf{Final Inference} Finally, since we know that each element of $\env$ is true, we add the inference rules,
\begin{equation*}
    t_i ~\mathop{:-}~ \env\mathsf{Pred}(t_1, t_2, \cdots, t_p) \quad \text{ for } 1 \leq i \leq p 
\end{equation*}
Now, $g$ can be generated if $g$ ever enters $\env$, i.e. $\env\mathsf{Pred}(t_1, t_2, \dots, g, \dots, t_p)$ for some other terms $t_i$. Then we can use the above inference rule to infer $g$.

\end{enumerate}
This shows that we need at most quadratically many rules each with a single body, to give us a linear Datalog program. 
\end{proof}

\subsection{Datalog Encoding}
\label{sec:dataenc}
Theorem~\ref{Theorem:Semantics} tells us that  safety verification under RA is equivalent to safety verification in the simplified semantics.
Safety verification in the simplified semantics, in turn, can be reduced to the \emph{Message Generation (MG)} problem.
\begin{quote}
 Given a parametrized system $\com$ and a message $\anevent^\# = (\anadr^*, \avalue^*, \_)$ called  \emph{goal message}, does there exist a  reachable configuration $\acf^{\abstscriptpshort}=(\amem^\abstscriptpshort, \anlcfmap^\abstscriptpshort)$ such that $\anevent^\#\in \amem^\abstscriptpshort$ (for some $\aview^\abstscriptpshort$)?
\end{quote} 

To see the connection between MG and safety verification, note that we can replace each $\assert{\texttt{false}}$ statement in the program by $\anadr^* \coloneqq \avalue^*$ for variable $\anadr^*$ and value $\avalue^*$ unused elsewhere. The system is unsafe if and only if a \emph{goal message} $\anevent^\# = (\anadr^*, \avalue^*, \aview^\abstscriptpshort)$ is generated for some $\aview^\abstscriptpshort$.

\noindent While encoding into Datalog, we non-deterministically guess $\aview^\abstscriptpshort$.  For this, we crucially show that there are only exponentially-many choices of $\aview^\abstscriptpshort$ which need to be enumerated. Henceforth we assume that the queried goal message $\anevent^\#$ can have arbitrary $\aview^\abstscriptpshort$. Given $\com$, $\anevent^\#$, our non-deterministic poly-time procedure $\nondetalgo$ 
satisfies the following, the proof of which is in Sections \ref{app:datalog-invariants}. 
\begin{lemma}\label{Lemma:EncodingCorrect}
Given a parametrized system  $\com$ and a goal message $\anevent^\#$, 
Message Generation (MG) holds  iff there is some execution of $\nondetalgo$ that generates a query instance ($\gdatalogprog, \agatom$) such that $\gdatalogprog\vdash\agatom$.
The construction of $\gdatalogprog$ and $\agatom$
is in (non-deterministic) time polynomial in $\sizeOf{\com}$.
\end{lemma}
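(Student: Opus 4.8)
The plan is to work---as the remark before the statement licenses---with the Message Generation (MG) problem under the simplified semantics, and to describe $\nondetalgo$ together with a step-for-step dictionary between simplified computations and Datalog derivations. First, the shape of $\nondetalgo$. Because every $\thisthread_i$ is loop-free, any computation fires each of its write instructions (stores, CAS) at most once, so on any single variable at most $M\le\sum_i\sizeOf{\thisthreadprog{i}}$ integer timestamps can ever be used; we may therefore fix each variable's timeline skeleton to be $0\prec\tplus{0}\prec 1\prec\tplus{1}\prec\cdots\prec M\prec\tplus{M}$, which makes an abstract view a map $\varset\to\{0,\tplus{0},1,\tplus{1},\dots,M,\tplus{M}\}$ of polynomial size. $\nondetalgo$ then guesses the view $\aview^{\abstscriptpshort}$ of the goal message---only exponentially many choices, each polynomial---and emits a fixed $\dlogmem$ Datalog clause schema, instantiated over the polynomially many control locations, register valuations ($\domain$ is finite), timestamp slots and variables. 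It uses two predicate families: one that records an entire joint configuration of $\thisthread_1,\dots,\thisthread_n$ (control locations, register valuations, abstract views) together with the current blocked set $\mathsf{B}$, and one that records a single available message $(\anadr,\avalue,\aview^{\abstscriptpshort})$ tagged as $\thisthread$- or $\thatthread$-generated. The clauses are: facts for the initial configuration and the initial messages; for every silent/load/store/CAS step of a $\thisthread_i$, a rule that derives the successor configuration atom from the current one, guarded on loads (and the load part of a CAS) by a message atom and, on stores and CAS, also producing the new message atom---a literal transcription of the rules in Figures~\ref{Figure:AtomicTransitionRelation-app1} and~\ref{Figure:AtomicLocalTransitionRelation-app}, including the $\mathsf{B}$-bookkeeping; and a self-contained, possibly recursive chain of clauses that simulates one $\thatthread$ thread from its initial local configuration, each load consuming a message atom (updating the view via $\join^{\thatthread}_{\anadr}$) and each store producing a fresh $\thatthread$-tagged message atom (via $<^{\thatthread}_{\anadr}$). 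The query atom $\agatom$ is the goal-message atom with the guessed view. Since the schema is fixed and all instantiation ranges are polynomial, $(\gdatalogprog,\agatom)$ is built in non-deterministic polynomial time, which is the complexity claim.

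For the direction MG $\Rightarrow$ derivation, I would start from a simplified computation that reaches a configuration containing the goal message and first normalize it so that on each variable the $\thisthread$ stores occupy the integer slots $1,2,\dots$ in order and every $\thatthread$ store sits on a slot of the form $\tplus{k}$; this is merely a relabelling of the linearly ordered timestamps that preserves order and CAS-consecution, hence still a valid run (the argument of the Timestamp Lifting Lemma, now carried out inside $\nat\uplus\tplus{\nat}$). Letting $\nondetalgo$ guess the goal view as it appears in this normal form, one walks through the run: each $\thisthread$ transition fires its configuration rule; each $\thisthread$ load/CAS reads the message atom of the loaded message, which was produced by an earlier transition and is therefore already in the (monotone, standard-semantics) cache; each $\thisthread$ store/CAS additionally deposits the new message atom; and each $\thatthread$ thread of the run is reproduced by one pass of the $\thatthread$-simulation chain, whose loads likewise reference only previously produced atoms. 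This yields $\gdatalogprog\vdash\agatom$.

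The converse, derivation $\Rightarrow$ MG, is where the substance lies. I would induct along a derivation of $\agatom$, topologically order the rule applications, and replay them as a simplified computation: every configuration atom used in the derivation is realized as a genuinely reachable configuration and every message atom as a message actually present in memory; the $\thisthread$-rules replay one transition at a time, the facts give the initial configuration, and---crucially---each invocation of the $\thatthread$-simulation chain is replayed by a \emph{fresh} environment thread. The two features that make this work are exactly the ones that power the simplified semantics: a $\thatthread$-message load carries no read-side timestamp check ($\ruleloadlocalabstthat$) and a fresh $\thatthread$ thread starts from the zero view, so a chain invocation can be scheduled wherever the topological order places it and will still read whatever it needs; and arbitrarily many such fresh threads are available because $\thatthread$ is unbounded. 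This last point is the copycat/infinite-supply phenomenon (Lemma~\ref{lem:inf-sup}) seen inside the simplified semantics, and reconciling it with the operational model---turning a derivation DAG, in which an atom may feed unboundedly many bodies, into a single interleaved computation, and in particular scheduling the many $\thatthread$-chain invocations consistently with the $\thisthread$ threads and the evolving $\mathsf{B}$-set---is the main obstacle to make precise; it goes through because message \emph{loads} never consume a message and because each $\thisthread_i$, being loop-free, passes through each of its configurations at most once, so no configuration atom is ever required to be ``in two places at once''. Once the reconstruction is set up, the replayed computation reaches a configuration whose memory contains the goal message, so MG holds; together with the polynomial construction bound, and via Theorem~\ref{Theorem:Semantics}, this establishes the lemma.
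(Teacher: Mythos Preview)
Your encoding and the paper's differ in what $\nondetalgo$ guesses. You guess only the goal view; the paper guesses the \emph{entire run} of every $\thisthread$ thread---control path, register values, store and CAS timestamps, and for each load the exact message read. This difference is not cosmetic: as described, your encoding is unsound in the direction ``derivation $\Rightarrow$ MG''. The set $\mathsf{B}$ enforces a mutual exclusion: a $\thisthread$-CAS on~$\anadr$ with load/store timestamps $(\atimestamp,\atimestamp{+}1)$ reading an integer-timestamped message, and an $\thatthread$ store on~$\anadr$ at $\atimestamp^+$, cannot both occur in one computation (each inserts into $\mathsf{B}$ a value that blocks the other; see $\rulestoreglobal^{\thatthread}$ and $\rulecasglobal$ in Figure~\ref{Figure:AtomicTransitionRelation-app1}). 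In your scheme the $\thatthread$ chain is ``self-contained'' and never consults $\mathsf{B}$, while the joint-configuration's $\mathsf{B}$ is updated only by $\thisthread$ steps and therefore never contains any $\atimestamp^+$; hence both a CAS at $(\atimestamp,\atimestamp{+}1)$ and an $\thatthread$ message at $\atimestamp^+$ are freely derivable and usable together. Concretely, take $\thisthread_1$ to execute $\casOf{\xvar,0,1}$ (loading the initial message, storing at timestamp~$1$), $\thatthread$ to execute $\store{\xvar}{2}$, and $\thisthread_2$ to execute $\load{\areg}{\xvar};\ \assume{\areg{=}2};\ \load{\areg}{\xvar};\ \assume{\areg{=}1};\ \assert{\texttt{false}}$. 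For $\thisthread_2$'s second load to pass, the $\thatthread$ message it reads must carry timestamp $0^+$ on $\xvar$ (so that $0^+\preceq 1$); but then the $\thatthread$ store and the CAS exclude each other through $\mathsf{B}$ in either order, and the assertion is unreachable in the simplified semantics. Your Datalog program derives both message atoms and hence the goal. You correctly flag ``scheduling the $\thatthread$-chain invocations consistently with the evolving $\mathsf{B}$-set'' as the obstacle, but the justification you give (loads do not consume messages; loop-free threads visit each configuration once) does not address it.

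The paper closes this gap precisely by its stronger non-deterministic guess: once the full $\thisthread$ run is fixed, the set of CAS slots is known, and $\nondetalgo$ emits $\freePred(\anadr,\atimestamp^+)$ as a \emph{fact} exactly when no guessed CAS on $\anadr$ occupies $(\atimestamp,\atimestamp{+}1)$ (Figure~\ref{Figure:HornEncodingVariable}). The $\thatthread$ store rule carries $\freePred$ in its body (Figure~\ref{Figure:HornEncodingFixed}), so forbidden $\thatthread$ messages are never derived, and on the $\thisthread$ side there is nothing dynamic left to check---$\mathsf{B}$ has been resolved at guess time. Your scheme could be repaired by additionally guessing, per variable, the set of CAS slots (still a polynomial guess) and emitting the same $\freePred$ facts; but as written it is not correct. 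A secondary difference worth noting: the paper hard-wires the guessed $\thisthread$ run into per-thread predicates $\dstatePred_i$ rather than using a joint configuration predicate---this is not essential for the present lemma, but it is what later makes the polynomial $\dlogmem$ bound of Lemma~\ref{Proposition:Bound} go through.
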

\vspace{-0.1cm}
The procedure $\nondetalgo$ generates one query instance $(\gdatalogprog, \agatom)$ per execution. We postpone the full description of $\nondetalgo$  and first give some intuition.
Since the parameterized system consists of $n$ loop-free $\thisthread$ threads, each can execute only linearly-many instructions in their size. The total number of instructions executed (and hence the total number of timestamps used) by the $\thisthread$ threads is polynomial in $|\com_\thisthread|$,  
 the combined size of $\thisthread$ programs (concretely the sum of sizes of individual $\com_{\thisthread}^{i}$ programs). 
$\nondetalgo$ guesses the $\thisthread$ threads part of the computation and generates a query instance $(\gdatalogprog,\agatom)$. 

$\gdatalogprog$ itself uses four main predicates. The {\bfseries{e}}nvironment message predicate $\messPred(\anadr, \avalue, \abstview)$ represents the availability of a $\thatthread$ message on variable $\anadr$ with value $\avalue$ and view $\aview^\abstscriptpshort$. 
The environment thread predicate $\statePred(\curr, \avaluation, \aview^\abstscriptpshort)$ encodes the $\thatthread$ thread configuration, where $\curr$ is the control-state, $\avaluation$ is the register valuation and $\aview^\abstscriptpshort$ is the thread view.    
We also have similar message and thread predicates for $\thisthread$ threads. The distinguished message predicate $\dmessPred(\anadr, \avalue, \abstview)$ represents the availability of a $\thisthread$ message.
Additionally, for each $\thisthread$ thread $i \in [n]$, we have a distinguished thread predicate $\dstatePred_i(\curr	, \avaluation, \aview^\abstscriptpshort)$ that encodes the configurations of $\thisthread[i]$. 

In the set of rules, we have the fact $\dmessPred(\anadr, \avalue_{\init}, \abstview_{\init})$ for each $\anadr \in \varset$ with $\avalue_\init$ the initial value and $\abstview_{\init}$ the initial view. 
We also have (i) facts $\statePred(\lambda_{\init}, \avaluation_{\init}, \abstview_{\init})$ and $\dstatePred[i](\lambda_{\init}, \avaluation_{\init}, \abstview_{\init})$ representing the initial states of both $\thatthread$ and $\thisthread$ threads, (ii) rules corresponding to the $\thatthread$ transitions and   
the guessed $\thisthread$ thread run fragments. Finally, the query atom $\agatom$, is a ground atom  $\in\{\messPred$, $\dmessPred\}$ and captures the goal message $\anevent^\#$ being generated. The instances generated in the non-deterministic branches of $\nondetalgo$ differ only 
due to the guessed $\thisthread$ run and the atom $\agatom$.

We now describe the full Datalog program, also proving Lemma \ref{Lemma:EncodingCorrect}. 

\subsubsection{Procedure $\nondetalgo$ for query instance generation}
\label{app:datalog}
We discuss the details of the procedure $\nondetalgo$ which generates the query instance $(\gdatalogprog, \agatom)$ non-deterministically. We use the following predicates in the constructed Datalog program. 

\begin{itemize}
	\item $\messPred(\anevent)$: the message generation predicate for $\thatthread$ threads, where $\anevent$ is a message;
	\item $\statePred(\curr, \avaluation, \aview^\abstscriptpshort)$ :
	 the thread state predicate for $\thatthread$ threads
	\item $\dmessPred (\anevent)$: the message generation predicate for $\thisthread$ threads, where $\anevent$ is a message;
	\item $\dstatePred[i](\curr, \avaluation, \aview^\abstscriptpshort)$: the thread state predicate, one for each $\thisthread$ thread
	\item $\freePred(\xvar,\atimestamp^+)$ : the timestamp availability predicate, which indicates that a timestamp $\atimestamp$ is not blocked by a CAS operation, per variable. 
\end{itemize}

The Datalog program generated has two parts, one does not depend on the non-deterministic choices made by $\nondetalgo$, while the other does. We describe the former part first, these rules for the Datalog Program are in Figure \ref{Figure:HornEncodingFixed}. The second set of rules, depending on the nondeterministic choice of $\nondetalgo$ is in Figure \ref{Figure:HornEncodingVariable}.

\newcommand{\aninst}{\texttt{i}}

\noindent {\textbf{The first set of rules in the Datalog program}} (Figure \ref{Figure:HornEncodingFixed}).  The facts, in green, provide the ground terms for the $\init$ messages as well as initial state of the $\thisthread$ and $\thatthread$ threads. The orange rules capture the thread local transitions of the $\thatthread$ threads. We deviate a bit from the standard notation for programs here, and instead view them as labelled transition systems. It is easy to see that the two notions are equivalent. The initial state labels are $\lambda^{\thatthread}_{\init}$ for the $\thatthread$ threads and $\lambda^i_\init$ for the $\thisthread$ threads. For a pair of labels, we write $\lambda_1 \xrightarrow{\aninst} \lambda_2$ to denote that $\lambda_2$ can be reached from $\lambda_1$ by executing $\aninst$. In the Datalog program, we have a rule for each such transition in the program. The thread-local transitions are in orange. Loads are in violet (first corresponding to loads from $\thatthread$ messages, the second for loading from $\thisthread$ messages). For loads, the rule requires a term with the message predicate (from which the thread is reading) in the body of the rule. Stores are in pink, the first rule corresponds to the new thread-local state after execution of the store. The second rule corresponds to the generation of a term for the new message (in the head). Though we use some higher order syntax for rules such as $\mathsf{assume}$, $\join$, and $<^{\thatthread}_{\anadr}$ we note that these can be easily translated to pure Datalog with small overhead given the polynomial size of the domain and the constant arity of the predicates. 

\tikzset{background rectangle/.style={fill=none
}}
\begin{figure}[h]
\centering
\small
\begin{subfigure}[h]{\textwidth}
\resizebox{\textwidth}{!}{
\begin{tikzpicture}[codeblock/.style={line width=0.5pt, inner xsep=0pt, inner ysep=5pt}  , show background rectangle]
\node[codeblock] (init) at (current bounding box.north west) {
$
\def\arraystretch{1.8}
\begin{array}{rl|l}
\hline
 &\text{rule} & \text{condition on program of $\thatthread$ threads, $\com_\thatthread$} \\
\hline
\hline
\rowcolor{orange!15}
\statePred(\lambda_2, \avaluation, \abstview) &~\mathop{:-}~ \statePred(\lambda_1, \avaluation, \abstview) & \text{if } \lambda_1 \xrightarrow{\myskip} \lambda_2 \\
\hline
\rowcolor{orange!15}
\statePred(\lambda_2, \avaluation, \abstview) &~\mathop{:-}~ \statePred(\lambda_1, \avaluation \text{ with } \semOf{\anexp}(\avaluation(\vecreg)) \neq 0, \abstview) & \text{if } \lambda_1 \xrightarrow{\assume{\anexpOf{\vecreg}}} \lambda_2 \\
\hline
\rowcolor{orange!15}
\statePred(\lambda_2, \avaluation[\areg \mapsto \anexpOf{\vecreg}], \abstview) &~\mathop{:-}~ \statePred(\lambda_1, \avaluation, \abstview) & \text{if } \lambda_1 \xrightarrow{\assign{\areg}{\anexpOf{\vecreg}}} \lambda_2 \\
\hline
\rowcolor{blue!10}
\statePred(\lambda_2, \avaluation[\areg \leftarrow \avalue], \abstview_1 \join^{\thatthread}_{\anadr} \abstview_2) &~\mathop{:-}~\statePred(\lambda_1, \avaluation, \abstview_1), \messPred(\anadr, \avalue, \abstview_2) & \text{if } \lambda_1 \xrightarrow{\load{\areg}{\anadr}} \lambda_2 \\
\hline
\rowcolor{blue!10}
\statePred(\lambda_2, \avaluation[\areg \leftarrow \avalue], \abstview_1 \join_\anadr \abstview_2) &~\mathop{:-}~\statePred(\lambda_1, \avaluation, \abstview_1), \dmessPred(\anadr, \avalue, \abstview_2), \abstview_1 <_\anadr \abstview_2 & \text{if } \lambda_1 \xrightarrow{\load{\areg}{\anadr}} \lambda_2 \\
\hline
\rowcolor{red!15}
\statePred(\lambda_2, \avaluation, \abstpview_2) &~\mathop{:-}~ \statePred(\lambda_1, \avaluation, \abstview_1), \freePred(\anadr, \abstview_2(\anadr)) & \text{if } \lambda_1 \xrightarrow{\store{\anadr}{\areg}} \lambda_2, \text{ with } \abstpview_1<^{\thatthread}_{\anadr}\abstpview_2 \\
\hline
\rowcolor{red!15}
\messPred(\anadr, \avaluation(\areg), \abstpview_2) &~\mathop{:-}~ \statePred(\lambda_1, \avaluation, \abstview_1), \freePred(\anadr, \abstview_2(\anadr)) & \text{if } \exists \lambda_2.~ \lambda_1 \xrightarrow{\store{\anadr}{\areg}} \lambda_2, \text{ with } \abstpview_1<^{\thatthread}_{\anadr}\abstpview_2 \\ \hline
\end{array}$
};
\end{tikzpicture}
}
\caption{The (fixed) set of rules in the Datalog program encoding the transition system of the $\thatthread$ threads.  Silent transitions (in orange); memory accesses: loads (in violet) and stores (in pink).}
\end{subfigure}
\begin{subfigure}[h]{\textwidth}
\centering
\begin{tikzpicture}[codeblock/.style={line width=0.5pt, inner xsep=0pt, inner ysep=5pt}  , show background rectangle]
\node[codeblock] (init) at (current bounding box.north west) {
$
\def\arraystretch{1.8}
\begin{array}{rl|l}
\hline
 \text{fact}\hspace{3cm} & & \text{comment} \\
 \hline\hline
 \rowcolor{green!10}
\dmessPred(\anadr, \avalue_{\init}, \abstview_{\init}) &~\mathop{:-}~ & \text{ for all variables } \anadr\\
\hline
\rowcolor{green!10}
\statePred(\lambda^\thatthread_{\init}, \avaluation_{\init}, \abstview_{\init}) &~\mathop{:-} & \lambda^\thatthread_{\init}\text{ is initial state of } \thatthread \text{ threads}\\
\hline
\rowcolor{green!10}
\dstatePred[i](\lambda^i_{\init}, \avaluation_{\init}, \abstview_{\init}) &~\mathop{:-} & \lambda^i_{\init}\text{ is initial state of } \thisthread[i] \text{ thread} \\ 
\hline
\end{array}$
};
\end{tikzpicture}
\caption{First set of facts in the Datalog program; these do not depend on the non-deterministic guess made by $\nondetalgo$ for the computation of $\thisthread$ threads. These facts encode the initial configurations of the threads and the initial messages.}
\end{subfigure}
\vspace{0.2cm}
\caption{First set of rules for the Datalog Program. This fixed rule set is independent of non-determinism of $\nondetalgo$.}
\label{Figure:HornEncodingFixed}
\end{figure}

These rules capture completely the $\thatthread$ thread component of the run. As we had mentioned earlier, the component of the query instance that differs due to non-determinism of $\nondetalgo$ is the $\thisthread$ part of the run. Essentially, $\nondetalgo$ guesses in polynomial time the executions of all the $\thisthread$ threads. This is possible since they are loop-free and hence execution lengths are linear in the size of their specifications. We now describe this second part of the Datalog query instance. 

\noindent\textbf{Second set of rules in the Datalog program}(Figure \ref{Figure:HornEncodingVariable}). 
We have a bound on the number of write timestamps that can be used by the $\thisthread$ threads - an easy bound is the combined number of instructions in $\thisthread$ threads, $|\com_\thisthread|$. We will refer to this bound as $T$. By the simplified semantics it suffices to consider the timestamps $\{0, 0^+, \cdots, T, T^+\}$. This follows since, the $\thisthread$ threads perform atmost $T$ writes. Hence we need only $T$ timestamps of the form $\mathbb{N}$. Additionally, we have only one timestamp of the form $\mathbb{N}^+$ between any two timestamps of the form $\mathbb{N}$. This shows that the view terms in the predicates of the  Datalog program can be guessed in polynomial space (since $T$ is polynomial in the input). 

Now for each $\thisthread$ thread $i$, the procedure $\nondetalgo$ non-deterministically guesses the computation $\arun_i$ for $\thisthread[i]$. That is, $\nondetalgo$ guesses the timestamps and the register valuations of $\thisthread_i$ at each configuration in this run, along with the messages $\thisthread_i$ loaded from. Post this, it converts $\arun_i$ to a set of rules which are then added to the earlier set from Figure \ref{Figure:HornEncodingFixed}. 

Consider the computation $\arun_i \equiv \lambda^i_{\init} \xrightarrow{\aninst_1} \lambda_1 \xrightarrow{\aninst_2} \lambda_2 \cdots \xrightarrow{\aninst_{|\arun_i|}} \lambda_{|\arun_i|}$ of length $|\arun_i|$. Let the views of $\thisthread$ thread $i$ at point $j$ in the run be given as $\abstview_j$. Additionally, if $\aninst_j$ is a load instruction, $\nondetalgo$ also guesses the message that was read by the $\thisthread$ thread $i$. Each instruction $\aninst_j$ in this computation is then converted into one amongst the rules in Figure \ref{app:rules-this} depending on the instruction $\aninst_j$ executed, represented in the figure as `condition'. Additionally to encode the $\mathbb{N}^+$ timestamps that have not been occupied by CAS operations (and hence are free to use by the $\thisthread$ threads), we have the rule in \ref{app:rules-free}. 

\begin{figure}[!h]
\centering
\small
\begin{subfigure}[h]{\textwidth}
\resizebox{\textwidth}{!}{
\begin{tikzpicture}[codeblock/.style={line width=0.5pt, inner xsep=0pt, inner ysep=5pt}, show background rectangle]
\node[codeblock] (init) at (current bounding box.north west) {
$\def\arraystretch{2}\begin{array}{rl|l}
\hline
  &\text{rule} & \text{condition on thread transition $\aninst_j$ of computation $\arun_i$ for thread $\thisthread_i$} \\
\hline\hline
\rowcolor{orange!15}
\dstatePred[i](\lambda_{j}, \avaluation, \abstview) &~\mathop{:-}~ \dstatePred[i](\lambda_{j-1}, \avaluation, \abstview) & \aninst_j = \myskip \\
\hline
\rowcolor{orange!15}
\dstatePred[i](\lambda_{j}, \avaluation, \abstview) &~\mathop{:-}~ \dstatePred[i](\lambda_{j-1}, \avaluation, \abstview) & \aninst_j = \assume{\anexpOf{\vecreg}} \land \semOf{\anexp}(\avaluation(\vecreg)) \neq 0 \\
\hline
\rowcolor{orange!15}
\dstatePred[i](\lambda_{j}, \avaluation[\areg \mapsto \assign{\areg}{\anexpOf{\vecreg}}], \abstview) &~\mathop{:-}~ \dstatePred[i](\lambda_{j-1}, \avaluation, \abstview) & \aninst_j = \assign{\areg}{\anexpOf{\vecreg}} \\
\hline
\rowcolor{blue!10}
\dstatePred[i](\lambda_{j}, \avaluation[\areg \leftarrow \avalue], \abstview_1 \join_{\anadr}^{\thatthread} \abstview_2) &~\mathop{:-}~\dstatePred[i](\lambda_{j-1}, \avaluation, \abstview_1), \messPred(\anadr, \avalue, \abstview_2) & 
 \aninst_j = \load{\areg}{\anadr} \land \text{ thread loads } \anevent = (\anadr, \avalue, \abstview_2) \land \abstview_2(\anadr) \in \mathbb{N}^+ \\
\hline
\rowcolor{blue!10}
\dstatePred[i](\lambda_{j}, \avaluation[\areg \leftarrow \avalue], \abstview_1 \join_\anadr \abstview_2) &~\mathop{:-}~\dstatePred[i](\lambda_{j-1}, \avaluation, \abstview_1), \dmessPred(\anadr, \avalue, \abstview_2) & \aninst_j = \load{\areg}{\anadr} \land \text{ thread loads } \anevent = (\anadr, \avalue, \abstview_2) \land \abstview_2(\anadr) \in \mathbb{N} \land \abstpview_1<_{\anadr}\abstpview_2 \\
\hline
\rowcolor{red!15}
\dstatePred[i](\lambda_{j}, \avaluation, \abstpview_2) &~\mathop{:-}~ \dstatePred[i](\lambda_{j-1}, \avaluation, \abstview_1) &  \\
\rowcolor{red!15}
\dmessPred(\anadr, \avaluation(\areg), \abstpview_2) &~\mathop{:-}~ \dstatePred[i](\lambda_{j-1}, \avaluation, \abstview_1) & \multirow{-2}{*}{$\aninst_j = \store{\anadr}{\areg} \land \text{ thread stores } \anevent = (\anadr, \avaluation(\areg), \abstpview_2) \land \abstpview_1<_{\anadr}\abstpview_2$} \\
\hline
\cellcolor{black!5} \dstatePred[i](\lambda_{j}, \avaluation, \abstpview_3) &~\cellcolor{black!5}\mathop{:-}~ \dstatePred[i](\lambda_{j-1}, \avaluation, \abstview_1), \messPred(\anadr, \avalue, \abstview_2) &  \cellcolor{black!5} \aninst_j = \cas(\anadr, \avalue, \areg) \land \abstpview_2(\anadr) \in \mathbb{N}^+, \abstpview = \abstpview_1 \join_{\anadr}^{\thatthread} \abstpview_2,\\
\cellcolor{black!5} \dmessPred(\anadr, \avaluation(\areg), \abstpview_3) &~\cellcolor{black!5}\mathop{:-}~ \dstatePred[i](\lambda_{j-1}, \avaluation, \abstview_1), \messPred(\anadr, \avalue, \abstview_2) & 
\cellcolor{black!5}\qquad\qquad \abstpview(\anadr) = \atimestamp^+, \abstpview_3 = \abstpview[\anadr \rightarrow \atimestamp+1] \\
\hline
\cellcolor{black!5} \dstatePred[i](\lambda_{j}, \avaluation, \abstpview_3) &~\cellcolor{black!5}\mathop{:-}~ \dstatePred[i](\lambda_{j-1}, \avaluation, \abstview_1), \dmessPred(\anadr, \avalue, \abstview_2) &  \cellcolor{black!5} \aninst_j = \cas(\anadr, \avalue, \areg) \land \abstpview_1(\anadr) \leq \atimestamp = \abstpview_2(\anadr) \\
\cellcolor{black!5} \dmessPred(\anadr, \avaluation(\areg), \abstpview_3) &~\cellcolor{black!5}\mathop{:-}~ \dstatePred[i](\lambda_{j-1}, \avaluation, \abstview_1), \dmessPred(\anadr, \avalue, \abstview_2) & 
\qquad\qquad\cellcolor{black!5} \abstpview = \abstpview_1 \join_\anadr \abstpview_2, \abstpview_3 = \abstpview[\anadr \rightarrow \atimestamp+1] \\\hline
\end{array}$
};
\end{tikzpicture}
}
\caption{These rules are chosen depending upon the nondeterministic choice made by $\nondetalgo$ of the computation $\arun_i$ of thread $i$. Each instruction $\aninst_j$ executing in $\arun_j$ is then mapped to one of the rules from above depending upon which condition (right column) is satisfied. Rules for silent $\aninst_j$ (in orange); memory accesses, loads (in violet) and stores (in pink), and CAS (gray). The second pink rule corresponds to message generation by the thread $i$ executing a store instruction. The first two CAS rules correspond to the case where the load is from a $\thatthread$ message and the last two correspond to the load from a $\thisthread$ message. The first rule for each case is the thread-local state change rule while the second rule generates the ground atom corresponding to the message generated by the CAS operation.}
\label{app:rules-this}
\end{subfigure}
\vspace{.2cm}
\begin{subfigure}[h]{\textwidth}
\resizebox{\textwidth}{!}{
\begin{tikzpicture}[codeblock/.style={line width=0.5pt, inner xsep=0pt, inner ysep=5pt}  , show background rectangle]
\node[codeblock] (init) at (current bounding box.north west) {
$\def\arraystretch{1.5}
\begin{array}{rl|l}
\hline
\hspace{3cm}\text{fact} &  \hspace{3cm}  & \hspace{3cm} \text{condition on availability of timestamps} \\ \hline\hline
\rowcolor{green!15}
\hspace{3cm}\freePred(\anadr, \atimestamp^+) &~\mathop{:-} \hspace{3cm}  &\hspace{3cm} \atimestamp \in\{0, \cdots T\} \land \text{ no } \thisthread \text{ performs a cas operation with timestamps } (\atimestamp, \atimestamp+1) \\ \hline
\end{array}$
};
\end{tikzpicture}
}
\caption{This fact corresponds to the availability of a $\mathbb{N}^+$ timestamp for stores by $\thatthread$ threads, which is known once all the $\thisthread$ computations have been guessed. These rules are not generated on a per-$\thisthread$ thread basis but rather once the computations $\arun_i$ for all $\thisthread$ threads have been non-determinsitically guessed. Referring to the simplified semantics, this rule captures $\atimestamp \not\in \mathsf{B}$, the fact that there is no CAS operation with timestamps $(\atimestamp, \atimestamp+1)$. Note that the $\freePred$ predicate 
plays a role in inferring the $\thatthread$ thread state and message predicates as seen in the last two rows in Figure \ref{Figure:HornEncodingFixed} : 
we can infer the $\thatthread$ thread state and message predicates with a view $\aview(\xvar)$ only when the respective timestamp 
is not blocked. This in turn is used in the first CAS operation (first gray row, Figure \ref{Figure:HornEncodingVariable}(a))
) when loads happen from a $\thatthread$ thread :  
the merged view $\abstpview(\xvar) = \abstpview_1(\xvar) \join_{\anadr}^{\thatthread} \abstpview_2(\xvar)$ is $\atimestamp^+$, and the new timestamp after CAS is $\atimestamp$+1. 
Note that this is possible since (i) if the timestamp of the $\thatthread$ thread for $\xvar$ from where we load was $\atimestamp^+$, then  there is no $\thisthread$ thread with a timestamp $\atimestamp$ for $\xvar$ and, (ii) if the timestamp of the $\thatthread$ thread from where we load was $\prec \atimestamp^+$, 
then the timestamp of the $\thisthread$ thread performing the CAS was $\atimestamp$ for $\xvar$. In  both cases, the timestamp after CAS will be $\atimestamp$+1.
 }
\label{app:rules-free}
\end{subfigure}
\vspace{0.2cm}
\caption{Second set of  rules for the Datalog Program. This rule set depends on the nondeterministic choice made by $\nondetalgo$ for the computations of the $\thatthread$ threads}
\label{Figure:HornEncodingVariable}
\end{figure}

Since we have the polynomial bound on $T$, it is easy to see that the rules above for the run $\arun_i$ executed by each $\thisthread$ thread $i$ can be generated in polynomial-time after nondeterministically guessing $\arun_i$. These (non-determinism dependent) rules along with the rules from Figure \ref{Figure:HornEncodingFixed} together form the complete Datalog program.

\subsubsection{Invariants for (Datalog Inference $\leftrightarrow$ Computations in the Simplified Semantics) and proof of Lemma 
\ref{Lemma:EncodingCorrect}}
\label{app:datalog-invariants}
Now we see how an inference process in the (complete) Datalog program corresponds to a computation in the simplified semantics. To do this, we give invariants which relate the inference of atoms in the Datalog program with the existence of events in the computation. These invariants together imply the equivalence between an inference sequence in the Datalog program and a computation of $\com$ under the the simplified semantics. Finally, if the goal message $\anevent^{\#}$ is reachable 
at the end of a computation $\rho$ of $\com$, then correspondingly, thanks to the invariants we obtain, we can also infer the ground term 
$\mathsf{g}$ being $\dmessPred(\anevent^{\#})$ or $\messPred(\anevent^{\#})$ in the Datalog program  depending on whether the goal message was generated by a $\thisthread$ thread 
or $\thatthread$ thread in the computation.

\begin{enumerate}
	\item 
\noindent\textbf{$\thatthread$ thread-local state} invariant
\begin{quote}
The ground atom $\statePred(\lambda, \avaluation, \abstview)$ can be inferred iff some $\thatthread$ thread can reach the $\anlcf = (\lambda, \avaluation, \abstview)$. 
\end{quote}
This says that some $\thatthread$ thread is able to reach the state from its transition system with label $\lambda$  such that the thread-local view and the register valuation at that time are $\abstview$ and $\avaluation$ respectively. We can prove that this holds by induction on the length of the run and by noting from the Datalog rules (Figure \ref{Figure:HornEncodingFixed}) that there is a transition $\lambda \xrightarrow{\aninst} \lambda'$ whenever there is a rule corresponding to that. Additionally, the load rules (in blue) require that the corresponding message atoms $(\messPred/\dmessPred)$ holds which as we will see below implies the possibility of the  generation of a message in the memory.

\item \noindent\textbf{$\thatthread$ thread message} invariant
\begin{quote}
The ground atom $\messPred(\anadr, \avalue, \abstview)$ can be inferred if the corresponding message $\anevent = (\anadr, \avalue, \abstview)$ can be generated in the simplified semantics by some $\thatthread$ thread. 
\end{quote}
Note that a ground atom of the form $\messPred(\anadr, \avalue, \abstview)$ can only be inferred using the last rule in Figure 
\ref{Figure:HornEncodingFixed}. 
The body of this rule contains the term $\statePred(\lambda, \avaluation, \abstview)$. This, if true, implies that some $\thatthread$ thread can reach the corresponding thread-local state by the first invariant (above). An $\thatthread$ thread in this thread-local state can generate the message $(\anadr, \avaluation(\areg), {\abstview}')$ since there is an outgoing transition from $\lambda$ with instruction $\anadr:=\areg$. Note that we have the check 
on the existence of the transition to ensure that the message can indeed be generated.  
This is required for the last rule to exist in the program. This implies that the message can in fact be generated.

\item \noindent\textbf{$\thisthread$ thread-local state}
For each $\thisthread$ thread $i$, we have the following invariant.
\begin{quote}
The ground atom $\dstatePred[i](\lambda, \avaluation, \abstview)$ can be inferred iff the $\thisthread$ thread $i$ can reach the $\anlcf = (\lambda, \avaluation, \abstview)$. 
\end{quote}
This is just the $\thisthread$ analog of the first invariant for $\thatthread$ threads. This can also be proved by induction on the length of the run (or the inference sequence). Also analogous to the invariant for the $\messPred$, we have an invariant for $\thisthread$ messages.

\item \noindent\textbf{$\thisthread$ thread message} invariant
\begin{quote}
The ground atom $\dmessPred(\anadr, \avalue, \abstview)$ can be inferred if the corresponding ($\thisthread$) message $\anevent = (\anadr, \avalue, \abstview)$ can be generated in the simplified semantics by some $\thisthread$ thread. 
\end{quote}

\item \noindent\textbf{$\freePred$ timestamp availability} invariant
\begin{quote}
If the fact $\freePred(\anadr, \atimestamp^+)$ is in the Datalog program then we have $\atimestamp \not \in \mathsf{B}$ throughout the computation $\arun$ of the simplified semantics.
\end{quote}

The base case for message predicates holds since the \emph{facts} $\dmessPred(\anadr, \avalue_\init, \abstview_\init)$ for all variables are given in the Datalog program. The base case for thread state predicates holds due to the fact  $\statePred(\lambda_\init, \avaluation_\init, \abstview_\init)$ which captures the initial state of the thread. The inductive steps can be formally proved by considering a computation $\arun$ under the simplified semantics and mapping each transition in $\arun$ to an inference step in the Datalog program. For the converse, we assume an inference sequence (a sequence of invocations of the rules) and, for each rule invoked to infer a new ground atom, we show that a corresponding transition can be taken by a thread in the simplified semantics so that the invariants are maintained. This in turn, is done by taking cases on the next instruction to be executed.
\end{enumerate}

The equivalence between transitions in a computation (hence a computation $\rho$) of the simplified semantics 
and the application of rules/facts in the Datalog program, leading to reachability 
of some message $\anevent$ in $\rho$ iff the corresponding ground term $\dmessPred(\anevent)$ 
or $\messPred(\anevent)$ is inferred in Datalog 
is sufficient to prove Lemma \ref{Lemma:EncodingCorrect}. In particular, the generation of $\anabstevent$ in some computation $\arun$ of $\com$ gives a sub-computation $\projectto{\arun}{\thisthread}$ performed by the $\thisthread$ threads. We consider the Datalog query instance $(\gdatalogprog, \agatom)$ generated where $\nondetalgo$ correctly guesses $\projectto{\arun}{\thisthread}$. By the message generation invariant, the ground atom $\dmessPred(\anabstevent)$ or 
$\messPred(\anabstevent)$ 
corresponding to $\anabstevent$ can be inferred, $\gdatalogprog\vdash\agatom$, giving the forward direction of the lemma.

For the reverse direction, we note that $\gdatalogprog \vdash \messPred(\anabstevent)$ or 
$\gdatalogprog \vdash \dmessPred(\anabstevent)$
immediately implies that the message can be generated in some computation of the system $\com$ (the $\thisthread$ computation is already determined in the guessed program $\gdatalogprog$).

\subsection{$\env$ Size}\label{Section:Environment}
Having described the encoding(s), the challenge now is to provide a polynomial bound on the cache size for the query instances generated by $\nondetalgo$. The $\env$ behaves like a memoized set of atoms which are used for the inference process. The reason  why a polynomial sized $\env$ suffices is that we can ``forget'' (remove from $\env$) previously inferred atoms when they are not being actively used. We use this crucially in the context of $\thatthread$ predicates, $\messPred, \statePred$. Technically this is possible since the arbitrary replication property of $\thatthread$ threads allows us to ``forget'' the state of the previously simulated $\thatthread$ thread and simulate a fresh copy instead.

Let $\quantity_0 = \sizeOf{\domain}\sizeOf{\varset} + |\com_\thisthread|$. We show that a $\dlogmem$ of size $\bigO{\quantity_0^2}$ is sufficient to infer $\agatom$. 
\begin{lemma}\label{Proposition:Bound}
For each $(\gdatalogprog, \agatom)$ generated by $\nondetalgo$, $\gdatalogprog\vdash\agatom$ if and only if $\gdatalogprog\vdash_k\agatom$ with $k\in\bigO{\quantity_0^2}$.
\end{lemma}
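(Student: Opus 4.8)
The plan is to prove the two directions separately. The ``if'' direction is immediate: removing the size constraint only adds admissible computations, so any derivation witnessing $\gdatalogprog\vdash_k\agatom$ is also a $\dlogmem$ Datalog derivation of $\agatom$ and hence $\gdatalogprog\vdash\agatom$. All the work is in the converse. So assume $\gdatalogprog\vdash\agatom$. By the invariants of Section~\ref{app:datalog-invariants} (the core of Lemma~\ref{Lemma:EncodingCorrect}), this inference corresponds to a computation $\arun$ of $\com$ under the simplified semantics that generates the goal message $\anevent^\#$, with $\projectto{\arun}{\thisthread}$ being exactly the $\thisthread$-run guessed by $\nondetalgo$. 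The idea is to replay $\arun$ as a $\dlogmem$ Datalog computation whose cache stays small, exploiting that $\thatthread$-atoms ($\messPred$, $\statePred$, and the facts $\freePred$) are re-derivable on demand --- the replicability of $\thatthread$ threads means a fresh copy can always be simulated, and in the simplified semantics re-derivation is just re-applying rules --- whereas $\thisthread$-atoms ($\dmessPred$, $\dstatePred_i$) must be retained but are few.

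Concretely, the cache maintained during the replay is the union of: (i)~all $\dmessPred$ atoms for $\thisthread$ messages generated in $\arun$ --- as the $\thisthread$ threads are loop-free their combined run has at most $T=|\com_\thisthread|$ stores, so this is $\bigO{|\com_\thisthread|}$ atoms; (ii)~the current $\dstatePred_i$ atom of each of the $n\le|\com_\thisthread|$ distinguished threads, again $\bigO{|\com_\thisthread|}$ atoms; (iii)~the $\freePred$ facts currently in use (they have empty bodies, hence are freely re-derivable, and only $\bigO{1}$ are needed at a time, though even all $|\varset|(T+1)$ of them fit in $\bigO{\quantity_0^{2}}$); and (iv)~a transient working set of $\thatthread$-atoms. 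Whenever the replay needs an $\thatthread$ message $\anevent$ present --- because a $\thisthread$ thread, or an $\thatthread$ thread currently being simulated, loads it --- we re-derive $\anevent$ by simulating a fresh copy of its generating $\thatthread$ thread from its initial fact, stopped as soon as it emits $\anevent$; that sub-simulation needs only $\dmessPred$ atoms (group~(i), always present), $\freePred$ facts, and further $\thatthread$ messages that are re-derived recursively. After $\anevent$ has been consumed by the relevant load (or CAS) rule, $\anevent$ together with all $\statePred$ atoms of the finished sub-simulation are dropped. Since this working set is torn down and rebuilt between consecutive loads, handling the several loads of one thread sequentially, its size is bounded by the \emph{depth} of the recursion plus a constant: one live $\statePred$ atom per recursion level, plus the single $\messPred$ atom being re-derived there.

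The remaining, and main, obstacle is to bound this recursion depth, i.e.\ the length of the longest chain $A_0\leftarrow\anevent_1\leftarrow A_1\leftarrow\anevent_2\leftarrow\cdots$ in which the $\thatthread$ thread $A_j$ loads the $\thatthread$ message $\anevent_{j+1}=(\anadr_{j+1},\avalue_{j+1},V_{j+1})$ generated by $A_{j+1}$ before $A_j$ itself emits $\anevent_j=(\anadr_j,\avalue_j,V_j)$; this is where the Strahler-number viewpoint mentioned in the introduction enters. Here the plan is to exhibit a strict-descent measure along the chain. A thread's view is non-decreasing along its run and lives in the finite set $(\set{0,\tplus{0},1,\tplus{1},\dots,T,\tplus{T}})^{\varset}$ ordered componentwise; since $A_j$ loads $\anevent_{j+1}$ before writing $\anevent_j$, joining $V_{j+1}$ into its view (which can only increase each coordinate), and must then raise the coordinate of $\anadr_j$ strictly above its current value when it writes, one gets $V_j\succeq V_{j+1}$ componentwise with a strict increase on $\anadr_j$, hence $V_0\succ V_1\succ\cdots$ strictly in the componentwise order. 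Each such strict descent decreases at least one coordinate, and each coordinate ranges over a chain of length $2T+1$, so the chain has length at most $|\varset|(2T+1)+1=\bigO{|\varset|\,|\com_\thisthread|}=\bigO{\quantity_0^{2}}$ (using $|\varset|\le\sizeOf{\domain}\sizeOf{\varset}\le\quantity_0$ and $|\com_\thisthread|\le\quantity_0$). Feeding this depth bound into group~(iv) and summing the four groups yields a $\dlogmem$ of size $k\in\bigO{\quantity_0^{2}}$ throughout the replay, so $\gdatalogprog\vdash_k\agatom$; together with the trivial converse inclusion this proves the lemma.
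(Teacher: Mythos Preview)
Your overall strategy---persist the few $\thisthread$ atoms and re-derive $\thatthread$ atoms on demand via fresh simulations---is sound, but the recursion-depth argument in your last paragraph breaks. You claim an $\thatthread$ store ``must then raise the coordinate of $\anadr_j$ strictly above its current value''; that holds under classical RA but \emph{not} under the simplified semantics that $\gdatalogprog$ encodes. Rule $\rulestorelocalabstthat$ only requires $\raisets{\aview_1(\anadr)}\preceq\aview_2(\anadr)$, and since $\raisets{\tplus{\atimestamp}}=\tplus{\atimestamp}$, a thread whose view on $\anadr$ is already $\tplus{\atimestamp}$ may write a message with that same timestamp. Concretely, with one variable $\anadr$ and two values $\avalue_1\neq\avalue_2$: thread $A_1$ writes $\anevent_1=(\anadr,\avalue_1,[\anadr\mapsto\tplus{0}])$; thread $A_0$ loads $\anevent_1$ (acquiring view $[\anadr\mapsto\tplus{0}]$) and then writes $\anevent_0=(\anadr,\avalue_2,[\anadr\mapsto\tplus{0}])$. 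Here $V_0=V_1$, so your chain $V_0\succ V_1\succ\cdots$ is not strictly descending and you obtain no bound. All you actually get is $V_j\succeq V_{j+1}$, and a non-strict descending chain in a finite poset can be arbitrarily long; in particular a dependency chain can be as long as the number of distinct $\thatthread$ messages, which is exponential in $|\varset|$ (there are $(2T{+}2)^{|\varset|}$ abstract views).

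The paper closes exactly this gap by first passing to a \emph{compact} computation (Lemma~\ref{lem:polyb}) before replaying it. The height bound there comes from a pigeonhole on $(\anadr,\avalue)$ pairs, not on views: if two $\thatthread$ messages on a dependency chain share variable and value, any reader of the later one can read the earlier one instead---in the simplified semantics $\thatthread$ messages are loadable regardless of the reader's view---short-circuiting the chain. This caps the height at $\quantity_0$; an analogous argument caps the fan-in. Only after this compactification does the inductive cache-size argument (Lemma~\ref{lem:datalog-inference}) yield the $\bigO{\quantity_0^2}$ bound. Your proof needs this compactification step (or an equivalent substitute) in place of the view-descent argument.
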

An inference sequence performed on $\gdatalogprog$ corresponds to a computation of the parameterized system $\com$ in the simplified semantics (Section \ref{app:datalog-invariants}). Hence, to see that the above size of $\dlogmem$ is sufficient
we analyze the structure of computations in the simplified semantics. 
The analysis will reveal a dependency relation among the messages generated. We will see that this gives enough information to guide the Datalog computation so as to use small sized $\dlogmem$.

Consider a computation $\abstrun$ ending in the configuration $\lastOf{\abstrun} = (\abstmem, \abstlcfmap)$. 
For every message $\anabstevent$ in $\abstmem$, we define $\genproc(\anabstevent)$ as the first thread which added $\anabstevent$ to the memory $\abstmem$.
(Recall that the simplified semantics admits repeated insertions for $\thatthread$ messages due to reuse of timestamps from $\nat^+$). 
We define $\dep(\anabstevent)$ as the set of messages which $\genproc(\anabstevent)$ reads from, before generating the first instance of $\anabstevent$. We define the notion of a dependency graph 
for a computation $\abstrun$. 
\begin{definition}
The \textit{dependency graph} of a computation $\abstrun$ with $\lastOf{\abstrun} = (\abstmem, \abstlcfmap)$ is the directed graph $\depgraph_{\abstrun}=(V, E)$ whose vertices $V = \abstmem$ are the messages in the final configuration and whose edges reflect the dependencies, $(\anabstevent_1, \anabstevent_2) \in E$ if $\anabstevent_1 \in \dep(\anabstevent_2)$.
\end{definition}
As $\dep(-)$ is based on the linear order of the computation, 
the dependency graph is acyclic. The acyclicity of dependency graphs follows immediately from the definition of $\dep{}$. If there is a cycle, then all the threads involved in the cycle would be dependent on each other for the first generation of the respective message, thus causing a deadlock.
We denote the sets of sink and source vertices of $\depgraph$ by $\sinkv(\depgraph)$ resp. $\sourcev(\depgraph)$. A path in $G$ 
is also called a \emph{dependency sequence}. A path or dependency sequence  $m_1 \rightarrow m_2 \rightarrow m_3 \rightarrow \dots m_{n-1} \rightarrow m_n$ 
thus says that $m_1$ was read by some thread which generated  $m_2$, $m_2$ in turn was read by a thread which generated $m_3$ and so on till 
the thread which generated $m_n$ read $m_{n-1}$. Given such a sequence, 
we say $m_i$ is an ancestor of $m_j$ if $i < j$.  
The height of a vertex $v$ is the length of a longest path from a source vertex to~$v$. 
The maximal height over all vertices is $\heightv(\depgraph)$.
See Figure \ref{fig:depgraph} for an example. 
\lstdefinestyle{examplec}{
	backgroundcolor = \color{white},
  breaklines=true,
  postbreak=\mbox{\textcolor{red}{$\hookrightarrow$}\space},
  xleftmargin=\parindent,
  language=C,
  tabsize=2,
  showstringspaces=false,
  basicstyle=\small\ttfamily,
  keywordstyle=\bfseries\color{green!40!black},
  morekeywords={load, store},
  morekeywords=[2]{memory_order_acquire, memory_order_release},
  keywordstyle=[2]\bfseries\color{orange!40!black},
  commentstyle=\itshape\color{purple!40!black},
  identifierstyle=\color{blue},
  stringstyle=\color{orange}
}
\begin{figure}[!h]
\begin{minipage}{.6\textwidth}
\scriptsize
\raggedleft
\newcommand{\tempview}[2]{\overline{#1#2}}

\tikzstyle{init message}=[fill={rgb,255: red,211; green,211; blue,211}, draw=white, shape=rectangle, tikzit draw=white, tikzit fill={rgb,255: red,211; green,211; blue,211}, tikzit shape=rectangle]
\tikzstyle{contributor1}=[fill={rgb,255: red,255; green,220; blue,180}, draw=white, shape=rectangle, tikzit draw=white, tikzit fill={rgb,255: red,255; green,220; blue,180}, tikzit shape=rectangle]
\tikzstyle{contributor2}=[fill={rgb,255: red,230; green,170; blue,232}, draw=white, shape=rectangle, tikzit fill={rgb,255: red,230; green,170; blue,232}, tikzit draw=white, tikzit shape=rectangle]

\tikzstyle{dependency}=[->]
\begin{tikzpicture}
	\begin{pgfonlayer}{nodelayer}
		\node [style=init message] (4) at (-2, 0) {$(\mathsf{y},0,\tempview{0}{0})$};
		\node [style=init message] (5) at (0.5, 0) {$(\mathsf{x},0,\tempview{0}{0})$};
		\node [style=contributor2] (6) at (0.5, 1) {$(\mathsf{y},1,\tempview{0}{0^+})$};
		\node [style=contributor1] (7) at (-1.2, 1) {$(\mathsf{x},1,\tempview{0^+}{0^+})$};
		\node [style=contributor1] (9) at (-2, 2) {$(\mathsf{y},2,\tempview{0^+}{0^+})$};
		
		\node [style=init message] (10) at (2.25, 0) {$(\mathsf{y},0,\tempview{0}{0})$};
		\node [style=init message] (11) at (5, 0) {$(\mathsf{x},0,\tempview{0}{0})$};
		\node [style=contributor2] (12) at (4, 1) {$(\mathsf{y},1,\tempview{0}{0^+})$};
		\node [style=contributor1] (13) at (2.25, 1) {$(\mathsf{x},1,\tempview{0^+}{0^+})$};
		\node [style=contributor2] (14) at (5, 2) {$(\mathsf{y},2,\tempview{0^+}{0^+})$};
	\end{pgfonlayer}
	\begin{pgfonlayer}{edgelayer}
		\draw [style=dependency] (5) to (6);
		\draw [style=dependency] (6) to (7);
		\draw [style=dependency] (4) to (7);
		\draw [style=dependency, bend left, looseness=1.25] (4) to (9);
		\draw [style=dependency] (6) to (9);
		\draw [style=dependency] (11) to (12);
		\draw [style=dependency] (10) to (13);
		\draw [style=dependency] (13) to (14);
		\draw [style=dependency, bend right, looseness=1.25] (11) to (14);
		\draw [style=dependency] (12) to (13);
	\end{pgfonlayer}
\end{tikzpicture}
\end{minipage} \hspace{1em}
\begin{minipage}{.3\textwidth}
\raggedright
\begin{tabular}{c}
\texttt{x} = 0 = \texttt{y} \\
\begin{tabular}{
c||c}
{\color{orange} $T_1$} & {\color{violet} $T_2$} 
\tabularnewline
\begin{lstlisting}[style=examplec]
y.load(0)
y.load(1)
x.store(1)
y.store(2)
\end{lstlisting}
&
\begin{lstlisting}[style=examplec]
x.load(0)
y.store(1)
x.load(1)
y.store(2)
\end{lstlisting}
\end{tabular}
\end{tabular}
\end{minipage}
\caption{Two possible dependency graphs for the code snippet. 
$T_1, T_2$ are both $\thatthread$ threads. The color of each message $\anabstevent$ signifies $\genproc(\anabstevent)$ ($T_1$ orange, $T_2$ violet, $\init$ gray). We denote the view as a vector $\overline{t_x t_y}$. Since we only consider the thread adding a message for the first time $\genproc(\mathsf{y}, 2, \overline{0^+0^+})$ can be either $T_1$ (left graph) or $T_2$ (right graph).
}
\label{fig:depgraph}
\end{figure}

\noindent {\textbf{Compact Computations}}. Unfortunately, dependency graphs may contain exponentially many vertices (due to the views), and given the \pspace-hardness in Section \ref{sec:pspace-h} 
there is no way to reduce this to polynomial size. 
Yet, there are two parameters that we can reduce, the `fan-in' of each vertex $v$ (number of messages read by $\genproc(v)$ before generating $v$), and 
and the `height' of the dependency graph (longest dependency sequence). 
A computation~$\abstrun$ is \emph{compact} if its dependency graph $\depgraph_{\abstrun}$ satisfies the following two bounds.
(1) Every message $v$ depends on a small number of other messages, $\sizeOf{\dep(v)} \leq \quantity_0$.
(2) The dependency sequences are polynomially long, that is, $\heightv(\depgraph_{\abstrun}) \leq \quantity_0$. 
The following lemma  says that compact computations are sufficient: 

\begin{lemma}\label{lem:polyb}
Any message that can be generated in the simplified semantics, can be generated by a compact computation. 
\end{lemma}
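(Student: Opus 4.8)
The plan is to start from an arbitrary computation $\abstrun$ in the simplified semantics that generates a target message $\anabstevent^*$ and show it can be replaced by a compact one generating the same message. I would work with the dependency graph $\depgraph_{\abstrun}$ and prune it in two independent passes, first controlling the fan-in of every vertex, then the height. For the fan-in bound, fix a message $v$ and let $\athread = \genproc(v)$. Before generating $v$ (for the first time), $\athread$ performed a sequence of loads; these are exactly the messages in $\dep(v)$. The key observation is that $\athread$'s view is a map $\varset \to \ptimestamps$, so the \emph{relevant} information $\athread$ carries into the store of $v$ is captured by at most $\sizeOf{\varset}$ timestamp-per-variable values plus its register valuation (of size bounded by $\sizeOf{\domain}^{\sizeOf{\setreg}}$ control states, but since we care only about the combined encoding, the relevant quantity is $\quantity_0 = \sizeOf{\domain}\sizeOf{\varset} + \sizeOf{\com_\thisthread}$). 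Among the loads $\athread$ made, only finitely many actually \emph{raise} a coordinate of its view or change a register in a way that persists until the store of $v$ — the others are subsumed. More precisely, for each variable $\anadr'$ the final view coordinate $\aview(\anadr')$ at the moment of the store was set by (at most) one load on $\anadr'$ (the one achieving the maximum join value), and register contents at that point depend on at most $\sizeOf{\com_\thisthread}$ prior loads along the thread's local run for $\thisthread$ threads (or a bounded number for a single $\thatthread$-thread segment between two consecutive CAS-free store points). Replacing $\athread$'s run by a run that performs only these $\le \quantity_0$ essential loads and skips the rest yields the same store $v$ with the same view; doing this for every vertex gives $\sizeOf{\dep(v)} \le \quantity_0$.

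For the height bound I would argue by a collapsing / short-circuiting argument along any dependency sequence $m_1 \to m_2 \to \cdots \to m_\ell$ with $\ell > \quantity_0$. The idea is that along such a long chain some ``thread-configuration signature'' must repeat: the signature of the step generating $m_j$ is the tuple (control location of $\genproc(m_j)$ just before the generating store, register valuation, the restriction of its view to the variables, suitably coarsened using the abstraction). There are only $\bigO{2^{\quantity_0}}$ such signatures in the worst case, which is too many — so the naive pigeonhole over \emph{all} signatures does not give $\quantity_0$. Instead I would exploit that we only need the \emph{height} to be $\le \quantity_0$, and use monotonicity of views along a dependency chain: whenever $m_i \to m_j$ with $i<j$, the view of $\genproc(m_j)$ at its generating store dominates (coordinatewise, in the $\preceq$ order) the view stored in $m_i$, hence coordinatewise dominates the view in $m_i$. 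Since each variable's coordinate lies in $\{0, 0^+, 1, 1^+, \ldots, T, T^+\}$ with $T = \bigO{\sizeOf{\com_\thisthread}}$, and strictly increases only when a $\thisthread$-store or a fresh $\thatthread^+$-timestamp intervenes, a dependency chain can have at most $\bigO{\sizeOf{\varset}\cdot T}$ strictly-progressing steps; steps that do \emph{not} strictly progress any coordinate can be eliminated because then $m_j$ could have been generated directly from an ancestor of $m_i$ (the reading thread at $m_j$ learned nothing new from $m_i$ that it could not get from further down the chain), short-circuiting the path. After removing all non-progressing steps, the longest chain has length $\bigO{\sizeOf{\varset}\cdot T} = \bigO{\quantity_0}$, which after adjusting the constant defining ``compact'' gives $\heightv(\depgraph_{\abstrun}) \le \quantity_0$.

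Finally I would check consistency: the two pruning passes interact, since skipping loads for the fan-in bound could in principle destroy some message that another vertex depends on. I would handle this by processing vertices in topological order of $\depgraph_{\abstrun}$ (legal by acyclicity, which is already established in the text), rebuilding the computation bottom-up: each retained message is regenerated by a fresh $\thatthread$-thread copy (using the arbitrary-replication property of $\thatthread$ threads) or by the unique $\thisthread$ thread along its guessed run, reading only from already-rebuilt messages in its pruned $\dep(\cdot)$; the timestamp-abstraction rules guarantee the loads remain enabled (recall $\thatthread$-messages impose no timestamp check on loads, and $\thisthread$-message timestamps are preserved along the chain by the domination argument). The resulting computation is compact and still generates $\anabstevent^*$. \textbf{The main obstacle} is the height bound: making precise the ``non-progressing step can be short-circuited'' claim requires carefully showing that the reading thread at such a step gains no view/register information essential to its subsequent behaviour up to its own generating store — essentially an internal analogue of the fan-in argument applied across threads — and confirming that the short-circuited dependency edge is realizable under the simplified semantics' load rules.
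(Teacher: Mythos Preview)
There is a genuine gap: both your fan-in and height arguments miss the key idea and, as stated, do not go through.

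For fan-in, you propose to keep only the loads that are ``essential'' for the final view and registers, and skip the rest. But a load is a program instruction; it cannot simply be deleted without breaking the thread's control flow, and the value it returns feeds registers that later assume instructions may test. Your supporting claim that ``the final view coordinate $\aview(\anadr')$ was set by at most one load on $\anadr'$'' is also false under RA: a load on variable $\anadr$ joins the \emph{entire} message view into the reader's view, so it can raise coordinates of variables other than $\anadr$. The paper's argument never removes a load. Instead it uses pigeonhole on $(\text{variable},\text{value})$ pairs --- there are only $\sizeOf{\domain}\sizeOf{\varset}$ such pairs among $\thatthread$-messages, plus a bounded number of $\thisthread$- and initial messages --- to find two $\thatthread$-messages $m_1,m_2$ read by $\genproc(v)$ with identical variable and value, and then \emph{redirects} the later read to $m_1$. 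The thread still executes the load, receives the same value (so registers and control flow are preserved), the load is enabled because $\thatthread$-loads carry no timestamp check, and the resulting view can only decrease, which keeps every subsequent transition enabled. This drops $m_2$ from $\dep(v)$.

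For height, the paper applies the same $(\text{variable},\text{value})$ pigeonhole along a dependency chain: two $\thatthread$-messages on the chain with the same pair let the reader of the later one read the earlier one instead (already generated, same value, no larger view), short-circuiting the chain. Your alternative via ``non-progressing view steps'' fails for the same reason: even when the view does not strictly increase from $m_i$ to $m_{i+1}$, the \emph{value} carried by $m_i$ is what $\genproc(m_{i+1})$ actually read and may be indispensable for reaching its store; you cannot bypass $m_i$ unless you can supply that same variable and value from another message, which your purely view-based criterion does not guarantee. (Also, the bound $\bigO{\sizeOf{\varset}\cdot T}$ you derive is not $\bigO{\quantity_0}$ in general.)
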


\begin{figure}[!ht]
\begin{subfigure}{.5\linewidth}
\centering
\includegraphics[scale=0.3]{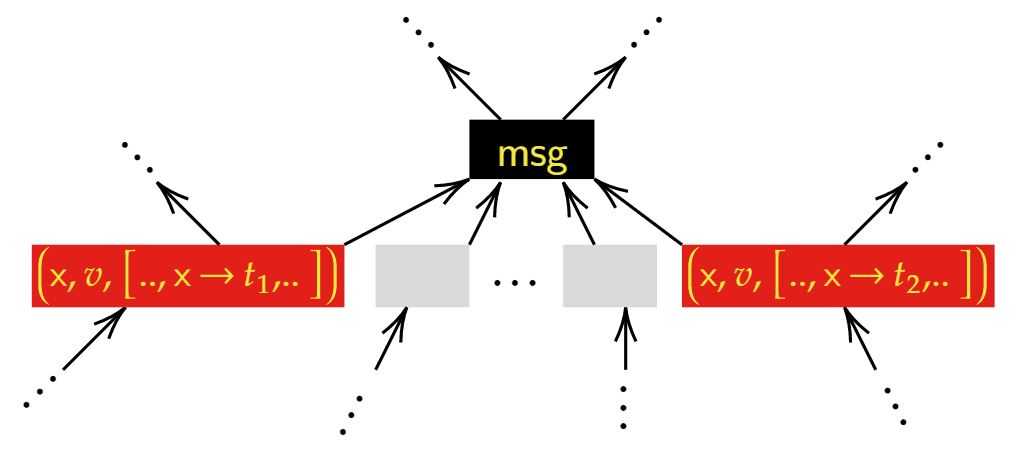}
\end{subfigure}%
\begin{subfigure}{.5\linewidth}
\centering
\includegraphics[scale=0.3]{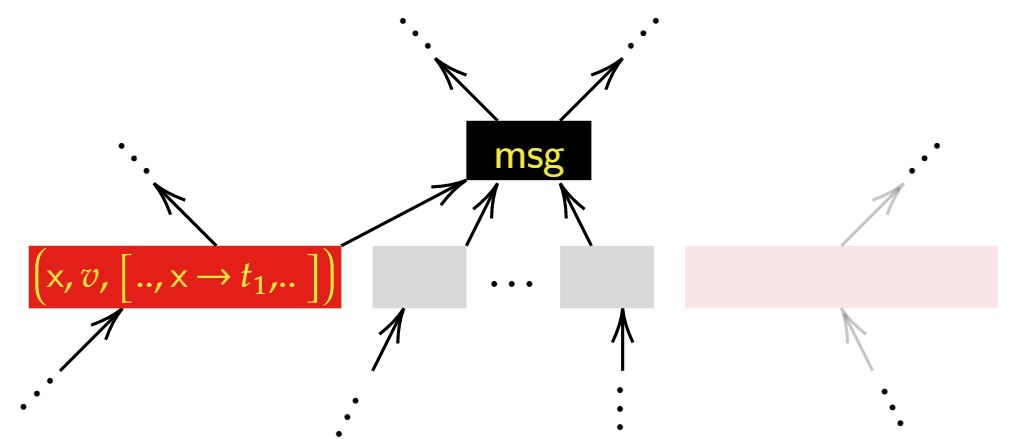}
\end{subfigure}\\[1ex]
\caption{Fan-in reduction: the dependency graph on the left can be converted to the one on the right by eliminating the redundant dependency of thread $\genproc(\anevent)$ on $(\xvar, \avalue, [\dots, \xvar \rightarrow \atimestamp_2, \dots])$ when $\atimestamp_2 > \atimestamp_1$.}
\label{fig:small-fanin}
\end{figure}

\begin{proof}
We prove both parts (fan-in and height) of this lemma by showing that if there exists a computation whose dependency graph violates the bound for fan-in (similarly height), then there must exist a computation whose dependency graph  has a lower fan-in (height) with the rest of the graph (fan-ins of other vertices) unchanged. We first show this for fan-in. We will assume that the programs $\thisthreadprogsimple$ executed by $\thisthread$ threads have been specified as a transition system (note that we can interconvert between the while-language and transition system representation with only polynomial blowup). Then $|\thisthreadprogsimple|$ is an upper bound on the total number of transitions in all $\thisthread$ threads together.

\begin{enumerate}
	\item \textbf{Fan-in}. Suppose to the contrary, we had $|\dep(v)| > 2|\domain||\varset| + |\thisthreadprogsimple|$ for some message $v$. Consider the thread $p = \genproc(v)$ which generated the message represented by vertex $v$ for the first time. There are only $|\domain||\varset|$ distinct (variable, value) pairs, $|\domain||\varset|$ many $\init$ messages and only $|\thisthreadprogsimple|$ many $\thisthread$ messages ($|\thisthreadprogsimple|$ is an upper bound on the number of transitions the $\thisthread$ thread can take). Hence by a pigeonhole argument, $p$ must have read two $\thatthread$ messages with same (variable, value) pair but distinct abstract views. Let these messages be $m_1 = (\anadr, \avalue, \abstview_1)$ and $m_2 = (\anadr, \avalue, \abstview_2)$ where the abstract views are unequal. 
	 
Without loss of generality assume that  $p=\genproc(v)$  read $m_1$ first, and $m_2$ later (in order)  
before it generated $v$.  

It can be seen that any time $p$ read $m_2$, it could have read $m_1$ instead. This follows since timestamp comparisons are irrelevant when reading from $\thatthread$ messages.
The thread-local view obtained on replacing a read of $m_2$ with that of $m_1$ will only decrease or remain the same.
 From the simplified semantics,  after reading $m_1$ once, the thread view $\abstview$ satisfies $\abstview \sqsupseteq \abstview_1$ (per-variable). Hence reading from $m_1$ again leads to the thread view being ${\abstview}' >_{\anadr}^{\thatthread} \abstview$.  On the other hand, after reading $m_2$, the view will be  ${\abstview}' \join_{\anadr}^{\thatthread} \abstview_2$ which is clearly higher than ${\abstview}'$. Indeed, instead of reading from $m_2$, 
 the loading thread  can read from $m_1$, resulting in a lower view for $\xvar$ (compared to reading from $m_2$). 
 \smallskip 
 
 Let $\arun'$ denote the subcomputation starting from the position right after reading from the $\thatthread$ message $m_2$. We can see that if we replace 
 this read operation by reading from $m_1$, we can continue on $\arun'$ as before. 
 \begin{itemize}
 	\item Indeed, all 
 store operations on $\rho'$ are independent of this load from $m_1$ (or $m_2$). 
 \item Consider a load operation 
 along $\rho'$. A load on a variable $\yvar \neq \xvar$ is not affected clearly. Consider now a load on $\xvar$ performed by loading some message $m_3$. 
 Assume the load is performed  by $p$. 
   The view 
 of $\xvar$ along $\rho'$ for thread $p$ was coming from $m_2$ which was a least that given by $m_1$; indeed if loading from $m_3$ was possible 
 in $\rho'$ when the view on $\xvar$ was at least $\aview_2(\xvar)$, it definitely is possible now with a lower view on $\xvar$. 
\item  Lastly, consider a CAS operation on variable $\xvar$, along $\rho'$. Assume 
 the load was made from $m_2$. If $\abstview(\xvar)=\atimestamp_2^+$ in $m_2$, then the CAS operation 
  will add a new message $m_3$ on $\xvar$ with $\aview_3(\xvar)=\atimestamp_2+1$. However, note that 
  the same thread can still perform  CAS by reading from $m_1$, with $\abstview(\xvar)=\atimestamp_1^+$ in $m_1$ (with 
  $\atimestamp_1^+ < \atimestamp_2^+$) by adding 
 a new message $m_3$ on $\xvar$ with $\aview_3(\xvar)=\atimestamp_1+1$.
 
 \end{itemize}
 Hence reading from $m_1$ instead of $m_2$ does not affect the sub computation $\rho'$. Thus we can eliminate all reads of $m_2$ to decrease $|\dep(v)|$.  Thus, $|\dep(v)| \leq  2|\domain||\varset| + |\thisthreadprogsimple|$ for each vertex $v$. 
\smallskip 

\begin{figure}[!ht]
\begin{subfigure}{.5\linewidth}
\centering
\includegraphics[scale=0.3]{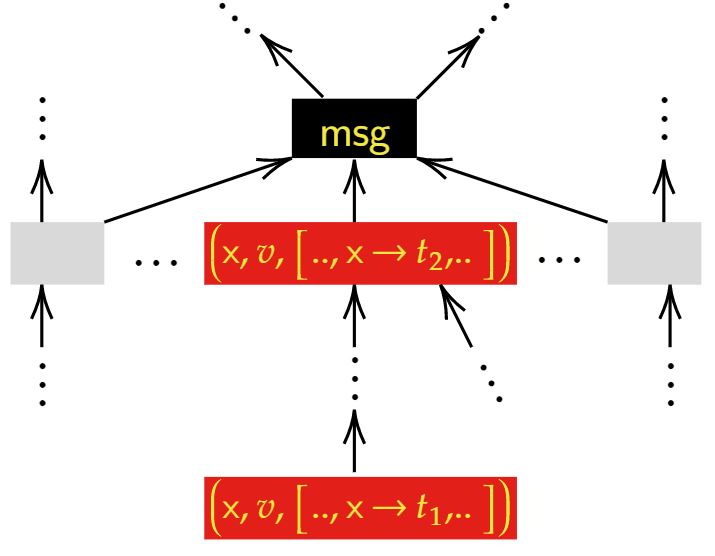}
\end{subfigure}%
\begin{subfigure}{.5\linewidth}
\centering
\includegraphics[scale=0.3]{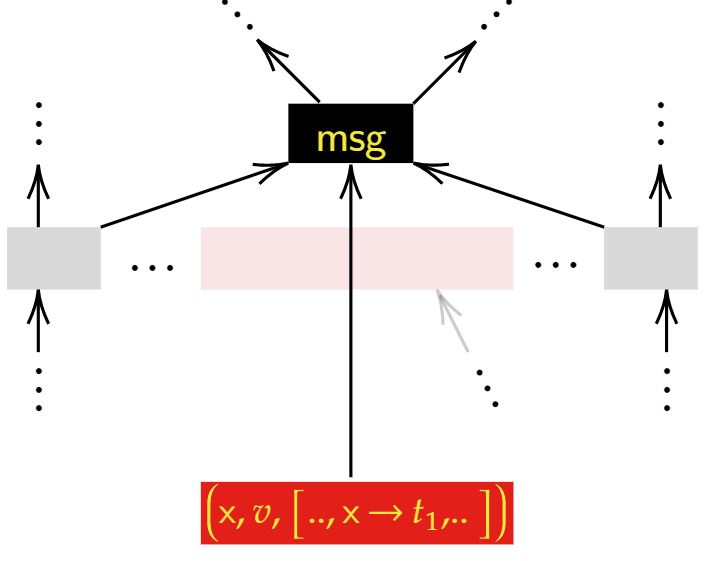}
\end{subfigure}\\[1ex]
\caption{Height compression: the dependency graph on the left can be converted to the one on the right by allowing $\genproc(\anevent)$ to directly read from $(\xvar, \avalue, [\dots, \xvar \rightarrow \atimestamp_1, \dots])$. We note that $\atimestamp_1 \leq \atimestamp_2$ making the new dependency graph, a graph of some valid computation. This prospectively reducing the height of the graph.}
\label{fig:small-height}
\end{figure}

  \item 
\textbf{Height}.  Let there be a dependency sequence  of length greater than $2|\domain||\varset| + |\thisthreadprogsimple|$. There are only $|\domain||\varset|$ (variable, value) pairs, $|\domain||\varset|$ many $\init$ messages and atmost $|\thisthreadprogsimple|$ many $\thisthread$ messages. Hence by a pigeonhole argument, for a dependency sequence longer than $2|\domain||\varset| + |\thisthreadprogsimple|$ there exists a (variable, value) pair $(\anadr, \avalue)$ such that there are two $\thatthread$ messages $m_1 = (\anadr, \avalue, \abstview_1)$ and $n_1 = (\anadr, \avalue, \abstview_2)$ along it. Without loss of generality, let $n_1$ be an ancestor of $m_1$. So, $n_1$ has been read before generating $m_1$. 
Then we must have $\abstview_1 \sqsupseteq \abstview_2$ by the RA Semantics (since the thread generating $m_1$ indirectly accumulates the view of $n_1$). Then the thread reading from (depending-upon) $m_1$ could have directly read from $n_1$ instead (note that since $m_1$ itself depends on $n_1$, by the time $m_1$ has been generated, $n_1$ must have been as well). By reading from $n_1$ its view may only decrease or remain the same thus not affecting the run (as justified above). Thus we can eventually reduce the dependency sequences so that all have length at most $2|\domain||\varset| + |\thisthreadprogsimple|$.
\end{enumerate}
This gives us the result.
	
\end{proof}

In $\dlogmem$ Datalog, the inference of an atom $\agatom$ from the program $\gdatalogprog$ involves a sequence of applications of the Add (to $\dlogmem$) and Drop (from $\dlogmem$) rules that ends with $\agatom$ being inferred. Such a sequence for $\gdatalogprog \vdash \agatom$ corresponds to a run $\abstrun$ under the simplified RA semantics. We show that this follows by the structure of the query instance $(\gdatalogprog, \agatom)$. The run $\abstrun$ can be compacted to ${\abstrun}'$ by Lemma \ref{lem:polyb}. From the dependency graph of ${\abstrun}'$ we can read off an inference strategy that keeps the 
 $\dlogmem$ size polynomial in $|\varset|,|\domain|$ and $|\com_\thisthread|$.
The following lemma formalizes this argument and so proves Proposition~\ref{Proposition:Bound}. We now proceed by showing Lemma \ref{lem:datalog-inference}. This lemma along with Lemma \ref{lem:polyb}  together gives  Proposition \ref{Proposition:Bound} and will lead to the coveted \pspace-bound. Since the term $2|\domain||\varset| + |\thisthreadprogsimple|$ will occur repeatedly, we denote it by the the quantity $\quantity_0$. From here on, $\quantity_0=2|\domain||\varset| + |\thisthreadprogsimple|$. 
\begin{lemma}[Datalog Inference Strategy]\label{lem:datalog-inference}
Let $\nondetalgo$ generate the query instance $(\gdatalogprog, \agatom)$. The inference for $\gdatalogprog \vdash \agatom$ implies the existence of an execution $\abstrun$ under the simplified semantics, which can be compacted to ${\abstrun}'$. The computation ${\abstrun}'$ can be mapped back to a new inference sequence such that $\gdatalogprog \vdash_k \agatom$ for $k \in \bigO{\quantity_0^2}$. 
\end{lemma}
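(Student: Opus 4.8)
The plan is to turn the given (unrestricted‐cache) inference $\gdatalogprog \vdash \agatom$ into a carefully scheduled inference whose cache never exceeds $\bigO{\quantity_0^2}$ atoms, in three stages matching the statement: extract a simplified‐semantics run from the inference, compact it, and replay it back into $\gdatalogprog$ according to its dependency structure. First I would use the invariants of Section~\ref{app:datalog-invariants}: every inferred $\messPred$/$\dmessPred$ atom witnesses that the corresponding message has been generated, and every inferred $\statePred$/$\dstatePred[i]$ atom that the corresponding thread‑local configuration is reachable; reading these invariants along the inference yields a computation $\abstrun$ of $\com$ under the simplified semantics whose final memory contains the goal message $\anevent^\#$ encoded by $\agatom$. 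Then I would apply Lemma~\ref{lem:polyb} to $\abstrun$, obtaining a \emph{compact} computation ${\abstrun}'$ that still generates $\anevent^\#$ and whose dependency graph $\depgraph_{{\abstrun}'}$ has $|\dep(v)| \le \quantity_0$ at every vertex $v$ and height $\heightv(\depgraph_{{\abstrun}'}) \le \quantity_0$.

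Next I would read off an inference schedule from $\depgraph_{{\abstrun}'}$ by a recursion on height. To infer the atom for a message $v$: (i) recursively infer the atoms for the messages in $\dep(v)$ — at most $\quantity_0$ of them, each of strictly smaller height — keeping each of these atoms resident in the cache once obtained; (ii) simulate the generating thread $\genproc(v)$ from its initial‑state fact along the prefix of ${\abstrun}'$ it executes before first producing $v$, inferring the chain of its thread‑state atoms and immediately dropping each stale thread‑state atom once the next has been derived, feeding the cached $\dep(v)$ atoms into the load rules at the appropriate steps; (iii) infer the atom for $v$ via the store (resp. message‑generation) rule; (iv) drop all auxiliary atoms, returning only $v$'s atom. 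Starting this procedure at the vertex corresponding to $\anevent^\#$ produces a legal sequence of Add/Drop applications in $\gdatalogprog$ that ends with $\agatom$ being inferred; that this faithfully re‑traces a valid inference is again guaranteed by the invariants of Section~\ref{app:datalog-invariants}.

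The cache bound is then a counting argument: the recursion has depth at most $\heightv(\depgraph_{{\abstrun}'}) + 1 \le \quantity_0 + 1$, and because the traversal is depth‑first exactly one branch is active at each level; at each active level we hold at most $\quantity_0$ dependency atoms plus an $\bigO{1}$ amount of thread‑state and transient‑message atoms for the ongoing simulation, i.e. $\bigO{\quantity_0}$ atoms per level. Hence the cache never holds more than $\bigO{\quantity_0^2}$ atoms, which gives $\gdatalogprog \vdash_k \agatom$ for some $k \in \bigO{\quantity_0^2}$, as required. Combined with Lemma~\ref{lem:polyb} this is exactly what Proposition~\ref{Proposition:Bound} needs.

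The step I expect to be the main obstacle is the treatment of the distinguished ($\thisthread$) messages and $\dstatePred[i]$ atoms. Unlike $\thatthread$ threads, a $\thisthread$ thread cannot be replicated, so a $\thisthread$ message cannot be regenerated from a fresh copy but only by replaying the fixed guessed run $\arun_i$, and one must check this replay fits the same budget; this is where the acyclicity and the height bound of $\depgraph_{{\abstrun}'}$ are essential. The cleanest way I would handle it is to keep all $\thisthread$ messages — there are at most $|\com_\thisthread| \le \quantity_0$ of them — resident in the cache throughout, and to interleave the replay of each $\arun_i$ with the $\thatthread$ recursion as dictated by the dependency edges, so that $\dstatePred[i]$ atoms stay short‑lived; this contributes only an $\bigO{\quantity_0}$ overhead and leaves the $\bigO{\quantity_0^2}$ bound intact.
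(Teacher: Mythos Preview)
Your proposal is correct and follows essentially the same approach as the paper: both extract a simplified-semantics run from the inference via the invariants of Section~\ref{app:datalog-invariants}, compact it with Lemma~\ref{lem:polyb}, and then perform a depth-first recursion on the height of the dependency graph, caching the at most $\quantity_0$ dependencies at each level while simulating $\genproc(v)$ with a single transient thread-state atom, yielding the $\bigO{\quantity_0^2}$ cache bound from (height)$\times$(fan-in). Your proposed handling of $\thisthread$ messages---keeping all $\leq\quantity_0$ of them permanently resident---is a harmless variation; the paper instead treats $\thisthread$ and $\thatthread$ messages uniformly through the dependency graph (the compactness bound in Lemma~\ref{lem:polyb} already accounts for $\thisthread$ dependencies), which avoids the special case but gives the same asymptotic bound.
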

\begin{proof}
This lemma has two parts: (1) it states that computations in the simplified semantics and inference sequences in the $\dlogmem$ Datalog program are related and (2) it says that compact computations can be mapped to an inference sequence with a small $\env$ size.

Let $(\gdatalogprog, \agatom)$ be generated by the procedure $\nondetalgo$ with $\gdatalogprog \vdash \agatom$. We need to show that $\agatom$ can also be inferred from $\gdatalogprog$ with a small $\env$. Recall that when generating the Datalog program $\gdatalogprog$, the procedure $\nondetalgo$  
guesses the computations of the $\thisthread$ processes. Consider some inference sequence for $\gdatalogprog \vdash \agatom$. For each application of an inference rule in the sequence, we can find a corresponding transition of a thread in the simplified semantics. This follows from the invariants in section \ref{app:datalog-invariants}. Hence we can convert the sequence of inferences to a run $\abstrun$. This run in turn can be compacted by the arguments in Lemma \ref{lem:polyb}, to get a smaller run ${\abstrun}'$. Now we need to see how this compact run implies the existence of an inference sequence with smaller sized $\env$. To do this we consider the dependency graph of ${\abstrun}'$.

We proceed by induction on the height of messages in the dependency graph. 
We strengthen the statement and show that for every message $\anabstevent$ at a height given by $\heightv(\anabstevent)=h$, we have 
$\gdatalogprog \vdash_k \messPred(\anabstevent)$ ($\gdatalogprog \vdash_k \dmessPred(\anabstevent)$)
for $k = h \times \quantity_0$.
The lemma follows by the definition of compactness, which guarantees $h\leq \heightv(\depgraph_{\abstrun}) \leq \quantity_0$.

The base case is trivial, since all messages in $\sinkv(\depgraph_{\abstrun})$ are facts in the Datalog program $\gdatalogprog$. 
We now show the inductive case for a message $v \in \depgraph_{\abstrun}$ at height $h+1$. 
The messages $v'$ in $\dep(v)$ have height at most $h$. 
The inductive hypothesis thus yields $\gdatalogprog \vdash_{h\quantity_0} v'$. 
We infer these messages one at a time, store them in the $\env$, and discard all atoms in the $\env$ used for the inference of the $v'$. Hence at each step in the inference sequence, the $\env$ contains a subset of $\dep(v)$ which has already been inferred, and, additionally some atoms which are currently being used for the inference of the next member of $\dep(v)$. The former is bounded by $\quantity_0$ by the compactness (Lemma \ref{lem:polyb}, $|\dep(v)| < \quantity_0$)  while the latter is bounded by $h \quantity_0$ by the induction hypothesis. Together the size of the $\env$ never exceeds $(h+1)\quantity_0$. Thus by reusing the space in the $\env$ to infer members of $\dep(v)$, we only require an additional space of $\quantity_0$. At the end of this process,  the size of the  $\env$ equals $|\dep(v)|$ and the space consumption of the dependencies is at most 

\begin{equation*}\
\underbrace{(\quantity_0 - 1)}_{\text{bound on }|\dep(v)|} + \underbrace{h\quantity_0}_{\text{inductive hypothesis for next atom at height }h} = (h+1)(\quantity_0) - 1
\end{equation*}	

Now we want to infer the message corresponding to $v$, having inferred and inserted into $\env$ atoms corresponding to messages from $\dep(v)$. This inference of $v$ from the messages in $\dep(v)$ requires us to simulate the run of $\genproc(v)$ using the rules of the Datalog program (by mapping each transition executed by $\genproc(v)$ to its corresponding rule from the Datalog program). We note that at all points in the simulation it suffices to store exactly one extra atom either of $\statePred$ or of $\dstatePred$ (depending upon the type of $\genproc(v)$) corresponding to the local state of $\genproc(v)$. The additional atom can be accommodated along with $\dep(v)$ since $|\dep(v)| + 1 < (h+1)\quantity_0$ (since $|\dep(v)| < \quantity_0$). 

Hence a $\dlogmem$ of size at most $2(h+1)(|\domain||\varset|+|\thisthreadprogsimple|)$ is sufficient, and by induction the lemma follows.
\end{proof}

Lemma \ref{lem:polyb} along with the compact inference sequences, Lemma \ref{lem:datalog-inference}, together show that for all the query instances generated by $\nondetalgo$, inference is possible if it is possible with a small $\env$. This shows Lemma \ref{Proposition:Bound} giving us \pspace-membership.

\section{Safety Verification with Leader}\label{sec:nexp-c}

In this section our goal is to support compositional verification methods prominent 
in program logics and thread-modular reasoning style algorithmic verification. Such approaches focus on a single thread and study its interaction with others.

We extend the system from section \ref{sec:consec-lfree} by adding a single distinguished `ego' thread, which we refer to as the \textit{leader}, denoted by the symbol $\leader$. 
Amongst the $n$ $\thisthread$ threads only the $\leader$ can execute loops, while the others, like section \ref{sec:consec-lfree} are required to be loop-free.
 
The environment once again consists of arbitrarily many identical $\thatthread$ threads that are required to be cas-free. We can represent this as
$\thatthread(\uncassy)\parallel \thisthread_1 \parallel  \thisthread_2(\unloopy) \parallel  \dots \parallel  \thisthread_{n}(\unloopy)$
which we refer to as the leader setting.

Note that the simplified semantics presented in Section \ref{Section:Simplification} applies here. 
This allows us to leverage Theorem \ref{Theorem:Semantics} by which we can operate on the simplified semantics instead. The main challenge of this section then is to go from the simplified semantics in the presence of a leader to an \nexp~verification technique, by means of a small model argument.

\subsection{Dependency Analysis}
As discussed before, the safety verification problem amounts to solving the message generation problem (MG) (section \ref{sec:dataenc}). Let the goal message be denoted $\anevent^\#$. 

We demonstrate that the simplified semantics helps solving the problem. 

Our main finding is that message generation has short witness computations (assuming the domain is finite).  The proof of Theorem \ref{Theorem:ShortWitness} is in Section \ref{app:nexp}.

\begin{theorem}\label{Theorem:ShortWitness}
In the leader setting, a message can be generated in the simplified semantics if and only if it can be generated by a computation of length at most exponential in the input specification, $|\com_\thisthread|\cdot |\com_\thatthread| \cdot |\setreg| \cdot |\domain| \cdot |\varset|$.
\end{theorem}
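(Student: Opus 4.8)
The plan is to argue a small-model property directly on the simplified semantics, using the dependency-graph machinery already developed in Section \ref{sec:consec-lfree} and the fact that only the leader $\leader$ may loop. First I would reduce message generation of the goal message $\anevent^\#$ to finding a computation $\abstrun$ whose last configuration contains $\anevent^\#$, and associate to $\abstrun$ its dependency graph $\depgraph_{\abstrun}$ as in Section \ref{Section:Environment}. Exactly as in Lemma~\ref{lem:polyb}, the fan-in of every vertex can be bounded by $\quantity_0 = 2|\domain||\varset| + |\com_\thisthread|$, and the height of $\depgraph_{\abstrun}$ can be compacted; the only new ingredient is that here $\com_\thisthread$ includes the leader, so $|\com_\thisthread|$ is no longer linear in the input but the \emph{number of control locations} of $\leader$ still is. The compaction argument (replacing a read of a message by a read of an ancestral message with the same variable/value and a $\sqsubseteq$-smaller view) goes through verbatim, since it only used that timestamp comparisons are irrelevant for $\thatthread$ messages and monotone for $\thisthread$ messages.

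Next I would bound the computation \emph{restricted to the dependency-relevant part}. A compact dependency graph has height at most $\quantity_0$ and fan-in at most $\quantity_0$, hence at most $\quantity_0^{\quantity_0}$ vertices, i.e.\ exponentially many messages actually matter for generating $\anevent^\#$. For each such message $v$, the generating thread $\genproc(v)$ performs: the reads witnessing $\dep(v)$ (at most $\quantity_0$ of them), plus some purely thread-local steps in between. The subtlety is the leader: $\genproc(v)$ might be the leader executing a long loop. Here I would observe that between two consecutive memory events of the leader along $\abstrun$, the intermediate silent/register transitions can be collapsed: the leader's local state lives in $|\com_\thisthread|\cdot|\domain|^{|\setreg|}$-many configurations (program location times register valuation), so any maximal silent segment of the leader longer than this number revisits a local configuration and can be excised without affecting memory contents or views. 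The same collapsing applies to every $\thatthread$ and loop-free $\thisthread$ thread, but for the latter it is automatic. After this collapsing, the total number of transitions is bounded by (number of relevant messages) $\times$ (bound on a thread's memory events contributing to one message, which is $O(\quantity_0)$) $\times$ (bound on local steps between memory events, which is $|\com_\thisthread|\cdot|\domain|^{|\setreg|}$), giving a bound exponential in $|\com_\thisthread|\cdot|\com_\thatthread|\cdot|\setreg|\cdot|\domain|\cdot|\varset|$.

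Then I would assemble the short witness: take the compact computation ${\abstrun}'$, discard all transitions of threads that do not appear as $\genproc(v)$ for some $v$ in $\depgraph_{{\abstrun}'}$ (these contribute nothing to the final memory), reorder the surviving transitions into the linear order induced by $\depgraph_{{\abstrun}'}$ (a topological order refined by the recorded execution order inside each thread), and collapse each thread's local segments as above. Validity under the simplified semantics is preserved because (i) the reads-from edges of $\depgraph_{{\abstrun}'}$ are respected by construction, (ii) $\thatthread$ stores never conflict in the simplified semantics (the $\mathsf{msg}\#\mathsf{m}$ check was dropped), and (iii) the only remaining ordering constraints — the blocked set $\mathsf{B}$ for CAS of $\thisthread$ threads and consecution of CAS load/store timestamps — depend only on which $\thisthread$ CAS events occur and on which $\tplus{\atimestamp}$ intervals are used, both of which are inherited unchanged from ${\abstrun}'$. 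The resulting computation has exponential length and still generates $\anevent^\#$.

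The main obstacle I anticipate is (iii) made precise: showing that after deleting irrelevant threads, collapsing local segments, and reordering, one can still consistently \emph{assign concrete simplified timestamps} (elements of $\nat\uplus\tplus{\nat}$) to all surviving stores so that all the $<^{\thatthread}_{\anadr}$, $\preceq$, and CAS-consecution side conditions hold simultaneously. The safe way to do this is to keep the timestamps from ${\abstrun}'$ essentially untouched — deletion and local collapsing do not change any view, and reordering only commutes independent transitions — and to re-run the timestamp-lifting Lemma~\ref{lem:gen-time-lift} / the $\mathsf{B}$-bookkeeping to re-pack the $\nat$-timestamps into a contiguous exponential-size prefix. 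Verifying that the leader's loop iterations surviving in ${\abstrun}'$ are genuinely bounded — rather than merely that each \emph{maximal silent} segment is short — requires the observation that each loop iteration that produces no message and reads no new value is local and hence collapsible, so only iterations tied to a dependency-graph vertex survive, and those are exponentially bounded. I would state this as an auxiliary lemma and prove it by the pigeonhole argument on leader local configurations sketched above.
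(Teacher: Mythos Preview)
Your proposal has a genuine gap in the treatment of the leader thread. You collapse only the \emph{silent} segments of the leader via a pigeonhole on control state and register valuation, and you bound the number of \emph{memory} events by tying each one to a vertex of the compacted dependency graph. But the leader can execute arbitrarily many memory events that are neither silent nor relevant: it may store to some variable, read its own store, store again, and so on, in a loop, before finally emitting a message that matters for~$\anevent^\#$. These intermediate stores are not in any $\dep$-set (no other thread reads them), so they are not dependency-graph vertices and your count does not see them; yet they are not silent transitions either and cannot be excised by your local-state pigeonhole, because the leader's view changes across them and the leader's subsequent loads may depend on them. Your closing remark that ``each loop iteration that produces no message and reads no new value is local and hence collapsible'' does not cover this case: such iterations do produce messages and do perform reads.

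The paper closes exactly this gap with a dedicated device: the function $\canread(\aview^{\abstscriptpshort},\amem^{\abstscriptpshort})$, which records the set of $(\anadr,\avalue)$-pairs among the leader's own messages that the leader can currently read. Two leader configurations are declared $\leader$-equivalent when they agree on control state, register valuation, \emph{and} $\canread$; only then is the sub-computation between them safely removable without breaking leader-to-leader reads. This augmented pigeonhole yields the bound $\quantity_3 = |\com_\leader|(|\setreg||\varset|)^{|\domain|}$ on the leader's run between two essential-message generations, and together with the separately established exponential bound on essential messages gives the theorem. Your argument is missing precisely this $\canread$ ingredient. A secondary issue: your direct reuse of Lemma~\ref{lem:polyb} with $\quantity_0 = 2|\domain||\varset| + |\com_\thisthread|$ is not sound here, because in Section~\ref{sec:consec-lfree} the term $|\com_\thisthread|$ actually stood for a bound on the \emph{total number of $\thisthread$ messages}, which coincided with the program size only because all $\thisthread$ threads were loop-free. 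With a looping leader there is no such a priori bound, which is why the paper introduces a different compactness notion (Lemma~\ref{lem:polydep-leader}) restricting only the $\thatthread$ part of dependencies, and then bounds the essential $\thisthread$ messages separately via the essential-threads argument of Lemma~\ref{lem:essproc} and Corollary~\ref{cor:expsuff}.
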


\begin{corollary}\label{Corollary:NEXP}
In the leader setting, the message generation problem for RA is in \nexp.
\end{corollary}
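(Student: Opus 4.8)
The plan is to obtain Corollary~\ref{Corollary:NEXP} from the short-witness bound of Theorem~\ref{Theorem:ShortWitness} by a guess-and-verify argument carried out in the simplified semantics. First, transfer the problem: by Theorem~\ref{Theorem:Semantics} the configurations reachable under RA and under the simplified semantics coincide up to $\abstequalp$, and since $\abstfuncp$ forgets views entirely, a goal message $\anevent^{\#}=(\anadr^{*},\avalue^{*},\_)$ occurs (with some view) in a memory reachable under RA iff some reachable memory in the simplified semantics contains a message on $\anadr^{*}$ with value $\avalue^{*}$. Hence it suffices to decide message generation in the simplified semantics, and Theorem~\ref{Theorem:ShortWitness} guarantees that, in the leader setting, whenever the goal is generatable it is generatable by a computation $\rho$ of length $\ell \le 2^{p(N)}$ for a fixed polynomial $p$, where $N=|\com_\thisthread|\cdot|\com_\thatthread|\cdot|\setreg|\cdot|\domain|\cdot|\varset|$.

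The \nexp{} procedure nondeterministically guesses such a $\rho$ and checks it step by step; it remains to confirm that the guess is of exponential size and that the check runs in exponential time. Along $\rho$ at most $\ell$ messages are ever added, so every memory occurring in $\rho$ has at most $\ell+|\varset|$ messages and the blocked set $\mathsf{B}$ has at most $\ell$ elements. Each store introduces at most one new timestamp, and we may assume (by the timestamp-lowering relabelling used in the proof of Lemma~\ref{lem:inf-sup}, which leaves the length of the computation unchanged) that at every point the set of $\nat$-timestamps in use is an initial segment $\{0,\dots,k\}$ with $k<\ell$; consequently every timestamp in $\nat\uplus\tplus{\nat}$ appearing in $\rho$ is among $\bigO{\ell}$ values, each representable in $\bigO{\log \ell}$ bits, so a view $\abstpview:\varset\to(\nat\uplus\tplus{\nat})$, and hence a message, takes $\bigO{|\varset|\log \ell}$ bits. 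The local configurations of the at most $n$ distinguished threads plus the at most $\ell$ participating $\thatthread$ threads are each of size polynomial in the program size plus $\bigO{|\varset|\log\ell}$. Summing up, every configuration in $\rho$, and therefore $\rho$ itself, has size polynomial in $\ell$, i.e.\ exponential in $N$. Verification is then routine: for each transition one checks that it instantiates one of the rules of Figures~\ref{Figure:AtomicTransitionRelation-app1} and~\ref{Figure:AtomicLocalTransitionRelation-app} — membership of the loaded message in the memory, the $\mathsf{B}$-membership side conditions governing $\thatthread$-stores and CAS, the comparisons $<^{\thatthread}_{\anadr}$, the (non-commutative) joins $\join^{\thatthread}_{\anadr}$, and the expression evaluations $\semOf{\anexp}$ — all polynomial in the configuration size; finally one checks that the last memory contains a message on $\anadr^{*}$ with value $\avalue^{*}$. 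Since both the number of steps and the per-step cost are exponential in $N$, the overall running time is deterministic exponential, placing message generation for RA in \nexp.

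The whole weight of the argument rests on Theorem~\ref{Theorem:ShortWitness}, and that is where I expect the real difficulty to lie. The loop-free $\thisthread$ threads and the cas-free, arbitrarily replicable $\thatthread$ threads can be controlled as in Section~\ref{sec:consec-lfree} through dependency-graph compaction — bounding the fan-in of each message and the height of the dependency graph — but the leader may loop, so its portion of the witness is not a priori bounded. I would bound it by a pumping argument on the leader alone: it is a single sequential thread whose local configuration is a triple of a control state, a register valuation in $\setreg\to\domain$, and a view over the bounded set of timestamps actually in play, hence ranges over an exponential-size set; if along the witness the leader returns to the same local configuration without, in between, producing a message that lies on a dependency path to $\anevent^{\#}$, the intervening segment can be excised, and since the compacted dependency structure contains only exponentially many such ``needed'' messages, iterating excision yields an exponential-length witness. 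Making the notion of a needed message precise and arguing that excision preserves validity under the simplified semantics — in particular that the surviving reads-from edges and the $\mathsf{B}$-bookkeeping of CAS operations remain consistent — is the delicate point.
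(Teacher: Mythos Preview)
Your proposal is correct and follows the same route as the paper: once Theorem~\ref{Theorem:ShortWitness} supplies an exponential bound on witness length, \nexp{} membership is by guess-and-verify in the simplified semantics, and your care about bounding the representation size of configurations (timestamps, views, memory, thread-ids) is exactly what is needed to make the guess fit in exponential space. Your final paragraph anticipating the proof of Theorem~\ref{Theorem:ShortWitness} is also on target—the paper indeed bounds the $\thatthread$ part via dependency-graph compaction and handles the looping leader by a pumping argument on an augmented local state (what it calls $\leader$-equivalence, tracking control, registers, and the set of self-readable $(\anadr,\avalue)$ pairs), excising segments between equivalent states that generate no essential messages.
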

We establish the result in two steps. 
First we show that every computation in the simplified semantics has a ``backbone'', which is made up solely by some threads called \emph{essential threads} (Lemma \ref{lem:essproc}).  Then we show how to 
truncate this backbone to obtain a short computation (Section \ref{sec:shortwitness}).

\smallskip

\noindent {\textbf{Analyzing Dependencies in the Dependency Graph}}. 
The following study of dependencies generalizes the one in Section~\ref{Section:Environment}. 
In a computation of the simplified semantics, messages from the $\thisthread$ threads have unique timestamps whereas messages from $\thatthread$ threads may have identical timestamps.
We recall $\genproc(\anevent)$, the \textit{thread} which first generated  message $\anevent$,  and the \emph{dependency set} of a message~$\anevent$, denoted by $\dep(\anevent)$ as defined earlier in Section~\ref{Section:Environment}.

We define $\dep(\anevent) = \emptyset$ for initial messages.
We write $\dep^*(\anevent)$ for the reflexive and transitive closure of $\dep$, the smallest set containing $\anevent$ and such that for all $\anevent' \in \dep^*(\anevent)$ we have $\dep(\anevent')\subseteq\dep^*(\anevent)$. 

Similar to Lemma~\ref{lem:polyb}, we now show that we can focus on computations where any write event directly depends on a small number of other events,  and where dependency sequences  are short. The main difference with Section \ref{Section:Environment} is that since the leader has loops, we cannot apriori bound executions w.r.t. $|\com_\thisthread|$. Keeping this in mind, we provide an alternative notion for compact computations. 

\noindent{\textbf{Compact Computations}}. We call a computation $\arun$ \emph{compact} if for every $\thatthread$ message $\anevent \in \dep^*(\anevent^\#)$ in the computation  
(1) $\sizeOf{\dep(\anevent) \cap \eventsOf{\projectto{\arun}{\thatthread}} } \leq \sizeOf{\domain}|\varset|$ and 
(2) for every $\anevent'\neq \anevent$ from $\dep^*(\anevent) \cap \eventsOf{\projectto{\arun}{\thatthread}}$ either the variable or the value is different from $\anevent$.
The first point addresses the situation where an $\thatthread$ thread reads two messages with the same variable and value but different views:
it says that the thread could have chosen to read one of the messages twice. 
The second point says there is no need to generate two $\thatthread$ messages with the same variable and value along a dependency sequence. 
A thread reading the second message could equally well read the first message, the $\tplus{\atimestamp}$ timestamp for $\thatthread$ messages would make it available forever. 

\begin{lemma}
In the leader setting, if the message $\anevent^\#$ can be generated in the simplified semantics, then it can be generated by a compact computation.
\label{lem:polydep-leader}
\end{lemma}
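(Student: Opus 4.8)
The plan is to adapt the exchange argument from the proof of Lemma~\ref{lem:polyb} to the leader setting, with two adjustments: we only need to control the $\thatthread$-fragment of the dependency structure, and, since the leader has loops, the relevant quantity is the finite set of (variable, value) pairs rather than $|\com_\thisthread|$. Among all computations $\arun$ of the leader system that generate $\anevent^\#$ in the simplified semantics, I would pick one minimal for the lexicographic measure $\bigl(\,|\eventsOf{\projectto{\arun}{\thatthread}} \cap \dep^*(\anevent^\#)|\,,\ \textstyle\sum_{\anevent} |\dep(\anevent) \cap \eventsOf{\projectto{\arun}{\thatthread}}|\,\bigr)$, where the sum ranges over $\thatthread$ messages $\anevent \in \dep^*(\anevent^\#)$. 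It then suffices to show that if such a minimal $\arun$ violates condition (1) or (2), one can build a computation generating $\anevent^\#$ with a strictly smaller measure, a contradiction. Note that a message whose view on its own variable lies in $\nat^+$ must come from an $\thatthread$ thread (the store rule $\rulestorelocal$ for $\thisthread$ and the CAS rules only produce timestamps in $\nat$), so $\genproc$ of any $\thatthread$ message is a cas-free $\thatthread$ thread.

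\textbf{Fan-in (condition 1).} Suppose a $\thatthread$ message $\anevent \in \dep^*(\anevent^\#)$ has $|\dep(\anevent) \cap \eventsOf{\projectto{\arun}{\thatthread}}| > |\domain||\varset|$. By pigeonhole, $\genproc(\anevent)$ reads two $\thatthread$ messages $m_1 = (\anadr,\avalue,\abstview_1)$ and $m_2 = (\anadr,\avalue,\abstview_2)$ with the same variable and value but distinct views before generating $\anevent$; say $m_1$ is read first. Rule $\ruleloadlocalabstthat$ imposes no timestamp comparison, and $m_1$ with its $\tplus{\atimestamp}$ timestamp stays available forever, so $\genproc(\anevent)$ may re-read $m_1$ in place of every later read of $m_2$. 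After the first read of $m_1$ the thread's view already dominates $\abstview_1$ on all coordinates (and stays so, since views grow), hence re-reading $m_1$ only raises the $\anadr$-coordinate to $\raisets{\abstview(\anadr)}$, whereas reading $m_2$ additionally joins $\abstview_2$ on all coordinates; so the view after the replacement is componentwise below the original one. As $\genproc(\anevent)$ is cas-free, every subsequent local, load and store step of this thread stays enabled, and every downstream consumer of $\anevent$ (including $\thisthread$/leader CAS steps reading the now-smaller $\anevent$) goes through exactly as in the CAS sub-argument of Lemma~\ref{lem:polyb}. The edited computation still generates $\anevent$ with the same variable and value, it drops $m_2$ from $\dep(\anevent)$, and it adds no new $\thatthread$ message to $\dep^*(\anevent^\#)$; the measure strictly decreases.

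\textbf{Dependency sequences (condition 2).} Suppose for some $\thatthread$ message $\anevent \in \dep^*(\anevent^\#)$ there are distinct $\thatthread$ messages $n_1 = (\anadr,\avalue,\abstview_1)$ and $m_1 = (\anadr,\avalue,\abstview_2)$ in $\dep^*(\anevent)$ with $n_1$ an ancestor of $m_1$ along a dependency path. Walking that path and using that $\ruleloadlocalabstthat$ and $\rulestorelocalabstthat$ only join views, one gets $\abstview_2 \sqsupseteq \abstview_1$ componentwise. By the time $m_1$ is read, $n_1$ has already been generated (being an ancestor of $m_1$) and is still available, so every reader of $m_1$ can read $n_1$ instead; the reader's view can only decrease, so — again by the enabledness reasoning of Lemma~\ref{lem:polyb}, where the CAS case handles a $\thisthread$/leader CAS whose produced timestamp $\atimestamp_1{+}1$ now replaces $\atimestamp_2{+}1$ and remains consistent — the rest of the computation is preserved. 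Now $m_1$ is unread; if it still lies in $\dep^*(\anevent^\#)$ through another path we repeat the rewiring there, so eventually $m_1$ leaves $\dep^*(\anevent^\#)$ and the first component of the measure strictly decreases (its then-useless store event may simply be deleted). Iterating both repairs to the minimum yields a compact computation generating $\anevent^\#$.

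The main obstacle is the bookkeeping already present in Lemma~\ref{lem:polyb}: verifying that redirecting a load keeps the \emph{entire} suffix valid, specifically the consistency of $\thisthread$/leader CAS operations and of the blocked-timestamp set $\mathsf{B}$ once the produced timestamps shift downward. I would discharge this by reusing the CAS sub-argument of Lemma~\ref{lem:polyb} essentially verbatim, noting it never invoked loop-freeness of the $\thisthread$ threads, and that for condition~(1) the rewired thread $\genproc(\anevent)$ is cas-free so the issue does not even arise.
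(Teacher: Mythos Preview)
Your proposal is correct and follows essentially the same approach as the paper, which does not give a standalone proof of Lemma~\ref{lem:polydep-leader} but instead points to the exchange argument of Lemma~\ref{lem:polyb} together with the observation that $\thatthread$ messages, carrying $\tplus{\atimestamp}$ timestamps, remain readable forever. Your lexicographic-measure packaging is a clean way to make the ``iterate until done'' step explicit; the one place to tighten is the justification that views are monotone along dependency edges, which also uses the $\thisthread$ load/store/CAS rules (not only $\ruleloadlocalabstthat$ and $\rulestorelocalabstthat$), since a dependency path in $\dep^*(\anevent)$ may pass through $\thisthread$ messages.
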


In a compact computation, both fan-in (size of $\dep$ set) and depth (along a dependency sequence) of $\thatthread$ messages is $\mathcal{O}(|\domain||\varset|)$ since there are only as many distinct (variable, value) pairs. Hence $\mathcal{O}((|\domain||\varset|)^{|\domain||\varset|})$ many $\thatthread$ messages are sufficient to generate $\anevent^\#$.
Our goal is to derive a similar bound on $\thisthread$ messages. 
First, we consider the $\thisthread$ messages read by $\thatthread$ threads, i.e. the $\thisthread$-$\thatthread$ reads-from dependencies.
The $\thisthread$-$\thisthread$ dependencies will be handled later. 

\smallskip
\noindent\textbf{Essential Messages and Threads}. 
Given a computation $\arun$ in the simplified semantics, the \emph{essential messages} for generating message $\anevent$, 
denoted by $\essmessOf{\anevent}$, is the smallest set that includes $\anevent$ and is closed as follows.
\begin{enumerate}
	\item (1) $\forall$ messages $\anevent'\in \essmessOf{\anevent}\cap \eventsOf{\projectto{\arun}{\thatthread}}$ we have $\dep(\anevent') \subseteq \essmessOf{\anevent}$. 
	\item  $\forall$  $\anevent'\in\essmessOf{\anevent}\cap \eventsOf{\projectto{\arun}{\thisthread}}$ we have $\dep(\anevent')\cap  \eventsOf{\projectto{\arun}{\thatthread}}\subseteq \essmessOf{\anevent}$. 
\end{enumerate}
 Note the asymmetry, for the $\thatthread$ threads we track all dependencies, for the $\thisthread$ threads we only track the dependencies from $\thatthread$.

For a computation $\rho$, the threads generating essential messages of $\anevent^{\#}$ for the first time and the set of $\thisthread$ threads are \emph{essential threads}; $\ethreadOf{\arun}\!=\{\genproc(m) {\mid} m {\in} {\essmessOf{\anevent^\#}
\}} \cup \thisthread$.

We claim that projecting $\arun$ to \emph{essential threads} yields a valid computation in the simplified semantics.
Essential messages thus form the backbone of the computation mentioned above. We now give the proof of Lemma \ref{lem:essproc} and Corollary \ref{cor:expsuff}.  

\begin{lemma}
\label{lem:essproc}
If $\arun$ is a computation in the simplified semantics, so is $\projectto{\arun}{\ethreadOf{\arun}}$.
\end{lemma}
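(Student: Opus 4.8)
The plan is to argue by induction on the length of $\arun$, constructing $\projectto{\arun}{\ethreadOf{\arun}}$ one transition at a time and maintaining a simulation invariant. Writing $\arun_i$ for the length-$i$ prefix of $\arun$ and $\sigma_i$ for the part of $\projectto{\arun}{\ethreadOf{\arun}}$ produced so far, I would maintain: (i) $\sigma_i$ is a valid computation of the simplified semantics; (ii) the memory of $\lastOf{\sigma_i}$ contains every message of $\essmessOf{\anevent^{\#}}$ and every $\thisthread$ message generated within $\arun_i$; (iii) for every essential thread, its local configuration in $\lastOf{\sigma_i}$ agrees with the one in $\lastOf{\arun_i}$; and (iv) the blocked set $\mathsf{B}$ in $\lastOf{\sigma_i}$ is contained in the one in $\lastOf{\arun_i}$. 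A transition taken in $\arun$ by a non-essential thread --- which, by definition of $\ethreadOf{\arun}$, is an $\thatthread$ thread none of whose first-generated messages lies in $\essmessOf{\anevent^{\#}}$ --- is simply dropped, and one checks (i)--(iv) are preserved (the dropped message, if any, is non-essential and $\thatthread$-generated, and its contribution to $\mathsf{B}$ disappears). A transition by an essential thread is replayed, and we show it is enabled in $\sigma_i$ by a case analysis on its rule.

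The routine cases are the non-load rules. Silent transitions and register assignments are purely thread-local, so they go through by (iii). A store by a $\thisthread$ thread writes an $\nat$-timestamp; since all $\thisthread$ threads are essential and replayed verbatim, that timestamp is exactly as fresh in $\sigma_i$ as in $\arun_i$, so rule \rulestorelocal\ applies. A store by an essential $\thatthread$ thread carries a $\tplus{\atimestamp}$-timestamp; the simplified semantics drops the conflict check here, and its only side condition $\atimestamp\notin\mathsf{B}$ holds by (iv). A CAS is executed only by $\thisthread$ threads (all kept): its load component is handled below, and its store component again only constrains $\mathsf{B}$, so (iv) suffices. Finally (iv) itself is re-established after each step, since the elements ever added to $\mathsf{B}$ stem from $\thisthread$-CAS operations (all preserved) or $\thatthread$-stores (of which we keep a subset), so $\mathsf{B}$ can only shrink under the projection.

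The heart of the argument is the load rule (and the load part of a CAS): when an essential thread $p$ reads a message $m'$ in $\arun$, we must exhibit an enabled read in $\sigma_i$. If $m'$ is an initial message it is always present; if $m'$ is a $\thisthread$ message then $\genproc(m')$ is a $\thisthread$ thread, hence essential, so $m'$ is in memory by (ii) and the read is replayed. The delicate subcase is $m'$ an $\thatthread$ message, where I use that in the simplified semantics an $\thatthread$ message is read with no view comparison whatsoever (rule \ruleloadlocalabstthat), so it is enough to have \emph{some} essential $\thatthread$ message with the same variable and value available. If $p$'s read of $m'$ occurs on the way to $p$ first generating one of its essential messages $m$ (so $p=\genproc(m)$ with $m\in\essmessOf{\anevent^{\#}}$), then $m'\in\dep(m)$, and closure condition~(1) if $m$ is an $\thatthread$ message, or condition~(2) if $m$ is a $\thisthread$ message, forces $m'\in\essmessOf{\anevent^{\#}}$, so $m'$ is present by (ii). The remaining reads --- of $\thatthread$ messages that are \emph{not} dependencies of any essential message (e.g.\ reads performed after $p$ has generated all of its essential messages, or reads by a $\thisthread$ thread that only feed non-essential $\thisthread$ messages) --- I would handle by a normalisation of $\arun$ in the spirit of Lemma~\ref{lem:polydep-leader}: since there are only $\sizeOf{\domain}\cdot\sizeOf{\varset}$ variable/value pairs, pick for each pair one representative $\thatthread$ message read by some essential thread, add it together with its $\dep$-closure to the essential set (this adds only $\mathcal{O}(\sizeOf{\domain}\sizeOf{\varset})$ further essential threads), and reroute every such read to its representative; validity is preserved precisely because $\thatthread$-loads impose no timestamp constraint.

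I expect this last point --- controlling the $\thisthread$-to-$\thatthread$ reads-from edges that the dependency closure of $\anevent^{\#}$ does not capture --- to be the main obstacle, since in the leader setting the leader carries loops and may consume unboundedly many $\thatthread$ messages, so one genuinely needs the rerouting-and-representatives argument rather than the bare closure. Everything else is comparatively mechanical, because the $\thisthread$ part of the computation is preserved verbatim and $\thatthread$ stores and loads in the simplified semantics impose essentially no timestamp conditions.
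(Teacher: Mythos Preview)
Your approach is considerably more elaborate than the paper's. The paper's proof is three sentences: it asserts that the set of essential threads is closed under reads-from --- no essential thread ever loads a message written by a non-essential thread --- and concludes directly, since projecting away threads that are never read from cannot break any enabled transition. It neither sets up an induction nor tracks the blocked set~$\mathsf{B}$.

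You are in fact more careful than the paper here, and the obstacle you single out is real. The closure the paper invokes does not literally follow from the definition of $\essmessOf{}$: rule~(2) only pulls in the $\thatthread$-dependencies of \emph{essential} $\thisthread$ messages, so an $\thatthread$ message read by an essential thread \emph{after} that thread has already produced its last essential message (or read by a $\thisthread$ thread on the way to a non-essential $\thisthread$ message) need not itself be essential, and may therefore be missing from the projected memory. In effect the lemma as literally stated can fail on such ``late'' reads; what the paper's applications actually require --- and what its proof tacitly assumes --- is only that there is \emph{some} valid computation over the essential threads that still produces~$\anevent^\#$, which one obtains either by truncating each essential thread right after its last essential-message write or by your normalisation-and-representatives manoeuvre. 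One detail of the latter to watch: rerouting a load to a different $\thatthread$ message with the same $(\anadr,\avalue)$ changes the view absorbed via $\join^{\thatthread}_\anadr$, and that view feeds into subsequent store timestamps and $\thisthread$-load comparisons; to keep the tail of the computation valid you should select representatives with componentwise minimal view (as in the compactness argument of Lemma~\ref{lem:polyb}), so that the rerouted reader's view can only decrease.
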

\begin{proof}
To prove this theorem it suffices to show that there is no thread in $\ethreadOf{\arun}$ that reads from some thread $\athread'\not\in\ethreadOf{\arun}$. Then we simply can project away the threads not in $\ethreadOf{\arun}$ and all the reads-from dependencies will still be respected. 

	This follows trivially from the definition of $\essmessOf{~}$. Indeed we have that for an essential $\thatthread$ thread $\athread$ the messages (and hence threads) that $\athread$ reads from are also essential. All $\thisthread$ threads are essential by definition. Additionally, for any $\thisthread$ thread, we add all its $\thatthread$ dependencies to the essential set. The set $\ethreadOf{\arun}$ is then closed under reads-from dependencies and hence the computation $\projectto{\arun}{\essmessOf\arun}$ is valid under RA.
\end{proof}

Now we discuss bounding of essential messages.
Essential $\thatthread$ messages and (and essential $\thatthread$ threads) are atmost exponential, bounded by $\quantity_1 =(|\domain||\varset|)^{|\domain||\varset|}$ using the earlier compactness argument.
We show that the number of essential $\thisthread$ messages is bounded as well.
Firstly, each $\thatthread$ thread has a state space (control-state, registers) bounded by $\quantity_2 =|\com_\thatthread||\setreg|^{|\domain|}$. Given the earlier bound on total number of essential $\thatthread$ messages (and hence those by a single thread), an $\thatthread$ thread run of length greater than $\mathcal{O}(\quantity_1\quantity_2)$ implies that there will exist a sub-run in which (1) no essential message was generated and (2) the thread revisited the same local state twice. We can truncate this sub-sequence since the absence of essential messages implies that external reads-from dependencies are not affected. Hence the computation for a single $\thatthread$ is $\quantity_1\quantity_2$-bounded. Given the $\quantity_1$-bound on $\thatthread$ threads, the total number of $\thisthread$ messages consumed by the $\thatthread$ threads can be atmost $\quantity_1^2\quantity_2$. This implies sufficiency with exponentially many essential $\thisthread$ messages.

\begin{corollary}\label{cor:expsuff}
Let the goal message $\anevent^\#$ be generated in a computation of system $\com$.
Then for some compact computation, $|\essmessOf{\anevent^\#}|$ is at most exponential in $|\com|$.  
\end{corollary}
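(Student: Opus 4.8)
The plan is to turn the counting sketch preceding the statement into a proof: bound the essential $\thatthread$ messages by a direct analysis of the dependency DAG, truncate the runs of the essential $\thatthread$ threads to bounded length, and then read off the essential $\thisthread$ messages as exactly the $\thisthread$-messages those bounded runs load from. First I would take, by Lemma~\ref{lem:polydep-leader}, a compact computation $\arun$ generating $\anevent^\#$, and by Lemma~\ref{lem:essproc} replace it with $\projectto{\arun}{\ethreadOf{\arun}}$, which is still valid, still generates $\anevent^\#$, and has the same $\essmessOf{\anevent^\#}$. Recall $\quantity_1 = (\sizeOf{\domain}\sizeOf{\varset})^{\sizeOf{\domain}\sizeOf{\varset}}$ and $\quantity_2 = \sizeOf{\com_\thatthread}\sizeOf{\setreg}^{\sizeOf{\domain}}$, the latter bounding the number of local states (control location plus register valuation) of a single $\thatthread$ thread.

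For the $\thatthread$ messages in $\essmessOf{\anevent^\#}$ I would reuse the counting behind Lemma~\ref{lem:polyb}: compactness bounds the number of $\thatthread$-messages each such message directly depends on by $\sizeOf{\domain}\sizeOf{\varset}$, and forces all $\thatthread$-messages on a dependency chain to carry pairwise distinct (variable, value) pairs, so every such chain has length at most $\sizeOf{\domain}\sizeOf{\varset}$; unfolding the DAG into a tree of branching and depth both at most $\sizeOf{\domain}\sizeOf{\varset}$ gives $\sizeOf{\essmessOf{\anevent^\#}\cap\eventsOf{\projectto{\arun}{\thatthread}}}\le\quantity_1$, hence at most $\quantity_1$ essential $\thatthread$ threads. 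Next I would shorten the run of each essential $\thatthread$ thread $\athread$: since $\athread$ visits at most $\quantity_2$ local states and generates at most $\quantity_1$ essential messages, a run longer than $\bigO{\quantity_1\quantity_2}$ contains two positions with the same local state between which $\athread$ generates no essential message; I excise $\athread$'s transitions strictly between them, together with every later transition (of any thread) that transitively reads from a message produced in that segment. Because the excised segment contributes no essential message, nothing in $\essmessOf{\anevent^\#}$ --- in particular not $\anevent^\#$ --- depends on it, so $\anevent^\#$ is still generated and $\essmessOf{\anevent^\#}$ is unchanged; validity survives because $\athread$ resumes with the smaller view it held at the earlier position, and in the simplified semantics a smaller thread view never blocks a load (an $\thatthread$ message has no view guard, a $\thisthread$ message only needs the thread view $\preceq$ the stored timestamp) nor an $\thatthread$ store (the guard $\raisets{\abstpview_1(\anadr)}\preceq\abstpview_2(\anadr)$ only becomes easier and $\thatthread$ stores have no conflict check), so $\athread$ reproduces exactly the same downstream messages. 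Iterating, every essential $\thatthread$ thread runs for $\bigO{\quantity_1\quantity_2}$ steps.

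It then remains to count essential $\thisthread$ messages: by the asymmetric definition of $\essmessOf{-}$, each is either $\anevent^\#$ itself (when $\anevent^\#$ is a $\thisthread$ message) or belongs to $\dep(\anevent')$ for some essential $\thatthread$ message $\anevent'$, i.e. is loaded by an essential $\thatthread$ thread; with at most $\quantity_1$ such threads each performing $\bigO{\quantity_1\quantity_2}$ loads, there are $\bigO{\quantity_1^2\quantity_2}$ of them. Adding up, $\sizeOf{\essmessOf{\anevent^\#}}\le\quantity_1+\bigO{\quantity_1^2\quantity_2}+1$, which is exponential in $\sizeOf{\com}$. I expect the main obstacle to be the excision step: one must argue carefully that deleting a looping $\thatthread$-segment together with the reads-from up-closure of its messages leaves a valid simplified computation and keeps $\essmessOf{\anevent^\#}$ intact --- this hinges on non-essential $\thatthread$ messages being irrelevant to $\anevent^\#$ and on monotonicity of the simplified transition relation in the thread view --- and a secondary subtlety is getting the staging right, so that the $\quantity_1$-bound on essential $\thatthread$ messages and the length bound on $\thatthread$ runs are not established circularly in terms of each other.
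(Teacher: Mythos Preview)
Your proposal is correct and follows essentially the same route as the paper's proof: bound the essential $\thatthread$ messages by $\quantity_1$ via the compactness fan-in/depth argument, then pigeonhole-truncate each essential $\thatthread$ thread's run to length $\bigO{\quantity_1\quantity_2}$, and finally read off the essential $\thisthread$ messages as the $\thisthread$-loads performed by those threads, giving $\bigO{\quantity_1^2\quantity_2}$. Your staging worry is unfounded --- the three bounds are established strictly in that order, with no circularity, exactly as in the paper.

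The only presentational difference is that you first project to $\ethreadOf{\arun}$ and then, when excising a looping segment of an $\thatthread$ thread, explicitly remove the reads-from up-closure of that segment; the paper truncates more directly and just asserts that ``no essential messages will be lost and hence the reads-from dependencies will be respected.'' Your version is in fact a bit more careful here: the key observation justifying the up-closure removal --- that any message in $\dep(m')$ for an essential $m'$ with $\genproc(m')$ an essential thread is itself essential (by the closure rules defining $\essmessOf{-}$, since the removed messages are $\thatthread$-messages), so no first-generation of an essential message can lie in the up-closure --- is exactly what makes the paper's one-line justification go through, and you have identified it correctly.
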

\begin{proof}
 Recall the notation $\genproc(\anevent)$ which refers to a thread which generated 
 the message $\anevent$ for the first time. In the following, if $t=\genproc(\anevent)$, we also refer to $t$ as the ``first writer'' 
 of $\anevent$. 

	First we observe that $\essmessOf{\anevent} \subseteq \dep(\anevent)$. Hence in particular, we have $\essmessOf{\anevent^\#} \cap \mathsf{Msgs}(\projectto{\arun}{\thatthread}) \subseteq \dep(\anevent^\#) \cap \mathsf{Msgs}(\projectto{\arun}{\thatthread})$. 
	This is shown to be at most exponential ($\mathcal{O}(\quantity_1)$) by Lemma \ref{lem:polydep-leader}, since both the height and the $\thatthread$ fan-in of the dependency graph restricted to $\thatthread$ is polynomial. Given that each essential message is generated for the first time by a unique essential thread, the number of essential $\thatthread$ threads is also bounded by $\mathcal{O}(\quantity_1)$.

	Now, consider the fragment $\arun'$ of the computation between two consecutive first-writes (first points of generation) of two essential $\thatthread$ messages. Now if any $\thatthread$ thread performs more than $\mathcal{O}(\quantity_2) = \mathcal{O}(|\com_\thatthread||\setreg|^{|\domain|})$ many transitions within $\arun'$ it would imply that there are two configurations $\anlcf_1, \anlcf_2$ within $\arun'$ at which the local-states of the thread (modulo view) are identical - this follows since $|\com_\thatthread|$ is the program size and $|\setreg|^{|\domain|}$ is the number of distinct register valuations. Additionally note that the view at $\anlcf_1$ cannot be greater than that at $\anlcf_2$ (monotonicity of views in RA). Hence we can simply truncate the sub-computation between $\anlcf_1$ and $\anlcf_2$ while keeping the computation still valid under RA (the thread with lower view can still perform all its remaining transitions). In this truncation no essential messages will be lost and hence the reads-from dependencies will be respected.

	To explain further, suppose to the contrary that some thread $\athread$  which is the first writer of  an essential message  executed  more than $\mathcal{O}(\quantity_1\quantity_2)$ number of transitions $\anlcf_0 \anlcf_{1} \cdots \anlcf_{l}$. 
	Since the total number of essential messages is only $\mathcal{O}(\quantity_1)$, there must exist a subsequence $\sigma$ such that no essential $\thatthread$ messages 
	were generated (for the first time) 	in $\sigma$. Additionally, since the state-space of each thread is $\mathcal{O}(\quantity_2)$, by a pigeon-hole argument, it follows that two local configurations $\anlcf_i,\anlcf_j$ of $\athread$ in $\sigma$ are equal. We can simply truncate the fragment of the run between these configurations since no essential messages have been generated for the first time.

	Then it suffices for each first writer $\thatthread$ thread to take at most $\mathcal{O}(\quantity_1\quantity_2)$ many transitions and consequently read at most exponentially many $\thisthread$ messages. Recall that the $\thisthread$ messages that are read by first writers of essential $\thatthread$ messages are essential themselves. Since the number of essential $\thatthread$ threads which are first writers itself is bounded by $\mathcal{O}(\quantity_1)$, the number of essential $\thisthread$ messages is bounded by $\mathcal{O}(\quantity_1^2\quantity_2)$, which is exponential in the input. Since $\essmessOf{\anevent^\#}$ is a union of essential $\thisthread$ and $\thatthread$ messages we get the exponential bound on essential messages.
\end{proof}

Combined with Lemma \ref{lem:essproc}, the corollary says it is sufficent to focus on computations with atmost exponentially many essential threads and essential messages.
We now want to bound the computation of the $\thisthread$ threads.

\subsection{Short Witnesses}
\label{sec:shortwitness}
The computation truncation idea as applied to $\thatthread$ threads earlier does not apply
to the leader. 
Recall the asymmetry in the definition of essential dependencies;
we did not include the $\thisthread$-$\thisthread$ load dependencies.
The dependencies come in two forms: (1) those involving (either as message writer or as reader) some non-leader $\thisthread$ thread and (2) $\leader$-$\leader$ dependencies. The former are poly-sized owing to the loop-free nature of the non-leader $\thisthread$ threads. Hence, we focus on $\leader$-$\leader$ dependencies. For a memory $\amem^{\abstscriptpshort}$, let 
$\projectto{\amem^{\abstscriptpshort}}{\leader}$ be the set 
of $\leader$ messages in it. Assuming $\aview^{\abstscriptpshort}$ is the view of the $\leader$, let $\canread(\aview^{\abstscriptpshort}, \amem^{\abstscriptpshort})$ denote the $(\anadr, \avalue)$ pairs
in messages of $\projectto{\amem^{\abstscriptpshort}}{\leader}$ which can be read by $\leader$.
\begin{definition}
$\canread(\aview^{\abstscriptpshort}, \amem^{\abstscriptpshort})=\set{(\anadr, \avalue)\bnf (\anadr, \avalue, \aview_1^{\abstscriptpshort})\in \projectto{\amem^{\abstscriptpshort}}{\leader},~ \aview^{\abstscriptpshort}(\anadr) = \aview_1^{\abstscriptpshort}(\anadr)}$.
 \end{definition}

We note that a pair $(\anadr, \avalue)$ is in $\canread$ when this pair is the last store by the $\leader$ on $\anadr$ following which $\abstview(\anadr)$ has not changed. Observe that there can be at most $|\varset|^{|\domain|}$ many distinct $\canread$ functions. 
Consider a sub-computation of the leader between two generations of essential messages.
We call configurations $\acf^{\abstscriptpshort}_1$ and $\acf^{\abstscriptpshort}_2$ \emph{$\leader$-equivalent} if 
(1) the local configurations of the leader coincide except for the views $\aview_1^{\abstscriptpshort}$ resp. $\aview_2^{\abstscriptpshort}$ and
(2) the memories $\amem^{\abstscriptpshort}_1$ and $\amem^{\abstscriptpshort}_2$ satisfy\\
\begin{align*}
	\canread(\aview^{\abstscriptpshort}_1, \amem^{\abstscriptpshort}_1) = \canread(\aview^{\abstscriptpshort}_2, \amem^{\abstscriptpshort}_2) \\ 
\end{align*}

Then the computation of the leader between $\acf^{\abstscriptpshort}_1$ and $\acf^{\abstscriptpshort}_2$ can be projected away while retaining a computation in the simplified semantics.
Since there are only $\mathcal{O}(|\com_\leader|(|\setreg||\varset|)^{|\domain|})$ many distinct configurations that are not $\leader$-equivalent, after projecting away the redundant part, the leader will have an at most exponentially long computation between generation of two consecutive essential messages.
Given the exponential bound on all essential messages, we see that post projection, the leader computation is reduced to exponential size.
Combined with the argument for the $\thatthread$ and non-leader $\thisthread$ threads, gives Theorem~\ref{Theorem:ShortWitness}. 
Note that the resulting non-deterministic algorithm does not run in polynomial space as there may be exponentially many essential $\leader$ messages which need to be generated concurrently with the $\thatthread$ threads.

\subsection{Theorem \ref{Theorem:ShortWitness} : \nexp-membership of safety verification in the leader case}
\label{app:nexp}
We now move on to Theorem \ref{Theorem:ShortWitness}. It suffices to show that we only need to consider computations of exponential length in order to verify safety properties of a parameterized system under the simplified semantics in the leader case. For this, we show exponential bounds on the $\thatthread$ and $\thisthread$  components of the computation.

We have already seen that for the essential $\thatthread$ threads, $\mathcal{O}(\quantity_1^2\quantity_2)$ is an upper bound on the number of transitions they need to make. Additionally this bound also applies to the number of essential $\thisthread$ messages. Note that the non-leader $\thisthread$ threads are loop-free and hence their number of transitions is polynomial in $|\com_\thisthread|$. Hence we now focus on computations of the leader. We denote $\quantity_3 = |\com_\leader|(|\setreg||\varset|)^{|\domain|}$ which is a bound on the number of distinct (non equivalent) leader configurations  and use it below in the proof.

For the $\leader$, we need to maintain more states (as compared to the $\thatthread$ threads) to ensure that the truncated run is valid. This is so as we also want to capture $\leader$-$\leader$ dependencies as well. The $\canread$ function does precisely this - at each point in the run it tracks the set of $\leader$ messages that can be read by the $\leader$ itself. 

Assume once again that there is a (super-exponential) leader computation with length greater than $\mathcal{O}(\quantity_1^2\quantity_2\quantity_3)$. Then since $\mathcal{O}(\quantity_1^2\quantity_2)$ is a bound on the number of total $\thisthread$ essential messages (and in particular essential $\leader$ messages), there must exist a sub-computation of the $\leader$ of length greater than $\mathcal{O}(\quantity_3)$ that is free of essential message generation. Let this sub-computation be $\anlcf_1 \anlcf_2 \cdots \anlcf_l$. Assume the memory states along this sub-computation to be $\amem_1^{\abstscriptpshort} 
\amem_2^{\abstscriptpshort} \dots \amem_l^{\abstscriptpshort}$. 

We augment each configuration $\anlcf_i$ with the respective memory state $\amem_i^{\abstscriptpshort}$ obtaining an \emph{augmented configuration} as explained below. 
Consider the configurations obtained by augmenting $\anlcf = (\com, \avaluation, \abstpview)$ to the set $\canread(\abstpview, \amem^{\abstscriptpshort})$.
That is, given $\anlcf_i = (\com_i, \avaluation_i, \abstpview_i)$, on augmentation with $\canread(\abstpview_i, \amem_i^{\abstscriptpshort})$ we obtain 
 the augmented state as $\langle \com_i, \avaluation_i, \canread(\abstpview_i, \mathsf{m}_i) \rangle$. Now, $\canread$ can take atmost $|\varset|^{|\domain|}$ many values, while the leader local-state (modulo view) has only $|\com_\leader||\setreg|^{|\domain|}$ values. This implies, (by a pigeon-hole argument), the existence of a pair $i,j$ such that $\langle \anlcf_i, \canread(\abstpview_i, \mathsf{m}_i) \rangle$ and $\langle \anlcf_j, \canread(\abstpview_j, \mathsf{m}_j) \rangle$ are equivalent. 

Now, the view of the $\leader$ thread is monotonic. This implies that if for $i\neq j$ we have $\langle \com_i, \avaluation_i, \canread(\abstpview_i, \mathsf{m}_i) \rangle = \langle \com_j, \avaluation_j, \canread(\abstpview_j, \mathsf{m}_j) \rangle$ then the sub-computation between $i$ and $j$ may be truncated. Thus the run $\anlcf_1 \cdots \anlcf_{i} \anlcf_{j+1} \cdots \anlcf_l$ is also a valid run of the thread. Moreover it does not affect other threads since once again no essential messages are lost.

Hence for any super-exponential (order greater than $\mathcal{O}(\quantity_1^2\quantity_2\quantity_3)$) leader computaion, there exists a shorter computation which also preserves reachability. Thus for safety verification it suffices to consider runs of atmost exponential length, immediately giving an \nexp~ upper bound.

\section{Limits of Semantic Simplification I: \pspace-hardness of $\thatthread(\uncassy,\unloopy)$}
\label{sec:pspace-h}

\newcommand{\booleanvars}{\textit{Vars}}

We show that the applications of semantic simplification to the loop-free and leader settings are tight, and further simplification is not possible. 

Having shown that safety verification of $\thatthread(\uncassy) \parallel \thisthread_1(\unloopy) \parallel  \cdots \parallel  \thisthread_n(\unloopy)$ is in \pspace, we give a matching lower bound. For the lower bound, it suffices to consider the variant with no $\thisthread$ threads and loop-free $\thatthread$ threads, $\thatthread(\uncassy,\unloopy)$. 
In fact, this result captures the inherent complexity in Parameterized RA, termed as $\mathsf{PureRA}$, i.e. RA in its simplest form. 
The simplicity of $\mathsf{PureRA}$ comes from 
(1) disallowing registers, and 
(2) stores can only write value 1 and the memory is initialized with 0 values. 
We obtain \pspace-hardness even with this reduced form, which is surprising, given that in its full form it is in \pspace. 
Notice that the \pspace-hardness with registers is trivial, since \pspace~can be encoded in valuations of registers.

\subsection{Pure RA} In this section, we elaborate on the \pspace-hardness of checking safety properties of parameterized systems under RA in the absence of $\thisthread$ threads (and loop-free, cas-free $\thatthread$ threads), which we can denote as $\thatthread(\uncassy, \unloopy)$. 
In fact, we investigate the inherent complexity in RA, by removing all extra frills like registers, as well as arbitrary data domains. So what we have is, 
$\mathsf{Pure}$ RA, which is basically, RA in its simplest form. The simplicity of $\mathsf{Pure}$ RA comes from the fact that 
we do not use registers, and the only writes that are allowed are 
that of writing value 1 to any shared variable, where we assume that the memory was initialized to 0 so that we have a data domain of $\{0,1\}$. The remarkable thing about this result is that we obtain \pspace-hardness, which is surprising, given that in its full form it is in \pspace~by Section \ref{sec:consec-lfree}.
Notice that the \pspace-hardness with registers is trivial, since computations can be encoded in register operations themselves. 
\tikzset{background rectangle/.style={fill=none
}}
\begin{figure*}[h]
\centering
\small
\begin{tikzpicture}[codeblock/.style={line width=0.5pt, inner xsep=0pt, inner ysep=5pt}  , show background rectangle]
\node[codeblock] (init) at (current bounding box.north west) {
$
\def\arraystretch{1.5}
\begin{array}{rl}
\rowcolor{black!15}
  \pspacelibrary &= \funcAG \choice \funcSATC \choice \funcFEC{0} \choice  \cdots \choice \funcFEC{n-1} \choice \funcGC 
\\
\rowcolor{black!5}
&\quad ~ \mathsf{choose}(u) = (\assign{t_{u}}{0}) \choice (\assign{f_{u}}{0})\\
\rowcolor{black!5}
\funcAG &= \mathsf{choose}(u_0); \mathsf{choose}(e_1); \mathsf{choose}(u_1); \cdots; \mathsf{choose}(u_n); (\assign{s}{1})\\
\rowcolor{black!15}
\funcSATC &= \assume{(s = 1)}; \mathsf{check}(\Phi); \\
\rowcolor{black!15}
&~\quad((\assume{(t_{u_n} = 0)}; \assign{a_{n,1}}{1}; ) \choice (\assume{(f_{u_n} = 0)}; \assign{a_{n,0}}{1}))\\
\rowcolor{black!5}
\funcFEC{i} &=~ \assume{(a_{i+1,0} = 1)}; \assume{(a_{i+1,1} = 1)}; (\assume{(f_{e_{i+1}} = 0)} \choice \assume{(t_{e_{i+1}} = 0)}); \\
\rowcolor{black!5}
 &~\quad((\assume{(t_{u_i} = 0)}; \assign{a_{i,1}}{1}) \choice (\assume{(f_{u_i} = 0)}; \assign{a_{i,0}}{1}))\\
\rowcolor{black!15}
\funcGC &= \assume{(a_{0,0} = 1)}; \assume{(a_{0,1} = 1)}; \assert{\texttt{false}}\\
\end{array}$
};
\end{tikzpicture}
\vspace{-0.2cm}
\caption{The parametrized system used in the reduction}
\label{fig:pspace}
\end{figure*}

\subsubsection{A QBF Encoding} To show the \pspace-hardness of checking safety properties of parameterized systems of the class $\thatthread(\uncassy,\unloopy)$, we establish a reduction from the canonical \pspace-complete problem, $\qbfsat$. The $\qbfsat$ problem is described as follows.
 Given a quantified boolean formula 
  $\Psi=\forall u_0 \exists e_1 \forall u_1 \exists e_2 \cdots \exists e_n \forall u_n ~\Phi(u_0, e_1,\cdots u_n)$, over variables $Vars(\Psi)=\{u_0, \dots, u_n, e_1, \dots, e_n\}$, decide if $\Psi$ is  true. 
$\Psi$ has $n+1$  universally quantified variables and $n$ existentially quantified variables. 
To establish the reduction, we construct an instance of the parametrized reachability problem for RA (in fact $\mathsf{Pure}$ RA) consisting of the parametrized system $\com$, such that $\com$ is unsafe if and only if the $\qbfsat$ instance is true. We assume that the $\qbfsat$ instance $\Psi$ is as given above and now detail the construction.

The program $\com$ executed by the $\thatthread$ threads (given in Figure \ref{fig:pspace}) consists of functions (sub-programs), one of which may be executed non-deterministically:
\begin{equation*}
  \pspacelibrary = \funcAG \choice \funcSATC \choice \funcFEC{0} \choice  \cdots \choice \funcFEC{n-1} \choice \funcGC
\end{equation*}
 
\subsection{Infrastructure}
\textbf{Gadgets used.} The task of checking the satisfiability of $\Psi$ is distributed over the $\thatthread$ threads executing these functions. Each function has a particular role, which we term as gadgets and now describe.
\setlist[itemize]{leftmargin=1cm}
\begin{itemize}
  \item[$\funcAG$] The \emph{Assignment Guesser} guesses a possible satisfying assignment for $Vars(\Psi)$.
  \item[$\funcSATC$] The \emph{SATisfiability Checker} checks satisfiability of $\Phi$ w.r.t. an assignment guessed by $\funcAG$. 
  \item[$\funcFEC{i}$] The $\forall \exists$ (\textit{ForallExists}) \emph{Checker} at level $i$, $0 \leq i \leq n-1$ ($\funcFEC{i}$) verifies that the $(i+1)$th quantifier alternation $\forall u_{i} \exists e_{i+1}$ is respected by the guessed assignments. This proceeds in levels, where the check function at level $i+1$, $\funcFEC{i+1}$ `triggers' the check function at level $i$, $\funcFEC{i}$, till we have verified that all assignments satisfying $\Phi$ constitute the truth of $\Psi$.
  \item[$\funcGC$] The \textit{Assertion Checker} reaches the $\assert{\texttt{false}}$ instruction when all the previous functions act as intended, implying that the formula was true.  
\end{itemize}
Due to the parameterization, an arbitrary number of threads may execute the different functions at the same time. However, there is no interference 
between threads, and there is a natural order between the roles: 
$\funcSATC$ requires $\funcAG$ to function as intended, and $\funcFEC{i}$ requires the functions $\funcAG$, $\funcSATC$ and $\funcFEC{j}$, $n-1 \geq j > i$.

\textbf{Shared Variables.}
We use the following set of shared variables in $\com$: For each $x \in Vars(\Psi)$, we have boolean shared variables $t_x$ and $f_x$ in $\com$. These variables represent true and false assignments to $x$ using the respective boolean variables in a way that is explained below. All the shared variables used are boolean, and  the initial value of all variables is 0. 
 We also have a special (boolean) variable $s$.

\textbf{Encoding variable assignments of $\Psi$: the essence of the construction.}
Recall that the messages in the memory are of the form $(\xvar, \avalue, \aview)$ where $\xvar$ is a shared variable, $\avalue \in \{0,1\}$, and $\aview$ is a view. 
To begin, the views of all variables are assigned time stamp 0.   
An assignment to the variables in $\Psi$ can be read off from the $\aview$ of a message $(s, 1, \aview)$ in the memory state. For $v \in Vars(\Psi)$, if $\aview(t_v) = 0$, then $v$ is considered to have been assigned true, while if $\aview(f_v) = 0$, then $v$ is assigned false. Our construction, explained below, ensures that exactly one of the shared variables $t_v, f_v$ will have time stamp 0 in the view of the message $(s, 1, \aview)$. The zero/non-zero timestamps of variables $t_x$ and $f_x$ in the view of $(s, 1, \aview)$ can be used to check satisfiability of $\Phi$ since only a thread with a zero timestamp can read the initial message on the corresponding variable.

\textbf{Checking a single clause.} As an example, consider the $i^{th}$ clause $e_1 \lor \lnot u_3 \lor u_5$. The satisfiability check is implemented in a code fragment as follows. $\mathsf{check}(i) = (\assume{t_{e_1} = 0}) \choice (\assume{f_{u_3} = 0}) \choice (\assume{t_{u_5} = 0})$ and 
$\mathsf{check}(\Phi) = \mathsf{check}(1); \mathsf{check}(2); \cdots; \mathsf{check}(l)$.
Finally, we have the boolean variables $a_{i, 0}$ and $a_{i, 1}$ for $i \in \{0, \cdots n\}$: these are $2(n+1)$ `universality enforcing' variables that ensure that all possible assignments to the universal variables in $Vars(\Psi)$ have been checked. 

\subsection{The Construction} 
\label{app:constr-pspace}
First we describe the various gadgets. 
\subsubsection{The Gadgets}
We now detail the gadgets (functions) mentioned in Figure \ref{fig:pspace}. 

\textbf{Assignment Guesser}: $\funcAG$: The job of the Assignment Guesser is to guess a possible assignment for the variables. This is done by writing 1 to exactly one of the variables $t_x, f_x$ for all $x \in Vars(\Psi)$. Each such write is required to have a timestamp greater than 0 by the RA semantics, and the view $\aview$ of the writing thread is updated similarly. After making the assignment to all 
variables in $Vars(\Psi)$ as described, the writing thread adds the message $(s,1, \aview)$ to the memory. 

Consequently, the view $\aview$ of the writing thread (and hence the message) satisfies 
\begin{equation*}
    \forall x \in Vars(\Phi), ~~~ \aview(t_x) = 0 \oplus \aview(f_x) = 0
\end{equation*}
We interpret this as: the assignment chosen for $x \in Vars(\Phi)$ is true if $\aview(t_x)=0$ and is false if $\aview(f_x)=0$. The chosen assignment is thus encoded in $\aview$ and hence can be incorporated by threads loading 1 from $s$ using the message $(s,1,\aview)$, (see $\funcSATC$). This follows since load operations of the RA semantics cause the thread-local view to be updated by the view in the message loaded.

\textbf{SAT Checker}: $\funcSATC$: The SAT Checker reads from one of the messages of the form $(s, 1, \aview)$ generated by $\funcAG$. Using the code explained in Figure \ref{fig:pspace}, it must check that the assignment obtained using the $\aview$ satisfies $\Phi$. The crucial observation is that $\assume{(t_x= 0)}$ ($\assume{(f_x=0)}$) being successful is synonymous with the timestamp of $t_x$ ($f_x$) in $\aview$ being 0. This holds since $\assume{(v= 0)}$ requires the ability to read the initial message on $v$ which in turn requires the thread-local view on $v$ to be 0. Timestamp of $t_x$ ($f_x$) in $\aview$ itself being 0 is equivalent to $x$ being assigned the value $\texttt{true}$ ($\texttt{false}$) by $\funcAG$.

Finally it checks that either $t_{u_n}$ or $f_{u_n}$ had timestamp 0 in $\aview$, and writes 1 to $a_{n,1}$ or $a_{n,0}$ correspondingly 
 in Figure \ref{fig:pspace}.  
For insight, we note prematurely that we will enforce both these writes to 
$a_{n,1}$ and $a_{n,0}$ as a way of ensuring the universality for the variable $u_n$.
The main task is to verify the `goodness' of the assignments satisfying $\Phi$. One of the things to verify is that, we have satisfying assignments for both values true/false of the universal variables $u_i$. 

If the $\assume{(t_{u_n}=0)}$ evaluates to true 
in $\funcSATC$  then in the view of the message $(s,1,\aview)$ obtained at the end of $\funcAG$, $\aview(t_{u_n}) = 0$.
 We now need a $\funcAG$ function (executed by some thread) to make an assignment such that in the view of the message $(s,1,\aview)$, we have $\aview(f_{u_n}) = 0$, and the formula $\Phi$ is satisfiable again. The next step is to check if these assignments which differ in $u_n$  are sound with respect to the $\forall u_{n-1} \exists e_{n} $ part
of $\Psi$ : that is, the assignment to $e_n$ is independent to that of $u_n$. 
  This procedure has to be iterated with respect to all of $u_0,u_1, \dots, u_{n-1}$ by (1) first ensuring that $\Phi$ is satisfiable for both assignments to $u_i$, $0 \leq i \leq n-1$ and (2) verifying that such assignments are sound with respect to the quantifier alternation in $\Psi$ for $1 \leq i \leq n-1$, (that is the choice of assignment to $e_i$ is independent of all variables in $\{u_i, e_{i+1}, \cdots, u_n\}$). 

\textbf{ForallExists Checker}: $\funcFEC{\_}$: The $n$ $\forall\exists$ \emph{Checker}s 
$\funcFEC{0}, \dots, \funcFEC{n-1}$ 
take over at this point, consuming the writes made earlier. In general, for each $i\in \{0, \cdots, n-1\}$, we have  $\forall\exists$ \emph{Checker} function  of $n$ kinds, $\funcFEC{0}, \dots, \funcFEC{n-1}$ that operate at levels $0, \dots, n-1$. $\funcFEC{i}$ operates at level $i$ by reading 1 from $a_{i+1,0}, a_{i+1,1}$ variables, and making writes to $a_{i,0}, a_{i,1}$ variables for $0 \leq i \leq n-1$. 

\textit{Universality Check} :  $\funcFEC{i}$ first verifies that all possible valuations to the universally quantified variable $u_{i+1}$ made $\Phi$ satisfiable : the two 
statements $\assume{(a_{i+1,0} = 1)}; \assume{(a_{i+1,1} = 1)}$
verify this by reading 1 from $a_{i+1,0}$ and $a_{i+1,1}$ (note how all higher $\funcFEC{j}$, $j>i$ level functions enforce this by generating a dependency tree such as the one in Figure \ref{fig:dep}).  

\textit{Existentiality Check} : Next, $\funcFEC{i}$ checks that the satisfying assignments of $\Phi$ seen so far agree on the existentially quantified variable $e_{i+1}$ : the
statements $(\assume{(f_{e_{i+1}} = 0)} \choice \assume{(t_{e_{i+1}} = 0)})$  check this. 
Assume that we have satisfying assignments of $\Phi$ which do not agree on  $e_{i+1}$. Then we have messages $(a_{i+1,0}, 1, \aview_1)$ and $(a_{i+1,1}, 1, \aview_2)$ such that $\aview_1(t_{e_{i+1}}) > 0$ ($e_{i+1}$ assigned $\texttt{false}$) but $\aview_2(t_{e_{i+1}}) = 0$ ($e_{i+1}$ assigned $\texttt{true}$). 
Now when $\funcFEC{i}$ reads from these messages, its view $\aview$, will have both, $\aview(t_{e_{i+1}}) > 0$ and $\aview(f_{e_{i+1}}) > 0$. This will disallow $\funcFEC{i}$ from executing $(\assume{(f_{e_{i+1}} = 0)} \choice \assume{(t_{e_{i+1}} = 0)})$  since the messages in the memory where $t_{e_{i+1}}$ and $f_{e_{i+1}}$ have value 0 (and time stamp 0) cannot be read. This enforces that the choice of the existentially quantified variable $e_{i+1}$ is independent of the choice of the assignments made to the variables in $\{u_{i+1}, e_{i+2}, \cdots, u_n\}$, and hence the proper semantics of quantifier alternation is maintained. 
  
\textit{Propagation} Finally, the $\funcFEC{i}$ function `propagates' assignments to the next level, that is, to $\funcFEC{i-1}$ after a last verification. Let $A_{i+1,j}$ contain all assignments satisfying $\Phi$ which agree on $e_{i+1}$, and 
where $u_i$ is assigned value $j \in \{0,1\}$. Such assignments are propagated to the next level by a $\funcFEC{i}$ function which writes 1 to $a_{i,j}$. $\funcFEC{i-1}$ is accessible only when $A_{i+1,0}$ and $A_{i+1,1}$ are both propagated.

 \begin{figure*}[h]
 \centering
\tikzset{
  treenode/.style = {align=right, inner sep=1.5pt, text centered,
    font=\sffamily},
   arn_r/.style = {treenode, rectangle, draw=none,  
    },
  }
   \scalebox{.7}{ 
\begin{tikzpicture}[-,>=stealth',level 1/.style={sibling distance = 9cm}, level 2/.style={sibling distance = 4.5cm}, level 3/.style={sibling distance = 2cm},
  level distance = 1.5cm] 
\node [arn_r,fill=none] {\LARGE{\color{red}${\mathsf{assert}}$}}
    child{node [arn_r,fill=blue!20] {\Large{${\mathsf{FE}[0]}$}}
                    child{ node [arn_r, fill=green!20] {\Large{${\mathsf{FE}[1]}$}} 
              child{ node [arn_r] {\Large{
              $\begin{array}{ll} {\mathsf{SATC}}  \end{array} $} } edge from parent node[pos=0.5, left]
                         {\Large{$a_{2,1}{=}1$}}} 
        child{ node [arn_r] {\Large{
              $\begin{array}{ll} {\mathsf{SATC}}  \end{array}$}} edge from parent node[pos=0.5,right]      {\Large{$a_{2,0}{=}1$}}}
                edge from parent node[pos=0.5, left]
                         {\Large{$a_{1,1}{=}1$}}  
            }
                 child{ node [arn_r,fill=green!20] {\Large{${\mathsf{FE}[1]}$}}
              child{ node [arn_r] {\Large{ $\begin{array}{ll} {\mathsf{SATC}} \end{array}$}} edge from parent node[pos=0.5, left] {\Large{$a_{2,1}{=}1$}} }
              child{ node [arn_r] {\Large{$\begin{array}{ll} {\mathsf{SATC}} \end{array}$}} edge from parent node[pos=0.5, right] {\Large{$a_{2,0}{=}1$}}}
  edge from parent node[pos=0.5,right]
                         {\Large{$a_{1,0}{=}1$}}}
            edge from parent node[pos=0.5, sloped, above]
                         {\Large{$a_{0,0}{=}1$}} 
                             }
                                 child{ node [arn_r,fill=blue!20] {\Large{${\mathsf{FE}[0]}$}}
            child{ node [arn_r,fill=orange!20] {\Large{${\mathsf{FE}[1]}$}} 
              child{ node [arn_r] {\Large{$\begin{array}{ll} {\mathsf{SATC}} \end{array}$}} edge from parent node[pos=0.5, left] {\Large{$a_{2,1}{=}1$}} }
              child{ node [arn_r] {\Large{$\begin{array}{ll}{\mathsf{SATC}} \end{array}$}} edge from parent node[pos=0.5,right]{\Large{$a_{2,0}{=}1$}}
              }
              edge from parent node[pos=0.5,left]
                         {\Large{$a_{1,1}{=}1$}}
            } 
             child{ node [arn_r, fill=orange!20] {\Large{${\mathsf{FE}[1]}$}}
              child{ node [arn_r] {\Large{$\begin{array}{ll} {\mathsf{SATC}}\end{array}$}} edge from parent node[pos=0.5, left] {\Large{$a_{2,1}{=}1$}}
              }
              child{ node [arn_r] {\Large{$\begin{array}{ll}{\mathsf{SATC}}\end{array}$}}
              edge from parent node[pos=0.5,right]{\Large{$a_{2,0}{=}1$}}
              }
              edge from parent node[pos=0.5,right]
                         {\Large{$a_{1,0}{=}1$}}
            } edge from parent node[pos=0.5, sloped, above]
                         {\Large{$a_{0,1}{=}1$}}
    }
; 
\end{tikzpicture}}
\caption{The dependency tree for the case of 
$\forall u_0 \exists e_1 \forall u_1 \exists e_2 \forall u_2 \Phi$. 
The same color of sibling nodes $\funcFEC{i}$ represents that 
the value of $e_{i+1}$ is same at both of these. 
} 
\label{fig:dep}
 \end{figure*}

\textbf{Assert Checker}: $\funcGC$: After the $n$ $\forall\exists$ Checkers finish, the Assertion Checker reads 1 from the variables $a_{0,0}$ and $a_{0,1}$ and reaches the 
assertion $\assert{\texttt{false}}$. This is possible only if all the earlier functions act as intended, which in turn is only possible if the QBF evaluates to true. 

\subsubsection{Roles played by the threads}
The non-deterministic branching between the choices of the gadgets above means that each $\thatthread$ thread executes exactly one of the gadgets. However together they check $\Psi$ in a distributed fashion as one thread passing on a part of its state to the next one by the load-stores for the $a_{\_, 0/1}$ variables as mentioned above. Hence a computation that reaches the assertion requires each thread to play a part in this tableau. We now describe this.

First a set of $2^n$ threads run the $\funcAG$ gadgets and they guess one assignment each such that all possible assignments for the universally quantified variables are covered and such that the existentially quantified variables are chosen such that the semantics of quantifier alternation is respected. Essentially this means the the $2^n$ assignments guessed would be a sufficient witness to the truth of $\Psi$.

Now, $2^n$ threads execute $\funcSATC$ and check that each of the assignments guessed (one thread checks one assignment) satisfies $\Phi$. They produce a `proof' that this check is complete by writing to variables $a_{n, 0/1}$. This also checks the innermost universality is respected. 
At level $n-1$, $2^{n-1}$ threads execute $\funcFEC{n-1}$. Each $\funcFEC{n-1}$
reads 1 from both $a_{n,0}$ and $a_{n,1}$ and reads 0 from exactly one of $t_{e_n}$ or $f_{e_n}$. Depending on the view read 
from the level below, they either write 1 to $a_{n-1,0}$ or to $a_{n-1,1}$. (Prematurely this corresponds to 
the assignments $A_{n,0}$ and $A_{n,1}$ in the proof below.) In essence these threads check that the last quantifier alternation ($\forall u_{n-1} \exists e_{n}$) is respected.
$2^{n-2}$ threads then execute $\funcFEC{n-2}$ at level $n-2$, reading 1 from both $a_{n-1,1}$ and $a_{n-1,0}$, and reading
0 from exactly one of  $t_{e_{n-1}}$ or $f_{e_{n-1}}$. These threads then write 1 
to either $a_{n-2,0}$ or to $a_{n-2,1}$, (representing assignments $A_{n-1,0}$ and $A_{n-1,1}$ in the proof below). These threads check that the second last quantifier alternation $\forall u_{n-1} \exists e_{n-1}$ is respected. 
This continues till two threads execute $\funcFEC{0}$, and  
writes 1 to $a_{0,1}$ or $a_{0,0}$. 
These two writes are read by a thread executing $\funcGC$. 
The views of these threads are all stitched together by the stores and loads they perform on the variables $s$ (for guessing assignments) and $a_{\_,0/1}$ for checking proper alternation.
Figure \ref{fig:dep} illustrates 
how the view (in which the assignments are embedded as described earlier) propagate through these threads for the case of the QBF $\forall u_0 \exists e_1 \forall u_1 \exists e_2 \forall u_2 \Phi$. The nodes represent individual threads executing the corresponding gadget and the edges represent the variable which a child writes to pass on its view to its parent.

\begin{lemma}
$\Psi$ is true iff the $\assert{\texttt{false}}$ statement 
is reachable in $\pspacelibrary$.
\label{lem:pspace-hard}
\end{lemma}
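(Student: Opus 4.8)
The plan is to prove the two directions of the equivalence separately, treating the construction of Figure~\ref{fig:pspace} as a ``distributed tableau'' for the QBF $\Psi = \forall u_0 \exists e_1 \cdots \exists e_n \forall u_n\, \Phi$. The central invariant, which I would state and prove first, is a \emph{view--assignment correspondence}: for any reachable message $(s, 1, \aview)$ in the memory, the partial function sending $v \mapsto \mathsf{true}$ if $\aview(t_v) = 0$ and $v \mapsto \mathsf{false}$ if $\aview(f_v) = 0$ is a total assignment to $\mathit{Vars}(\Psi)$ satisfying $\Phi$; and conversely every $\Phi$-satisfying assignment arises from some reachable such message. The forward half of this invariant holds because $\funcAG$ writes $1$ (with a strictly positive timestamp, by the RA store rule) to exactly one of $t_v, f_v$ for each $v$ before emitting $(s,1,\aview)$, so in $\aview$ exactly one of the two has timestamp $0$; and $\funcSATC$, reading such a message, can only pass each $\mathsf{check}(i)$ clause-gadget if some literal of clause $i$ has its variable at timestamp $0$ in $\aview$ --- i.e. is satisfied. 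The converse half is the ``infinite supply'' of threads: since there are unboundedly many $\thatthread$ threads, for \emph{every} $\Phi$-satisfying assignment we can dedicate one $\funcAG$ thread to guess it.

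Next I would set up the bookkeeping of the $a_{i,j}$ variables. I would prove by downward induction on $i$ (from $n$ to $0$) the claim: the message $(a_{i,j}, 1, \aview)$ is generated in some computation iff there is a set $A$ of $\Phi$-satisfying assignments that (i) all agree on $e_1, \dots, e_i$, (ii) assign $u_i \mapsto j$ if $i$ appears (for $i = n$ this is the base case, produced by $\funcSATC$), and (iii) such that the ``sub-QBF below level $i$'' $\forall u_i \exists e_{i+1} \cdots \forall u_n\, \Phi$ evaluates to true under the fixed prefix-assignment encoded in $\aview$ restricted to $u_0, e_1, \dots, e_{i-1}, u_{i-1}$. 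Actually it is cleaner to phrase the induction directly in QBF terms: \emph{$a_{i,0}$ and $a_{i,1}$ are both generatable (with views extending a common assignment to $u_0, e_1, \ldots, u_{i-1}$) iff $\forall u_i \exists e_{i+1} \cdots \forall u_n\, \Phi$ holds under that prefix}. The inductive step uses exactly the three checks inside $\funcFEC{i}$: the two $\assume{(a_{i+1,j}=1)}$ statements force \emph{both} truth values of $u_{i+1}$ to have been certified below (universality), the $(\assume{(f_{e_{i+1}}=0)} \choice \assume{(t_{e_{i+1}}=0)})$ check forces the two certified sub-branches to agree on $e_{i+1}$ --- because if they disagreed, a thread reading both $a_{i+1,0}$ and $a_{i+1,1}$ messages would join views making \emph{both} $t_{e_{i+1}}$ and $f_{e_{i+1}}$ have positive timestamp, blocking both $\assume$ branches --- and then the write to $a_{i,j}$ propagates the $u_i$ choice upward. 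The dependency-tree picture of Figure~\ref{fig:dep} is the witness structure: a computation reaching $\assert{\mathsf{false}}$ via $\funcGC$ (which needs both $a_{0,0}$ and $a_{0,1}$) unfolds into a tree of $\funcFEC{}$ and $\funcSATC$ threads whose leaves are $2^n$ $\funcSATC$/$\funcAG$ pairs, one per universal-assignment branch, exactly matching a winning $\exists$-strategy for $\Psi$.

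Wrapping up: the ``if'' direction ($\Psi$ true $\Rightarrow$ assertion reachable) is constructive --- from a winning strategy for the existential player, spawn the $2^n$ $\funcAG$ threads realizing all universal assignments (with the existential choices dictated by the strategy, hence consistent across branches in the required way), then $2^n$ $\funcSATC$ threads, then $2^{n-1}$ copies of $\funcFEC{n-1}$, \ldots, down to one $\funcFEC{0}$ and one $\funcGC$, interleaving them so that every load has its message available; validity under RA is routine since no thread ever needs a timestamp comparison it cannot satisfy. The ``only if'' direction takes a computation reaching $\assert{\mathsf{false}}$ and reads off, via the invariant and the downward induction, that $\forall u_0 \exists e_1 \cdots \forall u_n\,\Phi$ holds, i.e. $\Psi$ is true. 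I expect the main obstacle to be the \emph{only-if} direction of the existentiality check: rigorously arguing that the RA view-join mechanism genuinely forces agreement on $e_{i+1}$ across the two sub-branches. One must show that the two messages $a_{i+1,0}$ and $a_{i+1,1}$ that a single $\funcFEC{i}$ thread consumes carry views that, after joining, record \emph{conflicting} timestamps on $e_{i+1}$ whenever the underlying assignments disagree there, and that this conflict is unavoidable --- there is no alternative pair of $a_{i+1,j}$ messages it could have read that hides the disagreement. This requires carefully tracking how the $\funcAG$-chosen assignment view flows through $\funcSATC$ into the $a_{n,j}$ messages and then monotonically up the $\funcFEC{}$ chain, which is where the bulk of the technical care lies; everything else is a bounded case analysis over the finitely many gadget shapes.
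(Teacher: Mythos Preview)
Your plan is the paper's: downward induction on the level $i$, tying generability of both $a_{i,0}$ and $a_{i,1}$ to truth of the level-$i$ residual QBF, with the key step being that the view-join of the two $a_{i+1,j}$-reads followed by the $\assume$ on $e_{i+1}$ forces the existential witness to be uniform across both $u_{i+1}$-branches. Two phrasings need tightening. First, your ``central invariant'' is off by one gadget: a reachable $(s,1,\aview)$ encodes a \emph{total} assignment but not a $\Phi$-\emph{satisfying} one, since $\funcAG$ performs no check; satisfaction is certified only once a $\funcSATC$ thread has passed $\mathsf{check}(\Phi)$ and produced an $a_{n,j}$-message. Second, your condition (i) ``all agree on $e_1,\dots,e_i$'' is not what the gadgets enforce at level $i$: $\funcFEC{i}$ checks only $e_{i+1}$, and nothing in the subtree rooted there constrains $e_1,\dots,e_i$, so a generatable $(a_{i,j},1,\aview)$ may well have both $t_{e_1}$ and $f_{e_1}$ at positive timestamp. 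The paper sidesteps prefix-tracking by pushing the outer variables \emph{inside} as innermost existentials, setting $\Psi_i=\forall u_i\exists e_{i+1}\cdots\forall u_n\,\exists e_1\cdots\exists e_i\,\exists u_0\cdots\exists u_{i-1}\,\Phi$ and proving that $\Psi_i$ holds iff both $\funcFEC{i}$-labels are reachable (with $\Psi_0=\Psi$). Your prefix-parametrized variant also goes through, but only once you state it as ``both $a_{i,j}$ are generatable \emph{with views that still have timestamp $0$ on the $\pi$-determined side of each prefix pair $(t_v,f_v)$}'' and then use monotonicity of the join to push this clean-prefix condition down to the two children in the inductive step.
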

 This gives us the main theorem 
\begin{theorem}
The verification of safety properties for parametrized systems of the class $\thatthread(\uncassy, \unloopy)$ under RA 
is \pspace-hard.   
\label{thm:pspace-c}
\end{theorem}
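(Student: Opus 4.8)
The statement is immediate from Lemma~\ref{lem:pspace-hard} once two routine points are checked. First, the parametrized system $\pspacelibrary$ is built from the input $\qbfsat$ instance $\Psi$ in polynomial time: it consists of $\mathcal{O}(n)$ gadgets, each of size linear in $|\Phi|$, over $\mathcal{O}(n)$ boolean shared variables, and the query asks for reachability of $\assert{\texttt{false}}$ in $\funcGC$. Second, $\qbfsat$ is \pspace-complete. Since $\pspacelibrary$ uses no registers, no CAS, and only loop-free control flow, the reduction in fact lands inside $\mathsf{Pure}$~RA, so \pspace-hardness already holds for $\thatthread(\uncassy,\unloopy)$, as claimed. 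All of the work is therefore in Lemma~\ref{lem:pspace-hard}, whose proof plan I now sketch.

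For the direction ``$\Psi$ true $\Rightarrow$ $\assert{\texttt{false}}$ reachable'', I would fix a winning strategy for the existential player, i.e.\ Skolem functions $e_i = \sigma_i(u_0,\dots,u_{i-1})$ making every induced total assignment satisfy $\Phi$. This determines the $2^{n+1}$ leaves of the QBF game tree, each labelled by one satisfying assignment, and the computation is then built bottom-up along the dependency tree of Figure~\ref{fig:dep}. For each leaf, a thread running $\funcAG$ writes $1$ to $f_x$ for every $x$ assigned true and to $t_x$ for every $x$ assigned false, then publishes $(s,1,\aview)$; by RA, $\aview(t_x)=0$ iff $x$ is true and $\aview(f_x)=0$ iff $x$ is false. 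A thread running $\funcSATC$ reads this message (its view becoming $\aview$) and executes $\mathsf{check}(\Phi)$: each guard $\assume{t_v=0}$ resp.\ $\assume{f_v=0}$ is enabled exactly because the corresponding timestamp in $\aview$ is $0$, i.e.\ the literal holds; it then writes $a_{n,b}$ with $b$ the value of $u_n$ at the leaf. Moving up, a thread running $\funcFEC{i}$ reads $a_{i+1,0}$ and $a_{i+1,1}$ from its two children --- which differ only in $u_{i+1}$ and, by the strategy, agree on $e_{i+1}$, so $\assume{f_{e_{i+1}}=0}\choice\assume{t_{e_{i+1}}=0}$ is enabled --- and writes $a_{i,b}$ with $b$ the value of $u_i$; finally $\funcGC$ reads $a_{0,0}$ and $a_{0,1}$ and reaches the assertion. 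The one thing to verify is an invariant stating that the view of each thread after its last load encodes precisely the partial assignment of the subtree beneath it; it follows because RA loads take joins and stores only raise timestamps, and parameterization supplies the $\mathcal{O}(2^{n})$ threads needed.

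For the converse, I would take a computation $\arun$ reaching $\assert{\texttt{false}}$ and extract from it the reads-from dependencies among the writes to $s$ and the $a_{\_,0/1}$ variables, yielding a finite tree rooted at the $\funcGC$ thread. The structural claim is that this tree must have the shape of Figure~\ref{fig:dep}: the guard $\assume{a_{i+1,0}=1};\assume{a_{i+1,1}=1}$ in $\funcFEC{i}$ forces two predecessors, one reached via the $a_{i+1,0}$-branch and one via $a_{i+1,1}$; and since enabling $\assume{t_{u_i}=0}\choice\assume{f_{u_i}=0}$ requires the joined view to retain a zero timestamp on one of $t_{u_i},f_{u_i}$, the two predecessors agree on $u_i$, and unrolling the tree downward they agree on $u_0,\dots,u_{i-1}$ as well. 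Unwinding to the leaves gives a full binary tree over $(u_0,\dots,u_n)$ whose leaves are $\funcSATC$ threads. Then: (i) each leaf ran $\mathsf{check}(\Phi)$ successfully, so the assignment read off its $\funcAG$ source satisfies $\Phi$; (ii) the existential check at level $i$ forces every message below that node to agree on $e_{i+1}$, i.e.\ the value of $e_{i+1}$ is independent of $u_{i+1},\dots,u_n$; (iii) the double guard on $a_{i+1,0},a_{i+1,1}$ forces both truth values of every $u_i$ to occur. Together, the leaf assignments form a winning strategy, so $\Psi$ is true.

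The main obstacle is the bookkeeping behind (ii)--(iii) in the converse: one must make precise, and robust against the unbounded pool of threads and messages, the claim that view propagation through the $a_{\_,0/1}$ variables faithfully encodes ``$e_i$ depends only on $u_0,\dots,u_{i-1}$''. Concretely, I would attach to every message occurring in $\arun$ the partial assignment given by its zero-timestamp $t_\cdot/f_\cdot$ coordinates, prove that a $\funcFEC{i}$ thread can fire only when its two input messages' partial assignments are compatible and agree on $e_{i+1}$, and observe that the parameterization can only add further threads and messages --- there being no registers, CAS, or loops --- which merely enlarges the set of readable messages and hence cannot defeat the forcing argument. The polynomial size of the reduction, the forward simulation, and the base cases of the invariant are then routine.
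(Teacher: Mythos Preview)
Your proposal is correct and follows essentially the same approach as the paper. The paper packages the induction slightly differently, defining level-$i$ QBFs $\Psi_i$ (with the outer $i$ quantifiers pushed inside as existentials) and sets $\pA{i},\yA{i}$ of assignments embedded in the views reaching the two branches of $\funcFEC{i}$, then proving by induction on $i$ that $\Psi_i$ holds iff both labels of $\funcFEC{i}$ are reachable; your direct dependency-tree extraction and Skolem-function recovery encode the same invariant, and your identification of the view-join bookkeeping in the converse as the main technical point matches where the paper spends its effort.
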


\subsection{Correctness of the Construction: Proof of Lemma \ref{lem:pspace-hard}}
\label{app:pspace-hard}
We prove that reaching $\assert{\texttt{false}}$  is possible in the parameterized system $\pspacelibrary$ iff the QBF 
$\Psi$ is satisfiable. First we fix some notations. 
Given the QBF $\Psi=\forall u_0 \exists e_1 \dots \exists e_{n}\forall u_n \Phi(u_0,e_1, \dots, u_n)$, we define for $0 \leq i \leq n$, the 
level $i$ QBF corresponding to $\Psi$ as follows.
\begin{enumerate}
	\item For $0 \leq i \leq n-1$, the level $i$ QBF, denoted 
	$\Psi_i$ is defined as 
	$$\Psi_i \equiv \forall u_i \exists e_{i+1} \forall u_{i+1} \exists e_{i+2} \dots \forall u_n \textcolor{blue}{\exists e_1 \exists e_2 \dots \exists e_i \exists u_0 \exists u_1 \dots \exists u_{i-1}} \Phi(u_0, e_1, \dots, u_n)$$ 
	\item For $i=n$, the level $n$ QBF, denoted $\Psi_n$ is defined as 
	$$\Psi_n \equiv \forall u_n \textcolor{blue}{\exists e_1 \dots \exists e_n \exists u_0 \exists u_1 \dots \exists u_{n-1}}
	\Phi(u_0, e_1, \dots, u_n)$$	
	\end{enumerate}
Note that $\Psi_0$ is the same as $\Psi$. 
To prove Lemma \ref{lem:pspace-hard}, we prove the following helper lemmas. For ease of arguments, we add 
some labels in our gadgets, and reproduce them below. 

\begin{figure}[h]
\centering
\setlength{\belowcaptionskip}{-7pt}
\setlength{\abovecaptionskip}{-2pt}
\begin{empheq}[box={\mycode[rounded corners]}]{align*}
\mathsf{choose}(u) &= (\assign{t_{u}}{0}) \choice (\assign{f_{u}}{0})\\
\funcAG &= \mathsf{choose}(u_0); \mathsf{choose}(e_1); \mathsf{choose}(u_1); \cdots; \mathsf{choose}(u_n); (\assign{s}{1})
\end{empheq}
\caption{Implementation of the Assignment Guesser $\funcAG$ gadget}
\label{fig:ass-appendix}
\end{figure}

\begin{lemma}
$\Psi_n$ is true iff we reach the label $\lambda_1$ of the $\funcSAT$ gadget (ref. Figure \ref{fig:sat-appendix}) 
in some thread, and the label $\lambda_2$ of the $\funcSAT$ gadget in some thread.
\label{lem:base} 	
\end{lemma}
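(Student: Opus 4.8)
The plan is to read off from the two labels $\lambda_1$ and $\lambda_2$ two satisfying assignments of $\Phi$, one with $u_n$ set to \texttt{true} and one with $u_n$ set to \texttt{false}, and then to observe that — since $\Psi_n\equiv\forall u_n\,\exists e_1\cdots\exists e_n\,\exists u_0\cdots\exists u_{n-1}\ \Phi$ — having both is exactly what $\Psi_n$ being true asserts. Here $\lambda_1$ (resp.\ $\lambda_2$) is the control point a thread occupies once it has run $\funcSAT$ (Figure~\ref{fig:sat-appendix}), read a message $(s,1,\aview)$ at $\assume{(s=1)}$, passed $\mathsf{check}(\Phi)$, and then passed $\assume{(t_{u_n}=0)}$ (resp.\ $\assume{(f_{u_n}=0)}$). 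I would prove the two directions separately, noting that both labels can in fact be witnessed inside a single computation.

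For the direction from the labels to $\Psi_n$, I would take a thread $t_1$ at $\lambda_1$ and trace it back. It read some $(s,1,\aview)$ at $\assume{(s=1)}$; since $\funcAG$ is the only gadget that stores to $s$ and the initial message on $s$ carries value $0$, that message was produced by the closing $\assign{s}{1}$ of some $\funcAG$ thread $a$. As $a$ performs no loads and executes exactly one store — to $t_x$ or to $f_x$ — in each $\mathsf{choose}(x)$, the view $\aview$ has, for every $x\in Vars(\Psi)$, exactly one of $t_x,f_x$ at timestamp $0$; let $\sigma$ be the induced total assignment ($\sigma(x)=\texttt{true}$ iff $\aview(t_x)=0$). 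Since $\assume{(s=1)}$ is $t_1$'s first action, its view equals $\aview$ immediately afterwards, and since $\mathsf{check}(\Phi)$ only loads the all-zero initial messages, the view stays $\aview$ on the $t_x/f_x$ throughout $\mathsf{check}(\Phi)$. Using the $\mathsf{PureRA}$ restriction that the initial message is the only $0$-valued message on any $t_x/f_x$, passing the disjunct of a clause $C$ then forces $\aview(t_x)=0$ (if the chosen literal is $x$) or $\aview(f_x)=0$ (if it is $\lnot x$), so $\sigma$ makes that literal — hence $C$ — true; thus $\sigma\models\Phi$, and passing $\assume{(t_{u_n}=0)}$ forces $\sigma(u_n)=\texttt{true}$. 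Symmetrically a thread at $\lambda_2$ yields $\sigma'\models\Phi$ with $\sigma'(u_n)=\texttt{false}$, and the two assignments together witness $\Psi_n$.

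For the converse, assuming $\Psi_n$ true I would fix satisfying assignments $\sigma_T,\sigma_F$ of $\Phi$ with $\sigma_T(u_n)=\texttt{true}$ and $\sigma_F(u_n)=\texttt{false}$, and build a computation with two $\funcAG$ threads that realise them: in $\mathsf{choose}(x)$ the first takes the $\assign{f_x}{0}$ branch when $\sigma_T(x)=\texttt{true}$ and the $\assign{t_x}{0}$ branch otherwise (using a fresh per-variable timestamp each time), the second does likewise for $\sigma_F$, and their closing $\assign{s}{1}$ produce messages $(s,1,\aview_T)$ and $(s,1,\aview_F)$ encoding $\sigma_T$ and $\sigma_F$. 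A $\funcSAT$ thread that reads $(s,1,\aview_T)$ can then pass each clause disjunct by reading the initial message on the variable of a literal true under $\sigma_T$ (available since $\aview_T$ is $0$ there), and, as $\sigma_T(u_n)=\texttt{true}$, pass $\assume{(t_{u_n}=0)}$ and do $\assign{a_{n,1}}{1}$, reaching $\lambda_1$; a second $\funcSAT$ thread reading $(s,1,\aview_F)$ reaches $\lambda_2$ through the $\assume{(f_{u_n}=0)}$ branch.

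The main obstacle is purely the bookkeeping that makes the ``assignment read off from a view'' well defined and stable: that the message consumed at $\assume{(s=1)}$ necessarily comes from a completed $\funcAG$ run (so $\aview$ is a total assignment with exactly one of $t_x,f_x$ at timestamp $0$), and that a $\funcSAT$ thread's view on the $t_x/f_x$ variables does not move between that load and the end of $\mathsf{check}(\Phi)$ — this last point being what lets ``$\assume{(t_x=0)}$ passes'' be read interchangeably with ``$\aview(t_x)=0$''. Once these are nailed down, both directions are a routine unfolding of the $\funcAG$ and $\funcSAT$ code against the RA load/store rules, with no timestamp arithmetic beyond ``$0$ vs.\ positive''.
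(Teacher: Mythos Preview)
Your proposal is correct and follows essentially the same approach as the paper's proof: in both directions you match the paper by translating between ``reaching $\lambda_1$/$\lambda_2$'' and ``there exist two satisfying assignments of $\Phi$ differing on $u_n$'', using the view of the $(s,1,\aview)$ message as the carrier of the assignment. Your write-up is in fact more careful than the paper's on the bookkeeping points (that the $(s,1,\aview)$ message must originate from a completed $\funcAG$ run, and that the $\funcSAT$ thread's view on the $t_x/f_x$ variables is unchanged through $\mathsf{check}(\Phi)$), which the paper leaves implicit.
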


\begin{lemma}
For $0 \leq i \leq n-1$, $\Psi_i$ is satisfiable iff we reach the label $\lambda_3$ in the $\funcFEC{i}$ gadget (ref. Figure \ref{fig:aec-appendix}) in some thread, and the label $\lambda_4$ 
in the $\funcFEC{i}$ gadget  
in some thread.\label{lem:induct}	
\end{lemma}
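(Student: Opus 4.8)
Since every variable of $\Psi_i$ is quantified, ``satisfiable'' here means ``true'', and I will use throughout the unfolding
\[ \Psi_i \equiv \forall u_i\, \exists e_{i+1}\, \chi_i, \qquad \chi_i \equiv \forall u_{i+1}\cdots\forall u_n\, \exists e_1\cdots e_i\, \exists u_0\cdots u_{i-1}\, \Phi, \]
together with the observation that $\chi_i$ with $u_i,e_{i+1}$ replaced by constants $b,c$ is exactly the QBF obtained from $\Psi_{i+1}$ by deleting $u_i,e_{i+1}$ from its inner existential block and substituting $b,c$; hence $\Psi_i$ is true iff for every $b\in\{0,1\}$ there is $c\in\{0,1\}$ making this restricted copy of $\Psi_{i+1}$ true. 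The proof is by downward induction on $i$, from $i=n-1$ to $i=0$, with Lemma~\ref{lem:base} (which covers the ``level $n$'' behaviour of $\funcSATC$ and the $a_{n,\cdot}$ variables) as the base of the recursion.

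The first thing I would do is strengthen the statement, since ``the two labels $\lambda_3,\lambda_4$ are reached'' is not inductively strong. For each level $0\le i\le n$ and each $b\in\{0,1\}$ I would characterise exactly which messages $(a_{i,b},1,\aview)$ are generatable: precisely those whose view $\aview$ encodes, via ``$\aview(t_x)=0$ means $x$ true, $\aview(f_x)=0$ means $x$ false'' (all other $t_x/f_x$ carrying a nonzero timestamp), a partial assignment $\sigma$ with $\sigma(u_i)=b$ under which the residual prefix $\exists e_{i+1}\forall u_{i+1}\cdots\forall u_n\,\Phi$ holds, with the inner existential witnesses $e_{i+1},\dots,e_n$ recorded consistently in $\aview$. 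The reason this is the right invariant is the interplay of the two mechanisms inside $\funcFEC{i}$: reading \emph{both} $(a_{i+1,0},1,\aview_0)$ and $(a_{i+1,1},1,\aview_1)$ and joining the views \emph{forgets} the variable $u_{i+1}$ (now both $t_{u_{i+1}}$ and $f_{u_{i+1}}$ have nonzero timestamp), which is exactly how $\forall u_{i+1}$ is discharged once both of its instantiations are certified; whereas the guard $\assume{(f_{e_{i+1}}=0)}\choice\assume{(t_{e_{i+1}}=0)}$ can be passed only if the two parent views \emph{agree} on $e_{i+1}$, forcing the Skolem witness for $e_{i+1}$ to be independent of everything quantified inside it. I would establish the analogous strengthening of Lemma~\ref{lem:base} at level $n$ by the same argument as there, tracking $\aview$ rather than only the labels; Lemma~\ref{lem:induct} then follows as an immediate corollary, since $\lambda_3$ (resp.\ $\lambda_4$) is reachable iff some $(a_{i,1},1,\cdot)$ (resp.\ $(a_{i,0},1,\cdot)$) message is generatable.

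Granting the invariant, the induction step is routine. For the forward direction at level $i$, from truth of $\Psi_i$ I take for each $b\in\{0,1\}$ the Skolem value $c_b$ for $e_{i+1}$ and the certified restricted copy of $\Psi_{i+1}$ it yields; the strengthened induction hypothesis supplies $(a_{i+1,0},1,\cdot)$ and $(a_{i+1,1},1,\cdot)$ messages whose views carry matching $e_{i+1}$-components, so a $\funcFEC{i}$ thread can read both, pass the $e_{i+1}$-consistency guard, pass the $u_i=b$ guard, write $1$ to $a_{i,b}$ and reach the corresponding label; doing this for $b=1$ and $b=0$ reaches both $\lambda_3$ and $\lambda_4$. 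For the converse, from the two $\funcFEC{i}$ runs reaching $\lambda_3,\lambda_4$ I extract the parent $a_{i+1,\cdot}$ messages they consumed, apply the hypothesis to conclude that $\chi_{i+1}$ holds under the encoded assignments, read the common value of $e_{i+1}$ off the passed guard, and assemble these into a Skolem function for $\exists e_{i+1}$ over $\forall u_i$, i.e.\ a witness that $\Psi_i$ is true; the base case $i=n-1$ is the same argument with the strengthened Lemma~\ref{lem:base} supplying the facts about the $a_{n,\cdot}$ messages.

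The step I expect to be the real obstacle is formulating the strengthened invariant tightly enough that the $\forall$-forgetting via joins and the $\exists$-independence via the $t_{e_{i+1}}/f_{e_{i+1}}$ guard both come out exactly right, and in particular ruling out spurious certifications --- e.g.\ a $\funcFEC{i}$ thread reading a message written by a ``wrong'' branch, or slipping a partial/stale view past a guard. This is where the deliberate disjointness of the $t_x,f_x,a_{i,b},s$ variables (so that $a_{i,b}$ is written \emph{only} by $\funcFEC{i}$, and $a_{n,b}$ only by $\funcSATC$) must be combined with the monotonicity of RA views; getting that bookkeeping right is precisely what turns the reduction into a clean ``iff'' and, chaining Lemma~\ref{lem:induct} down to $\funcFEC{0}$ and then $\funcGC$, yields Lemma~\ref{lem:pspace-hard}.
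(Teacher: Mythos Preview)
Your proposal is correct and follows essentially the same approach as the paper. The paper, too, strengthens the statement to track the views: it introduces the sets $\mathcal{A}_i$ and $\mathcal{B}_i$ of assignments \emph{embedded} in the views reaching $\lambda_3$ resp.\ $\lambda_4$ of $\funcFEC{i}$, proves a helper observation (Lemma~\ref{obs}) giving a recursive description of $\mathcal{A}_i,\mathcal{B}_i$ in terms of $\mathcal{A}_{i+1}\uplus\mathcal{B}_{i+1}$ filtered by a fixed value of $e_{i+1}$ and the bit of $u_i$, and then runs the same two--direction induction you outline, with Lemma~\ref{lem:base} as the base. Your ``partial assignment encoded in the view'' is the same data as the paper's ``set of embedded assignments'', and your identification of the two key mechanisms (join of the $a_{i+1,0}/a_{i+1,1}$ views forgets $u_{i+1}$; the $t_{e_{i+1}}/f_{e_{i+1}}$ guard forces agreement on $e_{i+1}$) is exactly what drives the paper's argument.
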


\begin{lemma}
$\assert{\texttt{false}}$ is reachable iff 
we reach 
the label $\lambda_3$ in  the $\funcFEC{0}$ gadget  
in some thread, and the label $\lambda_4$ 
in the $\funcFEC{0}$ gadget  
in some thread. 
\label{lem:goal}	
\end{lemma}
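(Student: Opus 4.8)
The plan is to read Lemma~\ref{lem:goal} off directly from the syntactic shape of the $\funcGC$ gadget, $\funcGC = \assume{(a_{0,0} = 1)}; \assume{(a_{0,1} = 1)}; \assert{\texttt{false}}$, by identifying the only places in $\com$ that can produce value-$1$ messages on $a_{0,0}$ and $a_{0,1}$. We take $n \geq 1$ so that the gadget $\funcFEC{0}$ exists; the degenerate case $n = 0$ is a direct check with $\funcSATC$ playing the role of $\funcFEC{0}$. We fix, as in Lemma~\ref{lem:induct}, $\lambda_3$ to be the terminal label of the branch of $\funcFEC{0}$ executing $\assign{a_{0,1}}{1}$ and $\lambda_4$ the terminal label of the branch executing $\assign{a_{0,0}}{1}$ (the argument is symmetric under swapping these, so the exact convention is immaterial).

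\emph{Forward direction.} Suppose an initialized computation $\arun$ reaches $\assert{\texttt{false}}$. Some $\thatthread$ thread must run $\funcGC$ to completion, hence must read value $1$ from $a_{0,0}$ and value $1$ from $a_{0,1}$. Since every variable is initialized to $0$ and stores only write $1$ in $\mathsf{Pure}$ RA, such a read requires a message $(a_{0,j}, 1, \cdot)$ to already lie in memory. Scanning Figure~\ref{fig:pspace}: $\assign{a_{0,1}}{1}$ occurs only inside $\funcFEC{0}$, on the branch guarded by $\assume{(t_{u_0} = 0)}$, so the thread that placed this message reached (and took the step at) label $\lambda_3$; likewise $\assign{a_{0,0}}{1}$ occurs only inside $\funcFEC{0}$ on the branch guarded by $\assume{(f_{u_0} = 0)}$, i.e. label $\lambda_4$. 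So both $\lambda_3$ and $\lambda_4$ are reached in $\arun$.

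\emph{Backward direction.} Suppose $\arun$ is an initialized computation in which some thread reaches $\lambda_3$ of $\funcFEC{0}$ and some thread reaches $\lambda_4$ of $\funcFEC{0}$; then some extension of $\arun$ has, in memory, messages $(a_{0,1}, 1, \aview_3)$ and $(a_{0,0}, 1, \aview_4)$. Since the system is parameterized, we may pick a thread still in its initial configuration and have it run $\funcGC$: it loads $(a_{0,0}, 1, \aview_4)$ — unblocked since its timestamp for $a_{0,0}$ is $0$ — then loads $(a_{0,1}, 1, \aview_3)$, then performs $\assert{\texttt{false}}$. The one point needing care is that the second load is not blocked, i.e. that joining in $\aview_4$ has not pushed the reading thread's timestamp on $a_{0,1}$ above $\aview_3(a_{0,1})$. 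This holds because, by inspection of all five gadgets, $a_{0,0}$ and $a_{0,1}$ are \emph{written} only inside $\funcFEC{0}$ and are \emph{read} by no gadget other than $\funcGC$; consequently no view accumulated along $\arun$ by a $\funcFEC{0}$ thread that writes $a_{0,0}$ carries a non-zero timestamp on $a_{0,1}$, so $\aview_4(a_{0,1}) = 0$ (and symmetrically if the load order is swapped). Hence $\assert{\texttt{false}}$ becomes reachable.

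The main — and essentially the only — obstacle inside this lemma is that last view-bookkeeping step in the backward direction; everything else is immediate from the shape of $\funcGC$ and $\funcFEC{0}$, since the $a_{0,j}$ variables serve purely as a handoff channel between $\funcFEC{0}$ and $\funcGC$. The genuine work of the reduction lies in the companion Lemmas~\ref{lem:base} and~\ref{lem:induct}; combining Lemma~\ref{lem:induct} at $i = 0$ (where $\Psi_0 = \Psi$) with the present lemma yields Lemma~\ref{lem:pspace-hard}: $\Psi$ is true iff both $\lambda_3$ and $\lambda_4$ of $\funcFEC{0}$ are reached iff $\assert{\texttt{false}}$ is reachable.
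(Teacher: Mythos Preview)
Your proof is correct and follows the same approach as the paper's, which simply observes that $a_{0,0}$ and $a_{0,1}$ are written exclusively at $\lambda_4,\lambda_3$ of $\funcFEC{0}$ and read exclusively in $\funcGC$. Your argument is in fact more careful than the paper's: the paper's proof is a terse two-liner that does not spell out the view-bookkeeping point (why the second $\assume$ in $\funcGC$ cannot be blocked by the view inherited from the first load), which you handle explicitly and correctly.
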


In the following, 
we write $\Phi$ for $\Phi(u_0, e_1, \dots, e_n, u_n)$ since the 
free variables of $\Phi$ are clear.

\begin{figure}[h]
\centering
\setlength{\belowcaptionskip}{-7pt}
\begin{empheq}[box={\mycode[rounded corners]}]{align*}
\funcSATC = &\assume{(s = 1)}; \mathsf{check}(\Phi); \lambda_0: \mathsf{skip}; \\
&[(\assume{(t_{u_n} = 0)};  \assign{a_{n,1}}{1}; \lambda_1 : \mathsf{skip};)\\ 
& \choice\\
&(\assume{(f_{u_n} = 0)};  \assign{a_{n,0}}{1}; \lambda_2 : \mathsf{skip};)]
\end{empheq}
\caption{Implementation of the SAT Checker $\funcSATC$ gadget with labels $\lambda_0, \lambda_1, \lambda_2$}
\label{fig:sat-appendix}
\end{figure}

\subsubsection*{Proof of Lemma \ref{lem:base}}
 
Assume $\Psi_n$ is satisfiable.   
Then there are satisfying assignments 
$\alpha_1$ and $\alpha_2$ s.t. $\alpha_1(u_n)=0$, 
$\alpha_2(u_n)=1$, such that $\alpha_1, \alpha_2 \models \Phi$. These assignments $\alpha_1, \alpha_2$ can be guessed by  
$\funcAG$ gadgets 
in two threads, resulting in adding messages 
$(s,1,\view_1)$ and $(s,1,\view_2)$ to the memory, such that 
$\view_1(f_{u_n})=0$ and $\view_2(t_{u_n})=0$. Correspondingly, there are  $\funcSATC$ gadgets which read from these views, (they read 1 from $s$), 
and check for the satisfiability of $\Phi$ using the $\view_1, \view_2$ 
values of $t_x, f_x$ for $x \in Vars(\Psi)$. Since both are satisfying assignments, the label $\lambda_0$  is reachable in both 
$\funcSATC$ gadgets. One of them will reach the label $\lambda_1$  reading 
$t_{u_n}=0$ (using $\view_2$) and the other 
will reach the label $\lambda_2$ reading $f_{u_n}=0$ (using $\view_1$). 

Conversely, assume that the label $\lambda_1$  of $\funcSATC$ is reachable in one thread, while the label $\lambda_2$  of $\funcSATC$  is reachable in another thread. Then we know that in one thread, we have read a message 
$(s,1,\view_1)$, checked for the satisfiability of $\Phi$ using $\view_1$, and also verified that $\view_1(t_{u_n})=0$, while 
in another thread,    we have read a message 
$(s,1,\view_2)$, checked for the satisfiability of $\Phi$ using $\view_2$, and also verified that $\view_2(f_{u_n})=0$. Thus, we have 2 satisfying assignments to $\Phi$, one where 
$u_n$ has been assigned to 0, and the other, where $u_n$ has been assigned 1. Hence $\Psi_n$ is satisfiable. 
\qed 

\begin{definition}
Let $\view$ be a view. We say that an assignment $\alpha : Vars(\Psi) \rightarrow \{0,1\}$ 
is embedded in $\view$ iff for all $x \in Vars(\Psi)$, $\view(t_x)=0 \Leftrightarrow \alpha(x)=1$ and  
	$\view(f_x)=0 \Leftrightarrow \alpha(x)=0$. The term ``embedded'' is used since the view 
	also has (program) variables outside of $t_x$ and $f_x$.
\end{definition}

For $0 \leq i \leq n$, let $\pA{i}$ and $\yA{i}$ respectively represent the set of assignments which are \emph{embedded} 
in the views reaching the labels $\lambda_3, \lambda_4$ of the $\funcFEC{i}$ gadget. Thus, we know that 
$$\pA{n}=\{\alpha \models \Phi \mid \alpha(u_n)=1\}, ~\text{and}~ 
\yA{n}=\{\alpha \models \Phi \mid \alpha(u_n)=0\}$$ 

\begin{figure}[ht]
\centering
\setlength{\belowcaptionskip}{-7pt}
\setlength{\abovecaptionskip}{3pt}
\begin{empheq}[box={\mycode[rounded corners]}]{align*}
\funcFEC{i} =~ &[\assume{(a_{i+1,0} = 1)}; \assume{(a_{i+1,1} = 1)}]; 
\kappa_1: \mathsf{skip};\\
&
[\assume{(f_{e_{i+1}} = 0)} \choice \assume{(t_{e_{i+1}} = 0)}]; \kappa_2: \mathsf{skip}; \\
& [(\assume{(t_{u_i} = 0)}; \assign{a_{i,1}}{1}; \lambda_3: \mathsf{skip};) \choice (\assume{(f_{u_i} = 0)}; \assign{a_{i,0}}{1};\lambda_4: \mathsf{skip};)]
\end{empheq}
\caption{$\forall\exists$ Checker at level $i$, $\funcFEC{i}$, with labels $\kappa_1, \kappa_2, \lambda_3, \lambda_4$.  We have $n$ such gadgets, one for each level $0 \leq i \leq n-1$.}
\label{fig:aec-appendix}
\end{figure}

\begin{lemma}
For $0 \leq i \leq n-1$, define sets of assignments $$\pA{i,0}=\{\alpha \in \pA{i+1} \uplus \yA{i+1} \mid \alpha(u_i)=1, \alpha(e_{i+1})=0\}$$
	$$\pA{i,1}=\{\alpha \in \pA{i+1} \uplus \yA{i+1} \mid \alpha(u_i)=1, \alpha(e_{i+1})=1\}$$
	$$\yA{i,0}=\{\alpha \in \pA{i+1} \uplus \yA{i+1} \mid \alpha(u_i)=0, \alpha(e_{i+1})=0\}$$
	$$\yA{i,1}=\{\alpha \in \pA{i+1} \uplus \yA{i+1} \mid \alpha(u_i)=0, \alpha(e_{i+1})=1\}$$
	where $\uplus$ denotes disjoint union. 	Then $\pA{i}$ is equal to one of the sets $\pA{i,1}$ or $\pA{i,0}$. Similarly, 
	$\yA{i}$ is equal to one of the sets $\yA{i,1}$ or $\yA{i,0}$.
 \label{obs}
 \end{lemma}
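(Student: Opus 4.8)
\textbf{Proof plan for Lemma~\ref{obs}.}
The plan is to prove Lemma~\ref{obs} by a structural analysis of the gadget $\funcFEC{i}$ (Figure~\ref{fig:aec-appendix}), organised as a downward induction on $i$ that starts from the characterizations of $\pA{n}$ and $\yA{n}$ provided by Lemma~\ref{lem:base}. The key observation is that a thread executing $\funcFEC{i}$ is forced, on its way to $\lambda_3$, to perform three things in order: (i) at $\kappa_1$, consume a value-$1$ message on $a_{i+1,0}$ \emph{and} a value-$1$ message on $a_{i+1,1}$ (the universality check); (ii) at $\kappa_2$, survive the branch $\assume{(f_{e_{i+1}}=0)}\choice\assume{(t_{e_{i+1}}=0)}$ (the existentiality check); and (iii) survive $\assume{(t_{u_i}=0)}$ before writing $1$ to $a_{i,1}$. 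Each step contributes exactly one constraint on the assignment that is carried forward, and the $\lambda_4$ case is symmetric with $u_i$ pinned to the opposite value.

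First I would record what the two loads at $\kappa_1$ bring in: a value-$1$ message on $a_{i+1,1}$ (resp.\ $a_{i+1,0}$) can only have been produced by a thread that reached $\lambda_3$ (resp.\ $\lambda_4$) of $\funcFEC{i+1}$, so by the very definition of $\pA{i+1}$ (resp.\ $\yA{i+1}$) the view attached to it embeds an assignment from $\pA{i+1}$ (resp.\ $\yA{i+1}$). Since a load joins the reading thread's view with the message view, any assignment still embedded after $\kappa_1$ — hence every member of $\pA{i}$ — lies in $\pA{i+1}\uplus\yA{i+1}$. Next, passing $\assume{(t_{u_i}=0)}$ requires the thread's view to keep timestamp $0$ on $t_{u_i}$, which by the embedding correspondence pins $\alpha(u_i)=1$ for the carried assignment (symmetrically, the $\lambda_4$ branch forces $\alpha(u_i)=0$); and the branch at $\kappa_2$ can be taken only if the joined view keeps timestamp $0$ on $f_{e_{i+1}}$ or on $t_{e_{i+1}}$, which forces the two views read at $\kappa_1$ — and thus every assignment carried past $\lambda_3$ — to agree on $e_{i+1}$, fixing a common value $b$. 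Putting the three constraints together yields $\pA{i}\subseteq\pA{i,b}$. For the converse inclusion I would verify that every $\alpha\in\pA{i,b}$ is actually realised: $\alpha$ lies in $\pA{i+1}\uplus\yA{i+1}$, so a matching value-$1$ message on $a_{i+1,1}$ or $a_{i+1,0}$ exists, and — using that both $\pA{i+1}$ and $\yA{i+1}$ are nonempty and, by the induction hypothesis applied at level $i+1$, are single-branch families compatible with the value $b$ — the complementary message needed to clear the universality check at $\kappa_1$ is also present, so some thread can carry $\alpha$ to $\lambda_3$. Hence $\pA{i}=\pA{i,b}$, i.e.\ one of $\pA{i,0}$, $\pA{i,1}$; the argument for $\yA{i}$ is the mirror image.

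The step I expect to be the main obstacle is establishing that the value $b$ is \emph{uniform} across all threads that reach $\lambda_3$: if one thread could reach $\lambda_3$ through the $e_{i+1}=0$ branch while another reaches it through the $e_{i+1}=1$ branch, then $\pA{i}$ would be the union $\pA{i,0}\cup\pA{i,1}$ rather than one of the two, and the equality in the lemma would degrade to a containment. Ruling this out is exactly where the universality check (every $\funcFEC{i}$ thread must consume one $a_{i+1,0}$ message and one $a_{i+1,1}$ message) must be combined with the inductive single-branch structure of $\pA{i+1}$ and $\yA{i+1}$ and with the way sibling $\funcFEC{i}$ nodes share an $e_{i+1}$-value in the dependency tree of Figure~\ref{fig:dep}; the delicate part is the bookkeeping of which messages can coexist in the memory of the simplified semantics and which pairs of views may legitimately be joined. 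The remaining ingredients — translating the RA load/store rules into timestamp-$0$ bookkeeping on the boolean variables $t_x,f_x$, and the monotonicity of views along a run — are routine and can be imported from the surrounding development and from the arguments already used for Lemmas~\ref{lem:base} and~\ref{lem:induct}.
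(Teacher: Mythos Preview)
Your approach matches the paper's: a structural walk through the three stages of $\funcFEC{i}$ (the two loads at $\kappa_1$, the $e_{i+1}$-check at $\kappa_2$, and the $u_i$-check before the write to $a_{i,\_}$), organised as a descent from the base case $i=n$. The paper's own proof is terse: it works out only the step from level $n$ to $n-1$ via the three points (a)--(c), and then closes with ``the proof easily follows for any $\pA{i},\yA{i}$''. It does not give a separate converse-inclusion argument.

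On the uniformity of $b$ that you flag as the main obstacle: the paper does not address it at all. In fact the surrounding text (proof of Lemma~\ref{lem:induct}) speaks of ``a view $\view_A$ that has $\pA{i+1}$ embedded in it'', i.e.\ it treats $\pA{i}$ as the assignment-set carried by \emph{one} thread's view at $\lambda_3$, not as the union over all threads that could ever reach $\lambda_3$. Under that per-thread reading the uniformity question dissolves trivially, since a single thread takes exactly one branch at $\kappa_2$ and so fixes a single $b$. Your more careful set-theoretic reading is exposing an imprecision in the paper's definitions rather than a gap in your plan; the delicate cross-thread bookkeeping you anticipate is not needed, and you can state and prove the lemma per-thread as the paper implicitly does.
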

\noindent{Proof of Lemma \ref{obs}}. We already know the definitions 
of $\pA{n}$ and $\yA{n}$. Consider the case of $\pA{n-1}$ and $\yA{n-1}$. By construction, to reach label $\lambda_3$  
of $\funcFEC{n-1}$,
\begin{itemize}
\item[(a)] we need to have reached labels $\lambda_3, \lambda_4$ 
of $\funcFEC{n}$. The view (say $\view_{A}$) on reaching the label $\kappa_1$ 
in $\funcFEC{n-1}$ has embedded assignments from $\pA{n} \uplus \yA{n}$.   
\item[(b)] To reach the label $\kappa_2$ of $\funcFEC{n-1}$, we need either $f_{e_n}$ to have 
time stamp 0 or $t_{e_n}$ to have time stamp 0
in $\view_{A}$. If we had $\view_{A}(t_{e_n})>0$ and
$\view_{A}(f_{e_n})>0$, then the label  $\kappa_2$  is not reachable. That is, the assignments embedded in 
$\view_{A}$ agree 
on the assignment of $e_n$. 
\item[(c)] To reach the label $\lambda_3$ in $\funcFEC{n-1}$, 
the assignments embedded in $\view_{A}$ agree 
on the assignment of $u_{n-1}$, such that $u_{n-1}$ is assigned 1. 
Thus, $\pA{n-1}$ is obtained from $\pA{n} \uplus \yA{n}$ by keeping those assignments which agree 
on $e_n$ and where $u_{n-1}$ is true. 
    	\end{itemize}

Similarly, to reach label $\lambda_4$ in $\funcFEC{n-1}$, 
\begin{itemize}
\item[(a)] we need to have reached the labels $\lambda_3, \lambda_4$  
of $\funcFEC{n}$.  The view (say $\view_B$) on reaching the label $\kappa_1$ 
in $\funcFEC{n-1}$ has embedded assignments from $\pA{n} \uplus \yA{n}$.   
\item[(b)] To reach label $\kappa_2$, we need either $f_{e_n}$ to have 
time stamp 0 or $t_{e_n}$ to have time stamp 0
in $\view_B$. If we had $\view_B(t_{e_n})>0$ and
$\view_B(f_{e_n})>0$, then the label $\kappa_2$ is not reachable. That is, the assignments embedded in $\view_B$ agree 
on the assignment of $e_n$. 
\item[(c)] To reach the label $\lambda_4$ in $\funcFEC{n-1}$, 
the assignments embedded in $\view_B$ agree 
on the assignment of $u_{n-1}$, such that $u_{n-1}$ is assigned 0. 
Thus, $\yA{n-1}$ is obtained from $\pA{n} \uplus \yA{n}$ by keeping those assignments which agree on $e_n$ and where $u_{n-1}$ is false. 
    	\end{itemize}

The proof easily follows for any $\pA{i}, \yA{i}$, using the definitions of $\pA{i+1}, \yA{i+1}$ as above. \qed

\subsubsection*{Proof of Lemma \ref{lem:induct}}
We give an inductive proof for this, using Lemma \ref{lem:base} as the base case. As the inductive step, assume that $\Psi_{i+1}$ is satisfiable iff we reach the label $\lambda_3$  of the 
$\funcFEC{i+1}$
gadget in some thread, and the label $\lambda_4$  of  
the $\funcFEC{i+1}$ gadget in some thread. 

\medskip 

Assume $\Psi_i$ is satisfiable. We can write $\Psi_{i}$ as 
$\forall u_{i}\exists e_{i+1} \Psi_{i+1}$.
 We show that there is a thread   
  which reaches label $\lambda_3$  
of the $\funcFEC{i}$
gadget  with a view that has $\pA{i}$
\emph{embedded} in it, 
 and there is a thread which 
reaches the label $\lambda_4$  of the $\funcFEC{i}$ gadget 
with a view that has $\yA{i}$ \emph{embedded} in it.
\medskip

By inductive hypothesis, 
since $\Psi_{i+1}$ is satisfiable, 
there is a  thread which reaches the label $\lambda_3$  
of the $\funcFEC{i+1}$ gadget with a view $\view_A$ that has $\pA{i+1}$ embedded in it, 
 and there is a thread which 
reaches label $\lambda_4$ of the $\funcFEC{i+1}$ gadget with a view $\view_B$ 
that has $\yA{i+1}$ embedded in it. 
Note that $a_{i+1,1}, a_{i+1,0}$ have been written 1 by these threads respectively, 
such that $\view_A(a_{i+1,1})>0$ and 
$\view_B(a_{i+1,0})>0$. 
Thanks to this, there is a thread which 
can take on the role of the $\funcFEC{i}$ gadget now. 
This thread begins with a view $\view_C$ which is the merge of $\view_B$ and 
$\view_A$. The label $\kappa_1$ of  this $\funcFEC{i}$ gadget is reachable 
by reading 1 from both $a_{i+1,1}$ and $a_{i+1,0}$, and we want   
 $\view_C(t_{e_{i+1}})=0$ or $\view_C(f_{e_{i+1}})=0$. As seen in item(b) in the proof 
 of observation \ref{obs}, this is possible only if 
 $\view_B(t_{e_{i+1}})=0$ 
 and $\view_A(t_{e_{i+1}})=0$, or 
 $\view_B(f_{e_{i+1}})=0$ 
 and $\view_A(f_{e_{i+1}})=0$.

 By assumption, since $\Psi_i$ is satisfiable, there exists assignments 
   from $\pA{i+1}$ and $\yA{i+1}$ which agree on $e_{i+1}$ and $u_i$. In particular, 
   the satisfiability of  $\Psi_i=\forall u_i \exists e_{i+1} \Psi_{i+1}$ says that 
   we have a set of assignments $S \subseteq \pA{i+1} \uplus \yA{i+1}$ which satisfy $\Psi_i$, 
   such that for all $\alpha \in S$, $\alpha(u_i)=1$ and $\alpha(e_{i+1})$ is some fixed 
   value.  Similarly, the satisfiability of $\Psi_i$ also gives us a 
   set of assignments $S' \subseteq \pA{i+1} \uplus \yA{i+1}$ 
   such that for all $\alpha \in S'$, $\alpha(u_i)=0$ and $\alpha(e_{i+1})$ is some fixed 
   value. It is easy to see that $S=\pA{i}$, while $S'=\yA{i}$.  Thus, 
   the satisfiability of $\Psi_i$ implies the feasibility 
   of the assignments $\pA{i}$ and $\yA{i}$. This in turn, gives us the following. 
   
  Thus, starting with a view $\view_C$ which has embedded assignments 
  $\pA{i+1} \uplus \yA{i+1}$,  it is possible for a thread to 
\begin{enumerate}
\item read 1 from $a_{i+1,1}$ and $a_{i+1, 0}$ (these are present 
in the $\view_C$), 
\item Check that either $t_{e_{i+1}}$ has time stamp 0 in $\view_C$ 
or $f_{e_{i+1}}$ has time stamp 0 in $\view_C$ (this is possible 
since the embedded assignments agree on $e_{i+1}$),
\item Check that $t_{u_i}$ has time stamp 0 in $\view_C$ (this is possible 
since the embedded assignments are such that $u_i$ is assigned 1)   	
\end{enumerate}
This ensures that the thread reaches the label $\lambda_3$ of $\funcFEC{i}$
with a view having $\pA{i}$ embedded in it (notice that the last two checks filter out $\pA{i}$ from  $\pA{i+1} \uplus \yA{i+1}$).

In a similar manner, starting with a view $\view_C$ which has embedded assignments 
  $\pA{i+1} \uplus \yA{i+1}$,  it is possible for a thread to 
\begin{enumerate}
\item read 1 from $a_{i+1,1}$ and $a_{i+1, 0}$ (these are present 
in the $\view_C$), 
\item Check that either $t_{e_{i+1}}$ has time stamp 0 in $\view_C$ 
or $f_{e_{i+1}}$ has time stamp 0 in $\view_C$ (this is possible 
since the embedded assignments agree on $e_{i+1}$),
\item Check that $f_{u_i}$ has time stamp 0 in $\view_C$ (this is possible since the embedded assignments are such that $u_i$ is assigned 0)   	
\end{enumerate}
This ensures that the thread reaches the label $\lambda_4$  of $\funcFEC{i}$
with a view having $\yA{i}$ embedded in it (notice that the last two checks filter out $\yA{i}$ from  $\pA{i+1} \uplus \yA{i+1}$).

\medskip 
Conversely, assume that we have two threads which have reached respectively, labels $\lambda_3, \lambda_4$ 
of $\funcFEC{i}$ gadget having views in which $\yA{i}$ and 
$\pA{i}$ are embedded. 
 We show that $\Psi_i$ 
is satisfiable. 

By the definition of $\pA{i}$, we know that we have assignments from 
$\pA{i+1} \uplus \yA{i+1}$ which agree on $e_{i+1}$, and 
which set $u_i$ to 1. The fact that we reached the label $\lambda_3$ 
of $\funcFEC{i}$ gadget with a view having  $\pA{i}$ embedded in it shows that 
these assignments are feasible. Similarly, 
reaching the label $\lambda_4$  of $\funcFEC{i}$ with 
a view having $\yA{i}$ embedded in it 
shows that we have assignments from 
$\pA{i+1} \uplus \yA{i+1}$ which agree on $e_{i+1}$, and 
which set $u_i$ to 0. The existence of these two assignments 
proves the satisfiability of $\Psi_i$.

\subsubsection*{Proof of Lemma \ref{lem:goal}}
Assume that we reach $\assert{\texttt{false}}$. Then we have read 1 from $a_{0,0}$ and 
$a_{0,1}$. These are set to 1 only when the labels $\lambda_4, \lambda_3$  
of $\funcFEC{0}$ have been visited. The converse 
is exactly similar : indeed if we reach the labels $\lambda_4, \lambda_3$ 
of $\funcFEC{0}$, we have written 1 to 
 $a_{0,0}$ and $a_{0,1}$. This enables 
 the reads of 1 from $a_{0,0}$ and 
$a_{0,1}$ leading to $\assert{\texttt{false}}$.

\begin{theorem}
Parameterized safety verification for $\thatthread(\uncassy, \unloopy)$ is \pspace-hard. 
\label{thm:pspace-c}
\end{theorem}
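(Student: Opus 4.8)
The plan is to reduce from $\qbfsat$, the canonical \pspace-complete problem, to parameterized safety verification for $\thatthread(\uncassy,\unloopy)$. Given a quantified boolean formula $\Psi=\forall u_0\exists e_1\forall u_1\cdots\exists e_n\forall u_n\ \Phi$, I would construct in polynomial time the parameterized system $\com=\pspacelibrary$ of Figure~\ref{fig:pspace}, whose $\thatthread$ threads are loop-free and cas-free (indeed they use only writes of the value $1$, no registers, and data domain $\{0,1\}$, so the hardness holds already for $\mathsf{Pure}$ RA), and prove that $\Psi$ is true iff the $\assert{\texttt{false}}$ statement is reachable in $\com$ (Lemma~\ref{lem:pspace-hard}). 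Polynomiality of the construction is immediate from the shape of the gadgets, so this suffices.

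The core of the encoding is that an assignment to $Vars(\Psi)$ is stored inside the view of a message: for a message $(s,1,\view)$ produced by an $\funcAG$ thread, a variable $v$ is read as \texttt{true} if $\view(t_v)=0$ and as \texttt{false} if $\view(f_v)=0$, and the code of $\funcAG$ (writing $1$ to exactly one of $t_v,f_v$ for each $v$, then writing $1$ to $s$) guarantees exactly one of these timestamps is $0$. Since the RA semantics lets a thread read the initial $0$-message of a variable only while its local view on that variable is still $0$, an $\funcSATC$ thread that loads $(s,1,\view)$ can check satisfiability of $\Phi$ under this assignment clause by clause, and the $\funcFEC{i}$ threads can, by merging the views of two incoming messages, detect when two satisfying assignments \emph{disagree} on an existential variable $e_{i+1}$ --- precisely the case where both $t_{e_{i+1}}$ and $f_{e_{i+1}}$ acquire nonzero timestamp, which blocks the corresponding $\mathsf{assume}$. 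The universality-enforcing variables $a_{i,0},a_{i,1}$ force both truth values of each $u_i$ to have been certified before the computation can descend to level $i-1$, and the unbounded supply of $\thatthread$ threads provides the exponentially many assignment/checker threads realizing the dependency tree of Figure~\ref{fig:dep}.

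For the correctness proof I would introduce, for $0\le i\le n$, the level-$i$ formula $\Psi_i$ obtained from $\Psi$ by pulling $e_1,\dots,e_i,u_0,\dots,u_{i-1}$ inward as existentials (so $\Psi_0=\Psi$), and track the sets $\pA{i},\yA{i}$ of assignments embedded in the views that reach the two terminal labels of $\funcFEC{i}$. The base case (Lemma~\ref{lem:base}) is that $\Psi_n$ holds iff the labels $\lambda_1,\lambda_2$ of $\funcSATC$ are both reachable, which follows from the $\funcAG$/$\funcSATC$ mechanics together with the identification of $\pA{n},\yA{n}$ as exactly the satisfying assignments of $\Phi$ with $u_n$ set to $1$ resp. $0$. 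The inductive step (Lemma~\ref{lem:induct}) shows $\Psi_i$ holds iff both labels $\lambda_3,\lambda_4$ of $\funcFEC{i}$ are reachable; its engine is the combinatorial fact (Lemma~\ref{obs}) that $\pA{i}$ (resp. $\yA{i}$) equals the subset of $\pA{i+1}\uplus\yA{i+1}$ whose members agree on $e_{i+1}$ and set $u_i$ to $1$ (resp. $0$), which mirrors precisely the semantics of the quantifier block $\forall u_i\exists e_{i+1}$. Chaining down to $i=0$ and connecting to $\funcGC$ (Lemma~\ref{lem:goal}) yields Lemma~\ref{lem:pspace-hard}, and hence Theorem~\ref{thm:pspace-c}.

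The main obstacle, and where the argument needs the most care, is the \emph{existentiality check}: showing that the view-merging in $\funcFEC{i}$ faithfully captures ``the choice of $e_{i+1}$ is independent of the variables in $\{u_{i+1},e_{i+2},\dots,u_n\}$''. One must rule out ``cheating'' --- a thread picking up a stale view, or the parameterization being exploited to spawn helper threads that shortcut a check --- and conversely argue that every genuinely true QBF admits a computation threading the embedded views through the entire tableau. Both directions are controlled by the bookkeeping of embedded assignments in Lemma~\ref{obs} together with the monotonicity of views under RA loads: once a timestamp becomes nonzero it stays nonzero, so a blocked $\mathsf{assume}$ can never be unblocked later, and the dependency tree structure forces every $\funcGC$-reaching computation to have certified all $2^n$ assignments in the witness set.
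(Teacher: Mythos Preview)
Your proposal is correct and follows essentially the same approach as the paper: the same reduction from $\qbfsat$ via the construction of Figure~\ref{fig:pspace}, the same view-based encoding of assignments, the same decomposition into the level formulas $\Psi_i$, and the same inductive chain through Lemmas~\ref{lem:base}, \ref{obs}, \ref{lem:induct}, and~\ref{lem:goal} culminating in Lemma~\ref{lem:pspace-hard}. Your emphasis on view monotonicity as the mechanism that rules out ``cheating'' in the existentiality check is exactly the point the paper's proof of Lemma~\ref{obs} relies on.
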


\section{Limits of Semantic Simplification II: \nexp-hardness of $\thatthread(\uncassy,\unloopy) \parallel \thisthread_1(\uncassy)$} 
In this section we show an \nexp~lower bound on the safety verification problem in the presence of a single leader $\thisthread$ thread, $\thatthread(\uncassy, \unloopy)||\thisthread(\uncassy)$.
 The lower bound is obtained with a 
 fragment of RA which does not use registers and surprisingly in which $\thisthread$ also does not perform any compare-and-swap operations. As in the case 
 of the \pspace-hardness, we work with a fixed set of shared memory locations $\setvars$ (also called shared variables) from a finite  data domain $\setvals$. We show the hardness via a reduction from the succinct version of 3CNF-SAT, denoted $\succsat$. 
 Following the main part of the paper, we refer to the distinguished $\thisthread$ thread as the `leader' and individual threads from $\thatthread$ as `contributors'.

\subsection{$\succsat$: succinct satisfiability}

The complexity of succinct representations was studied in the pioneering work \cite{DBLP:journals/iandc/GalperinW83} for graph problems. Typically, the complexity of a problem is measured as a function of some quantity $V$, with the assumption that the input size is polynomial in $V$. If the underlying problem concerns graphs, then $V$ is the number of vertices in the graph, while if the underlying problem concerns boolean formulae, then  
$V$ is the size of the formula. \cite{DBLP:journals/iandc/GalperinW83} investigated the complexity  of graph problems, when the input has an \emph{exponentially  succinct} representation, that is, the input size is polylog in $|V|$, where $V$ is the 
number of vertices of the graph, and showed that succinct representations rendered trivial graph problems NP-complete, while \cite{DBLP:journals/iandc/PapadimitriouY86} showed that graph properties which are NP-complete under the usual representation became \nexp-complete under succinct representations.
$\succsat$ is essentially a exponentially succinct encoding of a 3CNF-SAT problem instance. Let $\phi(x_0, \cdots, x_{2^n-1})$ be a 3CNF formula with $2^n$ variables and $2^n$ clauses. Assume an $n$ bit binary address for each clause. 

A succinct encoding of $\phi$ is a circuit $D(y_1, \cdots, y_n)$ (with size polynomial in $n$), which, on an $n$ bit input  $y_1\cdots y_n$ interpreted as a binary address for clause $c$, generates $3n + 3$ bits, specifying the indices of the 3 variables from $x_1, \cdots, x_{2^n}$ occurring in clause $c$ and their signs (1 bit each). 
Thus, the circuit $D$ provides a complete description of $\phi(x_1, \cdots, x_{2^n})$ when evaluated with all $n$-bit inputs. Define $\succsat$ as the following \nexp-complete \cite{DBLP:journals/iandc/PapadimitriouY86} problem. 

\begin{center}
\it
Given a succinct description $D$ of $\phi$, check whether $\phi$ is satisfiable. 
\end{center}

Adopting the notation above, we assume that we have been given $n$, the formula $\phi$ with $2^n$ boolean variables $\bvars = \{x_0, \cdots, x_{2^n-1}\}$, and the succinct representation $D$ with input variables $\{y_1, \cdots, y_n\}$. Denote the variables in clause $c$ as $\va(c), \vb(c), \vc(c)$ and their signs as $\siga(c), \sigb(c), \sigc(c)$. We denote the $n$-bit address $\Bar{c}$ of a clause $c$ as a (boolean) word $\Bar{c}\in \{0,1\}^n$ and commonly use the variable $\alpha$ to refer to clause addresses. We denote the variable addresses also as $n$-bit (boolean) words and commonly use the the variable $\beta$ to represent them. We construct an instance of the parametrized reachability problem consisting of a $\leader$ leader thread running program $\comlead$ and the $\thatthread$ contributor threads running program $\comcont$. We show that this system is `unsafe' (an $\assert{\texttt{false}}$ is reachable) if and only if the $\succsat$ instance is satisfiable. 

\tikzset{background rectangle/.style={fill=none
}}
\begin{figure}[h]
\centering
\small
\begin{tikzpicture}[codeblock/.style={line width=0.5pt, inner xsep=0pt, inner ysep=5pt}  , show background rectangle]
\node[codeblock] (init) at (current bounding box.north west) {
$
\def\arraystretch{1.5}
\begin{array}{rl}
\rowcolor{black!15}
\com_\thatthread &= \funcCLENC \choice \funcSAT \choice \funcFC{0} \choice  \cdots \choice \funcFC{n-1} \choice \funcGC \\
\rowcolor{black!5}
& \mathsf{choose}(u) = (\assign{t_{u}}{1}) \choice (\assign{f_{u}}{1}) \\
\rowcolor{black!5}
\funcCLENC &= \mathsf{choose}(u_0); \mathsf{choose}(u_1); \cdots; \mathsf{choose}(u_{n-1}); \assign{s}{1}\\
\rowcolor{black!15}
\funcSAT &= ~(\assume{s = 1}); ~\com_{\mathsf{CV}}; ~\com_{\mathsf{Check}}; \\
\rowcolor{black!15}
		   &~((\assume{(t_{u_{n-1}} = 0)}; ~\assign{a_{n-1,1}}{1}) ~\choice ~(\assume{(f_{u_{n-1}} = 0)}; ~\assign{a_{n-1,0}}{1}))\\
\rowcolor{black!5}
\funcFC{i} &=~ \assume{a_{i+1,0} = 1}; ~\assume{a_{i+1,1} = 1}; \\
\rowcolor{black!5}
	 &~((\assume{t_{u_i} = 0}; ~\assign{a_{i,1}}{1}) ~\choice ~(\assume{f_{u_i} = 0}; ~\assign{a_{i,0}}{1}))\\
\rowcolor{black!15 }
	\funcGC &= \assume{(a_{0,0} = 1)}; ~\assume{(a_{0,1} = 1)}; ~\assert{\texttt{false}}
\end{array}$
};
\end{tikzpicture}
\vspace{-0.2cm}
\caption{The contributor program $\com_\thatthread$ used in the reduction. The sub-routines 
$\com_{\mathsf{CV}}$ and  $\com_{\mathsf{Check}}$ are 
described later.
}
\label{fig:nexphapp}
\end{figure}

\subsection{Key features}

The leader running program $\com_\leader$ guesses an assignment to the boolean variables in $\phi$. The contributors running the program $\com_\thatthread$ will be tasked with checking that the assignment guessed by the leader does in fact satisfy the formula $\phi$. They do this in a distributed fashion, where one clause from $\phi$ is verified by one contributor. Then similar to the \pspace-hardness proof, the program $\comcont$ forces the contributors to combine checks for individual clauses as a dependency tree. This is so that the root of the tree is able to reach an assertion failure only if all threads could successfully check their clauses under the leader's guessed assignment. However since all the contributors run the same program, the trick is to enforce that all clauses will be checked.

\textbf{Gadgets.} $\comcont$ consists of a set of gadgets (modelled as `functions' in the program), only one of which may be non-deterministically executed by the contributors, while $\com_\leader$ is the program executed by the leader.

\begin{equation*}
	\comcont = \funcCLENC \choice \funcSAT \choice \funcFC{0} \choice  \cdots \choice \funcFC{n-1} \choice \funcGC 
\end{equation*}
Recall that in the \pspace-hardness proof too, there were similar gadgets which were executed by the $\thatthread$ threads. The gadgets in $\comcont$ do execute various tasks as follows. 
\setlist[itemize]{leftmargin=1cm}
\begin{itemize}
	\item[$\funcCLENC$] guesses an $n$ bit address $\Bar{c}$ of a clause $c$ in $\phi$,
	\item[$\funcSAT$] (1) acquires a clause address $\Bar{c}$ generated by $\funcCLENC$, (2) uses the circuit $D$ to obtain the indices of variables $\va{c}$, $\vb{c}$, $\vc{c}$ in clause $c$, along with its sign (this is done by the sub-routine $\com_\mathsf{CV}$), (3) accesses the assignment made to the variables by the leader (sub-routine $\com_{\mathsf{Check}}$) and (4) the assignment is such that $c$ is satisfied.
	\item[$\funcFC{i}$] ($0 \leq i \leq n-2$) together ensure that the satisfiability of all the clauses in $\phi$ has been checked. This is done by instantiations of $\funcSAT$, in levels (similar to the proof of \pspace-hardness). At the $i$th level, $\funcFC{i}$  checks the $\forall$ universality of the $i^{th}$ address bits of clause $c$.
	\item[$\funcGC$] finally reaches $\assert{\texttt{false}}$, if all the previous functions execute faithfully, implying that the $\succsat$ instance is satisfiable.
\end{itemize}

The non-deterministic branching implies that each $\thatthread$ thread will only be able to execute one of these gadgets. The check for satisfiability of $\phi$ is distributed between the $\thatthread$ threads much like the \pspace-hardness construction. For this distributed check, threads are allocated roles depending upon the function (gadget) they execute. Additionally, the distinguished leader thread is tasked with guessing the assignment. We now describe this.

\textbf{Role of the leader.} We have one leader thread which guesses a satisfying assignment for the boolean variables $\bvars$ as a string of writes made to a special program variable $g$. The writes made to $g$ have $n+2$ values $\valt, \valf, 1, \dots, n$ in a specific order. Let the initial values of all variables in the system be $\init \notin \{\valt, \valf,1,\dots, n\}$. To illustrate a concrete example, consider the case where $n = 3$. Let the guessed assignment for $\bvars$ be $w = \texttt{ftftttff} \in \{\texttt{t,f}\}^{2^3}$, where $\texttt{t}$ denotes true and $\texttt{f}$ false. Then the writes made by the leader are as below, where $\valt$ and $\valf$ are macros for data domain values (other than $\{\init, 1, \cdots, n\}$) representing true and false respectively.
    \begin{equation*}
            \red{\valf} ~ \blu{1} ~ \red{\valt} ~ \blu{2} ~ \red{\valf} ~ \blu{1} ~ \red{\valt} ~ \blu{3} ~ \red{\valt} ~ \blu{1} ~ \red{\valt} ~ \blu{2} ~ \red{\valf} ~ \blu{1} ~ \red{\valf}
	\end{equation*}
	The leader alternates writing a guessed assignment for $x_0, \dots, x_7$ (in $\red{\text{red}}$) with writing a value from $\{1,\dots,n\}$ (in $\blu{\text{blue}}$). The values in $\{1, \dots, n\}$ (here $\{1,2,3\}$) in blue are written in a deterministic pattern as $\blu{1}~ \blu{2} ~ \blu{1} ~ \blu{3} ~\blu{1}~ \blu{2} ~ \blu{1}$, which we call a `binary search pattern' with 3 values, denoted $\bsp(3)$ for short. $\bsp(n)$ is a unique word of length $2^n -1$ over $\{1,\dots,n\}$, defined inductively as follows.
	\begin{align*}
		\bsp(1) &= 1 \\
		\bsp(n) &= \bsp(n-1) \cdot n \cdot \bsp(n-1) \quad \text{ for } n \geq 2
	\end{align*} 
	
	The assignments for $x_0, \dots, x_{2^n-1}$ are interspersed alternately with symbols in $\bsp(n)$ by the leader while writing to $g$. Formally, let $\mathcal{S}(n, w)=\bsp(n) \shuffle w$ represent the perfect shuffle (alternation) of $\bsp(n)$ with the guessed assignment $w\in\{\valt,\valf\}^{2^n}$. The leader writes the word $\mathcal{S}(n, w)$ to $g$. 
	From the example above, $\mathcal{S}(3, \texttt{ftftttff}) = \red{\valf} ~ \blu{1} ~ \red{\valt} ~ \blu{2} ~ \red{\valf} ~ \blu{1} ~ \red{\valt} ~ \blu{3} ~ \red{\valt} ~ \blu{1} ~ \red{\valt} ~ \blu{2} ~ \red{\valf} ~ \blu{1} ~ \red{\valf}$. 
	We show that the shuffle sequence which need to generated is easily implementable by the leader with a polysized program.

	\begin{lemma}
		There exists a program \emph{$\comlead$}, which nondeterministically choses $w \in \{\valt, \valf\}^{2^n}$ and generates the write sequence $\mathcal{S}(n, w)$ on a shared memory location $g$, with the size of the program growing polynomially with $n$.
	\end{lemma}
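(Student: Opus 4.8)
The plan is to give an explicit construction of the program $\comlead$ and argue its size is polynomial in $n$. The key observation is that the word $\mathcal{S}(n,w) = \bsp(n) \shuffle w$ has a recursive structure that mirrors the recursion for $\bsp$: writing $\mathcal{S}(n,w)$ means writing the assignment bit for $x_0$, then the symbol $\blu{1}$, then the bit for $x_1$, then $\blu{2}$ (if $n \ge 2$) or $\blu{1}$, and so on. Since $\bsp(n) = \bsp(n-1)\cdot n\cdot\bsp(n-1)$, the shuffled word decomposes as (first half of the assignment interleaved with $\bsp(n-1)$), then the bit for $x_{2^{n-1}}$ would need care — actually the cleanest decomposition is to think of $\mathcal{S}(n,w)$ as: emit a guessed value in $\{\valt,\valf\}$ to $g$, then emit the appropriate $\bsp$-symbol to $g$, repeated $2^n-1$ times, and finish with one last guessed value. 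So the program is essentially a loop-like structure that counts from $0$ to $2^n-1$ and, at step $j$, emits $\blu{k}$ where $k$ is the position of the lowest set bit pattern dictated by $\bsp(n)$ (equivalently, $\bsp(n)[j] = $ one plus the number of trailing zeros in $j$, reading $\bsp$ indices appropriately).

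First I would make precise the claim that the $j$-th symbol of $\bsp(n)$ (for $1 \le j \le 2^n-1$) equals $1 + \nu_2(j)$ where $\nu_2(j)$ is the 2-adic valuation (number of trailing zeros in the binary representation of $j$); this follows by a straightforward induction on $n$ using $\bsp(n) = \bsp(n-1)\cdot n\cdot \bsp(n-1)$, since the middle position $j = 2^{n-1}$ has valuation $n-1$ giving symbol $n$, and the two halves are index-shifted copies. Next I would write $\comlead$ as a program that maintains an $n$-bit counter (using the register set, or if registers are disallowed as in the statement's spirit, using $n$ auxiliary shared variables read and written only by the leader — but the lemma as stated only asks for a polysized program, so registers or a bounded-size local encoding suffice): at each iteration it nondeterministically writes $\valt$ or $\valf$ to $g$, increments the counter, computes $k = 1+\nu_2(\text{counter})$ by scanning for the lowest set bit (a linear-in-$n$ code fragment), writes $\blu{k}$ to $g$, and loops while the counter is below $2^n-1$; a final nondeterministic write of $\valt$ or $\valf$ to $g$ completes the sequence. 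The control flow has $O(n)$ instructions per iteration-body and uses a $\mathsf{while}$ loop, so the program text has size $O(n \cdot \text{poly}(n))$ — in particular polynomial in $n$.

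The main thing to verify carefully is the correctness of the interleaving count: that exactly $2^n$ assignment bits and $2^n-1$ $\bsp$-symbols are emitted in the correct alternating order, and that the computed $\blu{k}$ values form precisely $\bsp(n)$. This is the induction in the first step combined with a loop-invariant argument on the counter. I would also note that the nondeterministic choice $\choice$ at each step is exactly what realizes "$\comlead$ nondeterministically chooses $w$", and that because the RA semantics lets a thread's own later stores to $g$ use strictly increasing timestamps, the sequence of messages on $g$ faithfully records $\mathcal{S}(n,w)$ in order. I do not expect any genuine obstacle here; the only mild subtlety is bookkeeping the off-by-one at the boundaries of the shuffle (the word starts and ends with an assignment bit, with $\bsp$-symbols strictly in between), which the explicit loop structure handles cleanly.
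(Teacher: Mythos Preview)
The paper does not actually supply a proof of this lemma; it is stated and left to the reader, relying on the recursive shape $\bsp(n) = \bsp(n-1)\cdot n\cdot \bsp(n-1)$ to make a polysize implementation plausible. Your proposal is correct and fills this gap: the identity $\bsp(n)[j] = 1 + \nu_2(j)$ follows by a straightforward induction on $n$, and an $n$-bit counter together with a linear scan for the lowest set bit gives a loop body of size $O(n)$ inside a single $\com^*$.

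A slightly more direct route, closer in spirit to the recursion the paper emphasises, avoids both the counter and the $\nu_2$ computation: write $\comlead$ as $n$ \emph{nested} two-iteration loops, where the loop at level $k$ emits $\assign{g}{k}$ just before its second iteration and has the level-$(k{-}1)$ loop as its body, with innermost body $(\assign{g}{\valt}) \choice (\assign{g}{\valf})$. Unfolding this emits exactly $\mathcal{S}(n,w)$, and the program text is $O(n)$ lines. Both constructions are valid and polysize; yours extracts an explicit closed form for the $\bsp$ symbols, while the nested-loop version mirrors the recursion syntactically. Either would serve as the missing proof.

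One minor point worth flagging: the paper asserts that the whole lower-bound construction is register-free. Your counter (and equally the nested loop indices) naturally live in registers; this is fine for the lemma as stated, but if you want to match the register-free claim you should note that the leader can use $n$ fresh shared variables as private scratch space, since under RA a thread always sees its own most recent write.
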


\textbf{How contributors access variable assignments, intuitively.}
 
	Each contributor wants to check a single clause for which it needs to access the 3 variables and their signs occurring in that clause. Since it pertains to the $\bsp$, we first understand this task and discuss the others (selecting clause, acquiring variable address and sign, etc.) later. For now we assume that the contributor has a variable $x$ with address $\beta$ and sign $\sigma$ wants to access the assignment made to variable $x$ by the leader. 
	
	For boolean variable $x$, the contributor uses the $\bsp(n)$ pattern to locate the assignment made to $x$, by reading a subword of $\mathcal{S}(n, w)$. From program variable $g$, the contributor reads $n+1$ values $\{\valf, 1, \dots, n\}$ or $\{\valt,1, \dots, n\}$ without repetitions, depending upon the sign of $x$ in the clause ($\valf$ if sign is negative, $\valt$ if positive). In the running example, if contributor wants to access $x_2$ from $\red{\valf} ~ \blu{1} ~ \red{\valt} ~ \blu{2} ~ \red{\valf} ~ \blu{1} ~ \red{\valt} ~ \blu{3} ~ \red{\valt} ~ \blu{1} ~ \red{\valt} ~ \blu{2} ~ \red{\valf} ~ \blu{1} ~ \red{\valf}$, it reads the sequence $\blu{2} ~ \red{\valf} ~ \blu{1} ~ \blu{3}$. Likewise, the  value of $x_6$ is obtained  by reading  $\blu{3}~\blu{2}~ \red{\valf}~\blu{1}$, while  for $x_0$, the contributor must read $\red{\valf}~\blu{1}~\blu{2}~\blu{3}$. We note that for each $x\in \bvars$, there is a unique `access pattern', which forces the thread to acquire the assignment of exactly $x$ and not any other variable. In this search, it is guided by the $\bsp$, which acts as an indexing mechanism. It helps the contributor narrow down unambiguously, to the part which contains the value of $x$. 

\subsubsection{Formal description of contributors}

The contributors check that each clause in $\phi$ has been satisfied in a distributed fashion. Each contributor executes one of the functions in $\comcont$. They do this as follows.

    \textbf{Clause Encoding}: $\funcCLENC$: A thread executing $\funcCLENC$ selects, nondeterministically, a clause address $\alpha\in \{0,1\}^n$. This is done by writing 1 to either $t_{u_i}$ or $f_{u_i}$ for all $0 \leq i \leq n-1$. Finally, 1 is written into a special variable $s$. The function $\funcCLENC$ in Figure \ref{fig:nexphapp} describes this. The view of the message $(s, 1, \aview)$ encodes the address $\alpha$ of a clause satisfying
\begin{align*}
	(\aview(t_{u_i}) > 0 \iff \alpha[i] = 0) \text{ and } (\aview(f_{u_i}) > 0 \iff \alpha[i] = 1) \text{ for } 0 \leq i \leq n-1
\end{align*}
	Recall that this is the same encoding technique as used for the \pspace-hardness proof. Each bit is encoded in the view of a message. Overall $2^n$ threads will execute $\funcCLENC$ to cover all the clauses in the formula.

    \textbf{Satisfaction checking (for one clause)}: $\funcSAT$: A thread executing $\funcSAT$ acquires the address $\Bar{c}$ of a clause $c$ through the view $\aview$ by reading the message $(s,1,\aview)$ generated by $\funcCLENC$. This thread has to check the satisfiability of the clause with address $\Bar{c}$. For this, it needs to know the 3 boolean variables $\va(c)$, $\vb(c)$, $\vc(c)$ appearing in $c$. Recall that we have been given, as part of the problem, the circuit $D$ which takes an $n$ bit address $\alpha$ corresponding to some clause as input, and outputs the $3n+3$ bits corresponding to the 3 variables appearing in the clause, along with their signs. We use $D$, and the encoding of a clause address $\Bar{c}$ stored in $\aview$, to compute $D(\alpha)$. We have a polysized sub-routine $\com_{\mathsf{CV}}$ (CV for circuit value) that can compute the circuit value of $D$.
        
	\textbf{Circuit Value}: $\com_{\mathsf{CV}}$: The $\com_{\mathsf{CV}}$ sub-routine takes the address $\alpha$ (of a clause $c$) and converts it into the index of one of the variables in $c$. Thus in essence, $\com_{\mathsf{CV}}$ evaluates the circuit-value problem $D(\alpha)$ by simulating the (polynomially-many) gates in $D$. The idea is to keep two boolean program variables for each node in $D$, and propagate 
	the evaluation of nodes in an obvious way (for instance, if we have $\wedge$ gate with input gates $g_1, g_2$ evaluating  
	  to 0, and 1 respectively, then $t_{\wedge}$ will be written 1). 
	We now briefly explain how circuit value can be evaluated, by taking an example of a single gate.

	\newcommand{\encAddr}{\mathsf{encAddr}}
	For each node $p$ in $D$ we use two boolean program variables, $t_p$ and $f_p$. We say that a view $\aview$ encodes the value at node $p$ if the following holds. We write $\mathsf{encAddr}(\aview)$ to denote the value values for boolean variables encoded in $\aview$.
	\begin{equation}
		(\aview(t_p) > 0 \iff p = 0) \text{ and } (\aview(f_p) > 0 \iff p = 1)
		\label{eqn:viewenc}
	\end{equation}
\begin{wrapfigure}{r}{0.4\linewidth}
\centering
\begin{tikzpicture}[label distance=2mm]
    \node[nand gate US, draw, scale=2] at (0,0) (nand) {};

    \node (i1) at ($(nand) + (-2, 0.5)$) {$t_{i_1}, f_{i_1}$};
    \node (i2) at ($(nand) + (-2, -0.5)$) {$t_{i_2}, f_{i_2}$};

    \draw (nand.output) -- node[above]{$t_{o}, f_{o}$} ($(nand) + (2, 0)$);
	\draw (i1) |- (nand.input 1);
	\draw (i2) |- (nand.input 2);
\end{tikzpicture}
\end{wrapfigure}

	Now assume a thread has a view $\aview_1$ when it wants to evaluate a logic (NAND) gate $G$, with output node $o$ and input nodes $i_1$ and $i_2$. We assume $\aview_1$ must encode the values of $i_1$ and $i_2$ (the thread has evaluated inputs of $G$) and the thread must not have evaluated $G$ before (we have that $\aview_1(t_o) = \aview_1(f_o) = 0$). Assuming that these conditions hold, the thread executes instructions such that the new view $\aview_2$ of the thread (1) differs from $\aview_1$ only on $t_o$ and $f_o$ and (2) $\aview_2$ correctly encodes value of $o$. The function in Figure \ref{fig:nand} evaluates $G$.
\begin{figure}[h]
\centering
\setlength{\belowcaptionskip}{-7pt}
\setlength{\abovecaptionskip}{3pt}
\begin{empheq}[box={\mycode[sharp corners]}]{align*}
\com_{\mathsf{NAND}} = &~(((\assume{f_{i_1} = \init})\choice(\assume{f_{i_2} = \init})); \assign{f_o}{1}) \choice \\
	&~((\assume{t_{i_1} = \init}); (\assume{t_{i_2} = \init}); (\assign{t_o}{1}))
\end{empheq}
\caption{$\com_{\mathsf{NAND}}$ - encoding evaluation for a NAND Gate in views of threads}
\label{fig:nand}
\end{figure}
	The main observation is that a thread can read $\init$ from a variable only if its view on that variable is 0 (since there is only one $\init$ message with timestamp 0). Claim (1) holds trivially since only timestamps of only $t_o$ or $f_o$ may be augmented (reading from $\init$ will not change timestamp). By a little observation we see that the thread can write to $f_o$ if one of $f_{i_\{1,2\}}$ have timestamp 0, implying that one of the inputs to the gate is 0 by the assumption. Then it checks out that the new view on $f_o$ is greater than 0, thus claim (2) holds. The case for $t_o$ may be checked similarly. Since $D$ has polynomially many gates, any thread can evaluate them in topological order, and hence eventually will end up with the evaluation of $D(\alpha)$. Also note that since the thread relied on its internal view, the same set of program variables $\{t_p,~ f_p | p \in G\}$ may be used by all threads (hence crucially avoiding the number of variables to vary with the thread count).

\begin{lemma}
	There exists a sub-routine $\com_{\mathsf{CV}}$ that starting with the view $\aview$ from $(s,1,\aview)$, evaluates the circuit value $D(\alpha)$, where $\alpha$ is the clause address encoded in the variables $t_{u_i}$ and $f_{u_i}$ in $\mathsf{encAddr}(\aview)$. Also, $\com_{\mathsf{CV}}$ is polysized in $n$.
\end{lemma}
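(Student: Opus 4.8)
The statement to prove is that there exists a polynomially-sized sub-routine $\com_{\mathsf{CV}}$ which, starting from the view $\aview$ of the message $(s,1,\aview)$ generated by $\funcCLENC$, evaluates the circuit value $D(\alpha)$ where $\alpha$ is the clause address encoded in the variables $t_{u_i}, f_{u_i}$.

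My plan is to build $\com_{\mathsf{CV}}$ compositionally by topologically sorting the gates of $D$ and concatenating one evaluation gadget per gate. First I would fix the view-encoding convention already stated in the excerpt, equation~\eqref{eqn:viewenc}: a view $\aview$ encodes the boolean value at a node $p$ iff ($\aview(t_p)>0 \iff p=0$) and ($\aview(f_p)>0 \iff p=1$); note that the input nodes $u_0,\dots,u_{n-1}$ of $D$ are exactly the clause-address bits and are already correctly encoded in $\aview$ by construction of $\funcCLENC$, so the base case of the induction comes for free. Then, assuming w.l.o.g. that $D$ is built entirely from NAND gates (any circuit can be converted to a NAND-only circuit with at most polynomial blow-up, and this keeps the gadget uniform), I would use the gadget $\com_{\mathsf{NAND}}$ from Figure~\ref{fig:nand} as the single building block.

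The key steps, in order, are: (i) prove a one-gate invariant — if a thread's view $\aview_1$ correctly encodes the values at the two input nodes $i_1,i_2$ of a NAND gate $G$ with output $o$, and $\aview_1(t_o)=\aview_1(f_o)=0$, then after executing $\com_{\mathsf{NAND}}$ the resulting view $\aview_2$ agrees with $\aview_1$ on all variables except $t_o,f_o$ and correctly encodes the value at $o$. This is the content sketched after Figure~\ref{fig:nand}: a thread can read $\init$ from a variable only when its timestamp there is $0$; writing to $f_o$ is enabled exactly when some input is $0$ (so the NAND output is $1$, matching $\aview_2(f_o)>0$), and writing to $t_o$ is enabled exactly when both inputs are $\init$, i.e. both are $1$ (so the NAND output is $0$, matching $\aview_2(t_o)>0$); in both branches exactly one of $t_o,f_o$ gets a fresh timestamp, so claim (1) also holds. (ii) Order the gates $G_1,\dots,G_m$ of $D$ topologically and set $\com_{\mathsf{CV}} \;=\; \com_{\mathsf{NAND}}^{G_1};\ \com_{\mathsf{NAND}}^{G_2};\ \cdots;\ \com_{\mathsf{NAND}}^{G_m}$, where $\com_{\mathsf{NAND}}^{G_j}$ is the gadget instantiated with the variables $t_{i_1},f_{i_1},t_{i_2},f_{i_2},t_o,f_o$ of $G_j$. (iii) Prove by induction on $j$ that after executing the first $j$ gadgets, the thread's view correctly encodes the value at every node computed by $G_1,\dots,G_j$ and agrees with $\aview$ on $t_{u_i},f_{u_i}$; the topological order guarantees the input-encoding precondition of step (i) is met at each stage, and freshness of the $t_o,f_o$ pair (each node is the output of exactly one gate, so it is touched exactly once) guarantees the $\aview_1(t_o)=\aview_1(f_o)=0$ precondition. (iv) Conclude that after all $m$ gadgets the view encodes $D(\alpha)$ at the output node(s) of $D$. (v) For the size bound, observe $\com_{\mathsf{NAND}}$ has constant size, there are $m = \mathrm{poly}(n)$ gates after the NAND-conversion (since $|D|$ is polynomial in $n$ by hypothesis), the topological sort is computable in polynomial time, and we reuse the same program-variable names $\{t_p,f_p\}_{p \in D}$ across all contributor threads — so $|\com_{\mathsf{CV}}| = \mathrm{poly}(n)$ and the number of shared variables does not grow with the thread count.

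The main obstacle I anticipate is carefully arguing the \emph{freshness} side conditions across the whole concatenation under the RA semantics: the one-gate invariant needs $\aview_1(t_o)=\aview_1(f_o)=0$, and I must make sure that \emph{no earlier gadget and no other thread} can have raised those timestamps for the current thread before $G_j$ is processed. The single-thread part is handled by the ``each node is output of exactly one gate'' observation plus the fact that $\com_{\mathsf{NAND}}$ only ever writes to its own output variables; the cross-thread part is subtler, since another contributor evaluating its own clause also writes to the same $t_p,f_p$ variables — but this only raises timestamps, it never lets our thread read $\init$ when it shouldn't, and (crucially) our thread never needs to \emph{read} $\init$ from an already-evaluated node of \emph{its own} computation because it re-derives the inputs fresh each time from $t_{u_i},f_{u_i}$ down. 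I would phrase this as: the invariant is stated about the thread's \emph{own} view relative to the moment it starts $\com_{\mathsf{CV}}$, and monotonicity of views in RA plus the fact that the gadget's correctness only relies on being able to read $\init$ exactly at the nodes it has not itself written, makes the argument go through regardless of interference. A secondary, lighter obstacle is the NAND-conversion bookkeeping (new auxiliary nodes, keeping it polynomial, handling fan-out by duplicating wires), which is routine and I would only remark on.
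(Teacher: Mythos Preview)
Your proposal is correct and follows essentially the same approach as the paper, which proves the lemma via the discussion preceding it: introduce variables $t_p,f_p$ per node, use the NAND gadget of Figure~\ref{fig:nand}, and concatenate gadgets in topological order over the polynomially many gates of $D$. One small clarification on your interference worry: the cross-thread issue is even simpler than you suggest, since the only loads performed inside $\com_{\mathsf{CV}}$ are of $\init$ messages (which carry the all-zero view), so no load ever raises the executing thread's view on any variable; consequently the thread's view on $t_p,f_p$ changes only through its own stores, which immediately gives both the freshness precondition and the correctness of reading $\init$ from exactly the unwritten one of $t_{i},f_{i}$ at each evaluated input---your phrase ``never needs to read $\init$ from an already-evaluated node'' is not quite right, but the invariant and induction you state are.
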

	
	Once $D(\alpha)$ has been computed, the thread can nondeterministically choose one of the three variables appearing in clause $c$, say $x \in \{\va(c), \vb(c), \vc(c)\}$. For simplicity we include this as a part of the routine $\com_{\mathsf{CV}}$ itself. The address $\beta$ of the variable $x$, the contributor accesses the assignment made by the leader to $x$ and checks if it satisfies  clause $c$. This is done by the routine $\com_{\mathsf{Check}}$. 

    \textbf{Clause Check}: $\com_{\mathsf{Check}}$: Having acquired the address $\beta = \beta_{n-1}, \dots, \beta_0$ and sign $\sigma$ of variable $x$, by executing $\com_{\mathsf{CV}}$, the thread checks that variable $x$ satisfies clause $c$. To faithfully access the assignment to $x$ from the variable $g$, the $\bsp$ 
      guides the thread. The `access pattern' for $x$ denoted by $\mathsf{AP}(n)$ ($\beta, \sigma$ implicit) which is recursively defined as 
\begin{align*}
	\text{for } 0 < i \leq n ~ \mathsf{AP}(i) &= \begin{cases} 
		i \cdot \mathsf{AP}(i-1) &\mbox{ if } \beta_{i-1}=1 \\ 
		\mathsf{AP}(i-1) \cdot i &\mbox{ if } \beta_{i-1}=0 
	\end{cases}  \\
	\text{for checking satisfiability } \mathsf{AP}(0) &= \begin{cases} 
		\red{\valf} &\mbox{ if } \sigma = 0 \\ 
		\red{\valt} &\mbox{ if } \sigma = 1 
	\end{cases}
\end{align*}

	For example if $x_6$, with negative sign ($\sigma = 0$), was to be accessed, then the access pattern would be $\mathsf{AP}(3)=\blu{3} \cdot \mathsf{AP}(2)=\blu{3} \cdot \blu{2} \cdot \mathsf{AP}(1)=\blu{3} \cdot \blu{2}  \cdot \mathsf{AP}(0) \cdot \blu{1}$ = $\blu{3} \cdot \blu{2}  \cdot \red{\valf} \cdot \blu{1}$
    ; likewise, for $x_4$, with negative sign ($\sigma = 0$), the access pattern would be  $\mathsf{AP}(3)= \blu{3} \cdot \mathsf{AP}(2)=
    \blu{3} \cdot \mathsf{AP}(1) \cdot \blu{2}=\blu{3} \cdot \mathsf{AP}(0) \cdot \blu{1} \cdot \blu{2}$ = $\blu{3} \cdot \red{\valf} \cdot \blu{1} \cdot \blu{2}$.

    Going back to our example, if $\red{\valf} ~ \blu{1} ~ \red{\valt} ~ \blu{2} ~ \red{\valf} ~ \blu{1} ~ \red{\valt} ~ \blu{3} ~ \red{\valt} ~ \blu{1} ~ \red{\valt} ~ \blu{2} ~ \red{\valf} ~ \blu{1} ~ \red{\valf}$ was written to $g$ by the leader, it is easy to see that the reads with access pattern for $x_6$ ($\blu{3} \cdot \blu{2}  \cdot \red{\valf} \cdot \blu{1}$) would be successful, since $x_6$ had been assigned to \texttt{false} by the leader while that for $x_4$ ( $\blu{3} \cdot \red{\valf} \cdot \blu{1} \cdot \blu{2}$) would fail since it was assigned \texttt{true} while the contributor wished to read $\red{\valf}$.
	$\mathsf{AP}(0)$ is defined to ensure satisfiability of the clause, $\mathsf{AP}(0)=\red{\valf}$ iff $f_{sign}=0$ (the sign of the variable in the clause is negative)
    and $\mathsf{AP}(0)=\red{\valt}$ iff $t_{sign}=0$ (the sign of the variable in the clause is positive).

    The above recursive formulation gives us a poly-sized sub-routine which reads values matching the $\mathsf{AP}$ sequence. We thus have the following lemma.
          
\begin{lemma}
	There exists a sub-routine $\com_{\mathsf{Check}}$, which, starting with a view $\aview$ encoding 
	(in $t_{d_0},f_{d_0}, \dots, t_{d_{n-1}},f_{d_{n-1}}$ and $t_{sign},f_{sign}$)
	the address and sign of boolean variable $x$ in clause $c$, terminates only if  $c$ is satisfied under the assignment to $x$ made by the leader.
\end{lemma}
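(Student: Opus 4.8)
The plan is to build $\com_{\mathsf{Check}}$ as an $O(n)$-instruction program that replays the access pattern $\mathsf{AP}(n)$ as a short sequence of guarded reads from $g$, and to prove it correct by an induction on $n$ that follows the recursive shape of $\bsp(n)$. First I would unfold $\mathsf{AP}$: from its recursion one sees that $\mathsf{AP}(n)$ for address $\beta = \beta_{n-1}\cdots\beta_0$ and sign $\sigma$ is exactly the symbols $i\in\{1,\dots,n\}$ with $\beta_{i-1}=1$ in decreasing order, then the base symbol ($\valt$ if $\sigma=1$, else $\valf$), then the symbols $i$ with $\beta_{i-1}=0$ in increasing order. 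Since the contributor already carries $\beta$ and $\sigma$ in the view coordinates $t_{d_{i-1}}/f_{d_{i-1}}$ and $t_{sign}/f_{sign}$ produced by $\com_{\mathsf{CV}}$, $\com_{\mathsf{Check}}$ can be taken to be two unrolled loops — iterating $i = n,\dots,1$ and then $i = 1,\dots,n$ — where the body for level $i$ in the first loop is $(\assume{t_{d_{i-1}} = \init}; \assume{g = i})\choice(\assume{f_{d_{i-1}} = \init})$ (value $i$ is read from $g$ exactly when $\beta_{i-1}=1$), the body in the second loop is symmetric with the two guards swapped, and in between we place the base step $(\assume{t_{sign} = \init}; \assume{g = \valt})\choice(\assume{f_{sign} = \init}; \assume{g = \valf})$. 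This is $O(n)$ instructions over the fixed, polynomially sized data domain, so $\com_{\mathsf{Check}}$ is polysized. A useful side remark: every read of $g$ loads one of the leader's messages, whose view is $0$ on all coordinates other than $g$ (the leader never writes $t_{d_i}, f_{d_i}, t_{sign}, f_{sign}$), so these reads leave the encoding of $\beta$ and $\sigma$ in the contributor's view untouched.

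The core is a structural claim about reading from $g$. Write $w\in\{\valt,\valf\}^{2^n}$ for the assignment guessed by the leader and $\mathcal{S}(n,w) = \bsp(n)\shuffle w$ for the word it stores to $g$ as a sequence of consecutive writes, so the corresponding message timestamps on $g$ strictly increase along the word. From $\bsp(n) = \bsp(n-1)\cdot n\cdot\bsp(n-1)$ one obtains $\mathcal{S}(n,w) = \mathcal{S}(n-1,w_L)\cdot n\cdot\mathcal{S}(n-1,w_R)$ with $w_L = w[0..2^{n-1}{-}1]$ and $w_R = w[2^{n-1}..2^{n}{-}1]$, and — the point that drives everything — the symbol $n$ occurs only at the single middle position, because $\bsp(n-1)$ uses only $\{1,\dots,n-1\}$ and $w$ only $\{\valt,\valf\}$. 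I would then prove by induction on $n$ the following invariant: a contributor entering with its view on $g$ strictly below all timestamps of the block carrying $\mathcal{S}(n,w)$ can complete all reads prescribed by $\mathsf{AP}(n)$ for $\beta,\sigma$ \emph{iff} $w[\beta]$ equals the base value demanded by $\mathsf{AP}(0)$ ($\valt$ when $\sigma=1$, $\valf$ when $\sigma=0$); moreover every successful run leaves the view on $g$ at most the last timestamp of that block.

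For the induction, the base case $n=0$ is a single read of $\valt$ or $\valf$ from a one-symbol block. In the step, if $\beta_{n-1}=1$ then $\mathsf{AP}(n)$ begins by reading the unique middle $n$: this pushes the view past all of $\mathcal{S}(n-1,w_L)$, so every subsequent read (never the symbol $n$) must fall inside $\mathcal{S}(n-1,w_R)$, and the hypothesis applies with $\beta' = \beta_{n-2}\cdots\beta_0$ and $w_R[\beta'] = w[\beta]$. If $\beta_{n-1}=0$ then $\mathsf{AP}(n)$ ends by reading $n$; for that final read to be possible the view must still be below the middle timestamp, which forces \emph{every} preceding read to land in $\mathcal{S}(n-1,w_L)$ — the decisive confinement step — so the hypothesis applies in the left sub-block with $w_L[\beta']=w[\beta]$, and conversely, if $w_L[\beta']$ is the demanded value, the hypothesis lets the contributor finish $\mathsf{AP}(n-1)$ in the left sub-block with its view still below the middle, after which it can read the middle $n$. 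Finally I would instantiate the invariant with $n$ the number of address bits: on entering $\com_{\mathsf{Check}}$ the contributor has view $0$ on $g$ (neither $\funcCLENC$, nor $\com_{\mathsf{CV}}$, nor the read of $(s,1,\aview)$ touches the timestamp of $g$), which is below all the leader's $g$-timestamps, so the invariant yields: $\com_{\mathsf{Check}}$ terminates only if $w[\beta]$ is the demanded value, i.e.\ only if the literal of the variable $x$ with address $\beta$ and sign $\sigma$ returned by $\com_{\mathsf{CV}}$ is true under $w$ — hence only if clause $c$ is satisfied by the leader's assignment. (The converse direction of the biconditional, which will be needed for the overall reduction rather than for this lemma, comes out of the same induction.)

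The step I expect to be the main obstacle is the confinement argument in the $\beta_{n-1}=0$ case: each of the symbols $n-1,\dots,1$ occurs in \emph{both} halves of $\mathcal{S}(n,w)$, so a contributor running $\mathsf{AP}(n-1)$ could a priori ``escape'' into the right half; the only thing ruling this out is the obligation to read the unique middle $n$ last, together with monotonicity of views on $g$. Making this rigorous forces the induction hypothesis to be exactly as strong as stated — in particular it must certify that a completed run leaves the view below the last timestamp of its sub-block, so that nested instances compose — and pinning down that invariant is the delicate part.
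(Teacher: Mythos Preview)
Your approach is exactly the one the paper intends — the paper gives no proof beyond pointing at the recursive formulation of $\mathsf{AP}$ — and your decomposition $\mathcal{S}(n,w)=\mathcal{S}(n-1,w_L)\cdot n\cdot\mathcal{S}(n-1,w_R)$, the explicit $O(n)$ program, and the observation that leader messages carry view $0$ on the $t_{d_i},f_{d_i},t_{sign},f_{sign}$ coordinates are all correct.

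There is, however, a real flaw in the invariant you propose. The clause ``every successful run leaves the view on $g$ at most the last timestamp of that block'' is \emph{false} once the block is a proper sub-block of the leader's full write sequence: every symbol in $\{1,\dots,k,\valt,\valf\}$ reoccurs to the right of any level-$k$ sub-block, so a run of $\mathsf{AP}(k)$ can escape. Concretely, take the left half $\mathcal{S}(2,w_L)$ inside $\mathcal{S}(3,w)$ from the paper's running example, with $\beta'=00$ and $\sigma=1$; then $w_L[0]=\valf\neq\valt$, yet the pattern $\mathsf{AP}(2)=\valt\cdot 1\cdot 2$ is completed by reading positions $3,6,12$ of the full sequence, finishing with view $12>7$. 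So both the biconditional in your hypothesis and the view bound fail simultaneously. This is not only a problem in the $\beta_{k-1}=0$ branch you single out: in the $\beta_{k-1}=1$ branch the very first read of the symbol $k$ can just as well hit an occurrence of $k$ to the right of the current block, and nothing in your hypothesis rules that out.

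The repair is small but essential: make the hypothesis \emph{relative to the block}, namely ``entering with view just below the block, the contributor can complete $\mathsf{AP}(k)$ \emph{with all reads taken inside the block} iff $w'[\beta']$ equals the demanded base value''. With this phrasing both branches go through symmetrically — inside the current block the symbol $k$ occurs uniquely at the middle, so in the $\beta_{k-1}=1$ branch the first read is pinned there and the tail is confined to the right sub-block, while in the $\beta_{k-1}=0$ branch your confinement-from-the-trailing-$k$ argument forces the prefix into the left sub-block. At the top level the block is the entire leader sequence, the relativisation is vacuous, and the lemma follows exactly as you conclude.
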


    
%
%
    Until now, a thread which reads a clause from $\funcCLENC$ has checked its satisfiability with respect to the assignment guessed by the leader, using the $\funcSAT$ module. However, to ensure satisfiability of $\phi$, this check must be done for all $2^n$ clauses.
    This is done in levels $0 \leq i \leq n-2$ using $\funcFC{i}$, exactly as in the \pspace-hardness proof. Finally, we reach $\assert{\texttt{false}}$ 
    reading 1 from both $a_{0,0}, a_{0,1}$. However, in this case we do not have to check for alternation, but only for the universality in the assignments.
   
	\textbf{Forall Checker}: $\funcFC{i}$: The $\funcFC{n-2}$ gadget checks the `universality' with respect to the second-last bit of the clause address,
	$\funcFC{n-3}$ gadget does this check with respect to the third-last bit, and so on, till $\funcFC{0}$ does this check for the first bit, ensuring that all clauses have been covered. 
		
	$2^n$ threads execute $\com_{\funcSAT}$, and write 1 to $a_{n-1,1}$ and $a_{n-1,0}$, depending on the last address bit of the clause it checks. Next, $2^{n-1}$ threads execute $\funcFC{n-2}$. A thread executing $\funcFC{n-2}$ reads 1 from both $a_{n-1,0}$ and $a_{n-1,1}$ representing two clauses whose last bits differ; this thread checks that the second last bits in these two clauses agree: it writes 1 to $a_{n-2,0}$ (if the second last bit is 0) or to $a_{n-2,1}$ (if the second last bit is 1). When $2^{n-1}$ threads finish executing $\funcFC{n-2}$, we have covered the second last bits across all clauses. This continues with $2^{n-2}$ threads executing $\funcFC{n-3}$. A thread executing $\funcFC{n-3}$ reads 1 from both $a_{n-2,0}$ and $a_{n-2,1}$ representing two clauses whose second last bits differ and checks that the third last bits in these two clauses agree. Finally, we have 2 threads executing $\funcFC{0}$,  certifying the universality of the first address bit, writing 1 to $a_{0,0}$ and 
	$a_{0,1}$.

	\textbf{Assertion Checker}: $\funcGC$: The assertion checker gadget in Figure \ref{fig:nexphapp}, reads 1 from $a_{0,0}, a_{0,1}$. If this is successful, then we reach the $\assert{\texttt{false}}$.

\subsubsection{Compare and contrast with \pspace-hardness proof} As is evident there are many things common between the two proofs. We now recapitulate the similarities and differences. 
\begin{itemize}
	\item In the \pspace-h we wanted to check for truth of the QBF, hence guessing an assignment was not necessary. Here the leader is tasked with guessing an assignment to the boolean variables.
	\item In \pspace-h we want to check for quantifier alternation in the boolean variables in $\Psi$. Here we want to also check for universality of addresses, i.e. the fact that all clauses have been checked. This makes the $\funcFC{\_}$ gadget a bit simpler than its $\funcFEC{\_}$ counterpart.
	\item In the \pspace-h, the CNF formula, $\phi$ was given in a simple form and hence all threads executing $\funcSAT$ checked the formula. Additionally, given the exponential size of the formula, the task is distributed between (exponentially) many threads. Here CNF formula also was in an encoded form, hence we had to devise the circuit value machinery to extract it from the succinct representation $D$.
\end{itemize}

\subsection{Correctness of the construction}

The proof of this lemma is very close to that of Lemma \ref{lem:pspace-hard}.  
Some of the terminology we use in this proof is borrowed from the proof 
of Lemma \ref{lem:pspace-hard}. As in the case 
of section \ref{app:pspace-hard}, we add some labels 
in the function descriptions for ease of argument in the proof. 
We describe the notations and key sub lemmas required for the proof.

\paragraph*{Notation and Interpretation of Boolean Variables involved in the construction}
\begin{itemize}
    \item We denote by $\alpha_U$, an assignment on the (boolean) variables $\{u_0, u_1, \cdots u_{n-1}\}$, interpreted as the (n-bit) address of a clause. Here $u_{n-1}$ is the most significant bit (MSB) and $u_0$ as the least significant bit (LSB). We view the assignment so generated as $\overline{\alpha_U} \in \{0,1\}^n$ as an $n$-bit vector. $\alpha_U(u_i)$ gives the 
    assignment to $u_i$. 
    \item We denote by $\alpha_D$, an assignment on (boolean) variables $\{d_0, d_1, \cdots d_{n-1}, d_{sign}\}$, interpreted as the (n-bit) index of a variable in $Vars(\Psi)$ and one sign bit. Here $d_{n-1}$ is the MSB and $d_0$ is the LSB. We view the assignment as $\overline{\alpha_D} \in \{0,1\}^{n+1}$ as an $(n+1)$-bit vector.
    \item For an assignment $\overline{\alpha} \in \mathbb{B}^n$, $D_1(\overline{\alpha_U})$,  (similarly $D_2(\overline{\alpha_U})$ and $D_3(\overline{\alpha_U})$) are the $n+1$ bits signifying $\va(\overline{\alpha_U}), \siga(\overline{\alpha_U})$ ($\vb(\overline{\alpha_U}), \sigb(\overline{\alpha_U})$  and $\vc(\overline{\alpha_U}), \sigc(\overline{\alpha_U})$) respectively.
\end{itemize}

\begin{figure}[h]
\centering
\setlength{\belowcaptionskip}{-7pt}
\setlength{\abovecaptionskip}{3pt}
\begin{empheq}[box={\mycode[sharp corners]}]{align*}
\funcSAT = &~(\assume{s = 1}); ~\com_{\mathsf{CV}}; \lambda_1: \mathsf{skip}; ~\com_{\mathsf{Check}}; \\
		   &~((\assume{(t_{u_{n-1}} = 0)}; ~\assign{a_{n-1,1}}{1}) ~\choice ~(\assume{(f_{u_{n-1}} = 0)}; ~\assign{a_{n-1,0}}{1}))
\end{empheq}

\caption{$\funcSAT$ - acquiring a clause $c_i$ and checking satisfiability of that clause, with the label $\lambda_1$}
\label{fig:d-sim-app}
\end{figure}
    
\begin{figure}[ht]
\centering
\setlength{\belowcaptionskip}{-7pt}
\setlength{\abovecaptionskip}{3pt}
\begin{empheq}[box={\mycode[sharp corners]}]{align*}
\funcFC{i} =~ &\assume{a_{i+1,0} = 1}; ~\assume{a_{i+1,1} = 1}; \\
	& ~((\assume{t_{u_i} = 0}; ~\assign{a_{i,1}}{1}; \lambda_3: \mathsf{skip}) ~\choice ~(\assume{f_{u_i} = 0}; ~\assign{a_{i,0}}{1}); \lambda_4: \mathsf{skip})
\end{empheq}
\caption{$\funcFC{i}$ at level $i$ with the labels $\lambda_3, \lambda_4$.  We have $n-1$ such gadgets, one for each level $0 \leq i \leq n-2$}
\label{fig:check-i-app}
\end{figure}

\subsubsection{Acquiring Variable Index and Sign}
\label{app:b1}
We observe that each thread executing a $\funcCLENC$ function 
  makes a (single) write to $s$, with the message $(s, 1, \view)$ where $\view$ has \textit{embedded} in it an assignment $\alpha_U$. We write $\alpha \diamond \view$ to denote that the assignment $\alpha$ is embedded in $\view$. Now a thread $p$ 
  executing a $\funcSAT$ function acquires the assignment $\alpha_U$, and computes (non-deterministically) one amongst $D_1(\overline{\alpha_U})$, $D_2(\overline{\alpha_U})$, $D_3(\overline{\alpha_U})$ reaching the label $\lambda_1$.  The correctness invariant involved is formalized in the following lemma.

\begin{lemma}
Let a thread $p$ executing the $\funcSAT$ function  read a message $(s, 1, \view)$ with $\alpha\diamond \view$. When the $p$ reaches 
the label $\lambda_1$ computing $D_i$ ($i \in \{1,2,3\}$) with $\alpha_D = D_i(\overline{\alpha_U})$. Let the view of the thread be $\view'$. Then we have $\alpha_D\diamond\view'$.
\label{lem:ccv}
\end{lemma}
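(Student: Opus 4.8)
The plan is to argue by induction on the topological order of the gates of the circuit $D$, maintaining as an invariant that at every stage the thread's view correctly encodes (in the sense of~\eqref{eqn:viewenc}) the values at all gates evaluated so far. The base of the induction is the set of input gates of $D$: these correspond exactly to the clause-address bits, and by hypothesis the thread begins $\com_{\mathsf{CV}}$ with the view $\view$ from $(s,1,\view)$ in which $\alpha_U$ is embedded, i.e. $\view(t_{u_i})>0 \iff \alpha_U(u_i)=0$ and $\view(f_{u_i})>0 \iff \alpha_U(u_i)=1$. So the invariant holds at the inputs with $\mathsf{encAddr}(\view)$ giving precisely $\overline{\alpha_U}$.

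For the inductive step I would consider a gate $G$ (w.l.o.g.\ a NAND gate, since NAND is functionally complete and $\com_{\mathsf{NAND}}$ of Figure~\ref{fig:nand} is the building block) with input nodes $i_1,i_2$ already evaluated and output node $o$ not yet evaluated. By the induction hypothesis the current view $\view_1$ encodes the values at $i_1,i_2$ and, since $o$ has not been touched, $\view_1(t_o)=\view_1(f_o)=0$. I would then do the case analysis already sketched in the text: the thread can execute the branch writing $1$ to $f_o$ exactly when it can read $\init$ from $f_{i_1}$ or from $f_{i_2}$, which by the encoding holds iff one of the inputs is $0$, i.e.\ iff the NAND output is $1$; symmetrically, it can execute the branch writing $1$ to $t_o$ exactly when it reads $\init$ from both $t_{i_1}$ and $t_{i_2}$, i.e.\ iff both inputs are $1$, i.e.\ iff the output is $0$. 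In either case the new view $\view_2$ differs from $\view_1$ only on $t_o$ or $f_o$ (reading from an $\init$ message does not change any timestamp), and the augmented timestamp is $>0$ while the other stays $0$, so $\view_2$ encodes the value at $o$; the invariant is preserved. Since $D$ has polynomially many gates and $\com_{\mathsf{CV}}$ processes them in topological order, after all gates are processed the view encodes $D(\overline{\alpha_U})$, in particular the $3n+3$ output bits giving $\va(c),\vb(c),\vc(c)$ and their signs. The thread then non-deterministically selects one of the three variables, which fixes $i\in\{1,2,3\}$ and sets $\alpha_D=D_i(\overline{\alpha_U})$; the resulting view $\view'$ has $\alpha_D$ embedded in the variables $t_{d_0},f_{d_0},\dots,t_{d_{n-1}},f_{d_{n-1}},t_{sign},f_{sign}$, which is exactly the claim $\alpha_D\diamond\view'$.

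The main obstacle, and the point that needs the most care, is verifying that using one fixed set of gate-variables $\{t_p,f_p \mid p \text{ a node of } D\}$ across \emph{all} contributor threads does not create spurious interference: a thread must never be able to read a ``wrong'' encoding left in the shared memory by a concurrently-running thread that evaluated $D$ on a different address. Here I would invoke precisely the structure of the RA semantics exploited elsewhere in the paper: a thread can read $\init$ from a variable $v$ only when its own local view on $v$ is still $0$, so once a thread has augmented the timestamp of some $t_p$ (or $f_p$) it has committed to one value for node $p$, and it is blocked from later reading the opposite value; conversely a fresh thread starting $\com_{\mathsf{CV}}$ reads its clause address afresh from the $(s,1,\view)$ message, so its view on the gate-variables starts at $0$ and it reconstructs the evaluation from scratch, with the $\init$-reads guaranteeing consistency with the address it acquired rather than with any other thread's writes. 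Thus the per-thread invariant argument above is sound regardless of concurrency, and the lemma follows.
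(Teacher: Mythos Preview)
Your proposal is correct and follows essentially the same approach as the paper: the paper does not give a separate formal proof of this lemma but sketches the argument inline via the single-gate $\com_{\mathsf{NAND}}$ analysis and the observation that gates can be processed in topological order, with non-interference guaranteed because each thread relies only on its own local view when testing $\init$. Your write-up formalizes precisely that sketch (induction on topological order, base case from $\alpha_U\diamond\view$, inductive step via the NAND case analysis, and the per-thread view argument for non-interference), so there is no meaningful divergence.
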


\subsubsection{Checking Satisfiability of Clause}
\label{app:b2}
Continuing from above, let $p$ compute $D_i$ in $\com_{\mathsf{CV}}$ 
reaching $\lambda_1$. 
 Then by lemma \ref{lem:ccv}, we have $\alpha_D = D_i(\overline{\alpha})$ embedded in the view of the thread. Now, using $\com_{\mathsf{Check}}$, 
  $p$ checks that the clause with index $\overline{\alpha_U}$ is satisfied by the 
  $n+1$ bits  $\overline{\alpha_D}$ representing a variable and the sign of the variable in the clause.   
   Finally the thread makes writes to one of the program variables $a_{n-1,0}$ or $a_{n-1,1}$. We have the following lemma that shows correctness of this operation.
\begin{lemma}
A thread $p$ can make the write $(a_{n-1,0}, 1)$ (similarly $(a_{n-1,1},1)$) if and only if, clause $\overline{\alpha_U}$ is satisfied and if $\alpha_U(u_{n-1}) = 1$ (similarly $\alpha_U(u_{n-1}) = 0$).
\end{lemma}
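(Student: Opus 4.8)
This is a bookkeeping lemma: it records the base correctness invariant of the $\funcSAT$ gadget, playing the role that the base-case lemma (Lemma~\ref{lem:base}) plays in the \pspace-hardness argument, and it will feed the inductive correctness argument for the $\funcFC{i}$ layers. It is obtained by following a single contributor thread $p$ through $\funcSAT$ (Figure~\ref{fig:d-sim-app}) and chaining the two sub-routine guarantees already in hand: Lemma~\ref{lem:ccv} for $\com_{\mathsf{CV}}$ and the correctness lemma for $\com_{\mathsf{Check}}$. I would carry this out in three steps.

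\emph{Step 1: track $p$'s view and isolate a frame property.} Thread $p$ begins $\funcSAT$ by reading a message $(s,1,\view)$; by construction of $\funcCLENC$ (Figure~\ref{fig:nexphapp}), this $\view$ has a clause address $\alpha_U$ embedded in $t_{u_0},f_{u_0},\dots,t_{u_{n-1}},f_{u_{n-1}}$, in the sense that for each $j$ exactly one of $\view(t_{u_j}),\view(f_{u_j})$ is $0$, the zero one being determined by the bit $\alpha_U(u_j)$. Running $\com_{\mathsf{CV}}$, thread $p$ non-deterministically fixes an index $i\in\{1,2,3\}$ and, by Lemma~\ref{lem:ccv}, reaches $\lambda_1$ with a view $\view'$ that additionally has $\alpha_D=D_i(\overline{\alpha_U})$ embedded in $t_{d_0},f_{d_0},\dots,t_{d_{n-1}},f_{d_{n-1}},t_{sign},f_{sign}$. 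The point to isolate here is a \emph{frame} property: $\com_{\mathsf{CV}}$ only writes to, and only reads non-initial messages from, the circuit-node variables and the $d$-/$sign$-variables, and $\com_{\mathsf{Check}}$ only accesses $g$ together with the $d$-/$sign$-variables; in particular neither routine writes to any $t_{u_j}$ or $f_{u_j}$, nor does its control flow depend on them. Combined with monotonicity of RA views, this shows that the restriction of $p$'s view to $\{t_{u_j},f_{u_j}\}_j$ is unchanged from $\view$ all the way down to the closing choice, so the $\alpha_U$-encoding survives intact.

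\emph{Step 2: use the $\com_{\mathsf{Check}}$ guarantee.} Starting from $\view'$, thread $p$ can complete $\com_{\mathsf{Check}}$ iff the literal of clause $\overline{\alpha_U}$ specified by $\alpha_D$ is satisfied under the leader's guessed assignment. Since $i\in\{1,2,3\}$ was picked non-deterministically and ranges over all three literals of the clause, the disjunction over $p$'s runs collapses to: $p$ can reach the point just after $\com_{\mathsf{Check}}$ holding clause address $\alpha_U$ iff clause $\overline{\alpha_U}$ is satisfied under the leader's assignment.

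\emph{Step 3: read off the closing choice.} To produce the write $(a_{n-1,0},1)$ thread $p$ must pass $\assume{(f_{u_{n-1}}=0)}$, which requires reading the initial message on $f_{u_{n-1}}$, i.e. $p$'s view on $f_{u_{n-1}}$ must still be $0$; by the frame property of Step~1 this holds iff $\alpha_U(u_{n-1})$ takes the value dictated by the $\funcCLENC$ encoding of that bit, and symmetrically the write $(a_{n-1,1},1)$ requires $\view(t_{u_{n-1}})=0$, i.e. $\alpha_U(u_{n-1})$ takes the complementary value. Conjoining this with the conclusion of Step~2 yields the lemma. I do not expect a genuine difficulty here; the only care needed is in stating the frame property cleanly (both that the helper routines leave the $u$-encoding untouched and that they never read it) and in matching the $t/f$ convention of the $\funcCLENC$ encoding with the $0/1$ index on $a_{n-1,\cdot}$.
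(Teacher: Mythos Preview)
Your plan is correct and matches the paper's own argument, which in this case is not a separate proof but the paragraph of prose immediately preceding the lemma: the paper simply says that $p$ acquires $\alpha_U$ from $(s,1,\view)$, that by Lemma~\ref{lem:ccv} it reaches $\lambda_1$ with $\alpha_D=D_i(\overline{\alpha_U})$ embedded, that $\com_{\mathsf{Check}}$ then certifies satisfaction of the clause, and that the final branch on $t_{u_{n-1}}/f_{u_{n-1}}$ fixes the $(n{-}1)$-th address bit. Your three steps are exactly this chain, with the frame property made explicit; your closing caveat about matching the $t/f$ convention to the $0/1$ index on $a_{n-1,\cdot}$ is well placed, since the paper's own conventions here are easy to get crossed.
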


\subsubsection{Checking all Clauses}

In section \ref{app:b1} and section \ref{app:b2} we have discussed how the system can check satisfiability of a single clause. Now, we need to check that each clause satisfied. We do this via additional modules to the PSPACE construction.

Towards this goal, define a \emph{level  predicate} $\issat(u_{n-1}, u_{n-2}, \cdots, u_0)$ denoting that the clause $\overline{\alpha_U} = u_{n-1}\cdots u_1 u_0$ is satisfiable.  Now very similarly to section \ref{app:pspace-hard} we define the following formulae:

For $0 \leq i \leq n-1$,
\begin{equation*}
    \Upsilon_i \equiv \forall u_i \forall u_{i+1} \dots\forall u_{n-1} \textcolor{blue}{\exists u_0 \dots \exists u_{i-1}} \issat(u_{n-1}, u_{n-2}, \cdots, u_0)
\end{equation*}

And we claim the following lemma,

\begin{lemma}
For $0\leq i \leq n-2$, $\Upsilon_i$ is true $\iff$ the labels 
$\lambda_3, \lambda_4$
 in the gadget $\funcFC{i}$
 can be reached.
\label{lem:suc}
\end{lemma}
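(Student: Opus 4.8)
The plan is to prove Lemma~\ref{lem:suc} by induction on $i$, decreasing from $n-2$ down to $0$, following closely the template of Lemmas~\ref{lem:base}--\ref{lem:goal} and~\ref{obs} from the \pspace-hardness argument. The one substantial difference, which makes the succinct case strictly easier, is the absence of quantifier alternation among the clause-address bits: there are no existential witness variables to track, so the gadget $\funcFC{i}$ performs only a \emph{universality} check on the address bit $u_i$ and there is no ``agreement on $e_{i+1}$'' step. Consequently I would discard the agreement-refined sets from Lemma~\ref{obs} and work directly with the sets of clause addresses that are \emph{embedded} (in the $\diamond$-sense recalled in Section~\ref{app:b1}) in the views of threads sitting at the labels $\lambda_3$ and $\lambda_4$ of $\funcFC{i}$.

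Concretely, I would fix the assignment $w$ guessed by the leader (so the predicate $\issat$ is fixed) and, for each level $i$, let $\pA{i}$ (resp.\ $\yA{i}$) be the set of clause addresses $\overline{\alpha_U}$ embedded in the view of some thread that has reached $\lambda_3$ (resp.\ $\lambda_4$) of $\funcFC{i}$. The guard $\assume{(t_{u_i}=0)}$ (resp.\ $\assume{(f_{u_i}=0)}$) immediately forces the invariant $\alpha_U(u_i)=1$ on every address of $\pA{i}$ (resp.\ $\alpha_U(u_i)=0$ on $\yA{i}$), exactly as the ``only a view with timestamp $0$ on $v$ can read the initial message on $v$'' observation does in the \pspace\ proof. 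The base case $i=n-2$ is supplied by Lemma~\ref{lem:ccv} together with the lemma characterizing the writes to $a_{n-1,0}$ and $a_{n-1,1}$ (stated earlier in this section): those messages carry precisely the satisfied clauses whose last bit is $0$ resp.\ $1$, so both variables receive a $1$-write iff for each value of $u_{n-1}$ there is a satisfied clause with that last bit, i.e.\ iff $\Upsilon_{n-1}$ holds. A thread running $\funcFC{n-2}$ must read $1$ from both of them, after which its view embeds $\pA{n-1}\uplus\yA{n-1}$; the $\assume{(t_{u_{n-2}}=0)}$ guard then succeeds iff every address in that union has $u_{n-2}=1$, which, combined with what $\Upsilon_{n-1}$ already certifies about $u_{n-1}$, is exactly $\Upsilon_{n-2}$ restricted to $u_{n-2}=1$; symmetrically for $\lambda_4$ and $u_{n-2}=0$. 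Hence both labels are reachable iff $\Upsilon_{n-2}$ holds.

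For the inductive step I would assume the claim at level $i+1$, i.e.\ $\Upsilon_{i+1}$ holds iff both $\lambda_3$ and $\lambda_4$ of $\funcFC{i+1}$ are reached, equivalently iff both $a_{i+1,1}$ and $a_{i+1,0}$ receive $1$-writes together with the embedded-set invariants. A thread running $\funcFC{i}$ reads $1$ from $a_{i+1,0}$ and from $a_{i+1,1}$, which by induction is possible exactly when $\Upsilon_{i+1}$ holds; its view then embeds $\pA{i+1}\uplus\yA{i+1}$, and, as in the base case, the guard $\assume{(t_{u_i}=0)}$ (resp.\ $\assume{(f_{u_i}=0)}$) succeeds iff every address of that union has $u_i=1$ (resp.\ $=0$). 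Together with what $\Upsilon_{i+1}$ guarantees about the higher bits $u_{i+1},\dots,u_{n-1}$ this is precisely $\Upsilon_i$ restricted to $u_i=1$ (resp.\ $u_i=0$); the thread then writes $1$ to $a_{i,1}$ (resp.\ $a_{i,0}$), propagating the refined embedded set to level $i-1$ (and, at $i=0$, feeding $\funcGC$ exactly as in Lemma~\ref{lem:goal}). Thus both $\lambda_3$ and $\lambda_4$ of $\funcFC{i}$ are reachable iff $\Upsilon_i$ holds. Throughout, the fact that the subroutines $\com_{\mathsf{CV}}$ and $\com_{\mathsf{Check}}$ may be re-used verbatim by every contributor (so the number of shared variables is independent of the thread count), and that neither $\tplus{\atimestamp}$-style timestamps nor the cas-free environment ever block a store, is what keeps the merged computation valid under the simplified semantics.

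The step I expect to be the real obstacle is the bookkeeping of the embedded-address invariant across the view merges: one must show that reading a \emph{single} message from $a_{i+1,0}$ and a single one from $a_{i+1,1}$ already suffices --- no thread needs to collect all satisfied clauses with a given high-bit profile, since each such message already ``stands for'' the entire sub-cube it represents by the invariants propagated from the base case --- and dually that this is necessary. This is the succinct analogue of Lemma~\ref{obs}, and its soundness hinges on the faithfulness of $\com_{\mathsf{CV}}$ and $\com_{\mathsf{Check}}$, namely that a thread is genuinely \emph{forced} to commit to exactly the clause index, variable index and sign dictated by the view it read. Since that faithfulness is the content of Lemma~\ref{lem:ccv} and the $a_{n-1,\cdot}$-write lemma, the remaining work is to thread these statements through the induction and verify the two endpoints ($i=n-2$ against $\funcSAT$ and $i=0$ against $\funcGC$).
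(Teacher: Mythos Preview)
Your proposal is correct and follows essentially the same approach the paper intends: the paper's own proof of Lemma~\ref{lem:suc} is a one-line deferral to Lemmas~\ref{lem:base} and~\ref{lem:induct}, and you have faithfully instantiated that template while correctly observing that the absence of the $\exists e_{i+1}$ agreement check makes $\funcFC{i}$ strictly simpler than $\funcFEC{i}$. Your final paragraph also correctly pinpoints the one place where care is needed---that a single message on $a_{i+1,b}$ already certifies the full sub-cube via the recursively propagated invariant---which is exactly the content of the (simplified) analogue of Lemma~\ref{obs}.
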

The proof of Lemma \ref{lem:suc} follows exactly in similar lines to that of Lemma \ref{lem:base} and Lemma \ref{lem:induct}. 
Finally, note that $\Upsilon_0$ is equivalent to the $\succsat$ instance being satisfiable. We have the following final lemma to show correctness of the entire construction.

\begin{corollary}
We can reach the $\assert{\texttt{false}}$ assertion in the $\funcGC$ gadget 
 $\iff \Upsilon_0$ is true.
\end{corollary}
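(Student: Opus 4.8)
The plan is to obtain this corollary as an immediate consequence of Lemma~\ref{lem:suc} instantiated at level $i = 0$, together with an inspection of the $\funcGC$ gadget; the argument is the exact analogue of the proof of Lemma~\ref{lem:goal} in the \pspace-hardness section.

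First I would unfold the code of $\funcFC{0}$ from Figure~\ref{fig:check-i-app} and record the following bookkeeping fact: a thread reaches the label $\lambda_3$ inside $\funcFC{0}$ exactly after executing $\assign{a_{0,1}}{1}$, and reaches $\lambda_4$ exactly after executing $\assign{a_{0,0}}{1}$. Since these are the only instructions in the whole system that write to $a_{0,0}$ or $a_{0,1}$ — every other write to an $a$-variable, in $\funcSAT$ and in the higher $\funcFC{j}$ with $j > 0$, targets some $a_{i,j}$ with $i > 0$ — and since these variables start from the value $\init \neq 1$, a message $(a_{0,1}, 1, \aview)$ (resp. $(a_{0,0}, 1, \aview)$) appears in memory along a run if and only if some thread reaches $\lambda_3$ (resp. $\lambda_4$) in $\funcFC{0}$ along that run. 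I would also invoke persistence of the RA memory: once both writes have occurred, both messages stay available.

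Next I would analyse $\funcGC = \assume{(a_{0,0}=1)}; \assume{(a_{0,1}=1)}; \assert{\texttt{false}}$. A thread executing it reaches $\assert{\texttt{false}}$ iff it can read value $1$ from both $a_{0,0}$ and $a_{0,1}$; starting from the initial view (which maps every variable to $0$) it can read $1$ from $a_{0,j}$ precisely when a message $(a_{0,j}, 1, \aview)$ is present, since such a message carries a timestamp strictly above $0$. Combining with the previous step, $\assert{\texttt{false}}$ is reachable in $\funcGC$ iff some run reaches both $\lambda_3$ and $\lambda_4$ in $\funcFC{0}$ (necessarily via two distinct contributor threads). By Lemma~\ref{lem:suc} with $i = 0$ this is equivalent to $\Upsilon_0$ being true, which is the claim; since $\Upsilon_0$ is exactly the statement that the $\succsat$ instance $D$ is satisfiable, this also yields the \nexp lower bound for $\thatthread(\uncassy, \unloopy) \parallel \thisthread_1(\uncassy)$.

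These steps are essentially routine once Lemma~\ref{lem:suc} is in hand; the one point I would state explicitly is that both labels $\lambda_3$ and $\lambda_4$ must be reached within a single computation so that the two messages on $a_{0,0}$ and $a_{0,1}$ coexist when the $\funcGC$ thread reads them, which holds because Lemma~\ref{lem:suc} is a statement about a single run and the RA memory only grows. The genuinely hard content of the reduction — the induction over the levels $\funcFC{n-2}, \dots, \funcFC{0}$ propagating embedded clause addresses, with the base case delegated to the circuit-value subroutine $\com_{\mathsf{CV}}$ (Lemma~\ref{lem:ccv}, Section~\ref{app:b1}) and the access-pattern subroutine $\com_{\mathsf{Check}}$ (Section~\ref{app:b2}) — lives inside Lemma~\ref{lem:suc} and is not revisited here, so the main obstacle is already discharged upstream and nothing subtle remains at this final step.
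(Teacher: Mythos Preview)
Your proposal is correct and matches the paper's intended argument: the corollary is meant to follow immediately from Lemma~\ref{lem:suc} at $i=0$ together with an inspection of $\funcGC$, exactly in parallel to the proof of Lemma~\ref{lem:goal} in the \pspace-hardness section. The paper does not spell this out beyond stating the corollary, so your write-up is in fact more detailed than what appears there.
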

This gives us the main theorem 
\begin{theorem}
The verification of safety properties for parametrized systems of the class $\thatthread(\uncassy, \unloopy) \parallel \thisthread(\uncassy)$ under RA 
is \nexp-hard.   
\label{thm:nexp-c}
\end{theorem}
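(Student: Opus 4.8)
The plan is to reduce from $\succsat$, which is \nexp-complete~\cite{DBLP:journals/iandc/PapadimitriouY86}. Given an instance consisting of $n$ and a circuit $D$ succinctly describing a $3$CNF formula $\phi$ over $2^n$ variables with $2^n$ clauses, I would build exactly the parameterized system of Figure~\ref{fig:nexphapp}: a single distinguished leader thread running $\comlead$ together with an unbounded family of contributor threads running $\comcont$. The first step is to verify that this construction stays inside the target class $\thatthread(\uncassy,\unloopy)\parallel\thisthread(\uncassy)$ and is polynomially sized. The leader is cas-free (it only stores to $g$ and to the $t_{u_i},f_{u_i}$); the contributor program is a finite non-deterministic choice of gadgets, each of which — including the sub-routines $\com_{\mathsf{CV}}$ (which unrolls the polynomially many gates of $D$ in topological order, each gate handled by the NAND gadget of Figure~\ref{fig:nand}) and $\com_{\mathsf{Check}}$ (which unrolls the recursion defining $\mathsf{AP}(n)$) — is straight-line and cas-free, so $\comcont$ is loop-free and cas-free. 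The only nonobvious point is that the leader can emit the word $\mathcal{S}(n,w)=\bsp(n)\shuffle w$ with a program of size polynomial in $n$; this is exactly the content of the lemma on $\comlead$, and follows from the recursive definition $\bsp(n)=\bsp(n-1)\cdot n\cdot\bsp(n-1)$ together with a nondeterministic guess of each bit of $w$.

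The correctness argument mirrors the \pspace-hardness proof of Lemma~\ref{lem:pspace-hard}, with two new ingredients. The first concerns a single contributor executing $\funcSAT$: reading a message $(s,1,\aview)$ produced by some $\funcCLENC$ thread pins an $n$-bit clause address $\alpha$ into $\aview$; running $\com_{\mathsf{CV}}$ then evaluates $D(\alpha)$ gate by gate, so that on reaching $\lambda_1$ the thread's view embeds the index and sign of a chosen variable of clause $\alpha$ (Lemma~\ref{lem:ccv}); finally $\com_{\mathsf{Check}}$ performs the reads dictated by $\mathsf{AP}(n)$ on $g$ and completes iff the leader's guessed assignment makes the corresponding literal true. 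The second ingredient is the universality layer: the gadgets $\funcFC{0},\dots,\funcFC{n-1}$, exactly as in the \pspace-hardness construction, force a binary dependency tree whose leaves are the $2^n$ instances of $\funcSAT$ (one per clause address) and whose root is reached through $\funcGC$ only if every clause address has been checked. This is formalized by the level predicates $\Upsilon_i$ and proved by induction on $i$: the base case is the statement about reaching labels $\lambda_3,\lambda_4$ in $\funcSAT$ (analogue of Lemma~\ref{lem:base}), and the inductive step reproduces Lemma~\ref{lem:induct}, yielding Lemma~\ref{lem:suc}. Since $\Upsilon_0$ is precisely the assertion that $\phi$ is satisfiable, the corollary that $\assert{\texttt{false}}$ is reachable iff $\Upsilon_0$ holds gives: the constructed system is unsafe iff the $\succsat$ instance is a yes-instance. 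For soundness note that if $\phi$ is unsatisfiable then, whatever $w$ the leader guesses, some clause is violated, the matching $\funcSAT$ thread cannot finish, the dependency tree cannot be completed, and $\funcGC$ is unreachable. As the reduction is polynomial, \nexp-hardness follows.

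The main obstacle is establishing the access-pattern lemma for $\com_{\mathsf{Check}}$: that a contributor holding a variable address $\beta$ and sign $\sigma$ succeeds in the sequence of reads $\mathsf{AP}(n)$ on $g$ if and only if it reads exactly the value $w_\beta$ written by the leader for the variable $x_\beta$, and that value matches $\sigma$. The difficulty is twofold. First, one must exploit the RA semantics carefully: because $g$ is a single shared variable, each load forces the reader's view-timestamp on $g$ to be non-decreasing, so the successive reads of $\mathsf{AP}(n)$ are constrained to move forward along $\mathcal{S}(n,w)$ — this turns the abstract "binary search" intuition into an actual feasibility constraint, and it is also why the same program variables $t_p,f_p$ can be reused by every contributor (everything is driven by the thread's own view, so the number of shared variables does not grow with the thread count). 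Second, one must prove by induction on $n$, using $\bsp(n)=\bsp(n-1)\cdot n\cdot\bsp(n-1)$, that $\mathsf{AP}(n)$ encodes a unique root-to-leaf path in the search tree induced by $\bsp(n)$, so that no contributor can cheat by landing on the value of a different variable; keeping track of which occurrences of each symbol $1,\dots,n$ remain readable after a given prefix of reads is the technical heart. Once this lemma is in place, combining it with the circuit-value lemma and the dependency-tree argument (both routine adaptations of earlier material) completes the proof; together with Corollary~\ref{Corollary:NEXP} this pins the complexity of the class at \nexp-complete.
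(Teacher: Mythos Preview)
Your proposal is correct and follows essentially the same approach as the paper: reduction from $\succsat$ via the construction of Figure~\ref{fig:nexphapp}, with the leader emitting $\mathcal{S}(n,w)$ on $g$, contributors evaluating $D$ gate-by-gate in $\com_{\mathsf{CV}}$, accessing the assignment through $\mathsf{AP}(n)$ in $\com_{\mathsf{Check}}$, and the $\funcFC{i}$ tree enforcing universality over clause addresses, all glued together via the level predicates $\Upsilon_i$ and Lemma~\ref{lem:suc}. One small slip: the leader writes only to $g$; the variables $t_{u_i},f_{u_i}$ are written by contributors executing $\funcCLENC$, not by the leader---but this does not affect the argument or the claim that $\comlead$ is cas-free.
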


\section{Conclusion}
Atomic CAS operations are indispensible for most practical implementations of distributed protocols, yet, they hinder verification efforts. 
Undecidability of safety verification in the non-parameterized setting \cite{AAAK19} and even in the loop-free parameterized setting $\thatthread(\unloopy)$, are a testament to this.

We tried to reconcile the two by studying the effects of allowing restricted access to CAS operations in parameterized systems. Systems which prevent the $\thatthread$ threads from performing CAS operations and allow only either (1) loop-free $\thisthread$ programs or (2) loop-free $\thisthread$ programs along with a single (`ego')
program with loops lead to accessible complexity bounds. The simplified semantics based on a timestamp abstraction provides the infrastructure for these results. The \pspace-hardness gives an insight into the core complexity of RA ($\mathsf{PureRA}$) that stems from the consistency mechanisms of view-joining and timestamp comparisons. 

We conclude with some interesting avenues for future work.  A problem arising from this work is the decidability of CAS-free parameterized systems, $\thatthread(\uncassy)||\thisthread_1(\uncassy) \parallel \cdots \parallel \thisthread_n(\uncassy)$ which seems to be as elusive as its non-parameterized twin $\thisthread_1(\uncassy) \parallel \cdots \parallel \thisthread_n(\uncassy)$. We believe that ideas in this paper can be adapted to causally consistent shared memory models \cite{LahavBoker20} as well as transactional programs 
\cite{DBLP:conf/concur/BeillahiBE19} in the parameterized setting. On, the practical side, the Datalog encoding suggests development of a tool, considering that Horn-clause solvers are state-of-the-art in program verification.

\bibliography{references}

\end{document}